\documentclass[a4paper,11pt]{article}
\input{Preamble.tex}


\begin{document}

\title{Fast FPT-Approximation of Branchwidth}

\author{
Fedor V. Fomin\thanks{University of Bergen, Norway. \texttt{fomin@ii.uib.no}}
 \and Tuukka Korhonen\thanks{University of Bergen, Norway. \texttt{tuukka.korhonen@uib.no}}
}

\maketitle

\thispagestyle{empty}

\begin{abstract}
%
%
Branchwidth determines how graphs, and more generally, arbitrary connectivity (basically symmetric and submodular) functions could be decomposed into a tree-like structure by specific cuts. We develop a  general framework for designing fixed-parameter tractable (FPT)  2-approximation algorithms for branchwidth of connectivity functions.  
The first ingredient of our framework is combinatorial. 
We prove a structural theorem establishing that either a sequence of particular refinement operations could decrease the width of a branch decomposition or that the width of the decomposition is already within a factor of 2 from the optimum. 
The second ingredient is an efficient implementation of the refinement operations for branch decompositions that support efficient dynamic programming.
We present two concrete applications of our general framework.
\begin{itemize}
\item
An algorithm that for a given $n$-vertex graph $G$ and integer $k$ in time  $2^{2^{\OO(k)}} n^2$ either constructs a rank decomposition of $G$ of width at most $2k$ or concludes that the rankwidth of $G$ is more than $k$. It also yields a $(2^{2k+1}-1)$-approximation algorithm for cliquewidth within the same time complexity, which in turn, improves to $f(k)n^2$ the running times of various algorithms on graphs of cliquewidth $k$. Breaking the ``cubic barrier'' for rankwidth and cliquewidth  was an open problem in the area.
 
\item An algorithm that for a given $n$-vertex graph $G$ and integer $k$ in time  $2^{\OO(k)} n$ either constructs a branch decomposition of $G$ of width at most $2k$ or concludes that the branchwidth of $G$ is more than $k$. 
This improves over the 3-approximation that follows from the recent treewidth 2-approximation of Korhonen [FOCS 2021].
\end{itemize}
\end{abstract}

\newpage
\pagestyle{plain}
\setcounter{page}{1}

\section{Introduction}
The branchwidth of  a connectivity function $f$ (that is,   $f(\emptyset) = 0$, and $f$ is  symmetric and submodular) was introduced by Robertson and Seymour in \cite{RobertsonS91}.   Let $V$ be a finite set and $f : 2^V \rightarrow \mathbb{Z}_{\ge 0}$ be a connectivity function on $V$.
A \emph{branch decomposition} of   $f $ is a pair $(T, L)$, where  $T$ is a cubic tree (the degree of each non-leaf node of $T$ is  3) and $L$ is a bijection mapping  $V$ to the leaves of $T$. (If $|V|\leq 1$, then $f$ admits no branch decomposition.) For every edge $e$ of $T$,  
the connected components of $T\setminus\{ e\}$, the graph obtained from $T$ by deleting $e$,  induce a partition $(X, Y )$ of the set of leaves of $T$.  The \emph{width} of  $e$ is $f(e) = f(L^{-1}(X))=f(L^{-1}(Y))$. The \emph{width of $(T , L)$}  is the maximum width of all edges of $T$. The \emph{branchwidth} $\bw(f )$ of $f$ is the minimum width of a branch decomposition of $f$. In this paper, we develop a framework for designing fixed-parameter tractable (\FPT) 2-approximation algorithms for computing branch decompositions of connectivity functions. We provide a detailed overview of the framework in \Cref{subsec_framework}.   Here we discuss some of its concrete algorithmic consequences. 

\medskip\noindent\textbf{Rankwidth.}
Rankwidth was introduced by Oum and Seymour \cite{OumS06}. The rank decomposition of a graph $G$  is the branch decomposition of the following connectivity function $f$  defined on the   vertex set $V(G)$. 
For a graph $G$ and a pair $A,B$ of disjoint subsets of $V(G)$, let $G[A,B]$ be the bipartite graph induced by edges between $A$ and $B$.
Let  $M_G[A,B]$ be the  $|A| \times |B|$ 0-1~matrix representing $G[A,B]$.
Then the value $f(A)=f(V(G) \setminus A)=\rank(M_G[A,V(G) \setminus A])$ is the GF(2)-rank of  $M_G[A,V(G) \setminus A]$.
%
 By making use of our algorithmic framework for connectivity functions, we prove the following theorem about  approximation of rankwidth.
 \begin{restatable}{theorem}{thmrankwidthalgorithm}
 \label{theorem_rw_algo}
For integer $k$ and an $n$-vertex graph $G$, there is an algorithm that in time 
 $2^{2^{\OO(k)}} n^2$ either computes a rank decomposition of $G$ of width at most $2k$, or correctly concludes that the rankwidth of $G$ is more than $k$.
 \end{restatable}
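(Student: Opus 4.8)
The plan is to instantiate the general framework for FPT 2-approximation of branchwidth of connectivity functions, announced in the abstract, with the specific connectivity function $f_G$ whose branchwidth is exactly the rankwidth of $G$. Concretely, $f_G(A) = \rank(M_G[A, V(G)\setminus A])$ over $\mathrm{GF}(2)$, and one first checks (this is classical, due to Oum and Seymour) that $f_G$ is a connectivity function: $f_G(\emptyset)=0$, $f_G$ is symmetric by definition, and submodularity follows from the subadditivity of matrix rank applied to the block structure of $M_G$. Thus $\bw(f_G) = \rw(G)$, and a branch decomposition of $f_G$ of width $\le 2k$ is exactly a rank decomposition of $G$ of width $\le 2k$.

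The two ingredients I would invoke are the combinatorial structural theorem — that a sequence of refinement operations either strictly decreases the width of a branch decomposition or certifies that the current width is within a factor $2$ of $\bw(f)$ — and the algorithmic implementation of these refinement operations, which requires a data structure supporting efficient dynamic programming over branch decompositions of $f_G$. For rankwidth the right data structure is the one associated with rank decompositions: cuts in a rank decomposition of width $w$ admit a representative set of at most $2^w$ rows/columns spanning the relevant bipartite adjacency space, so partial solutions can be represented by boolean functions on $\{0,1\}^{\le w}$, and the dynamic programming tables along edges of the decomposition have size bounded by a function of the current width only. I would start the algorithm from the trivial branch decomposition (a caterpillar) of width $\le 1 + \log_2$ of something, or more robustly, iterate: maintain a branch decomposition, run the refinement procedure; if it succeeds we have decreased the width and repeat; if it fails, the structural theorem guarantees the current width is $\le 2\,\bw(f_G) = 2\,\rw(G)$, so either this is $\le 2k$ and we output it, or it exceeds $2k$, whence $\rw(G) > k$ and we reject.

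The running time analysis is where the $2^{2^{\OO(k)}} n^2$ bound is assembled: each refinement step operates on a branch decomposition whose width stays $O(k)$ throughout (we only ever run the machinery while the width is $\le 2k+1$, say, rejecting otherwise), the dynamic programming over a single such decomposition costs $2^{2^{\OO(k)}} n$ because the tables over $\{0,1\}^{O(k)}$ have size $2^{2^{\OO(k)}}$ and there are $O(n)$ edges, and only $O(n)$ successful width-or-potential-decreasing refinement rounds are needed (the width is bounded and a secondary potential — e.g. a lexicographic count of high-width edges — decreases), giving the extra factor of $n$. I would also need to handle the base cases where $|V(G)| \le 1$ and the initial construction of a starting decomposition together with its dynamic programming data structure.

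The main obstacle I anticipate is not the rankwidth-specific verification (that $f_G$ is a connectivity function, and that the $2^w$-representative-set data structure supports the required DP operations — composing tables across an edge, and recomputing tables after a local refinement — in time $2^{2^{\OO(k)}}$ per operation), which is essentially bookkeeping once the framework is in place. The hard part is entirely inside the framework itself: proving the structural dichotomy that refinement operations either improve the decomposition or witness a factor-$2$ gap, and showing that these operations can be implemented to touch only a bounded-size portion of the decomposition and its DP tables so that the total cost is $2^{2^{\OO(k)}} n^2$ rather than, say, an extra $n^{\OO(1)}$ from global recomputation. Since that framework is assumed established in the earlier sections, the proof of \Cref{theorem_rw_algo} reduces to checking that rankwidth's connectivity function and rank-decomposition data structure meet the framework's interface, and then reading off the stated time bound.
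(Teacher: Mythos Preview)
Your instantiation of the framework is fine in spirit, and you correctly identify that the DP tables over a rank decomposition of width $O(k)$ have size $2^{2^{\OO(k)}}$. The gap is in how you get started. A caterpillar (or any ``trivial'') rank decomposition of an arbitrary $n$-vertex graph can have width as large as $\Theta(n)$, and the refinement machinery of \Cref{the:main_alg} runs in time $t(w)\,2^{\OO(w)}\,n$ where $w$ is the width of the \emph{input} decomposition; with $w=\Theta(n)$ this is $2^{2^{\OO(n)}}n$, not $2^{2^{\OO(k)}}n$. Your fix of ``rejecting when the current width exceeds $2k+1$'' is unsound: the current decomposition may be bad while $\rw(G)\le k$, so rejecting would be incorrect. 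This is exactly the chicken-and-egg problem the framework by itself does not solve.

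The paper resolves this with iterative compression: order the vertices $v_1,\dots,v_n$, and maintain for each prefix $G_i$ an \emph{augmented} rank decomposition of width $\le 2k$. Adding $v_{i+1}$ raises the width by at most one (\Cref{lem:iterative_add}), after which a single application of \Cref{cor:rankwidthcomp} either brings the width back to $\le 2k$ or certifies $\rw(G_{i+1})>k$ (hence $\rw(G)>k$). This gives $n$ iterations of cost $2^{2^{\OO(k)}}n$ each, for the claimed $2^{2^{\OO(k)}}n^2$. A second point you gloss over but which the paper stresses: one must carry the \emph{augmented} decomposition (storing minimal representatives of every cut) through the whole iterative compression, because there is no known $f(k)\,n$-time algorithm to augment an arbitrary rank decomposition from scratch; the augmentation is needed for the DP and must be updated incrementally when a vertex is added and when the refinement is applied.
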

 Several algorithms computing rankwidth exactly or approximately are known in the literature. After a number of  improvements, the best  running times of algorithms computing rankwidth are of the form $\OO(f(k)\cdot n^3)$ \cite{HlinenyO08,Jeong0O18,Oum08,OumS06,OumS07}. As Oum wrote in his survey \cite{Oum17}, 
 \emph{``However, for rank-width, it seems quite non-trivial to reduce $n^3$ into $n^c$ for some $c<3$."} \Cref{theorem_rw_algo} affirmatively answers the open question of Oum \cite[Question~3]{Oum17}, who asked whether 
 there exists an algorithm with functions $f(k)$, $g(k)$ and a constant $c<3$ that finds a rank decomposition of width at most $f(k)$ or confirms that the rankwidth of a  graph is larger than $k$, in time $\OO(g(k)\cdot n^c)$.
Pipelined with the previous work on rankwidth and cliquewidth, \Cref{theorem_rw_algo}  has several important consequences.

 \medskip\noindent\textbf{Cliquewidth.}
A common approach in graph algorithms is to decompose a graph by making use of small separations. Perhaps the most popular measure of graph decomposition is the treewidth of a graph \cite[Chapter~7]{cygan2015parameterized}. 
Many \NP-hard optimization problems can be solved efficiently on graphs of bounded treewidth. The seminal result of Courcelle \cite{Courcelle92a,Courcelle97} (see also \cite{ArnborgLS91,BoriePT92,Courcelle:2012book}) combined with the algorithm of Bodlaender~\cite{Bodlaender96}, states that every decision
 problem on graphs expressible in
Monadic Second Order Logic ($\MSO_2$) is solvable in linear time on graphs of bounded treewidth.  
However, the average vertex degree of a graph of treewidth $k$ does not exceed $k$, limiting the algorithmic applicability of treewidth to ``sparse'' graph classes.
Arguably, the most successful project of extending the meta-algorithmic results from graphs of bounded treewidth to ``non-sparse'' graphs is by making use of the cliquewidth of a graph defined by Courcelle, Engelfriet, and  Rozenberg~\cite{CourcelleER93}. 
We give the formal definition due to its technicality in Appendix. 
 Intuitively, a graph is of cliquewidth at most $k$ if it can be built from single vertices following a $k$-expression, which identifies how to systematically join the already constructed parts of the graph. Moreover, in each constructed part, the vertices can be partitioned into at most $k$ types such that vertices of the same type will be indistinguishable in later steps of the construction. 
Cliquewidth generalizes treewidth in the following sense. 
 Every graph $G$ of treewidth at most $k$ has cliquewidth at most $\OO(2^k)$.   On the other hand, there are graphs, like the $n$-vertex complete graph $K_n$ whose treewidth is $n-1$ and cliquewidth is a constant~\cite{HlinenyOSG08}.

 Oum and Seymour \cite{OumS06} proved that for any graph of rankwidth $k$, its cliquewidth is always between $k$ and $2^{k+1}-1$. Moreover, their proof gives an algorithm that in time $ 2^{\OO(k)} n^2$  
 converts a rank decomposition of width $k$ into a $(2^{k+1}-1)$-expression for the cliquewidth. By combining the construction of Oum and Seymour  with \Cref{theorem_rw_algo}, we derive the following corollary. 
 \begin{corollary} \label{cor_cw_appr}
 For  integer $k$ and $n$-vertex graph $G$, there is an algorithm that in time 
 $2^{2^{\OO(k)}} n^2$ either computes a
 $(2^{2k+1}-1)$-expression of $G$ or  correctly  concludes that the  cliquewidth of $G$ is more than $k$. 
 \end{corollary}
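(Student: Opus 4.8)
The plan is to pipeline \Cref{theorem_rw_algo} with the construction of Oum and Seymour~\cite{OumS06} that turns a rank decomposition of width $w$ into a $(2^{w+1}-1)$-expression. First I would run the algorithm of \Cref{theorem_rw_algo} on the pair $(G,k)$, which takes time $2^{2^{\OO(k)}}n^2$. If it reports that the rankwidth of $G$ is more than $k$, then I would immediately output that the cliquewidth of $G$ is more than $k$; this is sound because, as Oum and Seymour proved, the rankwidth of any graph is at most its cliquewidth, so rankwidth more than $k$ forces cliquewidth more than $k$. (The degenerate cases $|V(G)| \le 1$, where no branch decomposition exists, can be handled directly.)

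In the remaining case, the algorithm of \Cref{theorem_rw_algo} returns a rank decomposition $(T,L)$ of $G$ of width at most $2k$. I would then feed $(T,L)$ into the Oum--Seymour construction, which converts a rank decomposition of width $w$ into a $(2^{w+1}-1)$-expression of the same graph in time $2^{\OO(w)} n^2$. Instantiating it with $w \le 2k$ produces a $(2^{2k+1}-1)$-expression of $G$, exactly the required output. For the running time, the call to \Cref{theorem_rw_algo} dominates: it costs $2^{2^{\OO(k)}}n^2$, while the Oum--Seymour post-processing costs only $2^{\OO(2k)}n^2 = 2^{\OO(k)}n^2$, so the total is $2^{2^{\OO(k)}}n^2$.

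I do not expect a genuine obstacle here; the proof is essentially bookkeeping on top of the two cited results. The two points that warrant a moment of care are: (i) the Oum--Seymour bound is applied with parameter $2k$ rather than $k$, since \Cref{theorem_rw_algo} only guarantees a rank decomposition of width at most $2k$, which is precisely why the exponent in the resulting expression is $2k+1$ and not $k+1$; and (ii) the soundness of the negative answer rests on the inequality ``rankwidth is at most cliquewidth'' rather than on producing any explicit certificate that the cliquewidth is large.
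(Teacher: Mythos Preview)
Your proposal is correct and follows exactly the approach the paper takes: pipeline \Cref{theorem_rw_algo} with the Oum--Seymour conversion from a rank decomposition of width $w$ to a $(2^{w+1}-1)$-expression, using the inequality $\rw(G)\le$ cliquewidth$(G)$ to justify the negative answer. The paper does not give a more detailed argument than this.
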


  The importance of cliquewidth is due to its algorithmic properties. 
  Courcelle, Makowsky, and Rotics in \cite{CourcelleMR00} identified a variation of $\MSO_2$, called $\MSO_1$, and showed how to extend Courcelle's theorem for this variation of $\MSO_2$ to graphs of bounded cliquewidth. 
 Informally, $\MSO_1$ is the class of  $\MSO_2$  formulas that allow quantification over subsets of vertices, but not of edges. While being less expressive than  $\MSO_2$, $\MSO_1$ still captures a broad class of optimization problems on graphs, including 
 vertex cover, dominating set, domatic number for fixed $k$, $k$-colorability for fixed $k$, partition into cliques for fixed $k$, clique, independent set, and induced path.  
 The meta-theorem of Courcelle, Makowsky, and Rotics \cite{CourcelleMR00} states that every $\MSO_1$-definable problem on graphs is solvable in time $\OO(f(k)\cdot (n+m))$ when a $k$-expression is provided with the input.
The theorem of Courcelle, Makowsky, and Rotics applies also to a more general class of problems, like optimization problems searching for sets of vertices that are optimal concerning some linear
evaluation function (for example,  a clique of the maximum weight) or counting \cite{CourcelleMR00,CourcelleMR01}. 

The applicability of the meta-theorem of Courcelle, Makowsky, and Rotics crucially depends on the efficiency of computing the cliquewidth of a graph and constructing the corresponding $k$-expression.
 The only known way of constructing (an approximate) $k$-expression for cliquewidth, as well as for related graph parameters like   NLC-width  \cite{Wanke94}  or  Boolean width \cite{Bui-XuanTV11}, is by making use of rankwidth.
 Combining \autoref{cor_cw_appr} with the meta-theorem of Courcelle, Makowsky, and Rotics implies that every $\MSO_1$-definable problem is solvable in quadratic $\OO(f(k)\cdot n^2)$  running time on graphs of cliquewidth at most $k$.
This is the first improvement on the time complexity of this meta-theorem since the $\OO(f(k) \cdot n^3)$ algorithm given by Oum in 2005~\cite{Oum08,oumthesis}.
    
\medskip\noindent\textbf{Exact rankwidth.}  
Courcelle and Oum \cite{CourcelleO07} proved that  for every fixed $k$, there are only finitely many graphs, such that a graph $G$ does not contain any of them as a vertex-minor if and only if   the rankwidth of $G$ at most $k$. They also proved that for every fixed graph $H$, the property that $H$  is isomorphic to a vertex-minor of an input graph $G$  is expressible in a variant of $\MSO_1$ that can be checked in time $\OO(f(k) \cdot (n+m))$ 
 if a $k$-expression is provided. 
Combined with \autoref{cor_cw_appr}, this implies the following.
\begin{corollary}\label{corollary_exactMSO}
For integer  $k$, deciding whether the rankwidth of an $n$-vertex graph $G$ is at most $k$ can be done in time $f(k)n^2$. 
\end{corollary}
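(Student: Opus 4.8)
The plan is to reduce the exact decision to $\MSO_1$ model checking on a bounded-width clique expression obtained from \Cref{theorem_rw_algo}. Recall the two ingredients stated just above. (i) Courcelle and Oum proved that for each fixed $k$ there is a \emph{finite} set of graphs $\mathcal{H}_k=\{H_1,\dots,H_m\}$, depending only on $k$, such that $\rw(G)\le k$ if and only if no $H_i$ is a vertex-minor of $G$. (ii) For each fixed graph $H$, ``$H$ is a vertex-minor of the input graph'' is expressible by a formula $\varphi_H$ in the relevant variant of $\MSO_1$ and, as recalled above, can be decided in time $\OO(g(c)\cdot(n+m_G))$ on an $n$-vertex $m_G$-edge graph that is supplied together with a $c$-expression.

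Given $G$ and $k$, the algorithm first runs \Cref{theorem_rw_algo} on $(G,k)$. If it concludes that $\rw(G)>k$, we answer \no. Otherwise it returns a rank decomposition of $G$ of width at most $2k$, which we feed into the construction of Oum and Seymour \cite{OumS06} to obtain, in time $2^{\OO(k)}n^2$, a $(2^{2k+1}-1)$-expression of $G$; set $c:=2^{2k+1}-1$. We then evaluate $\varphi_{H_i}$ on $G$ together with this $c$-expression for every $i\in\{1,\dots,m\}$ and answer \yes\ exactly when none of them holds. Correctness is immediate: in the first case $\rw(G)>k$, so certainly $\rw(G)\not\le k$; in the second case the $c$-expression is a valid witness of cliquewidth at most $c$, so each evaluation correctly decides whether $H_i$ is a vertex-minor of $G$, and by (i) the absence of all $H_i$ as vertex-minors is equivalent to $\rw(G)\le k$.

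For the running time, \Cref{theorem_rw_algo} costs $2^{2^{\OO(k)}}n^2$, the Oum--Seymour conversion costs $2^{\OO(k)}n^2$, and each of the $m=m(k)$ model-checking calls costs $\OO(g(c)\cdot(n+m_G))=\OO(g(c)\cdot n^2)$ because $m_G\le\binom{n}{2}$; altogether this is $f(k)\cdot n^2$ for a suitable $f$. I do not expect a genuine obstacle here, since the corollary is essentially an assembly of \Cref{theorem_rw_algo} with results recalled in the text; the one point requiring care is uniformity. The theorem of Courcelle and Oum only guarantees that $\mathcal{H}_k$ is finite, so a priori we need not know how to compute it from $k$. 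This does not affect the statement, which asserts, for each fixed $k$, the existence of an $f(k)n^2$-time algorithm: the list $\mathcal{H}_k$ — equivalently the formulas $\varphi_{H_i}$ — may simply be hard-wired for each value of $k$. (If a uniform algorithm is desired, one may instead note that these obstruction sets are computable, or replace the obstruction test by a direct $\MSO_1$ description of ``$\rw\le k$'' relative to a $c$-expression; the time bound is unchanged.) Finally, degenerate cases such as $|V(G)|\le 1$, where no rank decomposition exists and $\rw(G)=0$ by convention, are handled directly.
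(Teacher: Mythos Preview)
Your proposal is correct and follows essentially the same approach the paper indicates: use the approximation algorithm (equivalently, \Cref{cor_cw_appr}) to obtain a bounded-width clique expression, then apply the Courcelle--Oum vertex-minor obstruction set together with the $\MSO_1$ model-checking machinery. Your write-up merely spells out in more detail what the paper states in one sentence, and your remarks on uniformity and degenerate cases are reasonable additions that the paper leaves implicit.
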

Function $f(k)$ in \autoref{corollary_exactMSO} is huge, it is the tower   $2^{2^{\iddots^{k} }} $, where the height of the tower is bounded by some function of $k$. Also this approach does not provide the rank decomposition.

Jeong, Kim, and Oum \cite{Jeong0O18} developed an alternative  framework for computing branch decompositions of finite-dimensional vector spaces over a fixed finite field. As one of the applications of their method, they gave an FPT algorithm that for an input graph $G$ and integer $k$ in time $2^{2^{O(k^2)}}n^3$ either constructs a  rank decomposition of with $\leq k$ or concludes that the rankwidth of $G$ is more than $k$. The algorithm of 
Jeong, Kim, and Oum does not rely on vertex-minors and logic and can be seen as a (very nontrivial) adaptation to the space branchwidth of the dynamic programming algorithm of Bodlaender and Kloks for treewidth
\cite{BodlaenderK96}.  
Since the algorithm of Jeong, Kim, and Oum works for more general objects,  first they do a preprocessing step.  However for rankwidth computation this step is not required.  Then the 
 cubic running time of their algorithm is because due to the lack of a fast approximation algorithm, they have to use iterative compression. However, by combining with our approximation algorithm, the dynamic programming of Jeong, Kim, and Oum  directly implies 
 the following corollary. 
 \begin{corollary}\label{corollary_exactDP}
For integer $k$ and an $n$-vertex graph $G$, there is an algorithm that in time $2^{2^{O(k^2)}}n^2$ either constructs a rank decomposition of width at most $k$, or correctly concludes that the rankwidth of $G$ is more than $k$.
\end{corollary}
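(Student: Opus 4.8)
The plan is to replace the iterative compression used in the algorithm of Jeong, Kim, and Oum~\cite{Jeong0O18} by a single call to the $2$-approximation of \Cref{theorem_rw_algo}, which is exactly where a factor of $n$ is saved. First I would run the algorithm of \Cref{theorem_rw_algo} on $G$ with parameter $k$, spending time $2^{2^{\OO(k)}} n^2$. If it reports that $\rw(G) > k$ we are done; otherwise it returns a rank decomposition $(T,L)$ of $G$ of width at most $2k$. (If $|V(G)| \le 1$ the statement is vacuous.)

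The second step feeds $(T,L)$ into the Bodlaender--Kloks-style dynamic programming of Jeong, Kim, and Oum for branchwidth of finite-dimensional vector spaces over $\mathrm{GF}(2)$, specialized to rankwidth so that the preprocessing step of~\cite{Jeong0O18} is not needed. After rooting $T$ at a subdivided edge, the dynamic program proceeds bottom-up; at a node whose associated cut has width $w$ it maintains a set of ``states'' (built from typical sequences of cut-rank profiles of putative decompositions of the corresponding part of $G$) of cardinality bounded by a doubly exponential function of $w^2$. Since every cut of $(T,L)$ has width at most $2k$, each of the $\OO(n)$ nodes of $T$ is processed in time $2^{2^{\OO(k^2)}} \cdot n$ (the extra factor $n$ accounting for the manipulation of the $\mathrm{GF}(2)$-subspaces attached to the states), and the dynamic program either outputs a rank decomposition of $G$ of width at most $k$ or certifies that $\rw(G) > k$. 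Together with the first phase this gives total running time $2^{2^{\OO(k)}} n^2 + 2^{2^{\OO(k^2)}} n^2 = 2^{2^{\OO(k^2)}} n^2$, as claimed.

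The one point that requires care is the claim that the cubic running time of~\cite{Jeong0O18} is attributable solely to their iterative compression loop: one must verify that, given a rank decomposition of width $\OO(k)$ of an $n$-vertex graph, a \emph{single} run of their dynamic program costs $\OO(f(k)\cdot n^2)$ rather than $\OO(f(k)\cdot n^3)$ --- i.e., that the remaining factor of $n$ in their analysis comes only from the $n$ compression rounds --- and that the width-$2k$ decomposition returned by \Cref{theorem_rw_algo} can be brought, in time $\OO(f(k)\cdot n)$, into whatever rooted normalized form their dynamic program expects. Both are routine given the description of the algorithm in~\cite{Jeong0O18}, but this is where the details lie.
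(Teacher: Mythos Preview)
Your proposal is correct and follows essentially the same approach as the paper: run the $2$-approximation of \Cref{theorem_rw_algo} to obtain a rank decomposition of width at most $2k$, then apply the Jeong--Kim--Oum dynamic program once on that decomposition, observing that the cubic factor in~\cite{Jeong0O18} came solely from iterative compression (and that the preprocessing step is unnecessary for rankwidth). The paper states this as a direct corollary without a separate proof, so your write-up is in fact more detailed than the paper's own justification.
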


\medskip\noindent\textbf{Branchwidth  of a graph.}   As another application of the new algorithmic framework for connectivity functions, we obtain a 2-approximation algorithm for the branchwidth of a graph. In this case, the connectivity function $f$ is defined on the edge set $E(G)$ of a graph $G$. For $A \subseteq E(G)$, $f(A)$ is the number of vertices that are incident to both $A$ and $E(G) \setminus A$.
 The branchwidth of a graph is a ``close relative'' of the treewidth: For every graph  of branchwidth $k$, its treewidth is between $k-1$ and $3k/2 -1$ \cite{RobertsonS91}. 
Thus any $c$-approximation of treewidth is also a $3c/2$-approximation for branchwidth and vice versa.
  It appears that in certain situations, branchwidth could be more convenient to work with than treewidth, for example, when it comes to some dynamic programming algorithms \cite{CookS03,FastH17,FominT06}.  
The previously best-known approximation of branchwidth of running time $2^{\OO(k)}n$ is a 3-approximation that follows from the treewidth 2-approximation algorithm of Korhonen   \cite{Korhonen21}.
We prove the following theorem. 
 \begin{restatable}{theorem}{thmbranchwidthalgorithm}
 \label{theorem_vw_algo}
For  integer $k$ and an $n$-vertex graph $G$, there is an algorithm that in time 
 $2^{{\OO(k)}} n$ either computes a branch decomposition of $G$ of width at most $2k$, or correctly concludes that the branchwidth of $G$ is more than $k$.
 \end{restatable}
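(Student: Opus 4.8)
The plan is to instantiate the general framework of this paper for the connectivity function $f$ on ground set $E(G)$ given by letting $f(A)$ be the number of vertices incident to edges in both $A$ and $E(G)\setminus A$; recall that $\bw(f) = \bw(G)$ by definition. The framework requires two ingredients to be supplied: an initial branch decomposition of $f$ of width $\OO(k)$ to start from, and an efficient, singly-exponential implementation of the \splitW\ operations for this specific $f$.

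For the initial decomposition I would call the treewidth $2$-approximation algorithm of Korhonen together with the classical relation $\bw(G) - 1 \le \tw(G) \le \lfloor \tfrac32 \bw(G)\rfloor - 1$. Concretely, run Korhonen's algorithm with parameter $p = \lceil 3k/2 \rceil - 1$ in time $2^{\OO(k)} n$: if it reports that $\tw(G) > p$, then $\bw(G) > k$ and we stop; otherwise it returns a tree decomposition of $G$ of width $\OO(k)$, which is turned in linear time into a branch decomposition of $f$ of width $\OO(k)$. Graphs with at most one edge, for which no branch decomposition exists and $\bw(G) \le 1 \le 2k$ (when $k \ge 1$), are dealt with separately.

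Now I would invoke the structural theorem of the paper: for any branch decomposition of $f$ of integral width $w$, either some sequence of \splitW\ operations strictly decreases its width, or $w \le 2\,\bw(f)$. The algorithm iterates accordingly — while the current width exceeds $2k$, it searches for such an improving sequence; if one is found it is applied and the width drops by at least one, and if none exists we conclude from the structural theorem that $\bw(G) = \bw(f) \ge w/2 > k$. Since the width starts at $\OO(k)$ and decreases by at least one per round, there are only $\OO(k)$ rounds, so it is enough that one round runs in time $2^{\OO(k)} n$.

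Establishing this last point — that branch decompositions of $f$ ``support efficient dynamic programming'' in the sense the framework asks for — is the part special to \Cref{theorem_vw_algo}, and I expect it to be the main obstacle. The enabling structural fact is that the boundary type of a subpart is only singly exponential in $w$: if $W \subseteq E(G)$ is the set of edges below some edge of the decomposition and $S$ is the set of vertices incident to edges both in $W$ and outside $W$ (so $|S| \le w$), then for every bipartition $(A, W\setminus A)$ the global value $f(A)$ is determined by the internal boundary of $A$ inside $W$ together with the set of vertices of $S$ incident to an edge of $A$ — a subset of $S$. Hence a subpart interacts with its boundary only through a subset of $S$, giving at most $2^{w}$ types; processing the decomposition bottom-up with tables indexed by these subsets, one can compute locally optimal sub-decompositions and thus realize each prescribed \splitW\ operation in time $2^{\OO(w)}$ times the size of the affected part. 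What remains is a potential-function argument bounding the number of \splitW\ operations a round can trigger so that these local costs telescope to $2^{\OO(k)} n$; together with the $\OO(k)$ rounds this yields the stated $2^{\OO(k)} n$ running time. The difficulty here is quantitative rather than conceptual: keeping the state space at $2^{\OO(k)}$ rather than the $2^{\OO(k\log k)}$ one would get from a naive encoding by partitions of $S$, making every recomputation proportional to the part it touches, and picking the potential so the amortized cost stays linear instead of degrading to $2^{\OO(k)} n^2$.
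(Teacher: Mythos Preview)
Your proposal is correct and follows essentially the same approach as the paper: initialize via Korhonen's treewidth $2$-approximation and the treewidth--branchwidth relation to get a branch decomposition of width $\OO(k)$, then apply \Cref{the:main_alg} at most $\OO(k)$ times using a refinement data structure with $t(k)=2^{\OO(k)}$; the paper's DP state for a node with border $S$ is exactly what you describe---for each part $C_i$ of the tripartition one records the subset $R_i=\bd(C_i)\cap S$, an emptiness bit, and the count of internal border vertices, giving $2^{\OO(k)}$ states (your worry about $2^{\OO(k\log k)}$ does not arise since only tripartitions are needed). The amortization you allude to is precisely the $k$-potential argument of \Cref{lem:amortize}.
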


\subsection{Previous work}

Branchwidth of a connectivity function was introduced by Robertson and Seymour   in Graph Minors X \cite{RobertsonS91}.  Oum and Seymour \cite{Oum08} gave an algorithm of running time 
$\OO (|V|^{8k+12} \log |V|)$ deciding whether the branchwidth of a connectivity function is at most $k$. In \cite{OumS06}, 
Oum \& Seymour also gave a 3-approximation algorithm of running time  $O(|V|^6 \delta\log|V|)$, where $\delta$ is the time for each evaluation of an interpolation of $f$. (We refer  \cite{OumS06} for more details.)

\begin{table}
  \begin{center}
    \begin{tabular}{|c|c|c|c|}
      \hline
      Reference                                 & Approximation   &  TIME &  Remarks \\ \hline
       Oum \& Seymour \cite{OumS06}        & $3k+1$         & $8^k n^9\log{n} $ & Works for connectivity functions  \\
       Oum  \cite{Oum08} & $3k+1$ & $8^k n^4 $ & \\
        Oum  \cite{Oum08} & $3k-1$ & $ h(k) n^3 $ & \\
        Courcelle and Oum \cite{CourcelleO07} & exact &   $ h(k) n^3 $ & Does not provide decomposition\\ 
        Hlinen{\'{y}} \& Oum \cite{HlinenyO08}&   exact & $ h(k) n^3 $ & \\
        Jeong, Kim, \& Oum \cite{Jeong0O18}  &exact & $2^{2^{O(k^2)}} n^3$   & Works for spaces over finite fields \\ 
            This paper                                & $2k$      &  $2^{2^{\OO(k)}}n^2$ &   \\
            This paper                                & exact      &  $2^{2^{O(k^2)}}n^2$ & Our approximation +  \cite{Jeong0O18}\\
      \hline
    \end{tabular}
  \end{center}
  \caption{Overview of rankwidth  algorithms.    Here $k$ is the rankwidth
    and $n$ is the number of vertices of an input graph $G$.
    Unless otherwise specified, each of the algorithms outputs in $\OO(\text{TIME})$ 
      a
    decomposition of width given in the Approximation column. The function $h(k)$ is a huge function whose bound depends 
   on  huge list of forbidden minors in matroids or vertex-minors in graphs and the length of the logic formula describing such minors.
     }
  \label{table:cw_history}
\end{table}

Rankwidth was introduced by  Oum \& Seymour in \cite{OumS06} as a tool for \FPT-approximation of cliquewidth. 
As it was observed by Oum in \cite{Oum08},  deciding whether the rankwidth of a graph is at most $k$ is  \NP-complete. 
The first \FPT-approximation   algorithm for cliquewidth is due to  Oum \& Seymour \cite{OumS06}. 
 Its running time is  $\OO(8^k n^9\log{n})$ and it outputs a rank decomposition of width at most $3k+1$ or concludes that the rankwidth of the input graph is more than $k$.  Since then a number of \FPT  exact and approximation algorithms were developed \cite{CourcelleO07,OumS06,HlinenyO08,Oum08,OumS07,Jeong0O18}, see  \autoref{table:cw_history} for an overview. 
 The running times of all these algorithms is at least cubic in $n$. The existence of an \FPT-approximation in subcubic time was widely open, see \cite[Question~3]{Oum17}.
 Rankwidth $1$ graphs are known as 
distance-hereditary graphs \cite{Oum05}. There is a (non-trivial) linear time $\OO(n+m)$ recognition algorithm for distance-hereditary graphs  \cite{DamiandHP01}. To the best of our knowledge, already for rankwidth $k=2$ no algorithm faster than $\OO(n^3)$ was known. 
Algorithms for computing the branchwidth of certain matroids were studied by  Hlinen{\'{y}} in~\cite{Hlineny05,Hlineny06}.

Following the work of Courcelle, Makowsky, and Rotics in \cite{CourcelleMR00}, there is a lot of literature on developing algorithms (\FPT and \XP) on graphs of bounded cliquewidth   \cite{GerberK03,GodlinKM08,KoblerR01,KoblerR03,Rao07,SuchanT07}. The design of algorithms on rank decompositions  \cite{GanianH10,GanianHO13,DBLP:journals/dam/Bui-XuanTV10} and related graph parameters ``sandwiched'' by the rankwidth like NLC-width  \cite{Wanke94}  and Boolean width \cite{Bui-XuanTV11} is also an active research area in graph algorithms. For each of these width parameters, the only known way to compute the corresponding 
expression or decomposition is through rank decompositions. 
By   \Cref{theorem_rw_algo}, the running times of  all  \FPT algorithms on graphs of   $k$-cliquewidth, $k$-rankwidth, $k$-Boolean width or $k$-NLC width, improve from $f(k)n^3$ to $f(k)n^2$.


%
%

The branchwidth of a graph  was introduced by Robertson and Seymour   in Graph Minors~X~\cite{RobertsonS91}. 
The branchwidth of a planar graph is computable in polynomial time but in general the problem is \NP-hard \cite{SeymourTh94}. Bodlaender and Thilikos  in \cite{BodlaenderThil97} developed an algorithm computing the branchwidth of a graph in time $2^{\OO(k^3)} n$. 

  As we already mentioned, branchwidth of a graph is an invariant similar to treewidth and within a constant factor of treewidth. By the seminal result of Bodlaender \cite{Bodlaender96}, deciding whether the treewidth of a graph is at most $k$ can be done in time $2^{\OO(k^3)} n$. There are several   linear time  \FPT algorithms with single-exponential dependence in $k$ that reach a constant approximation ratio~\cite{BodlanderDDFLP13,Korhonen21}. In particular, Korhonen \cite{Korhonen21} obtained an algorithm that in time $2^{\OO(k)} n$ computes a tree decomposition of width at most $2k+1$ or concludes that the treewidth of the graph is more than $k$. When it comes to branchwidth, the result of  Korhonen \cite{Korhonen21} is incomparable  with \Cref{theorem_vw_algo}. 
\Cref{theorem_vw_algo} implies 3-approximation for treewidth, which is worse than Korhonen's algorithm, while Korhonen's algorithm yields a 3-approximation for branchwidth, which is worse than \Cref{theorem_vw_algo}.

\paragraph{Organization of the paper.}
We overview our framework for designing \FPT 2-approximation algorithms for computing branch decomposition in \Cref{subsec_framework}, outlining the main novel techniques.  \Cref{sec_notations} contains definitions and simple facts about branch decompositions. 
In \Cref{sec_comb_results} we develop the combinatorial tools of our framework which will be used in the next sections.    \Cref{section_algoprorefine} is devoted to the  algorithmic part of our framework. 
\Cref{sec_rankwidth_approx} implements our framework for rankwidth and \Cref{sec_gr_bw} for branchwidth of a graph. 

\section{Overview of our framework}\label{subsec_framework}
%
In this section, we overview our framework for designing \FPT 2-approximation algorithms for computing branch decompositions.
Our framework consists of two parts. The first part is combinatorial and the second part is algorithmic. 
We first overview the combinatorial part, then the algorithmic part, and then the specific applications for approximating rankwidth and branchwidth of graphs.


\subsection{Combinatorial framework}

Combinatorial arguments about how and when a branch decomposition can be improved form  the core of our framework. For a branch decomposition $(T,L)$ of a function $f$, we want to identify whether  $(T,L)$ could be refined into a  ``better'' branch decomposition $(T',L')$. 
By better, we mean the following. 
Let $k$ be the width of $(T,L)$ and $h$ be the number of heavy edges of $T$, that is, the edges $e$ where $f(e)$ is the width of  $(T,L)$.  
Then   $(T',L')$ is \emph{better}  than $(T,L)$ if the width of $(T',L')$ is at most $k$ and the number of edges of width $k$ in $T'$ is less than $h$.

Our central combinatorial insight is that if the width of $(T,L)$ is more than $2\bw(f)$, then for any  
heavy edge $e$, there is a partition of $V$ into three sets $(C_1, C_2, C_3)$ with some particular properties,  such that  the quadruple  $(e,C_1, C_2, C_3)$ ``enforces''  a better refinement of $T$. 
We start first with explaining how  edge $e$ and   tripartition $(C_1, C_2, C_3)$ enforce a refinement. 
\begin{figure}[!t]
\centering
\begin{tikzpicture}[every node/.style={draw, circle, fill=black, scale=0.4}]
\node (u) at (-0.5,0) {};
\node[fill=none, draw=none, scale=2.5] (uT) at (-0.4,0.3) {$u$};

\node (v) at (0.5,0) {};
\node[fill=none, draw=none, scale=2.5] (vT) at (0.4,0.3) {$v$};

\node (x) at (-1,0.7) {};

\node (a) at (-1.5,1.4) {};
\node[fill=none, draw=none, scale=2.5] (aT) at (-1.8,1.5) {$a$};

\node (b) at (-1.8,0.7) {};
\node[fill=none, draw=none, scale=2.5] (bT) at (-2.1,0.7) {$b$};

\node (y) at (-1,-0.7) {};

\node (c) at (-1.8,-0.7) {};
\node[fill=none, draw=none, scale=2.5] (cT) at (-2.1,-0.7) {$c$};

\node (d) at (-1.5,-1.4) {};
\node[fill=none, draw=none, scale=2.5] (dT) at (-1.8,-1.5) {$d$};

\node (z) at (1.2,-0.5) {};

\node (g) at (2.1,-0.2) {};
\node[fill=none, draw=none, scale=2.5] (gT) at (2.4,-0.1) {$g$};

\node (h) at (1, 0.7) {};
\node[fill=none, draw=none, scale=2.5] (hT) at (1.3,0.8) {$h$};

\node (w) at (1.8, -1) {};

\node (e) at (2.4, -1.6) {};
\node[fill=none, draw=none, scale=2.5] (eT) at (2.7,-1.7) {$e$};

\node (f) at (2.5, -0.9) {};
\node[fill=none, draw=none, scale=2.5] (fT) at (2.8,-0.9) {$f$};

\path (u) edge (v);
\path (u) edge (x);
\path (x) edge (a);
\path (x) edge (b);
\path (u) edge (y);
\path (y) edge (d);
\path (y) edge (c);
\path (v) edge (z);
\path (z) edge (g);
\path (v) edge (h);
\path (z) edge (w);
\path (w) edge (e);
\path (w) edge (f);

\node[fill=none, draw=none, scale=4] (ar) at (0,-1.8) {$\downarrow$};
\end{tikzpicture}
\vspace{-1em}
\begin{tabular}{lcr}
\begin{minipage}{.32\linewidth}
\begin{tikzpicture}[scale=0.8, every node/.style={draw, circle, fill=black, scale=0.4}]
\node (u) at (-0.5,0) {};
\node[fill=none, draw=none, scale=2.5] (uT) at (-0.4,0.3) {$u_1$};

\node (v) at (0.5,0) {};
\node[fill=none, draw=none, scale=2.5] (vT) at (0.4,0.3) {$v_1$};

\node (x) at (-1,0.7) {};

\node (a) at (-1.5,1.4) {};
\node[fill=none, draw=none, scale=2.5] (aT) at (-1.8,1.5) {$a$};

\node (b) at (-1.8,0.7) {};
\node[fill=none, draw=none, scale=2.5] (bT) at (-2.1,0.7) {$b$};

\node (y) at (-1,-0.7) {};

\node (c) at (-1.8,-0.7) {};

\node (d) at (-1.5,-1.4) {};

\node (z) at (1.2,-0.5) {};

\node (g) at (2.1,-0.2) {};
\node[fill=none, draw=none, scale=2.5] (gT) at (2.4,-0.1) {$g$};

\node (h) at (1, 0.7) {};

\node (w) at (1.8, -1) {};

\node (e) at (2.4, -1.6) {};

\node (f) at (2.5, -0.9) {};

\path (u) edge (v);
\path (u) edge (x);
\path (x) edge (a);
\path (x) edge (b);
\path (u) edge (y);
\path (y) edge (d);
\path (y) edge (c);
\path (v) edge (z);
\path (z) edge (g);
\path (v) edge (h);
\path (z) edge (w);
\path (w) edge (e);
\path (w) edge (f);
\end{tikzpicture}
\vspace{2.5em}
\end{minipage}
&
\begin{minipage}{.32\linewidth}
\begin{tikzpicture}[scale=0.8, every node/.style={draw, circle, fill=black, scale=0.4}]
\node (u) at (-0.5,0) {};
\node[fill=none, draw=none, scale=2.5] (uT) at (-0.4,0.3) {$u_2$};

\node (v) at (0.5,0) {};
\node[fill=none, draw=none, scale=2.5] (vT) at (0.4,0.3) {$v_2$};

\node (x) at (-1,0.7) {};

\node (a) at (-1.5,1.4) {};

\node (b) at (-1.8,0.7) {};

\node (y) at (-1,-0.7) {};

\node (c) at (-1.8,-0.7) {};
\node[fill=none, draw=none, scale=2.5] (cT) at (-2.1,-0.7) {$c$};

\node (d) at (-1.5,-1.4) {};

\node (z) at (1.2,-0.5) {};

\node (g) at (2.1,-0.2) {};

\node (h) at (1, 0.7) {};

\node (w) at (1.8, -1) {};

\node (e) at (2.4, -1.6) {};
\node[fill=none, draw=none, scale=2.5] (eT) at (2.7,-1.7) {$e$};

\node (f) at (2.5, -0.9) {};
\node[fill=none, draw=none, scale=2.5] (fT) at (2.8,-0.9) {$f$};

\path (u) edge (v);
\path (u) edge (x);
\path (x) edge (a);
\path (x) edge (b);
\path (u) edge (y);
\path (y) edge (d);
\path (y) edge (c);
\path (v) edge (z);
\path (z) edge (g);
\path (v) edge (h);
\path (z) edge (w);
\path (w) edge (e);
\path (w) edge (f);

\node[fill=none, draw=none, scale=4] (ar) at (0,-2.2) {$\downarrow$};
\end{tikzpicture}
\vspace{0.3em}
\end{minipage}
&
\begin{minipage}{.32\linewidth}
\begin{tikzpicture}[scale=0.8, every node/.style={draw, circle, fill=black, scale=0.4}]
\node (u) at (-0.5,0) {};
\node[fill=none, draw=none, scale=2.5] (uT) at (-0.4,0.3) {$u_3$};

\node (v) at (0.5,0) {};
\node[fill=none, draw=none, scale=2.5] (vT) at (0.4,0.3) {$v_3$};

\node (x) at (-1,0.7) {};

\node (a) at (-1.5,1.4) {};

\node (b) at (-1.8,0.7) {};

\node (y) at (-1,-0.7) {};

\node (c) at (-1.8,-0.7) {};

\node (d) at (-1.5,-1.4) {};
\node[fill=none, draw=none, scale=2.5] (dT) at (-1.8,-1.5) {$d$};

\node (z) at (1.2,-0.5) {};

\node (g) at (2.1,-0.2) {};

\node (h) at (1, 0.7) {};
\node[fill=none, draw=none, scale=2.5] (hT) at (1.3,0.8) {$h$};

\node (w) at (1.8, -1) {};

\node (e) at (2.4, -1.6) {};

\node (f) at (2.5, -0.9) {};

\path (u) edge (v);
\path (u) edge (x);
\path (x) edge (a);
\path (x) edge (b);
\path (u) edge (y);
\path (y) edge (d);
\path (y) edge (c);
\path (v) edge (z);
\path (z) edge (g);
\path (v) edge (h);
\path (z) edge (w);
\path (w) edge (e);
\path (w) edge (f);
\end{tikzpicture}
\end{minipage}
\end{tabular}

\begin{tikzpicture}[every node/.style={draw, circle, fill=black, scale=0.4}]
\node (cen) at (0,0) {};
\node[fill=none, draw=none, scale=2.5] (cenT) at (0.2,0.2) {$t$};

\node (x) at (0,0.7) {};
\node[fill=none, draw=none, scale=2.5] (xT) at (0,1) {$w_1$};

\node (y) at (-0.6,-0.6) {};
\node[fill=none, draw=none, scale=2.5] (yT) at (-0.7,-0.3) {$w_2$};

\node (w) at (0.6,-0.6) {};
\node[fill=none, draw=none, scale=2.5] (wT) at (0.7,-0.3) {$w_3$};

\node (v) at (-0.7,1) {};

\node (g) at (0.7,1) {};
\node[fill=none, draw=none, scale=2.5] (gT) at (1,1.1) {$g$};

\node (a) at (-1.4,1.2) {};
\node[fill=none, draw=none, scale=2.5] (aT) at (-1.7,1.3) {$a$};

\node (b) at (-1.4,0.5) {};
\node[fill=none, draw=none, scale=2.5] (bT) at (-1.7,0.5) {$b$};

\node (c) at (-1.4,-0.4) {};
\node[fill=none, draw=none, scale=2.5] (cT) at (-1.7,-0.4) {$c$};

\node (z) at (-1,-1.2) {};

\node (e) at (-1.8,-1.1) {};
\node[fill=none, draw=none, scale=2.5] (eT) at (-2.1,-1.1) {$e$};

\node (f) at (-1.6,-1.7) {};
\node[fill=none, draw=none, scale=2.5] (fT) at (-1.9,-1.7) {$f$};

\node (d) at (1.5,-1) {};
\node[fill=none, draw=none, scale=2.5] (dT) at (1.8,-1) {$d$};

\node (h) at (1.5,-0.1) {};
\node[fill=none, draw=none, scale=2.5] (hT) at (1.8,-0.1) {$h$};

\path (cen) edge (x);
\path (cen) edge (y);
\path (cen) edge (w);
\path (x) edge (g);
\path (x) edge (v);
\path (v) edge (a);
\path (v) edge (b);
\path (y) edge (c);
\path (y) edge (z);
\path (z) edge (e);
\path (z) edge (f);
\path (w) edge (h);
\path (w) edge (d);

\node[fill=none,draw=none] (centerer) at (2.8,0) {};
\end{tikzpicture}
\caption{Example of the refinement operation. A branch decomposition $(T, L)$ on a set $V = \{a,b,c,d,e,f,g,h\}$ (top).  For a tripartition
 $(C_1=\{a,b,g\}, C_2=\{c,e,f\}, C_3=\{d,h\})$, we have 
 the partial branch decompositions $(T_1, L_1) = (T,L\restriction_{\{a,b,g\}})$, $(T_2, L_2) = (T,L \restriction_{\{c,e,f\}})$, and $(T_3, L_3) = (T,L\restriction_{ \{d,h\}})$ (middle), and the refinement of $(T,L)$ with $(uv, C_1, C_2, C_3)$ (bottom).\label{fig:refex}}
\end{figure}
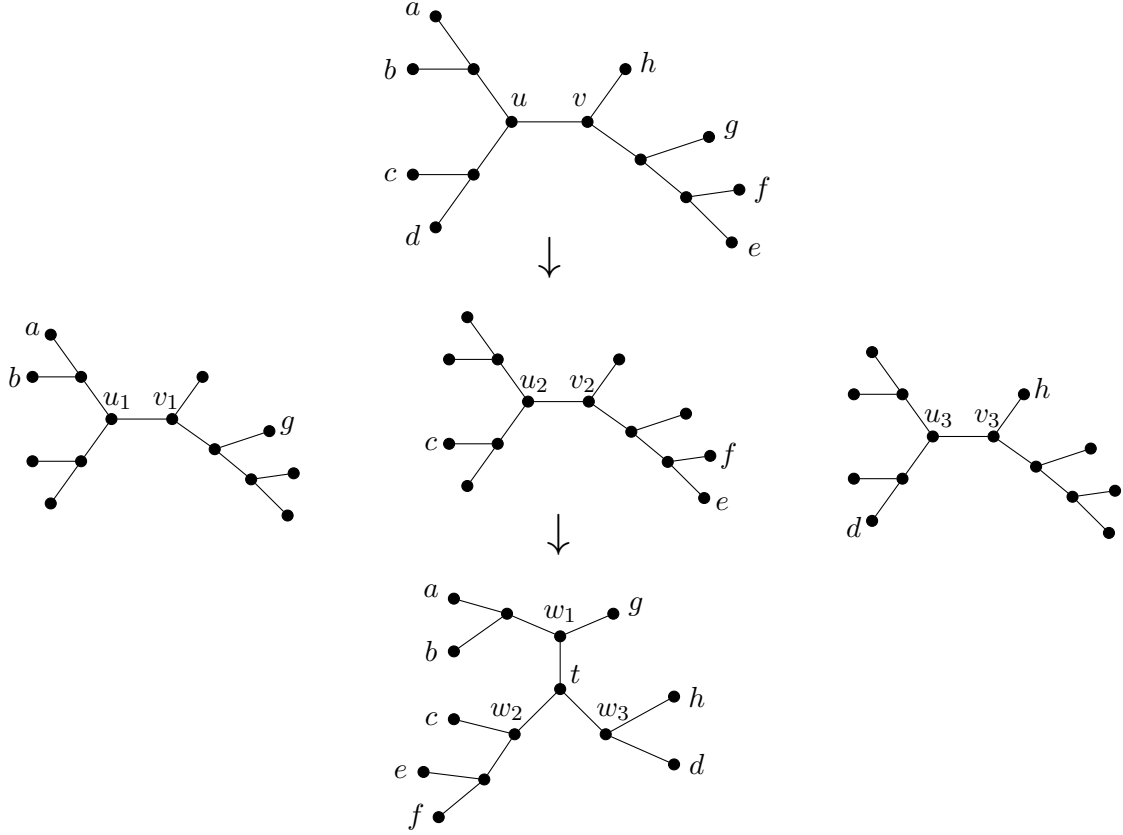



Let $(T,L)$ be a branch decomposition of a connectivity function $f : 2^V \rightarrow \mathbb{Z}_{\ge 0}$.
Let  $uv$ be an edge of $T$ and $(C_1, C_2, C_3)$ be a tripartition of $V$.  (One of the sets $C_i$ could be empty).
We denote by $(T,L\restriction_{C_i})$ the partial branch decomposition that has the same tree $T$ as $(T,L)$, but the mapping $L\restriction_{C_i}$ is a restriction of mapping $L$ to $C_i$. 
(We say that a partial branch decomposition is a branch decomposition where the labeling function $L$ is only required to be an injection.)

The \emph{refinement $(T',L')$ of $(T,L)$ with $(uv, C_1, C_2, C_3)$} is obtained by first taking the partial branch decompositions $(T_i, L_i) = (T, L\restriction_{C_i})$ for each $i \in \{1,2,3\}$.
Let $u_i v_i$ be the copy of the edge $uv$ in $T_i$.
The partial branch decompositions $(T_1, L_1)$, $(T_2, L_2)$, and $(T_3, L_3)$ are combined into a new partial branch decomposition by inserting a new node $w_i$ on each edge $u_i v_i$, and then connecting the nodes $w_1$, $w_2$, and $w_3$ to a new center node $t$.
Finally, the obtained partial branch decomposition is transformed into a branch decomposition by iteratively pruning leaves that are not labeled and suppressing degree-2 nodes. (See also \Cref{fig:refex}.)

How the  widths of edges change after the refinement?
First, note that if $C_i$ is non-empty, then there will be an edge $tw_i\in E(T')$ corresponding to a bipartition $(C_i, \overline{C_i})$ in the refinement, see \Cref{fig:refine}.  (For $X\subseteq V$, we use $\overline{X}$ to denote $V\setminus X$.) Thus the  width of $tw_i$ will be at least $f(C_i)$.
Let $(W, \oW)$ be the bipartition corresponding to the edge $uv$ in $T$.
Then in the refinement there will be edges $u_iw_i$ and $v_iw_i$ corresponding to bipartitions $(C_i \cap W, \overline{C_i \cap W})$ and $(C_i \cap \oW, \overline{C_i \cap \oW})$ for each $i$, provided that they are non-empty.
Therefore, the width of the refinement will be at least $\max(f(C_i \cap W), f(C_i \cap \oW))$.
Our first result is that if $f(uv) > 2\bw(f)$, then there exists a refinement that locally improves the branch decomposition around the edge $uv$.

\begin{figure} 
\begin{center}
\includegraphics[scale=4]{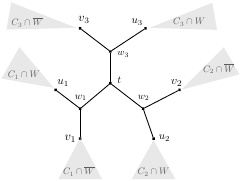} 
\caption{Changes of the width in a ``neighborhood'' of $uv$.}\label{fig:refine}
\end{center}
\end{figure}

\begin{theorem}
\label{the:comb_wimpr}
Let $f : 2^V \rightarrow \mathbb{Z}_{\ge 0}$ be a connectivity function.
If $W \subseteq V$ is a set with $f(W) > 2 \bw(f)$, then there exists a tripartition $(C_1, C_2, C_3)$ of $V$ so that for each $i \in \{1,2,3\}$ it holds that $f(C_i) < f(W) / 2$, $f(C_i \cap W) < f(W)$, and $f(C_i \cap \oW) < f(W)$.
\end{theorem}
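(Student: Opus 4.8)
My plan is to construct the tripartition by repeatedly peeling off "low-boundary" pieces from $W$ and $\oW$ using submodularity, in the spirit of how balanced-separator arguments are usually run for branchwidth. Fix an optimal branch decomposition $(T^*, L^*)$ of $f$ of width $\bw(f) =: b$, so $f(W) > 2b$. Every edge $e$ of $T^*$ partitions $V$ into two parts, each of $f$-value $\le b$; I will use the set of all these $2|V|-3$ partitions (or rather the ``cuts'' they induce) together with submodularity to carve $V$ into three chunks. The key quantities to control simultaneously are $f(C_i)$, $f(C_i\cap W)$, and $f(C_i\cap \oW)$: I need the first below $f(W)/2$ and the latter two strictly below $f(W)$.

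**Main steps.**

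\begin{enumerate}
\item[(1)] \emph{Weight function on the tree.} Define, for each node/edge of $T^*$, a ``mass'' reflecting how $W$ is distributed: e.g. for an edge $e$ with sides $(A_e, \overline{A_e})$, think of the cut pair $(A_e\cap W, \overline{A_e}\cap W)$ and its $f$-value, which is $\le b$ by submodularity of the slice $X\mapsto f(X\cap W)$ wait --- more carefully, I would use that $X\mapsto f(X\cap W)$ and $X\mapsto f(X\cap\oW)$ are themselves connectivity functions (symmetric and submodular as restrictions/contractions), each of branchwidth $\le b$ since $(T^*,L^*)$ restricted to $W$ resp.\ $\oW$ witnesses it. So on $W$ alone there is a branch decomposition of width $\le b$, and likewise on $\oW$.
\item[(2)] \emph{Find a balanced node.} In the branch decomposition of the slice-function on $W$, a standard averaging/centroid argument gives an internal node $t_W$ whose three subtrees split $W$ into parts $W_1, W_2, W_3$ with each $f(W_j) \le b$ and each $f(\overline{W_j}\cap W)\le b$ as well (boundaries of subtrees of a branch decomposition). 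Do the same on $\oW$, getting $\oW_1, \oW_2, \oW_3$ with $f(\oW_j)\le b$.
\item[(3)] \emph{Combine and regroup.} Now I have a partition of $V$ into at most $9$ cells $W_i\cap(\text{stuff})$... rather, the nine sets $W_i$ (three of them partitioning $W$) and $\oW_j$ (three partitioning $\oW$). I need to bundle these six sets into three groups $C_1,C_2,C_3$. For the condition $f(C_i) < f(W)/2$: since $f(W) > 2b$, it suffices to make $f(C_i)\le b$, which I can try to guarantee by pairing each $W_i$ with a suitable $\oW_{\sigma(i)}$ and bounding $f(W_i\cup \oW_{\sigma(i)})$ via submodularity against $f(W_i)+f(\oW_{\sigma(i)}) \le 2b$ --- but that only gives $\le 2b$, not $< b$. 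This is the crux (see below). For the conditions $f(C_i\cap W) = f(W_i) \le b < f(W)$ and $f(C_i\cap\oW)=f(\oW_{\sigma(i)})\le b < f(W)$: these hold immediately once $C_i = W_i \cup \oW_{\sigma(i)}$, so the two ``intersection'' constraints are essentially free.
\end{enumerate}

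**The main obstacle and how I expect to handle it.**

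The hard part is the constraint $f(C_i) < f(W)/2$. Bundling $W_i$ with $\oW_{\sigma(i)}$ only yields $f(C_i)\le f(W_i)+f(\oW_{\sigma(i)})$, and each term can be as large as $b$, giving $\le 2b$, which is \emph{not} below $f(W)/2$ (we only know $f(W)/2 > b$). So the naive pairing fails, and I expect the real content of the proof to be a smarter choice: rather than picking a single balanced node on each slice independently, I would choose the tripartition of $W$ and the matching of $\oW$-parts \emph{jointly}, exploiting that $f(C_i)$ "sees" cancellation --- e.g.\ choose the partition of $V$ first from a balanced internal node $t$ of $(T^*,L^*)$ itself, giving $V = V_1\cup V_2\cup V_3$ with $f(V_i)\le b < f(W)/2$ automatically, and then \emph{refine/adjust} each $V_i$ by uncrossing with $W$ so that also $f(V_i\cap W) < f(W)$ and $f(V_i\cap\oW) < f(W)$ while keeping $f(V_i)$ small. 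The uncrossing step would replace $V_i\cap W$ by either itself or $V_i\cup W$ type corrections via $f(V_i\cap W) + f(V_i\cup W) \le f(V_i) + f(W)$; the point is to show one cannot have $f(V_i\cap W)\ge f(W)$ for all the ``bad'' indices simultaneously, because summing such inequalities over a partition of $W$ contradicts $f$ being a connectivity function of small branchwidth on the slice. I would push this counting argument until each $C_i$ meets all three bounds; closing that case analysis cleanly (and handling empty cells) is where the effort goes.
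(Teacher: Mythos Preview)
Your final pivot in step (5) is the right starting point and matches the paper: take the tripartition $(V_1,V_2,V_3)$ at an internal node of an optimal branch decomposition of $f$ itself, so that $f(V_i)\le\bw(f)<f(W)/2$ is automatic. (Your earlier steps (1)--(3), slicing $W$ and $\oW$ separately, are a dead end as you realized; also $X\mapsto f(X\cap W)$ is not symmetric on $2^V$, so those slice functions are not connectivity functions in the required sense.)

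The genuine gap is in \emph{which} node to pick. A centroid/balanced node need not satisfy $f(V_i\cap W)<f(W)$; your proposed ``refine/adjust by uncrossing'' could repair one part only at the cost of destroying $f(C_i)<f(W)/2$, and the vague counting sketch (``summing such inequalities over a partition contradicts\ldots'') does not close the case analysis. What is missing is a mechanism that selects a node at which all three parts are good simultaneously.

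The paper supplies exactly this via an \emph{orientation} of the optimal tree. For an edge with sides $(C,\overline C)$, say $W$ \emph{directly orients} $C$ if $f(C\cap W)<f(\overline C\cap W)$ and $f(C\cap\oW)<f(\overline C\cap\oW)$. Two short lemmas then finish the proof:
\begin{itemize}
\item If some edge is \emph{disoriented} (neither side is directly oriented), then $(C,\overline C,\emptyset)$ is already a \splitW. This is a half-page submodularity computation.
\item Otherwise every edge is oriented; no edge points to a leaf (easy), so walking along the orientation lands at an internal sink node $t$, where each incident part $C_i$ is directly oriented. Then $f(C_i\cap W)<f(\overline{C_i}\cap W)$ combined with $f(C_i\cap W)+f(\overline{C_i}\cap W)\le f(W)+2f(C_i)<2f(W)$ forces $f(C_i\cap W)<f(W)$, and similarly for $\oW$.
\end{itemize}
So the missing idea is not an uncrossing or averaging argument, but this two-case orientation: either an edge already gives an arity-2 improvement, or a sink node gives an arity-3 one.
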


We call such a tripartition $(C_1, C_2, C_3)$ satisfying $f(C_i) < f(W)/2$, $f(C_i \cap W) < f(W)$, and $f(C_i \cap \oW) < f(W)$, a \emph{\splitW}, signifying that the refinement operation enforced by this tripartition and the edge $uv$ corresponding to $(W, \oW)$ improves the branch decomposition around the edge $uv$. However, we cannot guarantee that   the new decomposition $T'$   would be better  than $T$. 
The reason is that  \splitW  can increase the widths of edges in branches $(T_i, L_i)$. It is a   non-trivial statement that the existence of a \splitW on a heavy edge 
implies the existence of a refinement that 
is better than $(T,L)$. More formally, 

\begin{restatable}{theorem}{combmaintheorem}
\label{the:comb_gloimpr}
Let $f : 2^V \rightarrow \mathbb{Z}_{\ge 0}$ be a connectivity function and $(T, L)$ be a branch decomposition of $f$ of width $k$, having $h\geq 1$ edges of width $k$.
Let $uv$ be an edge of $(T, L)$ corresponding to a partition $(W, \oW)$ and having width $f(uv) = k$.
If there exists a \splitW, then there exists a \splitW $(C_1, C_2, C_3)$ such that the refinement of $(T,L)$ with $(uv, C_1, C_2, C_3)$ has width at most $k$ and less than $h$ edges of width $k$.
\end{restatable}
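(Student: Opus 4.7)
My plan is to pick a $W$-improvement extremally with respect to a carefully chosen lexicographic potential, and then uncross against the tree structure of $T$ to force both the width and the heavy-edge count to behave correctly.

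First, I would inventory the widths of edges of $(T', L')$. The three ``central'' edges $tw_i$ carry widths $f(C_i) < k/2$, and the edges $u_iw_i$ and $v_iw_i$ carry widths $f(C_i \cap W) < k$ and $f(C_i \cap \oW) < k$; all of these bounds are immediate from the $W$-improvement inequalities. Every other edge of $T'$ is a surviving copy, inside some $T_i$, of an edge $e \neq uv$ of $T$: if $(A, \overline{A})$ is the bipartition of $V$ corresponding to $e$ in $(T, L)$, then its copy in $T_i$ corresponds to $(A \cap C_i, V \setminus (A \cap C_i))$ and has width $f(A \cap C_i)$.

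I would then fix a $W$-improvement $(C_1, C_2, C_3)$ that is extremal for the lexicographic potential defined by, first, the sorted triple $(f(C_1), f(C_2), f(C_3))$ in decreasing order, then the multiset $\{f(C_i \cap W), f(C_i \cap \oW) : i \in \{1,2,3\}\}$, and finally the multiset of numbers $\{f(A_e \cap C_i) : e \in E(T) \setminus \{uv\},\ i \in \{1,2,3\}\}$. For this extremal choice I would prove two claims: (i) $f(A_e \cap C_i) \leq f(A_e) \leq k$ for every edge $e \neq uv$ of $T$ and every $i$, so the refinement has width at most $k$; and (ii) for every \emph{heavy} edge $e \neq uv$ of $T$, the equality $f(A_e \cap C_i) = k$ holds for at most one $i$, so together with the disappearance of $uv$ the total number of heavy edges drops strictly below $h$.

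Both claims would be established by the same submodular uncrossing template. For (i), assume $f(A_e \cap C_i) > f(A_e)$; submodularity yields $f(A_e \cup C_i) < f(C_i)$, which suggests replacing $C_i$ by $A_e \cup C_i$ and removing the $A_e$-elements from $C_j, C_k$. The crucial structural fact is that, since $e \neq uv$, the set $A_e$ is the leaf set of a subtree of $T$ lying entirely on one side of the edge $uv$, so $A_e \subseteq W$ or $A_e \subseteq \oW$. This sidedness lets me keep track of $f(C_i' \cap W)$, $f(C_i' \cap \oW)$, and the analogous quantities for the other two parts via further applications of submodularity and posimodularity, and certify that the modified tripartition is again a $W$-improvement with strictly smaller potential, contradicting extremality. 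Claim (ii) is proved by the same template, using that two indices simultaneously attaining $f(A_e \cap C_i) = f(A_e \cap C_j) = k$ together with $f(C_i), f(C_j) < k/2$ forces a modification that decreases the tertiary measure. The main obstacle is exactly this uncrossing bookkeeping: classical two-set submodular uncrossing produces one new set with smaller $f$-value, but here I must simultaneously preserve the three inequalities defining a $W$-improvement on all three parts after rerouting elements between $C_i$, $C_j$, and $C_k$, and the sidedness of $A_e$ with respect to $(W, \oW)$ is the combinatorial crux that makes the three inequalities compatible at once.
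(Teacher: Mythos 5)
Your proposal has the right architecture --- choose a \splitW extremal for a lexicographic potential, uncross against subtrees of $T$, and then count heavy edges via the observation that every edge of the refinement other than the three central ones and the six edges around them is a copy of some $e\neq uv$ of $T$ restricted to one part --- and your final counting step (claims (i) and (ii) imply the conclusion) is sound. This is essentially the paper's strategy: the paper's \minsplitW minimizes width, then arity, then sum-width, and its \strongminsplit additionally minimizes the number of $T$-nodes whose leaf sets meet two parts, which plays the role of your tertiary multiset.

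The genuine gap is that the step you defer as ``uncrossing bookkeeping'' is the entire mathematical content of the theorem, and the potential you propose does not appear strong enough to carry it out. After replacing $C_i$ by $C_i'=C_i\cup A_e$ (with $f(C_i')<f(C_i)$ by submodularity, as you say), you must show $f(C_j\setminus A_e)\le f(C_j)$ for the other two parts and re-verify all six inequalities $f(C_j'\cap W)<f(W)$, $f(C_j'\cap\oW)<f(W)$; otherwise the modified triple need not be a \splitW at all and no contradiction with extremality is obtained. In the paper this is where all the work lies, and the arity-3 case hinges on a chain of facts that your potential cannot reproduce: minimality of \emph{arity} (absent from your potential) forces $W$ to directly orient each $C_i$ (otherwise $(C_i,\overline{C_i},\emptyset)$ is a \splitW of no larger width and smaller arity), which gives $f(W\cup C_i)>f(W)/2$, which gives the well-linkedness statement $f(P)\ge f(C_i)$ for all $C_i\subseteq P\subseteq C_i\cup(W\cap C_j)$, which is exactly what certifies $f(C_j\setminus A_e)\le f(C_j)$. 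With your first criterion (the sorted decreasing triple of $f(C_i)$), the arity-2 competitor $(C_i,\overline{C_i},\emptyset)$ has sorted triple $(f(C_i),f(C_i),0)$, which need not be lexicographically smaller than the original, so the direct-orientation property --- and with it the whole arity-3 uncrossing --- is not available. Your secondary multiset $\{f(C_i\cap W),f(C_i\cap\oW)\}$ is not a substitute without an argument, and none is given. So either you must add arity (or an equivalent) to the potential and then actually prove the orientation and well-linkedness lemmas, or find a different route to $f(C_j\setminus A_e)\le f(C_j)$; as written, claims (i) and (ii) are unsupported precisely at the point you identify as the crux.
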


The combination of \Cref{the:comb_wimpr,the:comb_gloimpr} shows that when we consider a heavy edge $uv$,
we either prove that $k = f(uv) \le 2 \bw(f)$ (if there is no  a \splitW then  $f(uv) \le 2 \bw(f)$), 
or that there exists a \splitW which can be used to globally improve $(T,L)$ by reducing the number of edges of width $k$.
We refer to a \splitW $(C_1,C_2,C_3)$ whose existence is guaranteed by \Cref{the:comb_gloimpr} as a \emph{\strongminsplit}.
The  statement of  \Cref{the:comb_gloimpr} is  non-constructive---it does not provide a clue how to compute a \strongminsplit. 
To make  \Cref{the:comb_gloimpr} constructive, we show that 
a \strongminsplit
can be found by selecting a \splitW that optimizes four explicit criteria over all \splitWs. 
More precisely, let $(C_1, C_2, C_3)$ be a \splitW.   
The first three conditions that $(C_1, C_2, C_3)$ optimizes are 
\begin{enumerate}
\item[$(i)$]  $\max\{f(C_1), f(C_2), f(C_3)\}\to \min$, 
\item[$(ii)$] subject to ($i$), the number of non-empty sets $C_i$ in  $(C_1, C_2, C_3)$ is minimum, and 
\item[$(iii)$]  subject to ($i$) and ($ii$), $f(C_1)+f( C_2)+ f( C_3) \to \min$.
\end{enumerate}
(In the proof we refer to a \splitWs satisfying $(i)$--$(iii)$ as to the \minsplitW.)

To state the fourth property we need the following  definition. 
Let $r=uv \in E(T)$ be the edge of $T$ corresponding to the partition $(W,\oW)$. 
By slightly abusing the notation, let us view $r$ as the root of $T$. Then for every node $w$ of $T$, we define $T_r[w]$ as the set of leaves of the subtree of $T$ rooted in $w$. We say that a   tripartition $(C_1, C_2, C_3)$ \emph{intersects}   a node $w\in V(T)$ if 
$T_r[w]$ contains elements from at least two sets $C_i$, $i\in \{1,2,3\}$. 
\begin{enumerate}
\item[$(iv)$]  subject to ($i$), ($ii$), and ($iii$), $(C_1, C_2, C_3)$ intersects the minimum number of nodes of $T$. 
\end{enumerate}

The main result of \Cref{sec_comb_results} is \Cref{the:glostro}. By this theorem,  any  \splitW satisfying $(i)$--$(iv)$ is a  \strongminsplit.
The properties  $(i)$--$(iv)$ are very useful from the algorithmic perspective: For several important classes of function $f$,  \splitW{s}   with properties  $(i)$--$(iv)$   can be computed by making use of dynamic programming on the branch decomposition $T$.

The combinatorial characterization of improving refinements brings us to the following generic algorithm for 
  2-approximating branchwidth.
The algorithm repeatedly takes a heavy edge $uv$ corresponding to a partition $(W, \oW)$.  Then it checks if there exists a \splitW, and if not, returns $(T,L)$.  If a \splitW exists,  the algorithm refines $(T,L)$ using a \strongminsplit.
In the next subsection we outline how to implement this generic algorithm  efficiently for branch decompositions that support efficient dynamic programming for finding \strongminsplits.

\subsection{Algorithmic framework}
In order to efficiently apply refinement operations to improve a branch decomposition, we need to find \strongminsplits efficiently.
For this, the main idea is that finding a \strongminsplit corresponds to a partitioning problem minimizing $(i)$--$(iv)$. Such type of problems can be solved efficiently  on branch decompositions that support dynamic programming, like rank decompositions or  branch decompositions of graphs. The application of  sequence of at most $n$ refinement operations guarantees that the width of the resulting decomposition would decrease by at least one. 
Thus starting with a branch decomposition of width $k$ and 
  repeatedly using dynamic programming to find a \strongminsplit in $\OO(f(k) \cdot n)$ time  would result in a total time complexity of $\OO(f(k) \cdot k \cdot n^2)$ to reduce the width down to $2\bw(f)$.
Of course, this is not sufficient to obtain our applications. Let us remind that we target a linear time algorithm for the branchwidth of a graph. For rankwidth, where we target for a quadratic running time, we also have to implement iterative compression,  resulting in cubic running time.


The algorithmic ingredient of our general framework strongly uses the combinatorial  properties of 
  \strongminsplits:  It appears, that  refinements with \strongminsplits ``interplay well'' with dynamic programming.  This allows us to implement a sequence of refinement operations that reduces the width from $k$ to $k-1$ in a total  of $t(k)  2^{\OO(k)}  n$ time, where 
   $t(k)$ depends on the time complexity of dynamic programming on the branch decomposition. This part is inspired by the amortization techniques from \cite{Korhonen21}.

We exploit the fact that when we refine a branch decomposition $(T,L)$ on an edge $uv$, only some connected subtree of $T$ around $uv$ changes. 
We call the nodes in this connected subtree the edit set of the refinement operation.
In particular, we show that a refinement operation can be implemented by removing its edit set $R$ and inserting a connected subtree of $|R|$ new nodes in its place.
Then we define a potential function $\phi$ of a branch decomposition so that $\phi(T,L)$ is initially bounded by $2^{\OO(k)} n$, where $k$ is the width of $(T,L)$.
We show that when refining with a \strongminsplit, a refinement operation with edit set $R$ decreases $\phi(T,L)$ by at least $|R|$, and therefore the total sum of $|R|$ across a series of refinement operations is bounded by $2^{\OO(k)} n$.

We exploit the bound on the sizes of edit sets by only re-computing the dynamic programming tables of the nodes in the edit set and re-using the rest.
We formalize the notion of performing dynamic programming operations in time $t(k)$ with a definition of a ``refinement data structure'' in \Cref{subsec:refds}.
This leads to the following theorem.

\begin{restatable}{theorem}{themainalg}
\label{the:main_alg}
Let $f$ be a connectivity function for which there exists a refinement data structure with time complexity $t(k)$.
There is an algorithm that given a branch decomposition $(T,L)$ of $f$ of width $k$, in time $t(k) 2^{\OO(k)} n$ either outputs a branch decomposition of $f$ of width at most $k-1$, or correctly concludes that $k \le 2\bw(f)$.
\end{restatable}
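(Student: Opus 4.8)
The plan is to make the generic refinement procedure sketched in \Cref{subsec_framework} algorithmic, and to pay for it with an amortization argument. The algorithm maintains a branch decomposition together with the assumed refinement data structure (which, by definition, supports the relevant partitioning queries in time $t(k)$). At each iteration it picks a heavy edge $uv$ of the current decomposition, i.e.\ one with $f(uv)=k$, corresponding to a partition $(W,\oW)$, and uses the data structure to test whether a \splitW for $W$ exists. If none exists, then the contrapositive of \Cref{the:comb_wimpr} gives $k=f(W)\le 2\bw(f)$, and we stop and report this. Otherwise we compute a \splitW $(C_1,C_2,C_3)$ minimizing the criteria $(i)$--$(iv)$; by \Cref{the:glostro} it is a \strongminsplit, so by \Cref{the:comb_gloimpr} the refinement of $(T,L)$ with $(uv,C_1,C_2,C_3)$ still has width at most $k$ but strictly fewer than $h$ edges of width $k$. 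We replace $(T,L)$ by this refinement and repeat. Since the number of heavy edges strictly decreases at each successful iteration and is initially at most the number of edges of $T$, which is $\OO(n)$, after $\OO(n)$ iterations we have either reported $k\le 2\bw(f)$ or reached a decomposition of width at most $k-1$, which we output. This gives correctness; the whole difficulty is the running time.

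The naive running time is $\OO(t(k)\cdot n)$ per iteration, hence $\OO(t(k)\,n^2)$ overall, which is too slow. The key observation is that refining with $(uv,C_1,C_2,C_3)$ only changes a connected subtree of $T$ around $uv$: we call this subtree the \emph{edit set} $R$ of the operation, and we show that the operation can be realized by deleting $R$ and inserting in its place a connected subtree on at most $|R|$ new nodes. Since everything outside $R$ is structurally untouched, the dynamic-programming tables attached to those nodes remain valid, so only the tables of $R$ (together with the bookkeeping needed when the root of the dynamic program is moved to the next heavy edge) must be recomputed, at cost $t(k)2^{\OO(k)}$ per touched node. Thus, up to a factor of $2^{\OO(k)}$, the total running time is governed by $\sum_j |R_j|$, the sum of edit-set sizes over all refinements performed, and it suffices to show $\sum_j|R_j|\le 2^{\OO(k)} n$.

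To bound $\sum_j|R_j|$ we introduce, following the amortization technique of \cite{Korhonen21}, a potential function $\phi$ on branch decompositions with three properties: $\phi(T,L)\ge 0$ always; $\phi$ of the initial width-$k$ decomposition is at most $2^{\OO(k)} n$; and refining with a \strongminsplit decreases $\phi$ by at least the size of the edit set. Given these, $\sum_j|R_j|\le 2^{\OO(k)} n$, and the stated bound $t(k)2^{\OO(k)} n$ follows. The potential is a weighted count of nodes of $T$, where the weight of a node grows with the ``complexity'' of the subtree below it relative to the partition $(W,\oW)$ induced by the current heavy edge; the drop is then extracted from the fact that a \strongminsplit optimizes $(i)$--$(iv)$, and in particular criterion $(iv)$ (intersecting the fewest nodes of $T$) is exactly what prevents the refinement from scattering the decomposition and keeps the edit set small.

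I expect the main obstacle to be precisely this last ingredient: designing $\phi$ so that it simultaneously satisfies the $2^{\OO(k)} n$ initialization bound and provably drops by $|R|$ under a \strongminsplit refinement, and verifying that the combinatorial optimality conditions $(i)$--$(iv)$ from \Cref{sec_comb_results} translate into the required monotonicity of $\phi$. A secondary but delicate point is specifying the refinement data structure so that the set of dynamic-programming tables invalidated by an edit is itself of size $\OO(|R|)$ --- rather than, say, an entire root-to-edit-set path --- which will require care in how the dynamic program is rooted and how the root is migrated between successive heavy edges.
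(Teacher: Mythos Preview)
Your high-level plan matches the paper's: iterate refinements on heavy edges, amortize via a potential $\phi$ bounding $\sum_j |R_j|$. The framework is right; two of the details you sketch are off, and the first is the genuine gap.

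The potential you describe is wrong in a way that would not work. You propose a weight on each node that depends on ``the partition $(W,\oW)$ induced by the current heavy edge.'' Such a potential changes discontinuously when you switch from one heavy edge to the next, so it cannot amortize a sequence of refinements on different edges. The paper's potential depends only on the decomposition $T$ (and the threshold $k$), not on any distinguished edge: it is
\[
\phi_k(T)=\sum_{e\in E(T)}\psi_k(f(e)),\qquad
\psi_k(x)=\begin{cases}x\cdot 3^x & x<k,\\ 3x\cdot 3^x & x\ge k.\end{cases}
\]
The extra factor $3$ at width $k$ is so that the heavy edge $uv$, which gets replaced by up to nine new edges (three of form $(C_i,\overline{C_i})$ with $f(C_i)<k/2$ and six of form $(C_i\cap W,\overline{C_i\cap W})$ with width $<k$), yields a surplus $\psi_k(k)-3\psi_k((k{-}1)/2)-6\psi_k(k{-}1)\ge 3\cdot 3^k$.

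You also misread the role of criterion $(iv)$. It does not ``keep the edit set small.'' What it buys, via \Cref{the:glostro}, is the \emph{strict} inequality $f(T_r[w]\cap C_i)<f(T_r[w])$ for every node $w$ in the edit set and every $i$ with $T_r[w]\cap C_i\neq\emptyset$. Since $\psi_k(m)-3\psi_k(m-1)=3^m\ge 1$, each edit-set node contributes a drop of at least $1$ to $\phi_k$; hence $\phi_k$ drops by at least $|R|$ per refinement (\Cref{lem:amortize}), and since $\phi_k(T)\le 2^{\OO(k)}n$ initially, $\sum_j|R_j|\le 2^{\OO(k)}n$. Edit sets may be large individually; only their sum is controlled.

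The root-migration issue you flag is real and is not solved by ``pick any heavy edge.'' The paper threads all refinements into a single depth-first traversal (\Cref{alg:detailed}): the data structure's root edge walks along the DFS path via Move, and each node's state advances from \unseen to \open to \closed at most once. Since the total number of nodes ever created is $\OO(n)+\sum_j|R_j|\le 2^{\OO(k)}n$, the total number of Move and table-recomputation operations is also $2^{\OO(k)}n$, giving the claimed $t(k)\,2^{\OO(k)}n$ bound.
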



\subsection{Applications}

\paragraph{Rankwidth.}
%
We show that for rank decompositions, the 
refinement data structure from \Cref{the:main_alg}
%
can be implemented in $t(k) = 2^{2^{\OO(k)}}$ time, yielding a $2^{2^{\OO(k)}} n$ time implementation of the algorithm in 
 \Cref{the:main_alg} for rankwidth.
To obtain \autoref{theorem_rw_algo}, we then apply this algorithm with the technique of iterative compression of Reed, Smith, and Vetta \cite{ReedSmithVetta-OddCycle}, see also \cite[Chapter~4]{cygan2015parameterized}.  In particular,  by iteratively obtaining a rank decomposition of $G$ of width $\le 2 \rw(G) + 1$ from a rank decomposition of $G \setminus \{v\}$ of width $\le 2 \rw(G \setminus \{v\})$, and then applying  \Cref{the:main_alg} to decrease the width or to prove that it is already within a factor of 2 from the optimum.

A crucial detail of this iterative compression and the implementation of  \Cref{the:main_alg} 
 for rankwidth is that we always maintain an \emph{augmented rank decomposition}, i.e., a rank decomposition that for each edge corresponding to a partition $(A, B)$ of $V(G)$ stores representatives of the twin classes of the bipartite graph $G[A, B]$.
In particular, the implementation of \Cref{the:main_alg} for rank decompositions requires an augmented rank decomposition as an input and outputs an augmented rank decomposition.
The augmentation is required for implementing dynamic programming on rank decompositions, and the reason for maintaining the augmentation throughout the iterative compression is that we are not aware of any $\OO(f(k) n)$ time algorithms for making a given rank decomposition augmented.
Instead, in every step of the iterative compression we use a $2^{\OO(k)} n$ time algorithm for obtaining an augmented rank decomposition of $G$ from an augmented rank decomposition of $G \setminus \{v\}$.

\paragraph{Branchwidth of a graph.}
We show that for branch decompositions of graphs, the
refinement data structure from \Cref{the:main_alg}
can be implemented in
 $t(k) = 2^{\OO(k)}$ time, thus yielding a $2^{\OO(k)} n$ time implementation of the algorithm   for branchwidth of a graph.
By using a 2-approximation algorithm for treewidth with time complexity $2^{\OO(k)} n$~\cite{Korhonen21}, we can assume that we initially have a branch decomposition of width at most $3k$, where $k$ is the branchwidth.
We then obtain \autoref{theorem_vw_algo} by $k$ applications of the algorithm of \Cref{the:main_alg}.
We note that this algorithm can be made more self-contained, to only depend on the compression technique of Bodlaender~\cite{Bodlaender96} instead of treewidth algorithms, by applying the compression technique of Bodlaender in the same manner as it is applied in~\cite{BodlanderDDFLP13,Korhonen21}.

\section{Notation}\label{sec_notations}
Let $G$ be a graph, its vertices are $V(G)$ and edges $E(G)$.
For a vertex $v$, $N_G(v)$ denotes the neighborhood of $v$ in $G$, and for a vertex set $X$, $N_G(X) = \bigcup_{v \in X} N_G(v) \setminus X$.
We drop the subscript $G$ when it is clear from the context.
For a graph $G$ and a set $X \subseteq V(G)$, we denote by $G[X]$ the subgraph induced by $X$.
Let $v$ be a degree-2 vertex in $G$, with neighbors $N(v) = \{u, w\}$.
The suppression of $v$ deletes the vertex $v$ and edges incident to it and inserts the edge $uw$ if it does not exists.

For a graph $G$ and a pair $A,B$ of disjoint subsets of $V(G)$, $G[A,B]$ is the bipartite graph induced by edges between $A$ and $B$ and labeled with the bipartition $(A,B)$.
The notation $M_G[A,B]$ denotes the $|A| \times |B|$ 0-1 matrix representing $G[A,B]$.
All matrices in this paper are 0-1 matrices, and the GF(2) rank of matrix $M$ is denoted by $\rank(M)$.

A bipartition of a set $V$ is an ordered pair $(X,Y)$ of disjoint sets so that $X \cup Y = V$.
A tripartition of a set $V$ is an ordered triple $(X,Y,Z)$ of disjoint sets so that $X \cup Y \cup Z = V$.
Note that in our definitions a partition can contain empty parts.
We say that a tripartition $(X,Y,Z)$ of $V$ intersects a set $A \subseteq V$ if at least two of $X,Y,Z$ have a non-empty intersection with $A$.

Let $V$ be a finite set and $f : 2^V \rightarrow \mathbb{Z}_{\ge 0}$ be a function. Function 
$f$ is \emph{submodular} if 
\[
f(U\cup W)+f(U\cap W) \leq f(U)+f(W)
\]
for all $U,W\subseteq V$. 
 For a  set $W \subseteq V$,  we denote $\overline{W} = V \setminus W$. A function $f$ is \emph{symmetric} if for every $W \subseteq V$, $f(W)=f(\overline{W})$.  Function $f$ is a \emph{connectivity function} if (i) $f(\emptyset) = 0$, (ii) $f$  is symmetric,  and (iii) $f$ is submodular.

A cubic tree is a tree where each node has degree 1 or 3. 
A \emph{branch decomposition} of a connectivity function $f : 2^V \rightarrow \mathbb{Z}_{\ge 0}$ is a pair $(T, L)$, where  $T$ is a cubic tree and $L$ is a bijection mapping  $V$ to the leaves of $T$. If $|V|\leq 1$, then $f$ admits no branch decomposition. Very often, by slightly abusing the notation, we will treat the leaves of the branch decomposition as elements of $V$. Thus usually we do not distinguish a set of leaves $W$ of $T$ and the subset $L^{-1}(W)$ of $V$, and we usually denote a branch decomposition by only the tree $T$ instead of the pair $(T,L)$.

Let $e=uv \in E(T)$ be an edge of a branch decomposition $T$.
We denote by $T[uv] \subseteq V$ the set of leaves of $T$ that are closer to the node $u$ than to the node $v$, and by $T[vu] \subseteq V$ the set of leaves that are closer to $v$ than $u$.
Hence $(T[uv], T[vu])$ is the bipartition of $V$ induced by the connected components of the forest $T-e$ obtained from $T$ by deleting   $e$.
The \emph{width} of an edge $e=uv$ of $T$ is $f(e)=f(T[uv])=f(T[vu])$, and the \emph{width of a branch decomposition $(T,L)$}, denoted by $\bw(T,L)$, is a maximum width of an edge. Finally, the   \emph{branchwidth} of $f$, denoted by $\bw(f)$, is the minimum width of a branch decomposition of $f$.

Let $r \in E(T)$ be an edge of a branch decomposition $T$.
We introduce some notation with the intuition that $T$ is rooted at the edge $r$.
The $r$-subtree of a node $w \in V(T)$ is the subtree of $T$ induced by nodes $x \in V(T)$ such that $w$ is on the unique $x-r$ path (including $w$).
For any node $w \in V(T)$, we denote by $T_r[w] \subseteq V$ the set of leaves in the $r$-subtree of $w$.
Note that when $r = uv$, it holds that $T_r[u] = T[uv]$, $T_r[v] = T[vu]$, and for any $w \in V(T)$ it holds that either $T_r[w] \subseteq T_r[u]$ or $T_r[w] \subseteq T_r[v]$.
The $r$-parent of a node $w \in V(T) \setminus \{u,v\}$ is the node next to $w$ on the unique $w-r$ path.
If $p$ is the $r$-parent of $w$, then $w$ is an $r$-child of $p$.
Note that every non-leaf node has exactly two $r$-children, and every leaf node has no $r$-children.
A set $X \subseteq V$ $r$-intersects a node $w \in V(T)$ if $X \cap T_r[w]$ is non-empty.
Similarly, a tripartition $(C_1, C_2, C_3)$ of $V$ $r$-intersects a node $w \in V(T)$ if it intersects the set $T_r[w]$.


\section{Combinatorial results}\label{sec_comb_results}
In this section we prove our combinatorial results.
Throughout this section we use the convention that $f : 2^V \rightarrow \mathbb{Z}_{\ge 0}$ is a connectivity function.

\subsection{Refinement operation}
The central concept of our results is the refinement operation.
Before defining it, we need the definition of a \emph{partial branch decomposition}.

\begin{definition}[Partial branch decomposition]
A partial branch decomposition on a set $C$ is a pair $(T,L)$, where $T$ is a cubic tree and $L$ is an injection from $C$ to the leaves of $T$.
\end{definition}

Let $(T,L)$ be a branch decomposition of $f$ and let $C_i \subseteq V$.
We denote by $(T, L \restriction_{C_i})$ the partial branch decomposition on the set $C_i$ obtained by restricting the mapping $L$ to only $C_i$.
Now we can define the refinement operation (see also \autoref{fig:refex}).

\begin{definition}[Refinement]\label{def_refinement}
Let $(T,L)$ be a branch decomposition of $f$, $uv \in E(T)$ an edge of $T$, and $(C_1, C_2, C_3)$ a tripartition of $V$.
We define the refinement of $T$ with $(uv, C_1, C_2, C_3)$.

For each $i \in \{1,2,3\}$, let $(T_i, L_i) = (T, L \restriction_{C_i})$, and let $u_i v_i$ be the copy of the edge $uv$ in $T_i$.
Now, let $(T', L')$ be a partial branch decomposition on $V$ obtained by first inserting a node $w_i$ on the edge $u_i v_i$ of each $T_i$ (i.e. $V(T_i) \gets V(T_i) \cup \{w_i\}$ and $E(T_i) \gets E(T_i) \cup \{u_i w_i, w_i v_i\} \setminus \{u_i v_i\}$), then taking the disjoint union of $(T_1, L_1)$, $(T_2, L_2)$, $(T_3, L_3)$, and then inserting a node $t$ adjacent to $w_1$, $w_2$, $w_3$, connecting the disjoint union into a tree.
Finally, the refinement is obtained by simplifying $(T', L')$ by iteratively removing degree-1 nodes that are not labeled by $L'$, and suppressing degree-2 nodes.
\end{definition}

We observe that if $T'$ is a refinement of $T$ with $(r, C_1, C_2, C_3)$, then every edge of $T'$ either corresponds to a bipartition $(C_i, \overline{C_i})$ or to a bipartition $(T_r[w] \cap C_i, \overline{T_r[w] \cap C_i})$, where $w \in V(T)$.
More formally as follows.

\begin{observation}\label{observ_refinement}
Let $T$ be a branch decomposition, $r \in E(T)$, and $(C_1, C_2, C_3)$ a tripartition of $V$.
Let $T'$ be the refinement of $T$ with $(r, C_1, C_2, C_3)$.
It holds that
\begin{eqnarray*}
&\{\{T'[u'v'], T[v'u']\} \mid u'v' \in E(T')\} = \\
&\bigcup_{i \in \{1,2,3\}} \left( \{\{C_i, \overline{C_i}\}\} \cup \{\{T_r[w] \cap C_i, \overline{T_r[w] \cap C_i}\} \mid w \in V(T)\} \right) \setminus \{\{\emptyset, V\}\}.
\end{eqnarray*}
\end{observation}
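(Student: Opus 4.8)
The plan is to unwind the two constructions---the refinement of Definition~\ref{def_refinement} and the bipartitions claimed in Observation~\ref{observ_refinement}---and check that they match edge-by-edge. First I would fix notation: write $r = uv$, root $T$ at $r$, and for each $i$ let $(T_i, L_i) = (T, L\restriction_{C_i})$ with $w_i$ inserted on the copy $u_iv_i$ of $uv$, and let $t$ be the new center node adjacent to $w_1, w_2, w_3$. Call the resulting (unsimplified) partial branch decomposition $(T^\circ, L^\circ)$, so that $(T', L')$ is obtained from $(T^\circ, L^\circ)$ by the simplification step (iteratively pruning unlabeled degree-1 nodes and suppressing degree-2 nodes). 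The key preliminary observation is that simplification does not change the \emph{set} of bipartitions of $V$ induced by edges: pruning an unlabeled leaf $x$ deletes an edge whose bipartition is $(\emptyset, V)$ (the leaf side carries no labels), and suppressing a degree-2 node merges two edges that induce the very same bipartition into one. Hence it suffices to compute the bipartitions induced by the edges of $T^\circ$ and then delete the occurrence of $\{\emptyset, V\}$.

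Next I would enumerate the edges of $T^\circ$. They come in three families. The edges $tw_i$ for $i \in \{1,2,3\}$: deleting $tw_i$ separates the leaves of $T_i$ (which are exactly $L(C_i)$, identified with $C_i$) from everything else, so this edge induces $\{C_i, \overline{C_i}\}$. The edges $u_iw_i$ and $w_iv_i$ inside the subdivided copy of $uv$ in $T_i$: since $T_i$ has the same tree shape as $T$ with $uv$ subdivided by $w_i$, deleting $u_iw_i$ separates the leaves of $T_i$ lying in the $u$-side of $r$, which are $T_r[u] \cap C_i = T[uv] \cap C_i$, from the rest; writing a generic node $w \in V(T)$ this is the case $w = u$, giving $\{T_r[u] \cap C_i, \overline{T_r[u]\cap C_i}\}$, and symmetrically $w_iv_i$ gives the case $w = v$. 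Finally, every remaining edge of $T^\circ$ is a copy of an edge of $T$ lying strictly on one side of $r$, i.e.\ it corresponds to a unique non-root node $w \in V(T) \setminus \{u,v\}$ whose $r$-subtree lies entirely within $T_r[u]$ or $T_r[v]$; deleting that edge in $T_i$ separates off exactly the leaves of $T_r[w]$ that survive in $T_i$, namely $T_r[w] \cap C_i$. Conversely, every node $w \in V(T)$ is accounted for in exactly one of these subcases, so the set of bipartitions of $T^\circ$ is precisely $\bigcup_{i}\bigl(\{\{C_i,\overline{C_i}\}\} \cup \{\{T_r[w]\cap C_i, \overline{T_r[w]\cap C_i}\} \mid w \in V(T)\}\bigr)$.

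Combining the two paragraphs: the edges of $T'$ induce exactly the bipartitions induced by the edges of $T^\circ$, minus the trivial bipartition $\{\emptyset, V\}$ (which is the only one simplification removes, and which may indeed arise, e.g.\ from $tw_i$ when $C_i = V$, or from $T_r[w]\cap C_i$ when that intersection is empty or all of $V$). Intersecting the right-hand side with the complement of $\{\{\emptyset,V\}\}$ gives precisely the formula in the statement, so I would conclude the equality.

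The only point requiring care---the ``main obstacle,'' though it is more bookkeeping than difficulty---is the correspondence between edges of $T^\circ$ lying off the subdivided path and nodes $w \in V(T)$: one must verify that $T_i$ genuinely has the same underlying tree as $T$ (with $uv$ subdivided), which holds because $(T_i, L_i)$ uses the same cubic tree $T$ and only restricts the labeling, so no structural nodes are removed before simplification; and one must confirm that the map $\{$edges of $T$ off $r\} \to \{$non-root nodes of $T\}$ sending an edge to its child endpoint (in the rooting at $r$) is a bijection, which is standard. After that the identity falls out by matching the three edge families to the three ingredients of the union. I would also remark in passing that the statement as written lists $\{T'[u'v'], T[v'u']\}$ on the left, and $T[v'u']$ should read $T'[v'u']$; I would silently use $T'$ throughout.
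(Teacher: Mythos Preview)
Your proposal is correct and is the natural unwinding of Definition~\ref{def_refinement}. The paper itself states Observation~\ref{observ_refinement} without proof, so there is no argument to compare against; your approach via the unsimplified partial decomposition $T^\circ$, together with the remark that pruning and suppression respectively delete and merge edges without changing the set of nontrivial bipartitions, is exactly what the word ``observation'' is meant to encode. Your catch of the typo $T[v'u']$ for $T'[v'u']$ on the left-hand side is also correct.
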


Note that as a branch decomposition is a cubic tree, no two edges correspond to the same cut, i.e., the set $\{\{T[uv], T[vu]\} \mid uv \in E(T)\}$ has cardinality $|E(T)|$.

\subsection{Improving refinement\label{subsec:refi_intro}}
Now we provide the formal definitions and results of our combinatorial framework, postponing the proofs to \Cref{subsec:symsubmodlemmas,subsec:thefind,sec:comb_end}.
First we define a \splitW of a set $W \subseteq V$.

\begin{definition}[\splitW]
Let $W \subseteq V$.
A tripartition $(C_1, C_2, C_3)$ of $V$ is a \emph{\splitW}  
 if for every $i\in\{1,2, 3\}$, 
 \begin{itemize}
 \item 
 $f(C_i) < f(W)/2$, 
 \item $f(C_i \cap W) < f(W)$, and 
 \item $f(C_i \cap \overline{W}) < f(W)$. 
 \end{itemize}
The \emph{width of a \splitW} is $\max\{f(C_1), f(C_2), f(C_3)\}$, the \emph{sum-width} of a \splitW is $f(C_1) + f(C_2) + f(C_3)$, and the \emph{arity} of a \splitW is the number of non-empty parts in the partition.
\end{definition}
Let us note that  because of the condition  $f(C_i \cap W) < f(W)$, we have that for every $i\in\{1,2, 3\}$, $C_i\neq V$. Thus the arity of  every \splitW is always 2 or 3.
Also, \splitWs are symmetric in the sense that the ordering of the sets 
  $C_1, C_2, C_3$ as well as  replacing $W$ by $\overline{W}$,  do not change any  properties of a \splitW.

The following lemma is a re-statement of \autoref{the:comb_wimpr}.

\begin{lemma}
\label{the:find}
If $W \subseteq V$ such that $f(W) > 2\bw(f)$, then there exists a \splitW.
\end{lemma}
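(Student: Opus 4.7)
My plan is to extract the \splitW from an optimal branch decomposition of $f$. Fix $(T^*, L^*)$ of width $\bw(f)$. For every internal node $v$ of $T^*$, the three edges at $v$ carve $V$ into a tripartition $(P_1, P_2, P_3)$, where each $P_i$ is the leaf set of one subtree hanging off $v$; by the width bound, $f(P_i) \le \bw(f) < f(W)/2$ is automatic. So it suffices to find an internal node whose tripartition additionally satisfies $f(P_i \cap W) < f(W)$ and $f(P_i \cap \overline{W}) < f(W)$ for all three $i$. Call a set $S \subseteq V$ \emph{bad} if $f(S \cap W) \ge f(W)$ or $f(S \cap \overline{W}) \ge f(W)$, and \emph{good} otherwise; the target is an internal node all of whose tripartition parts are good.

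The hard part is the following submodular lemma, which I will prove first: for every edge $e$ of $T^*$ inducing bipartition $(A, \overline{A})$, at most one of $A, \overline{A}$ is bad. For example, if both were $W$-bad then submodularity applied to $(A, W)$ and $(\overline{A}, W)$ combined with $f(A) = f(\overline{A}) < f(W)/2$ forces $f(\overline{A} \cap \overline{W}), f(A \cap \overline{W}) < f(W)/2$, and then subadditivity gives $f(\overline{W}) < f(W)$, contradicting $f(\overline{W}) = f(W)$. The crossed case, $f(A \cap W) \ge f(W)$ together with $f(\overline{A} \cap \overline{W}) \ge f(W)$, is even more direct: submodularity on $(A, W)$ already implies $f(\overline{A} \cap \overline{W}) \le f(A) < f(W)/2$, which contradicts $f(\overline{A} \cap \overline{W}) \ge f(W)$. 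The remaining two bad-bad patterns are symmetric to these by swapping the roles of $A \leftrightarrow \overline{A}$ or $W \leftrightarrow \overline{W}$.

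Granted the lemma, I orient every edge of $T^*$ away from its unique bad side (choosing arbitrarily when both sides are good). Two features of this orientation make the rest easy. First, an internal node $v$ is a source (all three incident edges outgoing) exactly when its tripartition consists of three good parts, which is the \splitW condition. Second, every edge incident to a leaf $\ell$ is oriented into $\ell$: the non-leaf side $V \setminus \{\ell\}$ satisfies $f((V \setminus \{\ell\}) \cap Z) = f(Z) = f(W)$, where $Z \in \{W, \overline{W}\}$ is whichever of $W, \overline{W}$ does not contain $\ell$, so the non-leaf side is always bad. To finish, I assume no internal node is a source and follow an incoming edge backwards from an arbitrary internal node; by acyclicity of $T^*$ the resulting walk never revisits a node, and by the leaf property it never reaches a leaf (no leaf has an outgoing edge), so it must be infinite inside the finite set of internal nodes, a contradiction. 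The internal source so obtained yields the desired \splitW. The lone technical obstacle is the at-most-one-bad-side lemma; once it is set up, the tree-orientation argument is clean.
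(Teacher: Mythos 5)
Your proof is correct, and it shares the paper's overall architecture — fix an optimal branch decomposition, orient its edges with respect to $W$, show that the orientation behaves correctly at leaves, and walk through the tree to an internal node whose three incident subtrees all satisfy the required bounds, so that its tripartition is the desired \splitW. Where you genuinely differ is in the local orientation criterion and hence the key lemma. The paper orients an edge with sides $(C,\overline{C})$ by comparing $f(C\cap W)$ with $f(\overline{C}\cap W)$ (and likewise for $\overline{W}$); it must then handle separately the case of a ``disoriented'' edge (\Cref{lem:disorient}, which yields an arity-2 \splitW on the spot) and separately convert ``directly oriented'' into the bounds $f(C_i\cap W)<f(W)$ at the final node (\Cref{lemma_3splitorientation}). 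You instead define good/bad directly by the target inequalities and prove that no edge of the optimal decomposition can have two bad sides; this single lemma replaces both of the paper's supporting lemmas (its four cases reduce to the same submodularity manipulation $f(A\cap W)\ge f(W)\Rightarrow f(\overline{A}\cap\overline{W})\le f(A)<f(W)/2$, used in the paper's proofs as well), and your argument always produces an arity-3 improvement rather than possibly short-circuiting at an arity-2 one. Both routes are valid; yours is slightly more self-contained since the orientation is phrased in exactly the terms needed at the end. One small nit: ``a source exactly when all three parts are good'' is only an implication in the direction you use (ties between two good sides are broken arbitrarily, so a node with three good parts need not be a source), but that direction is all the argument requires.
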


We postpone the proof of \autoref{the:find} to \Cref{subsec:thefind}.

If $uv$ is an edge of a branch decomposition $T$ with $(T[uv], T[vu]) = (W, \overline{W})$, then refining with $(uv, C_1, C_2, C_3)$ where $(C_1, C_2, C_3)$ is a \splitW locally improves the branch decomposition around the edge $uv$.
Towards proving that the existence of a \splitW implies the existence of a \splitW that globally improves the branch decomposition, we define \minsplitW.

\begin{definition}[\MinsplitW] For $W\subseteq V$, we say that 
a \splitW $(C_1, C_2, C_3)$  is  \emph{minimum} if
\begin{enumerate}
\item[(1)] $(C_1, C_2, C_3)$ is of minimum width among all \splitWs,
\item[(2)] subject to (1), $(C_1, C_2, C_3)$ is of minimum arity, and
\item[(3)]  subject to (1) and (2), $(C_1, C_2, C_3)$ is of minimum sum-width.
\end{enumerate}
\end{definition}

Let $uv = r$ be a an edge of a branch decomposition $T$.
Recall that for any node $w$ of $T$ it holds that either $T_r[w] \subseteq T[uv]$ or $T_r[w] \subseteq T[vu]$, and recall that by \autoref{observ_refinement}, when refining with $(uv, C_1, C_2, C_3)$, an edge corresponding to $T_r[w]$ will turn into edges corresponding to $T_r[w] \cap C_1$, $T_r[w] \cap C_2$, and $T_r[w] \cap C_3$.
The following lemma shows that when refining with a \minsplitW, the width of a branch decomposition does not increase.

\begin{lemma}
\label{the:glo}
Let $W\subseteq V$ and let $(C_1, C_2, C_3)$ be a \minsplitW. 
Then for each set  $W'$ such that $W' \subseteq W$ or $W' \subseteq \overline{W}$ and every  $i \in \{1,2,3\}$, it holds that $f(W' \cap C_i) \le f(W')$.
\end{lemma}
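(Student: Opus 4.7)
The plan is a minimum-counterexample reduction combining submodularity, posimodularity, and the three-level minimality of $(C_1, C_2, C_3)$. Suppose for contradiction that the conclusion fails, and choose a counterexample $(W', i)$ with $|W'|$ minimum among all $W' \subseteq W$ or $W' \subseteq \overline{W}$ and $i \in \{1,2,3\}$ with $f(W' \cap C_i) > f(W')$. The definitions of a \splitW and of a \minsplitW are invariant under $W \leftrightarrow \overline{W}$ (using $f(W) = f(\overline{W})$), so I may assume $W' \subseteq W$; after relabeling the parts, I may take $i = 1$, giving $f(W' \cap C_1) > f(W')$.

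I would then propose the modified tripartition $(\hat{C}_1, \hat{C}_2, \hat{C}_3) := (C_1 \cup W',\ C_2 \setminus W',\ C_3 \setminus W')$. For $\hat{C}_1$, submodularity applied to $C_1$ and $W'$ gives $f(\hat{C}_1) + f(C_1 \cap W') \leq f(C_1) + f(W')$, hence $f(\hat{C}_1) < f(C_1) < f(W)/2$ since $f(C_1 \cap W') > f(W')$. The same inequality applied to $C_1 \cap W$ and $W'$, using $(C_1 \cap W) \cap W' = C_1 \cap W'$ (because $W' \subseteq W$), yields $f(\hat{C}_1 \cap W) < f(C_1 \cap W) < f(W)$; and $f(\hat{C}_1 \cap \overline{W}) = f(C_1 \cap \overline{W}) < f(W)$ is automatic from $W' \cap \overline{W} = \emptyset$.

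For $\hat{C}_j$ with $j \in \{2,3\}$ the argument invokes minimality of $|W'|$. If $W' \cap C_j = \emptyset$ then $\hat{C}_j = C_j$ and the required bounds are inherited. Otherwise, set $W'' := W' \setminus C_j$; then $W'' \subseteq W$ and $|W''| < |W'|$, and since $C_1 \cap C_j = \emptyset$ we have $W'' \cap C_1 = W' \cap C_1$. Were it the case that $f(W'') < f(W')$, then $(W'', 1)$ would be a strictly smaller counterexample, contradicting our minimal choice. Hence $f(W' \setminus C_j) \geq f(W')$. Posimodularity --- the inequality $f(A \setminus B) + f(B \setminus A) \leq f(A) + f(B)$, which follows directly from symmetry and submodularity --- applied to $A = C_j$ and then to $A = C_j \cap W$, both with $B = W'$, yields $f(\hat{C}_j) \leq f(C_j) < f(W)/2$ and $f(\hat{C}_j \cap W) \leq f(C_j \cap W) < f(W)$, while $f(\hat{C}_j \cap \overline{W}) = f(C_j \cap \overline{W})$ is trivial. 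Thus $(\hat{C}_1, \hat{C}_2, \hat{C}_3)$ is a valid \splitW.

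The last step is to compare the two tripartitions in the lexicographic order (width, arity, sum-width) defining a \minsplitW. Since $f(\hat{C}_1) < f(C_1)$ and $f(\hat{C}_j) \leq f(C_j)$ for $j \in \{2,3\}$, the width does not increase, the sum-width strictly decreases, and the arity does not increase (because $\hat{C}_j \subseteq C_j$ for $j \neq 1$, while $\hat{C}_1 \supseteq C_1$ is nonempty: the assumption $f(W' \cap C_1) > f(W') \geq 0$ forces $W' \cap C_1 \neq \emptyset$ and so $C_1 \neq \emptyset$). Hence $(\hat{C}_1, \hat{C}_2, \hat{C}_3)$ is strictly lex-smaller than $(C_1, C_2, C_3)$, contradicting its minimality. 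The main obstacle will be the minimum-counterexample step: one must cleanly split the sub-cases $W' \cap C_j = \emptyset$ versus $W' \cap C_j \neq \emptyset$, and crucially use $W' \subseteq W$ to guarantee that the reduced set $W' \setminus C_j$ remains on the same side of the bipartition $(W,\overline{W})$ so that it genuinely qualifies as a counterexample of the same type.
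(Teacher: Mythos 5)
Your proof is correct, but it takes a genuinely different route from the paper's. The paper splits into arity~2 and arity~3 (\autoref{lem:main2}, \autoref{lem:main3}); the arity-3 case is driven by orientation facts (\autoref{lem:3dir}, \autoref{lem:bigside}) and a well-linkedness lemma (\autoref{lem:linked}) stating $f(P)\ge f(C_i)$ for $C_i\subseteq P\subseteq C_i\cup(W\cap C_j)$, itself extracted from sum-width minimality. You replace all of this with a single uniform argument: a minimum-cardinality counterexample $W'$, from which you deduce $f(W'\setminus C_j)\ge f(W')$ (since $W''=W'\setminus C_j$ would otherwise be a smaller counterexample for the same part $C_1$, using $W''\cap C_1=W'\cap C_1$), and then posimodularity $f(A\setminus B)+f(B\setminus A)\le f(A)+f(B)$ applied with $B=W'$ and $A\in\{C_j,\,C_j\cap W\}$ gives the bounds on $\hat C_j$ and $\hat C_j\cap W$; the bounds on $\hat C_1$ come from plain submodularity exactly as in the paper's \autoref{lem:case1}. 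The lexicographic comparison at the end is sound (width and arity do not increase, sum-width strictly decreases, and $\hat C_1\neq\emptyset$ because $f(W'\cap C_1)>f(W')\ge 0$ forces $W'\cap C_1\neq\emptyset$). What your route buys is brevity and uniformity over the arity; what it gives up is the ``moreover'' clause of \autoref{lem:main2}/\autoref{lem:main3}: being a pure contradiction argument, it shows $f(W'\cap C_i)\le f(W')$ but does not show that in the equality case $(C_1\cup W',C_2\setminus W',C_3\setminus W')$ is again a \minsplitW, which the paper needs later in the proof of \autoref{the:glostro}. For \autoref{the:glo} itself, as stated, your proof is complete.
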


We postpone the proof of \autoref{the:glo} to \Cref{sec:comb_end}.

Next we define a \strongminsplit, which will be the type of refinement that we really use in our results.

\begin{definition}[\Strongminsplit]
Let $T$ be a branch decomposition, and $r = uv \in E(T)$ an edge of $T$.
A $T$-improvement on $r$ is a tuple $(r, C_1, C_2, C_3)$, where $W = T[uv]$, and $(C_1, C_2, C_3)$ is a \minsplitW.
We say that a $T$-improvement on $r$ \emph{intersects} a node $w \in V(T)$ if $(C_1, C_2, C_3)$ $r$-intersects it, i.e., the set $T_r[w]$ intersects at least two sets from $\{C_1, C_2, C_3\}$.
A $T$-improvement $(r, C_1, C_2, C_3)$ is a \emph{\strongminsplit} if it intersects the minimum number of nodes of $T$ among all $T$-improvements on $r$.
\end{definition}

The following theorem  is our main combinatorial result.
In particular, \autoref{the:comb_gloimpr} will be a straightforward application of it.

\begin{restatable}{theorem}{combmainlemma}
\label{the:glostro}
Let $T$ be a branch decomposition, $r \in E(T)$, and $(r, C_1, C_2, C_3)$ a \strongminsplit.
Then for every $i \in \{1,2,3\}$ and every node $w \in V(T)$, it holds that $f(T_r[w] \cap C_i) \le f(T_r[w])$. Moreover, if $T_r[w] \cap C_i \neq \emptyset$, then $f(T_r[w] \cap C_i) = f(T_r[w])$ if and only if $T_r[w] \subseteq C_i$.
\end{restatable}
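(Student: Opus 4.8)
The plan is to prove the two assertions of \autoref{the:glostro} in sequence, using \autoref{the:glo} as a black box for the first and a minimality-of-intersections argument for the second. Fix a \strongminsplit $(r, C_1, C_2, C_3)$ with $r = uv$ and $W = T[uv]$, so $(C_1,C_2,C_3)$ is in particular a \minsplitW. For the inequality $f(T_r[w] \cap C_i) \le f(T_r[w])$, first recall that every node $w \in V(T)$ satisfies either $T_r[w] \subseteq W$ or $T_r[w] \subseteq \overline{W}$ (this was noted in \Cref{sec_notations}). Hence $W' := T_r[w]$ is exactly of the form required by \autoref{the:glo}, and that lemma gives $f(T_r[w] \cap C_i) \le f(T_r[w])$ directly. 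So the inequality part of the theorem is immediate; the real content is the ``moreover'' clause characterizing when equality holds.

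For the equality characterization, one direction is trivial: if $T_r[w] \subseteq C_i$ then $T_r[w] \cap C_i = T_r[w]$, so $f(T_r[w] \cap C_i) = f(T_r[w])$. The hard direction is the converse: assuming $T_r[w] \cap C_i \ne \emptyset$ and $T_r[w] \not\subseteq C_i$, I want to show $f(T_r[w] \cap C_i) < f(T_r[w])$. I would argue by contradiction: suppose $f(T_r[w] \cap C_i) = f(T_r[w])$ for some node $w$ with $\emptyset \ne T_r[w] \cap C_i \subsetneq T_r[w]$. The idea is to ``move'' the part $T_r[w] \setminus C_i$ out of its current sets and into $C_i$ — that is, replace $(C_1,C_2,C_3)$ by the tripartition $(C_1', C_2', C_3')$ where $C_i' = C_i \cup T_r[w]$ and $C_j' = C_j \setminus T_r[w]$ for $j \ne i$. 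One shows via submodularity and symmetry of $f$, together with the hypothesis $f(T_r[w] \cap C_i) = f(T_r[w])$, that this modification does not increase $f$ on any of the three parts nor on their intersections with $W$ and $\overline{W}$, so $(C_1',C_2',C_3')$ is still a \splitW, is still of minimum width and arity and sum-width (hence still a \minsplitW, i.e.\ yields a $T$-improvement on $r$), but $r$-intersects strictly fewer nodes of $T$ — because every ancestor-or-equal node of $w$ that was previously intersected only due to the split of $T_r[w]$ between $C_i$ and the others is now ``unsplit'' at $w$, while no new intersections are created lower down since $T_r[w]$ is now monochromatic. This contradicts the defining minimality of a \strongminsplit.

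The main obstacle I expect is the bookkeeping in the submodularity/symmetry estimates showing that $(C_1', C_2', C_3')$ remains a \minsplitW after the rerouting — in particular verifying $f(C_i') < f(W)/2$ and the two intersection bounds, and that the width, arity, and sum-width do not go up. The key inequality should come from applying submodularity to the pair $(C_i, T_r[w])$: $f(C_i \cup T_r[w]) + f(C_i \cap T_r[w]) \le f(C_i) + f(T_r[w])$, so that $f(C_i') = f(C_i \cup T_r[w]) \le f(C_i) + f(T_r[w]) - f(T_r[w] \cap C_i) = f(C_i)$ using the equality hypothesis; and similarly, intersecting everything with $W$ or $\overline{W}$ and using that $T_r[w]$ lies entirely in one side, one gets $f(C_i' \cap W) \le f(C_i \cap W)$ and $f(C_i' \cap \overline{W}) \le f(C_i \cap \overline{W})$. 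For the sets $C_j'$ with $j \ne i$ one has $C_j' \subseteq C_j$, and one needs a matching submodular estimate (using again the equality hypothesis and that $f(T_r[w]) \ge f(T_r[w] \cap C_j)$ would follow from the first part of the theorem) to control $f(C_j')$ and its intersections — this is where care is needed to ensure none of the quantities $(i)$--$(iii)$ defining minimality increases, so that the modified tripartition genuinely contradicts \strongminsplit-minimality rather than just failing to be a counterexample. Once these estimates are in place, the count of intersected nodes strictly decreasing is a direct combinatorial check on the rooted tree $T_r$.
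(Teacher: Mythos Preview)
Your high-level strategy is exactly the paper's: the inequality is \autoref{the:glo} applied to $W'=T_r[w]$, and for the equality case you replace $(C_1,C_2,C_3)$ by $(C_i\cup T_r[w],\;C_j\setminus T_r[w],\;C_l\setminus T_r[w])$, argue it is again a \minsplitW, and derive a contradiction from the intersection count. The paper does this in two lines by citing \autoref{lem:main} (which packages Lemmas~\ref{lem:main2} and~\ref{lem:main3}), so you do not need to redo the submodular bookkeeping at all---just invoke \autoref{lem:main} with $W'=T_r[w]$.

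Where your sketch is too optimistic is the control of $f(C_j')$ for $j\neq i$ in the arity-$3$ case. The ingredients you name---submodularity, the equality hypothesis $f(C_i\cap W')=f(W')$, and the first-part inequality $f(W')\ge f(C_j\cap W')$---do \emph{not} by themselves yield $f(C_j\setminus W')\le f(C_j)$. Writing $f(C_j\setminus W')=f(\overline{C_j}\cup W')$ and applying submodularity leaves you needing a lower bound $f(C_i\cup(C_l\cap W'))\ge f(C_i)$, which is precisely \autoref{lem:linked}; that lemma is not a routine consequence of submodularity but uses the sum-width minimality in an essential way (a hypothetical violation produces a \splitW of strictly smaller sum-width). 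Similarly, bounding $f(C_j'\cap W)$ requires \autoref{lem:bigside}, which in turn rests on \autoref{lem:3dir} and hence on the arity minimality. So the ``matching submodular estimate'' you anticipate is genuinely deeper than the $C_i'$ case; it is exactly the content of \autoref{lem:main3}.

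For the intersection count, your informal ``ancestor-or-equal / lower down'' phrasing is slightly off. The clean statement (and what the paper checks) is: $(C_1,C_2,C_3)$ $r$-intersects $w$ while $(C_1',C_2',C_3')$ does not, and for every other node $w'$ one uses the trichotomy $T_r[w']\subseteq T_r[w]$, $T_r[w]\subseteq T_r[w']$, or $T_r[w']\cap T_r[w]=\emptyset$ to see that no new intersections are created. Ancestors of $w$ may well remain intersected; the point is only that they were already intersected before.
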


We postpone the proof of \autoref{the:glostro} to \Cref{sec:comb_end}.
Let us prove \autoref{the:comb_gloimpr} using \autoref{the:glostro} and \autoref{observ_refinement}.

\combmaintheorem*
\begin{proof}
Denote $uv = r$.
Take a \strongminsplit $(r, C_1, C_2, C_3)$, which exists by the existence of a \splitW.
Let $T'$ be the refinement of $T$ with $(r, C_1, C_2, C_3)$.
By \autoref{observ_refinement}, each edge of $T'$ corresponds to a bipartition of form $(T_r[w] \cap C_i, \overline{T_r[w] \cap C_i})$, where $w \in V(T)$, or of form $(C_i, \overline{C_i})$.
Because $(C_1, C_2, C_3)$ is a \splitW, all edges of form $(C_i, \overline{C_i})$ have width $<k/2$.
By the first part of \autoref{the:glostro}, all edges of form $(T_r[w] \cap C_i, \overline{T_r[w] \cap C_i})$ have width $\le f(T_r[w]) \le k$, and therefore the width of $T'$ is at most $k$.

By the second part of \autoref{the:glostro},
 if $T'$ contains an edge of form $(T_r[w] \cap C_i, \overline{T_r[w] \cap C_i})$ with width $k$, then $T_r[w] \cap C_i = T_r[w]$, and thus the exact same edge also exists in $T$.
This gives an injective mapping from the edges of $T'$ of width $k$ to edges of $T$ of width $k$.
It remains to observe that by the definition of \splitW, this mapping maps no edge of $T'$ to the edge $uv$ of $T$, and therefore $T'$ contains strictly less edges of width $k$ than $T$.
\end{proof}

\subsection{Observations on connectivity functions\label{subsec:symsubmodlemmas}}
We need the following two observations.

\begin{observation}
\label{lem:compl}
For any $A_1, \ldots, A_n\subseteq V$, it holds that
  $f(A_1 \cap \cdots \cap A_n) = f(\overline{A_1} \cup \cdots \cup \overline{A_n})$.
\end{observation}
\begin{proof}
Note that $\overline{A_1 \cap \cdots \cap A_n} = \overline{A_1} \cup \cdots \cup \overline{A_n}$.
\end{proof}

\begin{observation} \label{lem:p1}For any $A,B\subseteq V$, it holds that
$f(A) \le f(A \cap B) + f(A \cap \overline{B}) \le f(A) + 2 f(B)$.
\end{observation}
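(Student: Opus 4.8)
The statement is \autoref{lem:p1}: for any $A, B \subseteq V$, $f(A) \le f(A \cap B) + f(A \cap \overline{B}) \le f(A) + 2 f(B)$. Both inequalities should follow from submodularity of $f$ together with symmetry, applied to carefully chosen pairs of sets. The plan is to handle the two inequalities separately, each in one or two lines.

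For the \textbf{left inequality}, I would apply submodularity to the sets $U = A \cap B$ and $W = A \cap \overline{B}$. Their union is $A$ and their intersection is $\emptyset$, so submodularity gives $f(A) + f(\emptyset) \le f(A \cap B) + f(A \cap \overline{B})$, and since $f(\emptyset) = 0$ this is exactly $f(A) \le f(A \cap B) + f(A \cap \overline{B})$.

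For the \textbf{right inequality}, I want to bound $f(A \cap B)$ by $\tfrac12$-type terms — more precisely, I would bound $f(A \cap B) \le$ something involving $f(A)$ and $f(B)$, and likewise for $f(A \cap \overline{B})$, so the two bounds add to $f(A) + 2f(B)$. The clean way: apply submodularity to $A$ and $B$ to get $f(A \cap B) \le f(A) + f(B) - f(A \cup B)$, and apply submodularity to $A$ and $\overline{B}$ to get $f(A \cap \overline{B}) \le f(A) + f(\overline{B}) - f(A \cup \overline{B})$. Using symmetry, $f(\overline{B}) = f(B)$, $f(A \cup B) = f(\overline{A} \cap \overline{B}) = f(\overline{A \cup B})$ — hmm, that is not immediately a lower bound. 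A cleaner route: apply submodularity to $A \cap B$ and $\overline{B}$: their union is $A \cup \overline{B}$ and intersection is $\emptyset$ (since $A \cap B$ is disjoint from $\overline{B}$), wait that gives $f(A \cup \overline{B}) \le f(A \cap B) + f(\overline{B})$, i.e. a bound in the wrong direction. Instead apply submodularity to $A \cap B$ and $B$... The intended argument is surely: $f(A \cap B) \le f(A) + f(B)$ via submodularity on $A, B$ dropping the nonnegative $f(A \cup B)$ term; similarly $f(A \cap \overline B) \le f(A) + f(\overline B) = f(A) + f(B)$; summing gives $2f(A) + 2f(B)$, which is too weak. So I need to be more careful and keep the $-f(A\cup B)$ terms: $f(A\cap B) + f(A \cap \overline B) \le \big(f(A) + f(B) - f(A\cup B)\big) + \big(f(A) + f(\overline B) - f(A \cup \overline B)\big) = 2f(A) + 2f(B) - f(A\cup B) - f(A\cup \overline B)$, and then apply submodularity once more to $A \cup B$ and $A \cup \overline B$, whose intersection is $A$ and union is $V$: $f(A) + f(V) \le f(A\cup B) + f(A \cup \overline B)$, i.e. $f(A\cup B) + f(A\cup\overline B) \ge f(A)$ (using $f(V) = f(\overline{\emptyset}) = f(\emptyset) = 0$). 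Substituting, $f(A\cap B) + f(A\cap\overline B) \le 2f(A) + 2f(B) - f(A) = f(A) + 2f(B)$, as desired.

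The \textbf{main (only) obstacle} is picking the right triple of submodularity applications for the upper bound; once the three inequalities ($A$ with $B$; $A$ with $\overline B$; $A\cup B$ with $A\cup \overline B$) are lined up and combined with symmetry ($f(\overline B) = f(B)$, $f(V) = 0$), everything cancels to the stated bound. I expect no real difficulty beyond this bookkeeping.
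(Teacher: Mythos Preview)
Your proposal is correct and essentially identical to the paper's proof: both prove the left inequality by submodularity on $A\cap B$ and $A\cap\overline B$, and the right inequality by summing the submodularity bounds for $(A,B)$ and $(A,\overline B)$ and then showing $f(A\cup B)+f(A\cup\overline B)\ge f(A)$. The only cosmetic difference is that the paper obtains this last bound by rewriting $f(A\cup B)+f(A\cup\overline B)=f(\overline A\cap\overline B)+f(\overline A\cap B)$ via symmetry and reapplying the first inequality to $\overline A$, whereas you apply submodularity directly to $A\cup B$ and $A\cup\overline B$ (using $f(V)=0$); these are the same step in two guises.
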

\begin{proof}
For the first inequality, we have
\begin{eqnarray}f(A) = f((A \cap B) \cup (A \cap \overline{B})) \le f(A \cap B) + f(A \cap \overline{B}) - f(\emptyset).\label{eqobs2}
 \end{eqnarray}
For the second inequality, observe that
\begin{eqnarray*}
f(A \cap B) + f(A \cap \overline{B}) & \le &2f(A) + 2f(B) - f(A \cup B) - f(A \cup \overline{B}) \\
 &= &2f(A) + 2f(B) - f(\overline{A} \cap \overline{B}) - f(\overline{A} \cap B) \le 2f(A) + 2f(B) - f(A), 
 \end{eqnarray*}
where the last inequality follows from \eqref{eqobs2}.
\end{proof}

\subsection{Proof of \Cref{the:find}\label{subsec:thefind}}
The idea of the proof of \autoref{the:find} is that we take an optimal branch decomposition $T$ of $f$, i.e., a branch decomposition $T$ of width $\bw(f)$, and argue that either some bipartition corresponding to an edge of $T$ or some tripartition corresponding to a node of $T$ gives the \splitW.

We define an orientation of a set $C \subseteq V$ with respect to a set $W \subseteq V$.
This will be used for orienting the edges of the optimal branch decomposition based on $W$.

\begin{definition}[Orientation]
\label{def:orient}
Let $C,W\subseteq V$. We say that  
the set $W$ \emph{directly orients} $C$ if $f(C \cap W) < f(\overline{C} \cap W)$ and $f(C \cap \overline{W}) < f(\overline{C} \cap \overline{W})$.
The set $W$ \emph{ inversely orients} $C$ if it directly orients $\overline{C}$.
The set $W$ \emph{disorients} $C$ if it neither directly nor inversely orients $C$.
\end{definition}

Note that the definition of orienting is symmetric with respect to complementing $W$, i.e., $W$ (directly, inversely, dis)-orients $C$ if and only if $\overline{W}$ (directly, inversely, dis)-orients $C$.
Next we show that if there is an edge of an optimal branch decomposition that cannot be oriented according to \autoref{def:orient}, then it corresponds to a \splitW.

\begin{lemma}
\label{lem:disorient}
Let $C,W\subseteq V$. If 
  $W$ disorients $C$ and $f(C) < f(W)/2$, then $(C, \overline{C}, \emptyset)$ is a \splitW.
\end{lemma}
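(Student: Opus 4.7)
The plan is to verify directly that $(C, \overline{C}, \emptyset)$ satisfies every clause in the definition of a \splitW. The easy half uses only the assumption $f(C) < f(W)/2$: by symmetry of $f$, $f(\overline{C}) = f(C) < f(W)/2$, and since $f$ takes nonnegative values this also forces $f(W) \geq 1$, so $f(\emptyset) = 0 < f(W)/2$ and $f(\emptyset \cap W) = f(\emptyset \cap \overline{W}) = 0 < f(W)$. What remains is to prove the four inequalities $f(C \cap W), f(\overline{C} \cap W), f(C \cap \overline{W}), f(\overline{C} \cap \overline{W}) < f(W)$, which is where the disorientation hypothesis enters.

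Abbreviate $a = f(C \cap W)$, $b = f(\overline{C} \cap W)$, $c = f(C \cap \overline{W})$, $d = f(\overline{C} \cap \overline{W})$. I plan to extract four arithmetic relations among $a,b,c,d$ from submodularity. Applying submodularity to the disjoint pair $(C \cap W, \overline{C} \cap W)$ whose union is $W$ gives $a + b \geq f(W)$; the same move on the decomposition of $\overline{W}$ gives $c + d \geq f(W)$. Applying submodularity to the pair $(C, W)$ and rewriting $f(C \cup W) = f(\overline{C} \cap \overline{W}) = d$ via \autoref{lem:compl} yields $a + d \leq f(C) + f(W)$; the symmetric argument with $\overline{C}$ in place of $C$ gives $b + c \leq f(C) + f(W)$. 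Combining these upper bounds with $f(C) < f(W)/2$ I get $a + d < 3f(W)/2$ and $b + c < 3f(W)/2$.

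Next I unpack the disorientation hypothesis. The negation of ``$W$ directly orients $C$'' is ``$a \geq b$ or $c \geq d$'', while the negation of ``$W$ directly orients $\overline{C}$'' (i.e.\ of inverse orientation) is ``$b \geq a$ or $d \geq c$''. Distributing the conjunction of these two disjunctions, the only surviving combinations are the case $a \geq b$ and $d \geq c$ and its mirror $b \geq a$ and $c \geq d$ (the boundary cases $a = b$ or $c = d$ land in both). The swap $C \leftrightarrow \overline{C}$ interchanges these two possibilities, so I may assume without loss of generality that $a \geq b$ and $d \geq c$.

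I then close by contradiction. Suppose $a \geq f(W)$; then $a + d \leq f(C) + f(W) < 3f(W)/2$ forces $d < f(W)/2$, and because $c \leq d$ also $c < f(W)/2$, whence $c + d < f(W)$, contradicting $c + d \geq f(W)$. So $a < f(W)$; the analogous argument using the $a + b \geq f(W)$ lower bound instead gives $d < f(W)$, and then $b \leq a < f(W)$ and $c \leq d < f(W)$ are immediate from the case assumption. The main thing to get right is the case analysis extracted from disorientation and matching each submodular upper bound to the correct lower bound; once the four inequalities on $a,b,c,d$ are in hand, the contradictions collapse to a one-line computation.
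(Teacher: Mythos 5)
Your proof is correct and follows essentially the same route as the paper's: reduce by the $C \leftrightarrow \overline{C}$ symmetry to one orientation of the inequalities, bound the cross terms $f(C\cap W)+f(\overline{C}\cap\overline{W})$ and $f(\overline{C}\cap W)+f(C\cap\overline{W})$ by $f(C)+f(W)$ via submodularity, and derive a contradiction with the lower bound $f(W)\le f(C\cap W)+f(\overline{C}\cap W)$ (resp.\ for $\overline{W}$) from \autoref{lem:p1}. Your write-up is just a more explicit, symmetrized version of the paper's argument, with the (trivial but correct) extra check that the empty part satisfies the \splitW{} conditions.
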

\begin{proof}
By possibly interchanging $C$ with  $\overline{C}$, without loss of generality    we can assume that $f(C \cap W) \le f(\overline{C} \cap W)$ and $f(C \cap \overline{W}) \ge f(\overline{C} \cap \overline{W})$. Since $f(C)= f(\overline{C}) < f(W)/2$, to show that $(C, \overline{C}, \emptyset)$ is a \splitW, 
it suffices to prove that $f(\overline{C} \cap W) < f(W)$ and $f(C \cap \overline{W}) < f(W)$.
Assume, by way of contradiction, that $f(\overline{C} \cap W) \ge f(W)$. Then by the submodularity of $f$, 
 $f(\overline{C} \cup W) \le f(C)$, hence $f(C \cap \overline{W}) \le f(C)$. Therefore,  $f(C \cap \overline{W}) + f(\overline{C} \cap \overline{W}) \le 2f(C) < f(W)$. But this contradicts  \autoref{lem:p1}.
The proof of  $f(C \cap \overline{W}) < f(W)$ is symmetric.
\end{proof}

Now, to prove \Cref{the:find}, we take an optimal branch decomposition $T$ of $f$ and orient each edge $uv$ with $(T[uv], T[vu]) = (C, \overline{C})$ towards $v$ if $W$ directly orients $C$ and towards $u$ if $W$ inversely orients $C$.
If no orientation can be found, then  \Cref{lem:disorient} shows that $(C, \overline{C}, \emptyset)$ is a \splitW and we are done.

\begin{claim}
No edge of $T$ is oriented towards a leaf.
\end{claim}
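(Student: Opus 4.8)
No edge of $T$ is oriented towards a leaf.

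The plan is to argue by contradiction. Suppose some edge $uv$ is oriented towards a leaf $v$, meaning $v = L(x)$ for some $x \in V$ and $W$ orients $C = T[uv]$ so that the arrow points at $v$; after possibly swapping the roles of $C$ and $\overline{C}$ (and using that orientation is symmetric under complementation), we may assume $W$ directly orients $C$, i.e.\ $f(C \cap W) < f(\overline{C} \cap W)$ and $f(C \cap \overline{W}) < f(\overline{C} \cap \overline{W})$. Since $v$ is a leaf, the set on the $v$-side of the edge is a singleton: $T[vu] = \overline{C} = \{x\}$, so $f(\overline{C}) = f(\{x\}) \le 1$.

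The key step is to derive a contradiction with $f(W) > 2\bw(f)$. I would use \autoref{lem:p1} applied to $A = W$ and $B = C$: this gives $f(W) \le f(W \cap C) + f(W \cap \overline{C})$. Now I want to bound both terms on the right. Since $\overline{C} = \{x\}$ is a singleton, $f(W \cap \overline{C}) \le f(\overline{C}) \le 1$ by the same observation (or directly, $W \cap \overline{C}$ is either $\emptyset$ or $\{x\}$, and in both cases its $f$-value is at most $f(\{x\})$, which is at most $1$ by submodularity and symmetry). For the other term, the direct-orientation hypothesis gives $f(C \cap W) < f(\overline{C} \cap W) = f(W \cap \overline{C}) \le 1$, so $f(C \cap W) = 0$. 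Hence $f(W) \le 0 + 1 = 1$. But $W$ is a set with $f(W) > 2\bw(f) \ge 0$, and in fact the whole point of \autoref{the:find} is that $f(W) > 2\bw(f)$; since $\bw(f) \ge 1$ whenever $f$ admits a branch decomposition (the width of any edge incident to a leaf is $f$ of a singleton, and the relevant regime has $f(W) \ge 3$), we get $f(W) \le 1 < f(W)$, a contradiction. I would phrase the final contradiction simply as: $f(W) > 2\bw(f) \ge 0$ forces $f(W) \ge 1$, but more carefully, the orientation was chosen precisely so that the edge incident to the leaf would have small width, and we have shown $f(W) \le 1$ while the setting of \autoref{the:find} has $f(W) > 2\bw(f)$, which is at least $2$ as soon as $\bw(f) \ge 1$; the degenerate case $\bw(f) = 0$ cannot occur because then $f$ is identically $0$ and no set $W$ has $f(W) > 0$.

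The main obstacle I anticipate is handling the bookkeeping around which of $C$, $\overline{C}$ is the singleton and the symmetry of the orientation definition cleanly, and making sure the trivial bound $f(\{x\}) \le 1$ is justified (it follows from $f(\{x\}) = f(\overline{\{x\}})$ and $f(\{x\}) + f(\overline{\{x\}}) \ge f(V) + f(\emptyset) = f(V) \ge 0$ is not quite it — rather one uses that a connectivity function takes integer values and $f(\{x\}) \le f(\emptyset) + (\text{something})$; the simplest route is that submodularity with $U = \{x\}$, $W = \emptyset$ is vacuous, so instead one notes $f$ is $\mathbb{Z}_{\ge 0}$-valued and any branch decomposition has an edge of width $f(\{x\})$, but the cleanest is simply to observe $f(\{x\})$ is a nonnegative integer and the argument only needs $f(C \cap W) < f(\overline{C} \cap W)$ with $\overline{C} \cap W \subseteq \overline{C}$, forcing $f(C\cap W)$ strictly below an integer that is itself at most $f(\{x\})$; combined with \autoref{lem:p1} this still pins $f(W)$ down to be at most $f(\{x\})$, contradicting $f(W) > 2\bw(f) \ge f(\{x\})$ since $\{x\}$ is one side of a leaf edge in an optimal decomposition). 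I expect the argument to be short once the right instantiation of \autoref{lem:p1} is identified.
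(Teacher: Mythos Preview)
Your proof has a genuine gap: the assertion $f(\{x\}) \le 1$ is unjustified and in general false for connectivity functions. Nothing in the definition forces singletons to have value at most $1$; for instance, for graph branchwidth $f(\{e\}) = |\bd(\{e\})|$ can be $2$. Your later patch in the final paragraph also contains an arithmetic error: from $f(C \cap W) < f(\overline{C} \cap W) \le f(\{x\})$ and Observation~\ref{lem:p1} you only get $f(W) < 2f(\{x\})$, not $f(W) \le f(\{x\})$ as you write. That weaker bound \emph{is} enough, since $T$ is an optimal decomposition so $f(\{x\}) \le \bw(f)$ and hence $f(W) < 2\bw(f)$, contradicting the hypothesis $f(W) > 2\bw(f)$. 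So your approach is salvageable, but as written it does not go through.

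The paper's proof is shorter and avoids Observation~\ref{lem:p1} and the optimality of $T$ entirely. It simply observes that since $\overline{C} = \{x\}$ is a singleton, exactly one of $W \cap \{x\}$ and $\overline{W} \cap \{x\}$ is empty. Whichever one is empty, the corresponding direct-orientation inequality $f(C \cap W) < f(\overline{C} \cap W)$ or $f(C \cap \overline{W}) < f(\overline{C} \cap \overline{W})$ has right-hand side $f(\emptyset) = 0$, forcing a negative value of $f$. This two-line argument uses both orientation inequalities (you only used one) and needs neither the bound $f(\{x\}) \le \bw(f)$ nor the hypothesis $f(W) > 2\bw(f)$.
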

\begin{proof}
Suppose there is an edge of $T$ oriented towards a leaf $v \in V$.
This means that $V \setminus \{v\}$ directly orients $W$, implying that $f(W \setminus \{v\}) < f(W \cap \{v\})$ and $f(\oW \setminus \{v\}) < f(\oW \cap \{v\})$.
However, one of the sets $W \cap \{v\}$ or $\oW \cap \{v\}$ must be empty and therefore either $f(W \cap \{v\}) = 0$ or $f(\oW \cap \{v\}) = 0$, but $f$ cannot take values less than $0$, so we get a contradiction.
\end{proof}

Now, by walking in $T$ according to the orientation, we end up finding an internal node to which all adjacent edges are oriented towards.
The following lemma proves that this node indeed gives a \splitW, and therefore completes the proof of  \Cref{the:find}.

\begin{lemma}\label{lemma_3splitorientation}
Let $W \subseteq V$, and $(C_1, C_2, C_3)$ a tripartition of $V$ such that for each $i\in\{1, 2, 3\}$,  $f(C_i) < f(W)/2$   and $W$ directly orients $C_i$.
Then $(C_1, C_2, C_3)$ is a \splitW.
\end{lemma}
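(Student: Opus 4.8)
The plan is to show that each $C_i$ satisfies the three defining inequalities of a \splitW, using the hypotheses $f(C_i) < f(W)/2$ and that $W$ directly orients $C_i$, i.e. $f(C_i \cap W) < f(\overline{C_i} \cap W)$ and $f(C_i \cap \overline{W}) < f(\overline{C_i} \cap \overline{W})$. The first condition $f(C_i) < f(W)/2$ is assumed outright, so the real work is bounding $f(C_i \cap W)$ and $f(C_i \cap \overline{W})$ by $f(W)$.

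First I would fix an index $i$ and concentrate on proving $f(C_i \cap W) < f(W)$; the bound $f(C_i \cap \overline{W}) < f(W)$ will follow by the symmetric argument (replacing $W$ with $\overline{W}$, which does not disturb the orientation hypothesis). Apply \autoref{lem:p1} with $A = W$ and $B = C_i$: this gives $f(W) \le f(W \cap C_i) + f(W \cap \overline{C_i})$. Since $W$ directly orients $C_i$, we have $f(W \cap C_i) < f(W \cap \overline{C_i})$, so in particular $f(W \cap C_i)$ is strictly smaller than $f(W \cap \overline{C_i})$, hence $f(W) \le f(W \cap C_i) + f(W \cap \overline{C_i})$ does not immediately yield what we want. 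Instead, the cleaner route is: suppose for contradiction that $f(C_i \cap W) \ge f(W)$. Then by submodularity $f(C_i \cup W) + f(C_i \cap W) \le f(C_i) + f(W)$, so $f(C_i \cup W) \le f(C_i) + f(W) - f(C_i \cap W) \le f(C_i) < f(W)/2$. By symmetry of $f$, $f(C_i \cup W) = f(\overline{C_i} \cap \overline{W})$, so $f(\overline{C_i} \cap \overline{W}) < f(W)/2$. Combined with the orientation inequality $f(C_i \cap \overline{W}) < f(\overline{C_i} \cap \overline{W})$, this gives $f(C_i \cap \overline{W}) + f(\overline{C_i} \cap \overline{W}) < f(W)$. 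But \autoref{lem:p1} applied with $A = \overline{W}$, $B = C_i$ gives $f(\overline{W}) \le f(C_i \cap \overline{W}) + f(\overline{C_i} \cap \overline{W})$, and $f(\overline{W}) = f(W)$ by symmetry, a contradiction. Hence $f(C_i \cap W) < f(W)$.

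For the bound $f(C_i \cap \overline{W}) < f(W)$, I would run the identical argument with the roles of $W$ and $\overline{W}$ interchanged, noting that $W$ directly orients $C_i$ if and only if $\overline{W}$ directly orients $C_i$ (as remarked after \autoref{def:orient}), and that $f(\overline{W}) = f(W)$, so the conclusion $f(C_i \cap \overline{W}) < f(\overline{W}) = f(W)$ is exactly what is needed. Finally, since $i \in \{1,2,3\}$ was arbitrary, all three required inequalities hold for every part, and $(C_1, C_2, C_3)$ is indeed a tripartition of $V$ by hypothesis, so it is a \splitW. I do not anticipate a serious obstacle here; the only point requiring a little care is getting the direction of the submodularity/symmetry manipulation right so that the factor-of-two slack from $f(C_i) < f(W)/2$ is genuinely exploited, and making sure the contradiction is set up against \autoref{lem:p1} rather than against a weaker bound.
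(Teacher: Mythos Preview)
Your proof is correct, but it takes a slightly longer detour than the paper's. You invoke \autoref{lem:p1} with $A=W$, $B=C_i$, note only its first inequality $f(W)\le f(W\cap C_i)+f(W\cap\overline{C_i})$, decide this is not enough, and then switch to a contradiction argument. The paper instead uses the \emph{second} inequality of the very same \autoref{lem:p1}: $f(W\cap C_i)+f(W\cap\overline{C_i})\le f(W)+2f(C_i)$. Combined with the orientation hypothesis $f(C_i\cap W)<f(\overline{C_i}\cap W)$, this gives directly $2f(C_i\cap W)<f(W)+2f(C_i)$, hence $f(C_i\cap W)<f(W)/2+f(C_i)<f(W)$. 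So the direct route you abandoned was in fact one line away from working; your contradiction argument reaches the same conclusion but with more moving parts.
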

\begin{proof}
Since $W$ directly orients $C_i$, we have that 
 $f(C_i \cap W) < f(\overline{C}_i \cap W)$. By \autoref{lem:p1},  $f(C_i \cap W) + f(\overline{C}_i \cap W) \le f(W) + 2f(C_i)$. Therefore,  $f(C_i \cap W) \le f(W)/2 + f(C_i) < f(W)$.
The proof that  $f(C_i \cap \overline{W}) < f(W)$ is similar.
\end{proof}

\subsection{Proofs of \Cref{the:glo} and \Cref{the:glostro} \label{sec:comb_end}}

Let $W\subseteq V$ and let $(C_1, C_2, C_3)$ be a \minsplitW.
The main idea behind  the proofs of \Cref{the:glo} and \Cref{the:glostro} is that if $f(W' \cap C_1) \ge f(W')$ for some $W'\subseteq W$ or  $W'\subseteq \overline{W}$, then we prove that $(C'_1, C'_2, C'_3) = (C_1 \cup W', C_2 \setminus W', C_3 \setminus W')$ is also a \minsplitW. 
(And symmetrically for $C_2$ and $C_3$.)
In particular, this will be used to contradict the minimality of $(C_1, C_2, C_3)$ or the fact that $(r, C_1, C_2, C_3)$ is a \strongminsplit.

The proof is different for  \minsplitWs of arity 2   and  of  3 and we have to consider different cases.
The following lemma would be used in both cases.

\begin{lemma}
\label{lem:case1}
For $W\subseteq V$, 
let $(C_1, C_2, C_3)$ be a \splitW.
Suppose that for some $W' \subseteq W$,  $f(C_1 \cap W') \ge f(W')$.  Then for  $C'_1 = C_1 \cup W'$, it holds that
 $f(C'_1 \cap W) < f(W)$.
\end{lemma}
\begin{proof} 
%
\begin{align*}
    f(C'_1 \cap W)& = f((C_1 \cup W') \cap W)  =f(W' \cup (C_1 \cap W))&&  (W'\subseteq W)\\
 &\le  f(W') + f(C_1 \cap W) - f(W' \cap (C_1 \cap W)) && \text{(submodularity of $f$)}\\
      &=f(W') + f(C_1 \cap W) - f(W' \cap C_1)  &&    \\
     &\le  f(C_1 \cap W) < f(W).&&   \text{(def. of \splitW)}
\end{align*}

\end{proof}

Note that by the symmetry of \splitWs, \autoref{lem:case1} holds also for $C_2$ and $C_3$, and for $W' \subseteq \oW$.


\subsubsection{\splitWs of arity 2}
The following lemma completes the proof of \Cref{the:glo} for \splitWs of arity 2 (note symmetry).
It also sets up the proof of \Cref{the:glostro} for \splitWs of arity 2, which will be completed in \Cref{subsubsec:finish}.

\begin{lemma}
\label{lem:main2}
Let $W \subseteq V$ and $(C, \overline{C}, \emptyset)$ be a \minsplitW.  Then $f(C \cap W') \le f(W')$ for every 
 $W' \subseteq W$.
Moreover, if the equality $f(C \cap W') = f(W')$  holds, then $(C \cup W', \overline{C} \setminus W', \emptyset)$ is also a \minsplitW.
\end{lemma}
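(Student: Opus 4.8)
The plan is to prove the two assertions in sequence, exploiting the minimality conditions (1)--(3) defining a \minsplitW and the key auxiliary fact \Cref{lem:case1}. First I would take an arbitrary $W' \subseteq W$, set $C' = C \cup W'$, and define the candidate tripartition $(C', \overline{C'}, \emptyset) = (C \cup W', \overline{C} \setminus W', \emptyset)$. The goal of the first part is to show $f(C \cap W') \le f(W')$, and for this I would argue by contradiction: suppose $f(C \cap W') > f(W')$, or more conveniently work toward showing that under the hypothesis $f(C \cap W') \ge f(W')$ the candidate $(C', \overline{C'}, \emptyset)$ is still a \splitW of no larger width, which will both give the equality-case analysis for free and set up the contradiction when the inequality is strict. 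Concretely, \Cref{lem:case1} (applied with $C_1 = C$) directly gives $f(C' \cap W) < f(W)$ whenever $f(C \cap W') \ge f(W')$; by the symmetry of \splitWs noted after that lemma, I also need the analogous bound $f(C' \cap \overline{W}) < f(W)$, which holds because $C' \cap \overline{W} = C \cap \overline{W}$ (as $W' \subseteq W$), and $f(C \cap \overline{W}) < f(W)$ since $(C,\overline{C},\emptyset)$ is a \splitW. So the two ``$\cap W$'' and ``$\cap \overline{W}$'' conditions for $C'$ hold; symmetrically $\overline{C'} \cap W$ and $\overline{C'} \cap \overline{W}$ need to be checked, but by symmetry of \splitWs (swapping the two parts) these follow from the same arguments applied to $\overline{C'} = \overline{C} \setminus W'$.

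The remaining and genuinely delicate point is the width condition $f(C') < f(W)/2$ (equivalently $f(\overline{C'}) < f(W)/2$). Here I would use submodularity to bound $f(C') = f(C \cup W')$. Writing $f(C \cup W') + f(C \cap W') \le f(C) + f(W')$, we get $f(C') \le f(C) + f(W') - f(C \cap W') \le f(C)$ precisely because we are assuming $f(C \cap W') \ge f(W')$. Since $f(C) < f(W)/2$ (as $(C,\overline{C},\emptyset)$ is a \minsplitW, in particular a \splitW), this yields $f(C') \le f(C) < f(W)/2$. Hence $(C', \overline{C'}, \emptyset)$ is a \splitW of width $f(C') \le f(C)$, i.e. of width at most that of $(C,\overline{C},\emptyset)$.

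Now I would extract both conclusions. If $f(C \cap W') > f(W')$ strictly, then $f(C') \le f(C) + f(W') - f(C \cap W') < f(C)$, so $(C',\overline{C'},\emptyset)$ is a \splitW of strictly smaller width than $(C,\overline{C},\emptyset)$, contradicting minimality condition (1); therefore $f(C \cap W') \le f(W')$, proving the first assertion. For the equality case $f(C \cap W') = f(W')$: the computation above gives $f(C') \le f(C)$, and since $(C,\overline{C},\emptyset)$ has minimum width, we cannot have $f(C') < f(C)$ unless... — actually here I must be careful, since arity is $2$ for both so condition (2) is not violated, and I should instead invoke condition (3) or re-examine: $f(C') \le f(C)$ combined with minimality of width forces $f(C') \in \{f(C)\}$ only if $f(C)$ is already the minimum width, which it is, so $f(C') = f(C)$ (a strictly smaller value would contradict (1)); both have arity $2$, matching (2); and for (3) I need $f(C') + f(\overline{C'}) \le f(C) + f(\overline{C})$, which holds since $f(C') = f(\overline{C'}) = f(C) = f(\overline{C})$ by symmetry of $f$. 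Hence $(C',\overline{C'},\emptyset)$ satisfies (1)--(3) and is a \minsplitW.

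The main obstacle I anticipate is getting the width bound $f(C') < f(W)/2$ cleanly; everything hinges on the single submodular inequality $f(C \cup W') \le f(C) + f(W') - f(C \cap W')$ together with the hypothesis $f(C \cap W') \ge f(W')$, so the real content is recognizing that this one step simultaneously (a) preserves being a \splitW, (b) gives the contradiction in the strict case, and (c) pins down the equality case. The bookkeeping of which of the symmetric conditions needs checking (there are eight: $C', \overline{C'}$ crossed with the four of $\cap W$, $\cap \overline{W}$, width, and the trivial empty third part) is routine given the symmetry remarks already in the text, so I would dispatch it quickly rather than writing all cases.
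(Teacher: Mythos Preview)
Your plan is largely the same as the paper's and is mostly sound, but there is one genuine gap. You dispatch the bound $f(\overline{C'} \cap W) < f(W)$ by saying ``by symmetry of \splitWs (swapping the two parts) these follow from the same arguments applied to $\overline{C'} = \overline{C} \setminus W'$,'' and later call this bookkeeping ``routine.'' It is not: the two arguments you used for $C'$ were (i) \Cref{lem:case1}, whose hypothesis is $f(C \cap W') \ge f(W')$, and (ii) the identity $C' \cap \overline{W} = C \cap \overline{W}$. Neither transfers to $\overline{C'} \cap W$. Applying \Cref{lem:case1} with $\overline{C}$ in place of $C$ would require $f(\overline{C} \cap W') \ge f(W')$, which you do not have, and in any case it would control $(\overline{C} \cup W') \cap W$, not $(\overline{C} \setminus W') \cap W$. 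The identity also fails: $\overline{C'} \cap W = (\overline{C} \cap W) \setminus W'$, which is not $\overline{C} \cap W$ in general.

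The paper handles this case with a separate short submodularity computation: write $f(\overline{C'} \cap W) = f(C' \cup \overline{W}) = f\bigl((C \cup W') \cup (C \cup \overline{W})\bigr)$, apply submodularity to the pair $C \cup W'$, $C \cup \overline{W}$, note that their intersection is $C$ (since $W' \subseteq W$), and use the already-established $f(C \cup W') \le f(C)$ to get $f(\overline{C'} \cap W) \le f(C \cup \overline{W}) = f(\overline{C} \cap W) < f(W)$. Once you insert this step, your argument matches the paper's proof exactly, including the endgame with conditions (1)--(3).
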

\begin{proof}
Let $(C, \overline{C}, \emptyset)$ be a \minsplitW.  Suppose that for some $W' \subseteq W$,  $f(C \cap W') \ge f(W')$. 
To prove the lemma, we show that in this case  
  $(C', \overline{C'}, \emptyset) = (C \cup W', \overline{C} \setminus W', \emptyset)$ is also a \splitW.  Moreover, when  
   $f(C \cap W') > f(W')$, the existence of \splitW  $(C', \overline{C'}, \emptyset)$  would contradict the minimality of $(C, \overline{C}, \emptyset)$. When $f(C \cap W') =f(W')$,   we show that $(C', \overline{C'}, \emptyset) $ is a \minsplitW.
  
Let us check that  $(C', \overline{C'}, \emptyset)$ satisfies all the conditions of a \splitW. 

First, by $f(C \cap W') \ge f(W')$ and submodularity of $f$, we have that 
 \begin{eqnarray}\label{eqar2}
 f(\overline{C'}) = f(C') = f(C \cup W') \le f(C).
 \end{eqnarray}
 Thus $f(\overline{C'}) = f(C') <f(W)/2$.

Then, we note that $f(C' \cap \overline{W}) = f(C \cap \overline{W}) < f(W)$ and $f(\overline{C'} \cap \overline{W}) = f(\overline{C} \cap \overline{W}) < f(W)$. By \autoref{lem:case1}, we have that 
 $f(C' \cap W) < f(W)$.
 Finally, 
\begin{align*}
 f(\overline{C'} \cap W) &= f(C' \cup \overline{W}) = f((C \cup W') \cup (C \cup \overline{W})) &&  \text{(by \autoref{lem:compl})}\\
 &\le f(C \cup W') + f(C \cup \overline{W}) - f((C \cup W') \cap (C \cup \overline{W})) && \text{(submodularity of $f$)}\\
 &= f(C \cup W') + f(C \cup \overline{W}) - f(C) &&\\
 &\le  f(C) + f(\overline{C} \cap W) - f(C) < f(W). &&  \text{(by \autoref{lem:compl} and \eqref{eqar2})} 
\end{align*}
This completes the proof that $(C', \overline{C'}, \emptyset)$ is a \splitW.

Now, if $f(C \cap W') > f(W')$, then by the submodularity of $f$,  $f(C') = f(C \cup W') < f(C)$. But this 
 contradicts the minimality of $(C, \overline{C}, \emptyset)$.
If $f(C \cap W') = f(W')$, then $f(C') \le f(C)$. In this case, since $(C', \overline{C'}, \emptyset)$ is a \splitW and $(C, \overline{C}, \emptyset)$ is a \minsplitW, we conclude that $(C', \overline{C'}, \emptyset)$  is also a \minsplitW.
\end{proof}

Note again that due to symmetry of \splitWs, \autoref{lem:main2} holds also for $\overline{C}$ and for $W' \subseteq \oW$.


\subsubsection{\splitWs of arity 3}
For \splitWs of arity 3, we will heavily exploit the minimality of arity, i.e., the condition on \minsplitWs that there should be no \splitW with smaller or equal width and with smaller arity.
We proceed with a sequence of auxiliary lemmas establishing properties of a \minsplitW of arity 3.

\begin{lemma}
\label{lem:3dir}
Let $(C_1, C_2, C_3)$ be a \minsplitW of  arity 3. Then for every 
  $i\in\{1,2,3\}$, $W$ directly orients $C_i$.
\end{lemma}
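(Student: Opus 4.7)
My plan is to argue by contradiction, exploiting the arity-minimality clause in the definition of a \minsplitW. By symmetry of the indices it suffices to prove that $W$ directly orients $C_1$. Suppose, for the sake of contradiction, that it does not; by the trichotomy of \Cref{def:orient}, either $W$ disorients $C_1$ or $W$ inversely orients $C_1$. In each case I will exhibit a \splitW of arity $2$ and width at most $\max\{f(C_1), f(C_2), f(C_3)\}$, contradicting the minimality of $(C_1,C_2,C_3)$.

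In the disorient case, since $f(C_1) < f(W)/2$, \Cref{lem:disorient} applies directly and yields that $(C_1, \overline{C_1}, \emptyset)$ is a \splitW. In the inverse orientation case, I will verify the six defining inequalities for $(C_1, \overline{C_1}, \emptyset)$ by hand: the bounds $f(C_1) < f(W)/2$, $f(C_1 \cap W) < f(W)$, and $f(C_1 \cap \overline{W}) < f(W)$ are inherited verbatim from $(C_1, C_2, C_3)$; the symmetry $f(\overline{C_1}) = f(C_1) < f(W)/2$ handles the bound on $\overline{C_1}$; and the two remaining bounds follow from chaining the strict inequalities defining inverse orientation with the \splitW inequalities for $C_1$, namely $f(\overline{C_1} \cap W) < f(C_1 \cap W) < f(W)$ and $f(\overline{C_1} \cap \overline{W}) < f(C_1 \cap \overline{W}) < f(W)$.

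In either case, the resulting tripartition $(C_1, \overline{C_1}, \emptyset)$ has width exactly $\max\{f(C_1),f(\overline{C_1})\} = f(C_1) \le \max\{f(C_1), f(C_2), f(C_3)\}$, and it has arity exactly $2$: the part $C_1$ is non-empty by the arity-$3$ hypothesis on $(C_1,C_2,C_3)$, and $C_1 \neq V$ because $C_2$ and $C_3$ are non-empty. Hence $(C_1, \overline{C_1}, \emptyset)$ strictly improves $(C_1, C_2, C_3)$ with respect to the lexicographic order on (width, arity): either the width strictly drops, contradicting condition $(1)$ of \minsplitW, or the width is preserved and the arity strictly drops from $3$ to $2$, contradicting condition $(2)$.

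The only substantive step that needs care is the inverse orientation case, where one must chain the correct strict inequalities; everything else is a direct invocation of \Cref{lem:disorient} or bookkeeping against \Cref{def:orient}. I therefore do not expect a real obstacle, provided the three orientation regimes of \Cref{def:orient} are handled as genuinely exhaustive and mutually exclusive categories so that negating direct orientation cleanly deposits us into one of the two cases above.
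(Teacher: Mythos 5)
Your proof is correct and follows essentially the same route as the paper's: rule out disorientation via \Cref{lem:disorient}, and rule out inverse orientation by chaining the strict inequalities of inverse orientation with the \splitW{} bounds on $C_1$ to show $(C_1,\overline{C_1},\emptyset)$ is a \splitW{} of no larger width and smaller arity, contradicting minimality. The extra bookkeeping you do (verifying all six nontrivial inequalities and the width/arity comparison explicitly) is exactly what the paper leaves implicit.
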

\begin{proof}
If $W$ disorients $C_i$, then by \autoref{lem:disorient},  $(C_i, \overline{C}_i, \emptyset)$ is a \splitW contradicting the minimality of $(C_1, C_2, C_3)$.
On the other hand,  $W$ cannot  inversely orient $C_i$. Indeed, the assumption $f(W \cap C_i) > f(W \cap \overline{C}_i)$ and $f(\overline{W} \cap C_i) > f(\overline{W} \cap \overline{C}_i)$ yields that $f(W \cap \overline{C}_i) < f(W)$ and $f(\overline{W} \cap \overline{C}_i) < f(W)$ by the fact that $(C_1, C_2, C_3)$ is a \splitW. In this case, again,  $(C_i, \overline{C}_i, \emptyset)$ is a \splitW contradicting the minimality of $(C_1, C_2, C_3)$.
\end{proof}

\begin{lemma}
\label{lem:bigside}
Let $(C_1, C_2, C_3)$ be a \minsplitW of arity 3. Then for every 
  $i\in\{1,2,3\}$, it holds that $f(W \cup C_i) > f(W)/2$ and $f(\overline{W} \cup C_i) > f(W)/2$.
\end{lemma}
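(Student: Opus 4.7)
The plan is to deduce both inequalities directly from the fact (Lemma~\ref{lem:3dir}) that $W$ directly orients each $C_i$, combined with the elementary submodularity-type inequality of Observation~\ref{lem:p1}. The key realization is that the two quantities in the statement, $f(W \cup C_i)$ and $f(\overline{W}\cup C_i)$, are, by symmetry of $f$, equal to $f(\overline{W}\cap \overline{C_i})$ and $f(W \cap \overline{C_i})$ respectively, which are exactly the quantities that will emerge from applying Observation~\ref{lem:p1} to the bipartition $(C_i, \overline{C_i})$ intersected with $W$ (or with $\overline{W}$).

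Concretely, I would first apply Observation~\ref{lem:p1} with $A = W$ and $B = C_i$ to obtain the bound
\[
f(W) \;\le\; f(W \cap C_i) + f(W \cap \overline{C_i}).
\]
Since $(C_1,C_2,C_3)$ has arity 3, Lemma~\ref{lem:3dir} tells us that $W$ directly orients $C_i$, so in particular $f(W \cap C_i) < f(W \cap \overline{C_i})$. Substituting this into the displayed inequality gives $f(W) < 2 f(W \cap \overline{C_i})$, hence $f(W \cap \overline{C_i}) > f(W)/2$. Because $f$ is symmetric and $W \cap \overline{C_i}$ is the complement of $\overline{W} \cup C_i$, this is exactly $f(\overline{W} \cup C_i) > f(W)/2$.

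Next, I would run the identical argument with $W$ replaced by $\overline{W}$: applying Observation~\ref{lem:p1} with $A = \overline{W}$ and $B = C_i$, using the second half of direct orientation $f(\overline{W}\cap C_i) < f(\overline{W}\cap \overline{C_i})$, and using $f(\overline{W}) = f(W)$, yields $f(\overline{W}\cap \overline{C_i}) > f(W)/2$. Since $\overline{W}\cap \overline{C_i}$ is the complement of $W \cup C_i$, this gives $f(W \cup C_i) > f(W)/2$, completing the proof.

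There is essentially no obstacle here: the entire argument is a two-line calculation once Lemma~\ref{lem:3dir} is invoked. The only thing worth a careful check is the complementation identities matching the two quantities in the statement to the natural outputs of the calculation, but these are immediate from $f$ being symmetric. Notably, the hypothesis that $(C_1,C_2,C_3)$ has arity 3 is used only through Lemma~\ref{lem:3dir}; the minimality of $(C_1,C_2,C_3)$ as a \splitW plays no further role here beyond what has already been absorbed into that lemma.
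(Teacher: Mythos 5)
Your proof is correct and follows essentially the same route as the paper: invoke Lemma~\ref{lem:3dir} to get direct orientation, apply Observation~\ref{lem:p1} to $W$ (resp.\ $\overline{W}$), and identify the resulting quantity with $f(\overline{W}\cup C_i)$ (resp.\ $f(W\cup C_i)$) via symmetry. The only difference is the order in which you treat the two inequalities.
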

\begin{proof}
By \autoref{lem:3dir}, $W$  directly orients $C_i$. Hence $f(\overline{W} \cap C_i) < f(\overline{W} \cap \overline{C}_i)$.
 By \autoref{lem:p1}, $f(W)=f(\overline{W})\leq f(\overline{W} \cap C_i) + f(\overline{W} \cap \overline{C}_i)$. Therefore, 
 $f(W \cup C_i) = f(\overline{W} \cap \overline{C}_i) > f(W)/2$.
%
The proof of $f(\overline{W}  \cup C_i) > f(W)/2$ is symmetric.
\end{proof}

The following lemma shows that any set $C_i$ in a \minsplitW is in some sense well-connected to any set $C_i \cup (W \cap C_j)$.

\begin{lemma}
\label{lem:linked}
Let $(C_1, C_2, C_3)$ be a \minsplitW of arity 3. Then for every 
 set $P$ and $i,j\in \{1,2,3\}$ such that  $C_i \subseteq P \subseteq C_i \cup (W \cap C_j)$,  it holds that $f(P) \ge f(C_i)$. 
\end{lemma}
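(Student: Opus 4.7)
The plan is to argue by contradiction: assume $f(P) < f(C_i)$ and exhibit a \splitW that contradicts the minimality of $(C_1, C_2, C_3)$. We may assume $i \neq j$, since otherwise $P = C_i$ and the claim is trivial. Let $k$ be the third index and $Q = P \setminus C_i \subseteq W \cap C_j$. The candidate ``improved'' tripartition is $(P, C_j \setminus Q, C_k)$, i.e.\ the one obtained by moving $Q$ from $C_j$ into $C_i$.

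The first step is the companion inequality $f(C_j \setminus Q) < f(C_j)$, which comes from submodularity applied to $P$ and $\overline{C_j}$: computing $P \cup \overline{C_j} = V \setminus (C_j \setminus Q)$ and $P \cap \overline{C_j} = C_i$ yields $f(P) + f(C_j) \ge f(C_j \setminus Q) + f(C_i)$, and the assumption $f(P) < f(C_i)$ then forces the claim. The second step is checking that the candidate is a \splitW. The bounds $f(C'_l) < f(W)/2$ are immediate from step 1 and the \splitW property of $(C_1, C_2, C_3)$. Because $Q \subseteq W$, the sets $P \cap \overline{W}$ and $(C_j \setminus Q) \cap \overline{W}$ collapse to $C_i \cap \overline{W}$ and $C_j \cap \overline{W}$, so the ``$\cap \overline{W}$'' conditions reduce to those of $(C_1, C_2, C_3)$. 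For the remaining conditions $f(P \cap W), f((C_j \setminus Q) \cap W) < f(W)$, I would use the submodular bound $f(X \cap W) \le f(X) + f(W) - f(X \cup W)$ and compute the complements: exploiting $Q \subseteq W$, one finds $\overline{P} \cap \overline{W} = \overline{W} \setminus C_i$ and $\overline{(C_j \setminus Q)} \cap \overline{W} = \overline{W} \setminus C_j$, both with $f$-value $> f(W)/2$ by \autoref{lem:bigside}. Plugging this in, together with $f(P), f(C_j \setminus Q) < f(W)/2$, yields both bounds.

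For the contradiction, the width of the candidate is at most $\max(f(C_i), f(C_j), f(C_k))$, the width of $(C_1, C_2, C_3)$. If $Q \subsetneq C_j$, the candidate has arity $3$ and sum-width strictly less than $f(C_i) + f(C_j) + f(C_k)$, violating criterion (1) of \minsplitW if the width strictly drops and criterion (3) otherwise. If instead $Q = C_j$ (which forces $C_j \subseteq W$), the candidate degenerates to the bipartition $(\overline{C_k}, C_k, \emptyset)$; the same verification shows it is still a \splitW, and its width $f(C_k) = f(P)$ is strictly below $f(C_i) \le$ the original width, directly contradicting (1). The main obstacle throughout is precisely the ``$\cap W$'' verification: the direct submodular bound $f(P \cap W) \le f(C_i \cap W) + f(Q)$ is too weak because $f(Q)$ has no a priori upper bound, and the resolution is the complementation trick above, which routes the argument through \autoref{lem:bigside}.
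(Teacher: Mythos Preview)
Your proof is correct and follows essentially the same route as the paper: assume $f(P) < f(C_i)$, show that $(P, C_j \setminus P, C_k)$ is a \splitW using submodularity for the width bounds and \autoref{lem:bigside} for the $\cap W$ conditions, and contradict minimality. Your explicit case split on whether $Q = C_j$ is unnecessary (the paper simply observes that the new tripartition has width at most the original and strictly smaller sum-width, which contradicts minimality regardless of whether the arity drops), but it is not wrong.
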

\begin{proof}
For $i=j$ the lemma is trivial. For $i\neq j$,   without loss of generality, we can assume  that $i=1$ and $j=2$. 
Targeting towards a contradiction, let us assume that for some $P$ with  $C_1 \subseteq P \subseteq C_1 \cup (W \cap C_2)$, 
 $f(P) < f(C_1)$.
We claim that then $(C'_1, C'_2, C'_3) = (P, C_2 \setminus P, C_3)$ is a \splitW contradicting the minimality of $(C_1, C_2, C_3)$.

First, by our assumption,  $f(C'_1) = f(P) < f(C_1)<f(W)/2$. To verify that 
$f(C'_2)=f(C_2 \setminus P) = f(C_2 \cap \overline{P}) \leq  f(C_2)$, we observe that  $f(C_2 \cup \overline{P}) = f(C_2 \cup C_3) = f(C_1)$.  Then by the submodularity of $f$, 
\[
 f(C_2 \cap \overline{P})+ f(C_1)= 
f(C_2 \cap \overline{P}) + f(C_2 \cup \overline{P})
 \leq f(C_2)+f(P).
\]
 By our assumption, $f(P) < f(C_1)$. Thus $f(C'_2)=f(C_2 \cap \overline{P})<f(C_2)$.

It remains to show that $f(C'_i \cap W) < f(W)$ and 
 $f(C'_i \cap \overline{W}) < f(W)$.  Because $P \subseteq C_1 \cup (W \cap C_2)$, we have that $C'_1\cap \overline{W}=C_1\cap \overline{W}$ and $C'_2\cap \overline{W}=C_2\cap \overline{W}$. Thus $f(C'_i \cap \overline{W}) < f(W)$.
 
 To prove that $f(P \cap W) < f(W)$, first observe that $P \cup W = C_1 \cup W$. Then by \autoref{lem:bigside}, 
 \[
 f(P \cup W) = (C_1 \cup W)>\frac{f(W)}{2}.
 \]
By the submodularity of $f$, 
\[
f(P\cap {W}) + f(P \cup {W})
 \leq f(P)+f(W).
\]
Since $f(P) < f(C_1)<f(W)/2$, we have that  $f(C'_1 \cap W)=f(P\cap {W})<f(W)$.  

Similarly, to prove that   $f(C'_2 \cap W)<f(W)$, we note that $C'_2 \cup W = C_2 \cup W$.  Then by 
 \autoref{lem:bigside},  $f(C'_2 \cup W)>f(W)/2$. We already have proved that $f(C'_2)<f(C_2)$. Then by making use of the  submodularity of $f$, we have that  
 \[
\frac{f(W)}{2} +  f(C'_2 \cap W) <  f(C'_2 \cap W)+ f(C'_2 \cup W) \leq f(C'_2)+ f(W)<\frac{f(W)}{2}+ f(W).
 \]
 
This concludes the proof that $(C'_1, C'_2, C'_3)$ is a \splitW.  Finally, the width of $(C'_1, C'_2, C'_3)$ is at most the width of $(C_1, C_2, C_3)$, but its sum-width is strictly less. This contradicts the minimality of $(C_1, C_2, C_3)$. 
%
%
%
%
%
%
%
%
%
%
%
%
%
%
%
%
%
%
%
%
%
\end{proof}

The following lemma completes the proof of \autoref{the:glo} together with \autoref{lem:main2} (note symmetry).
It also sets up the proof of \autoref{the:glostro} for \splitWs of arity 3, which will be completed in \Cref{subsubsec:finish}.

\begin{lemma}
\label{lem:main3}
Let $(C_1, C_2, C_3)$ be a \minsplitW of arity $3$. Then for every
 $W' \subseteq W$,  $f(C_1 \cap W') \le f(W')$. Moreover, if the equality holds, then $(C_1 \cup W', C_2 \setminus W', C_3 \setminus W')$ is also a \minsplitW.
\end{lemma}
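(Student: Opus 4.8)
I want to prove \autoref{lem:main3} for a \minsplitW $(C_1, C_2, C_3)$ of arity $3$: for every $W' \subseteq W$, $f(C_1 \cap W') \le f(W')$, and if equality holds then $(C'_1, C'_2, C'_3) := (C_1 \cup W', C_2 \setminus W', C_3 \setminus W')$ is again a \minsplitW. The strategy parallels the arity-$2$ case (\autoref{lem:main2}): assume toward a contradiction that $f(C_1 \cap W') \ge f(W')$ for some $W' \subseteq W$, and show that the candidate tripartition $(C'_1, C'_2, C'_3)$ is a \splitW whose width does not exceed that of $(C_1, C_2, C_3)$; then analyze whether it beats $(C_1, C_2, C_3)$ strictly in width or sum-width (contradicting minimality when the original inequality is strict) or merely ties it (yielding a new \minsplitW when equality holds). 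The auxiliary lemmas \autoref{lem:case1}, \autoref{lem:linked}, and \autoref{lem:bigside} are exactly the tools tailored for this.

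First I would bound the three ``$f(C'_i)$'' quantities. For $f(C'_1) = f(C_1 \cup W')$: submodularity with the assumption $f(C_1 \cap W') \ge f(W')$ gives $f(C_1 \cup W') \le f(C_1) + f(W') - f(C_1 \cap W') \le f(C_1) < f(W)/2$, with strict inequality $f(C'_1) < f(C_1)$ precisely when $f(C_1 \cap W') > f(W')$. For $f(C'_2) = f(C_2 \setminus W')$ and $f(C'_3) = f(C_3 \setminus W')$ I would apply \autoref{lem:linked}: since $C_1 \subseteq C_1 \cup (C_2 \cap W') \subseteq C_1 \cup (W \cap C_2)$ and the complement of $C_1 \cup (C_2 \cap W')$ is $C'_2 \cup C_3 \cup$ (the rest)... more carefully, I note $\overline{C'_2} = C_1 \cup C_3 \cup (C_2 \cap W')$, so $C_1 \subseteq \overline{C'_2}$; but \autoref{lem:linked} is phrased for sets squeezed between $C_i$ and $C_i \cup (W \cap C_j)$, so I instead look at $P = C_1 \cup C_3 \cup (C_2 \cap W')$ and observe $C_1 \cup C_3 \subseteq P \subseteq (C_1 \cup C_3) \cup (W \cap C_2)$; applying \autoref{lem:linked} with the roles of the parts permuted (merging $C_1, C_3$) — actually the cleaner route mirrors the arity-$2$ computation inside \autoref{lem:main2}: write $f(C_2 \setminus W') = f(C_2 \cap \overline{W' \cup C_1 \cup C_3})$ and use that $f(\overline{C'_2}) = f(C_2 \cup W' \cup \ldots)$ together with submodularity against $f(P)$, where \autoref{lem:linked} supplies $f(P) \ge f(\overline{C_2}) = f(C_2)$, forcing $f(C'_2) \le f(C_2) + f(W') - f(C_1 \cap W') \le f(C_2)$ after applying the hypothesis again; symmetrically for $C'_3$. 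So the width of $(C'_1, C'_2, C'_3)$ is at most that of $(C_1, C_2, C_3)$.

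Next I would verify the ``cut'' conditions $f(C'_i \cap W) < f(W)$ and $f(C'_i \cap \overline{W}) < f(W)$. Since $W' \subseteq W$, we have $C'_i \cap \overline{W} = C_i \cap \overline{W}$ for all three $i$, so those three conditions are inherited verbatim from $(C_1, C_2, C_3)$ being a \splitW. For $f(C'_1 \cap W) < f(W)$ I invoke \autoref{lem:case1} directly. For $f(C'_2 \cap W) = f((C_2 \setminus W') \cap W) = f((C_2 \cap W) \setminus W')$ I note this is a subset operation that can only decrease the relevant quantity — more precisely $(C_2 \setminus W') \cap W \subseteq C_2 \cap W$, but submodularity doesn't give monotonicity for free, so I'd instead argue $f(C'_2 \cap W)$ via: $C'_2 \cap W = C_2 \cap W \cap \overline{W'}$, and use submodularity of $f$ on $C_2 \cap W = (C_2 \cap W \cap W') \cup (C_2 \cap W \cap \overline{W'})$ together with the bound on $f(C_2 \cap W' )$ coming from the hypothesis $f(C_1 \cap W') \ge f(W')$ which via submodularity bounds $f(C_2 \cap W')$ and $f(C_3 \cap W')$ — here I'd use \autoref{lem:bigside} ($f(W \cup C_1) > f(W)/2$ etc.) in the same shape as in \autoref{lem:linked}'s proof. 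This cut-condition verification for $C'_2, C'_3$ is the step I expect to be the most delicate: it is where the arity-$3$ structure genuinely differs from arity-$2$, and where \autoref{lem:bigside} and \autoref{lem:linked} must be combined carefully to control how mass shifts out of $C_2$ and $C_3$ into $C_1$.

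Finally I conclude as in \autoref{lem:main2}: $(C'_1, C'_2, C'_3)$ is a \splitW of width $\le$ the width of $(C_1, C_2, C_3)$ and arity $\le 3$. If $f(C_1 \cap W') > f(W')$ then $f(C'_1) < f(C_1)$, so either the width strictly drops or (if $f(C_1)$ was not the unique maximum) at least the sum-width strictly drops while width and arity do not increase — either way contradicting the minimality of $(C_1, C_2, C_3)$; so in fact $f(C_1 \cap W') \le f(W')$, proving the first claim. If $f(C_1 \cap W') = f(W')$, then $f(C'_1) \le f(C_1)$ and one checks the width, arity, and sum-width of $(C'_1, C'_2, C'_3)$ are all $\le$ those of $(C_1, C_2, C_3)$; since $(C_1, C_2, C_3)$ is a \minsplitW, all these must be equalities, so $(C'_1, C'_2, C'_3)$ is itself a \minsplitW, proving the ``moreover'' part. (One subtlety to handle: if $W'$ happens to be such that $C_2 \setminus W'$ or $C_3 \setminus W'$ becomes empty, the arity could drop to $2$ — but that would only make $(C'_1,C'_2,C'_3)$ strictly better in arity, again contradicting minimality unless it cannot happen; I would note that this case is absorbed into the minimality argument.)
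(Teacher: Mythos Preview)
Your overall strategy matches the paper's exactly: assume $f(C_1\cap W')\ge f(W')$, show $(C'_1,C'_2,C'_3)=(C_1\cup W',C_2\setminus W',C_3\setminus W')$ is a \splitW with $f(C'_i)\le f(C_i)$, and conclude as in \autoref{lem:main2}. You have also correctly identified all the auxiliary lemmas. Two of your sketched computations, however, are not right as written.

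For $f(C'_2)\le f(C_2)$: your stated bound ``$f(C'_2)\le f(C_2)+f(W')-f(C_1\cap W')$'' and the claim ``\autoref{lem:linked} supplies $f(P)\ge f(\overline{C_2})=f(C_2)$'' do not follow; \autoref{lem:linked} is stated only for $P$ between $C_i$ and $C_i\cup(W\cap C_j)$ with single indices. The paper instead writes $f(C'_2)=f(\overline{C_2}\cup W')=f(\overline{C_2}\cup(C_1\cup W'))$, applies submodularity to the pair $(\overline{C_2},\,C_1\cup W')$, uses $f(C_1\cup W')\le f(C_1)$, and then applies \autoref{lem:linked} to $P=\overline{C_2}\cap(C_1\cup W')=C_1\cup(C_3\cap W')$ to get $f(P)\ge f(C_1)$, yielding $f(C'_2)\le f(C_2)+f(C_1)-f(C_1)=f(C_2)$.

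For $f(C'_2\cap W)<f(W)$: your suggestion to decompose $C_2\cap W$ as $(C_2\cap W')\cup(C_2\cap W\cap\overline{W'})$ and use submodularity gives only a \emph{lower} bound on $f(C'_2\cap W)$, not the upper bound you need. The paper's route is to pass to complements ($f(C'_2\cap W)=f(\overline{C_2}\cup W'\cup\overline{W})$), apply submodularity to $(\overline{C_2}\cup W',\,\overline{W})$, simplify using $W'\subseteq W$ so that $(\overline{C_2}\cup W')\cap\overline{W}=\overline{C_2}\cap\overline{W}$, and then invoke \autoref{lem:bigside} for $f(C_2\cup W)>f(W)/2$ together with the already-established $f(C'_2)\le f(C_2)<f(W)/2$. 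You do mention \autoref{lem:bigside} at the end of that paragraph, so you are close; it is the complement-then-submodularity step that is missing from your sketch.
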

\begin{proof}
Let $(C_1, C_2, C_3)$ be a \minsplitW of arity $3$ and assume that for some $W' \subseteq W$, 
 $f(C_1 \cap W') \ge f(W')$. We define 
 $(C'_1, C'_2, C'_3) = (C_1 \cup W', C_2 \setminus W', C_3 \setminus W')$.
 First we will prove that $(C'_1, C'_2, C'_3)$ is a \splitW.  Due to the minimality of   $(C_1, C_2, C_3)$, this would imply that  
 $f(C_1 \cap W') > f(W')$ cannot occur.  Moreover, the condition  $f(C_1 \cap W') = f(W')$ would yield that 
 $(C'_1, C'_2, C'_3)$  is  a \minsplitW.

 We start the proof that  $(C'_1, C'_2, C'_3)$ is  \splitW  from showing that $f(C'_i) \le f(C_i)$ for all $i$.
The inequality 
\begin{eqnarray}\label{lemma_final}
f(C'_1) = f(C_1 \cup W') \le f(C_1)
\end{eqnarray}
follows directly from our assumption and the submodularity of $f$. Moreover,  \eqref{lemma_final} turns into an equality only if $f(C_1 \cap W') = f(W')$.

To prove that  
\begin{eqnarray}\label{lemma_finalB}
f(C'_2) \le f(C_2),
\end{eqnarray}  
we do the following
\begin{align*}
 f(C'_2) &= f(C_2 \setminus W') = f(C_2 \cap \overline{W'}) = f(\overline{C}_2 \cup W') = f(\overline{C}_2 \cup (C_1 \cup W'))&&\\
&\le f(C_2) + f(C_1 \cup W') - f(\overline{C}_2 \cap (C_1 \cup W'))&& \text{(submodularity of $f$)}\\
&\le f(C_2) + f(C_1) - f(\overline{C}_2 \cap (C_1 \cup W'))&&  \text{(by  \eqref{lemma_final})}\\
&= f(C_2) + f(C_1) - f(C_1 \cup (C_3 \cap W')) \le f(C_2). && \text{(by \autoref{lem:linked})}
%
\end{align*}
The proof of $f(C'_3) \le f(C_3)$ is symmetric.

Next we prove that $f(C'_i \cap W)<f(W)$ and that $f(C'_i \cap \overline{W})<f(W)$. 
First, note that for each $i\in\{1,2,3\}$,  $C'_i \cap \overline{W} = C_i \cap \overline{W}$.
So the cases with $\overline{W}$ are trivial.
The inequality $f(C'_1 \cap W) < f(W)$ is proved  in ~\autoref{lem:case1}.

The   bound on $f(C'_2 \cap W)$ follows from the following chain of inequalities
\begin{align*}
 f(C'_2 \cap W) &= f(C_2 \cap \overline{W'} \cap W) = f(\overline{C}_2 \cup W' \cup \overline{W}) && \text{(by \autoref{lem:compl})}\\
&\le f(\overline{C}_2 \cup W') + f(W) - f((\overline{C}_2 \cup W') \cap \overline{W}) &&  \text{(submodularity of $f$)}\\
&= f(C_2 \cap \overline{W'}) + f(W) - f(\overline{C}_2 \cap \overline{W}) &&\\
&= f(C_2 \cap \overline{W'}) + f(W) - f({C}_2 \cup {W}) && \text{(by \autoref{lem:compl})}\\
&< f(C_2 \setminus W') + f(W) - f(W)/2 = f(C'_2) + f(W)/2  &&  \text{(by  \autoref{lem:bigside})} \\
&<  f(C_2) + f(W)/2 < f(W).&& \text{(by \eqref{lemma_finalB})}
\end{align*}
%
%
%
The proof  of $f(C'_3 \cap W)< f(W)$ is symmetric.
This completes the proof that $(C'_1, C'_2, C'_3)$ is a \splitW with $f(C'_i) \le f(C_i)$ for all $i$.
 
Now, if $f(C_1 \cap W') > f(W')$, then $f(C'_1) < f(C_1)$ and $(C'_1, C'_2, C'_3)$ contradicts the minimality of $(C_1, C_2, C_3)$.
If $f(C_1 \cap W') = f(W')$, then $(C'_1, C'_2, C'_3)$ has the same width  and the sum-width as $(C_1, C_2, C_3)$, which is a \minsplitW. 
Thus in the case of $f(C_1 \cap W') = f(W')$,   $(C'_1, C'_2, C'_3)$ is also a \minsplitW.
\end{proof}

\subsubsection{Completing the proof of \autoref{the:glostro}\label{subsubsec:finish}}

The following lemma is just \Cref{lem:main2,lem:main3} put together.

\begin{lemma}
\label{lem:main}
Let $(C_1, C_2, C_3)$ be a \minsplitW. Then 
for every $W' \subseteq W$,  $f(C_1 \cap W') \le f(W')$. Moreover, if the equality holds, then $(C_1 \cup W', C_2 \setminus W', C_3 \setminus W')$ is also a \minsplitW.
\end{lemma}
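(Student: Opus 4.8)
The plan is to simply reduce to the two cases already handled: arity $2$ (\Cref{lem:main2}) and arity $3$ (\Cref{lem:main3}). By definition of \minsplitW, the tripartition $(C_1, C_2, C_3)$ has arity either $2$ or $3$ (recall that the condition $f(C_i \cap W) < f(W)$ forces $C_i \neq V$, so arity $\le 1$ is impossible). So the argument naturally splits.

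First I would dispose of the arity-$3$ case: here \Cref{lem:main3} applies verbatim. It states exactly that $f(C_1 \cap W') \le f(W')$ for every $W' \subseteq W$, and that equality implies $(C_1 \cup W', C_2 \setminus W', C_3 \setminus W')$ is again a \minsplitW. Nothing more is needed.

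Next I would handle the arity-$2$ case. Here, up to the symmetry of \splitWs (permuting the parts), the \minsplitW has the form $(C, \overline{C}, \emptyset)$. There are two sub-cases according to which of the three parts we have labelled $C_1$. If $C_1$ corresponds to the empty part, then $C_1 \cap W' = \emptyset$, so $f(C_1 \cap W') = f(\emptyset) = 0 \le f(W')$ trivially, and if equality holds then $f(W') = 0$, whence $C_1 \cup W'$, $C_2 \setminus W'$, $C_3 \setminus W'$ is a relabelling of a tripartition with the same width, arity at most $2$, and sum-width unchanged — checking it is still a \splitW amounts to the same (easy) computation as in \Cref{lem:main2}, so it is a \minsplitW. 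Otherwise $C_1$ is one of the two non-empty parts, i.e.\ $C_1 = C$ or $C_1 = \overline{C}$ for the set $C$ with $(C, \overline{C}, \emptyset)$ the \minsplitW; by the noted symmetry of \Cref{lem:main2} under complementation (remarked right after its proof), the conclusion $f(C_1 \cap W') \le f(W')$ holds, and on equality $(C_1 \cup W', \overline{C_1} \setminus W', \emptyset)$ is a \minsplitW, which (after discarding the now-empty third slot appropriately) is exactly the asserted tripartition $(C_1 \cup W', C_2 \setminus W', C_3 \setminus W')$.

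I do not expect any real obstacle here — this lemma is purely bookkeeping, packaging \Cref{lem:main2} and \Cref{lem:main3} under one statement. The only mildly delicate point is making sure the symmetry/relabelling is applied consistently: the earlier lemmas are phrased for a distinguished first part, while \Cref{lem:main} is phrased with $C_1$ distinguished but allows $(C_1, C_2, C_3)$ to be any \minsplitW; invoking the already-noted invariance of these notions under permuting $C_1, C_2, C_3$ and under swapping $W \leftrightarrow \overline{W}$ closes that gap. One should also double-check that in the arity-$2$ equality case the resulting tripartition genuinely still has the minimum arity (it does, since its arity does not exceed that of $(C_1,C_2,C_3)$, and the latter is minimum among all \splitWs), so that ``\minsplitW'' is preserved and not merely ``\splitW''.
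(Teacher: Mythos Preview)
Your approach is exactly the paper's: the paper's entire proof is the sentence ``The following lemma is just \Cref{lem:main2,lem:main3} put together,'' and you carry out precisely that case split.

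One small caveat on the extra bookkeeping you do for the arity-$2$ sub-case $C_1 = \emptyset$: your claim that the resulting tripartition $(W', C_2 \setminus W', C_3 \setminus W')$ has ``arity at most $2$'' is not justified in general --- if $f(W') = 0$ and $W'$ meets both $C_2$ and $C_3$ strictly, the arity is $3$, so minimality of arity could fail. The paper does not address this sub-case either, and it is immaterial for the only downstream use (\Cref{the:glostro}), where the hypothesis $T_r[w] \cap C_1 \neq \emptyset$ forces $C_1 \neq \emptyset$; the cleanest reading is that the symmetry of \splitWs lets one take the empty slot to be $C_3$ before invoking \Cref{lem:main2}.
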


As discussed above, the first part of \autoref{lem:main} directly gives \Cref{the:glo} by the symmetry of \splitWs.
Next we prove \Cref{the:glostro} by using also the second part of \autoref{lem:main}.

\combmainlemma*
\begin{proof}
The first part follows from combining the fact that either $T_r[w] \subseteq T[uv]$ or $T_r[w] \subseteq T[vu]$ with \autoref{the:glo}.

For the second part, suppose that for some node $w \in V(T)$ it holds that $T_r[w] \cap C_1 \neq \emptyset$, $T_r[w] \not\subseteq C_1$, and $f(T_r[w] \cap C_1) = f(T_r[w])$.
Take a tripartition $(C'_1, C'_2, C'_3) = (C_1 \cup T_r[w], C_2 \setminus T_r[w], C_3 \setminus T_r[w])$.
By  \Cref{lem:main}, this tripartition is a \minsplitW.
We argue that this \minsplitW contradicts the fact that $(r, C_1, C_2, C_3)$ is a \strongminsplit by $r$-intersecting less nodes of $T$.

The \splitW $(C_1, C_2, C_3)$ $r$-intersects the node $w$ but $(C'_1, C'_2, C'_3)$ does not, so it suffices to prove the implication that if $(C_1, C_2, C_3)$ does not $r$-intersect a node $w'$ then $(C'_1, C'_2, C'_3)$ does not $r$-intersect $w'$:

As both $T_r[w]$ and $T_r[w']$ represent leaves of $r$-subtrees of $T$, it follows that either $T_r[w'] \subseteq T_r[w]$, $T_r[w] \subseteq T_r[w']$, or that $T_r[w]$ and $T_r[w']$ are disjoint.
If $T_r[w'] \subseteq T_r[w]$, then $(C'_1, C'_2, C'_3)$ does not $r$-intersect $w'$.
If $T_r[w] \subseteq T_r[w']$, then $(C_1, C_2, C_3)$ $r$-intersects $w'$.
If $T_r[w']$ and $T_r[w]$ are disjoint, then the intersections of $(C'_1, C'_2, C'_3)$ with $T_r[w']$ are the same as the intersections of $(C_1, C_2, C_3)$ with $T_r[w']$, so $w'$ $r$-intersects $(C'_1, C'_2, C'_3)$ if and only if it $r$-intersects $(C_1, C_2, C_3)$.
\end{proof}

\section{Algorithmic properties of refinement}\label{section_algoprorefine}
In this section we present our algorithmic framework for designing fast \FPT 2-approximation algorithms for computing branch decompositions.
In particular, we show that a sequence of refinement operations decreasing the width of a branch decomposition from $k$ to $k-1$ or concluding $k \le 2 \bw(f)$ can be implemented in time $t(k) \cdot 2^{\OO(k)} \cdot n$  for connectivity functions $f$ whose branch decompositions support dynamic programming with time complexity $t(k)$ per node.
That is, we prove \autoref{the:main_alg}.
The concrete implementation of this framework for rankwidth, with $t(k) = 2^{2^{\OO(k)}}$, is provided in \Cref{sec_rankwidth_approx} and for graph branchwidth, with $t(k) = 2^{\OO(k)}$, in \Cref{sec_gr_bw}.

\subsection{Amortized analysis of refinement}
A naive implementation of the refinement operation, see  \autoref{def_refinement},  would use $\Omega(n)$ time on each refinement, which would result in the time complexity of $\Omega(n^2)$ over the course of $n$ refinements. 
In this section we show that the refinement operations can be implemented so that over any sequence of refinement operations using \strongminsplits on a branch decomposition of width at most $k$, the total work done in refining the branch decomposition amortizes to $2^{\OO(k)} n$.

The efficient implementation of   refinements is based on the notion of the edit set of  a \strongminsplit.  
Informally, the edit set  corresponding to a \strongminsplit $(r, C_1, C_2, C_3)$ are the nodes of the subtree of $T$ obtained after pruning all subtrees whose leaves are entirely from one of the sets $C_i$.
Formally, as follows.
\begin{definition}[Edit set]
Let $T$ be a branch decomposition, $r \in E(T)$, and $(r, C_1, C_2, C_3)$ a \strongminsplit.
The \emph{edit set of $(r, C_1, C_2, C_3)$} is the set $R \subseteq V(T)$ of nodes that $r$-intersect $(C_1, C_2, C_3)$, i.e., $R = \{w \in V(T) \mid T_r[w] \text{ intersects at least two sets from  } \{C_1, C_2, C_3\} \}$.
\end{definition}

Note that for a \strongminsplit $(uv, C_1, C_2, C_3)$, both $u$ and $v$ are necessarily in the edit set.
We formalize the intuition about edit sets in the following lemma.
It will be implicitly used in many of our arguments.

\begin{lemma}\label{lemma_propertiesR}
Let $T$ be a branch decomposition, $r = uv \in E(T)$, $(r, C_1, C_2, C_3)$ a \strongminsplit, $R$ the edit set of $(r, C_1, C_2, C_3)$, and $T'$ the refinement of $T$ with $(r, C_1, C_2, C_3)$. It holds that
\begin{enumerate}
\item[(1)] every node in $R$ is of degree 3,
\item[(2)] the nodes of $R$ induce a connected subtree $T[R]$ of $T$, and
\item[(3)] there exists an edge $r' \in E(T')$ so that for every $w \in V(T) \setminus R$ there is a node $w' \in V(T')$ with $T_r[w] = T'_{r'}[w']$.
\end{enumerate}
\end{lemma}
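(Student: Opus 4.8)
The plan is to unpack the definition of the edit set $R$ and the structure of a \strongminsplit, and then argue each of the three properties more or less directly. Recall that $R$ consists of exactly those nodes $w$ for which $T_r[w]$ meets at least two of the sets $C_1,C_2,C_3$; equivalently, $w\in R$ iff the tripartition $r$-intersects $w$.

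For part (1), I would observe that a leaf $\ell$ of $T$ has $T_r[\ell]=\{L^{-1}(\ell)\}$, a singleton, which can meet at most one of the $C_i$; hence no leaf lies in $R$, so every node of $R$ has degree $3$ (the root edge $r=uv$ is handled by the convention that $T$ is rooted at the \emph{edge} $r$, so $u$ and $v$ are ordinary degree-$3$ internal nodes). For part (2), I would use the nested/laminar structure of the sets $\{T_r[w] : w\in V(T)\}$: if $w$ is an $r$-child of $p$, then $T_r[w]\subseteq T_r[p]$, so if $T_r[w]$ meets two of the $C_i$ then so does $T_r[p]$; thus $R$ is closed under taking $r$-parents, which means $T[R]$ is connected (it is a subtree containing the ``root side'' $u,v$ and closed upward). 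For part (3), I would take any $w\in V(T)\setminus R$. Since $T_r[w]$ meets at most one of the $C_1,C_2,C_3$, say $C_i$, we have $T_r[w]\subseteq C_i$, and moreover no proper descendant of $w$ is in $R$ either (descendants have smaller $T_r[\cdot]$), so the entire $r$-subtree of $w$ is untouched by the surgery in Definition \ref{def_refinement}: in the copy $(T_i,L_i)$ this subtree survives intact, it is attached through $w_i$ and then $t$ to the rest, and the pruning/suppression steps do not delete any of its labeled leaves. Hence there is a node $w'$ in $T'$ with $T'_{r'}[w']=T_r[w]$ for the natural choice of $r'$ (the image of the edge $r$, e.g.\ an edge incident to the new center $t$). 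This last point is cleanest if stated together with the explicit description of the refinement in terms of removing $T[R]$ and inserting a new subtree of the same size — which is exactly what the surrounding text says the edit set is for — so I would phrase (3) as: the refinement agrees with $T$ outside the $r$-subtrees hanging off $R$.

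The main obstacle I expect is part (3): one has to be careful about what ``the same node survives'' means after the pruning of unlabeled leaves and suppression of degree-$2$ vertices in Definition \ref{def_refinement}, and about pinning down the reference edge $r'$ used in $T'_{r'}[\cdot]$. The key technical point to nail down is that when $T_r[w]\subseteq C_i$, the whole $r$-subtree of $w$ appears verbatim in $(T_i,L_i)$ with all its leaves labeled, so none of its nodes is removed or suppressed, and the cut it induces in $T'$ is exactly $(T_r[w],\overline{T_r[w]})$ — which is precisely \autoref{observ_refinement} restricted to such $w$. Everything else (parts (1) and (2)) is a short laminar-family argument.
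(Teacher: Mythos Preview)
Your proposal is correct and follows essentially the same approach as the paper's proof: the singleton argument for (1), closure under $r$-parents for (2), and the observation that $T_r[w]\subseteq C_i$ forces the $r$-subtree of $w$ to survive intact in the refinement for (3). The paper is slightly terser (it just asserts $\{u,v\}\subseteq R$ and that the $r$-subtree of $w$ ``appears identically in $T'$'' without dwelling on pruning/suppression), and it chooses $r'$ as any edge of the new inserted subtree (using $|R|\ge 2$), which matches your suggestion of an edge incident to $t$; your parenthetical about $u,v$ being degree-$3$ ``by convention'' is unnecessary, since the singleton argument already rules out any leaf being in $R$.
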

\begin{proof}
For (1), the set $T_r[w]$ of a leaf $w$  consists of one element. Thus it does not intersect $(C_1, C_2, C_3)$.
For (2), first note that $\{u,v\} \subseteq R$.
Then, consider a node $w \in R \setminus \{u,v\}$, and let $p$ be the $r$-parent of $w$.
It holds that $T_r[w] \subseteq T_r[p]$, so $p$ must also be in $R$.

For (3), observe that by the definition of edit set for every node $w \in V(T) \setminus R$ it holds that $T_r[w] \subseteq C_i$ for some $i$.
This implies that the $r$-subtree of $w$ appears identically in $T'$, and therefore $T'$ consists of the $r$-subtrees of all $w \in N_T(R)$ and a connected subtree inserted in the place of $R$ and connected to $N_T(R)$.
As $|R| \ge 2$, this inserted subtree contains at least one edge, which we can designate as the edge $r'$.
\end{proof}


Next we define the \emph{neighbor partition} of an edit set $R$.

\begin{definition}[Neighbor partition]
Let $(r, C_1, C_2, C_3)$ be a \strongminsplit and $R$ its edit set.
The neighbor partition of $R$ is the partition $(N_1, N_2, N_3)$ of the neighbors $N_T(R)$ of $R$, where $N_i = \{w \in N_T(R) \mid T_r[w] \subseteq C_i\}$.
\end{definition}

Note that the neighbor partition is indeed a partition of $N_T(R)$ by the definition of edit set.

Next we give an algorithm for performing the refinement operation in $\OO(|R|)$ time, given the edit set $R$ and its neighbor partition.

\begin{lemma}
\label{lem:refinealg}
Let $T$ be a branch decomposition, $r = uv \in E(T)$, and $(r, C_1, C_2, C_3)$ a \strongminsplit.
Given the edit set $R$ of $(r, C_1, C_2, C_3)$ and the neighbor partition $(N_1, N_2, N_3)$ of $R$, $T$ can be turned into the refinement of $T$ with $(r, C_1, C_2, C_3)$ in $\OO(|R|)$ time.
Moreover, all information stored in nodes $V(T) \setminus R$ is preserved and the other nodes are marked as new.
\end{lemma}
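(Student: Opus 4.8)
The plan is to turn the combinatorial description of the refinement operation (\autoref{def_refinement}) into an in-place surgery on the tree, exploiting parts (1)–(3) of \autoref{lemma_propertiesR}. Recall that $R$ induces a connected subtree $T[R]$, every node of $R$ has degree $3$ in $T$, and every $w \in V(T) \setminus R$ satisfies $T_r[w] \subseteq C_i$ for exactly one $i$; the neighbor partition $(N_1,N_2,N_3)$ records which $i$ that is for each neighbor of $R$. First I would observe that the refinement, as defined, can be built without ever touching the subtrees hanging off $R$: for each $i$, the partial decomposition $(T_i, L_i \restriction_{C_i})$ restricted to the part of $T$ outside $R$ is literally the forest of $r$-subtrees rooted at the nodes of $N_i$, joined to a skeleton that is a copy of $T[R]$ with only the leaves in $C_i$ kept. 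So the algorithm is: (a) detach the $|N_T(R)| = |R|+2$ pendant subtrees at the boundary (this is $\OO(|R|)$ pointer updates since $|N_T(R)| = \OO(|R|)$ by the degree-$3$ property); (b) build, for each $i \in \{1,2,3\}$, a fresh copy $S_i$ of the subtree $T[R]$ — this is $\OO(|R|)$ nodes each, all marked new — insert the extra node $w_i$ on the copy of the edge $uv$ inside $S_i$, reattach the subtrees of $N_i$ to the corresponding boundary nodes of $S_i$, and then prune unlabeled degree-$1$ nodes and suppress degree-$2$ nodes within $S_i \cup \{w_i\}$; (c) add the center node $t$ adjacent to $w_1,w_2,w_3$; (d) discard the nodes of $R$. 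Since each $S_i$ has $\OO(|R|)$ nodes and each pruning/suppression step removes a node, the cleanup is $\OO(|R|)$ total, so the whole operation is $\OO(|R|)$.

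The key technical point to verify is that this surgery produces exactly the tree of \autoref{def_refinement}, and in particular that the only nodes we need to create are the $\OO(|R|)$ nodes of the three copies of $T[R]$ plus $w_1,w_2,w_3,t$; no node outside $R$ is ever duplicated or modified, so ``all information stored in nodes $V(T)\setminus R$ is preserved'' holds verbatim. This is where I would invoke part (3) of \autoref{lemma_propertiesR}: for $w \notin R$ the $r$-subtree of $w$ appears identically in $T'$, hence in $T_i$ it either appears identically (if $w$'s side is $C_i$, roughly speaking — more precisely if $N_i$ contains the boundary ancestor through which $w$ hangs) or is entirely pruned away (otherwise). Concretely I would argue: in $T_i = (T, L\restriction_{C_i})$, the iterative pruning of unlabeled leaves deletes precisely those nodes $w$ with $T_r[w] \cap C_i = \emptyset$, and suppression of degree-$2$ nodes then only touches nodes on the ``spine'' — which, one checks, are contained in $R \cup \{u_i, v_i\}$ together with at most one node per branch of $N_T(R)$. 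So the post-processing stays inside the $\OO(|R|)$-sized region, and outside it $T_i$ is a verbatim union of pendant subtrees rooted at $N_i$.

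The main obstacle I anticipate is bookkeeping the degree-$1$/degree-$2$ cleanup carefully enough to stay within $\OO(|R|)$ and to correctly identify which newly created boundary node of $S_i$ each subtree of $N_i$ should reattach to — in particular handling the corner cases where a copy $S_i$ collapses almost entirely (e.g.\ $C_i$ meets only one or two boundary branches, or $C_i$ is a subset living entirely on the $W$ or the $\oW$ side of $uv$, so $w_i$ gets suppressed, or $C_i = \emptyset$ so $w_i$ and its edge to $t$ vanish and $t$ becomes degree $2$ and is itself suppressed). I would handle this by first performing the naive construction on the $\OO(|R|)$-node scaffold $S_1 \cup S_2 \cup S_3 \cup \{t\}$ exactly as in \autoref{def_refinement} (legitimate since the scaffold is small), reattaching the $N_i$-subtrees only at the very end to whatever node of $S_i$ survived as the image of their former attachment point; since the scaffold has $\OO(|R|)$ nodes, even a literal simulation of the prune-and-suppress process on it is $\OO(|R|)$. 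The one thing I would double-check is that suppression never needs to merge two of the preserved pendant subtrees directly (which would violate ``information preserved''): this cannot happen because a node whose suppression would join two pendant subtrees would have had degree $\ge 3$ in $T$ (two pendant branches plus the spine), contradicting that it would be degree $2$ after restriction — so every suppression is safely inside the scaffold.
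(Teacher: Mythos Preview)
Your approach is essentially the same as the paper's: make three fresh copies of the subtree $T[R]$, subdivide the copy of $uv$ in each with a node $w_i$, connect $w_1,w_2,w_3$ to a new center $t$, reattach each pendant subtree rooted at $w\in N_i$ to the copy $p_i$ of its old $r$-parent $p\in R$, discard $R$, and then prune/suppress only within the $\OO(|R|)$ new nodes. The paper's proof is a terser algorithmic description that does not spell out the correctness considerations (edge cases, why suppression stays in the scaffold) that you work through; your extra care there is fine but not strictly needed at the level of detail the paper gives.
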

\begin{proof}
We create three copies $T_1$, $T_2$, $T_3$ of the induced subtree $T[R]$.
We denote the copy of a vertex $x \in R$ in $T_i$ by $x_i$, and denote $R_i = \{x_i \mid x \in R\}$.
To each $T_i$ we also insert a new node $w_i$ on the edge $u_i v_i$, i.e., let $V(T_i) \gets V(T_i) \cup \{w_i\}$ and $E(T_i) \gets E(T_i) \setminus \{u_i, v_i\} \cup \{u_i w_i, w_i v_i\}$.
We then insert a new center node $t$ and connect each $w_i$ to it.

For each node $w \in N_i$, let $p \in R$ be the $r$-parent of $w$.
We remove the edge $wp$ and insert the edge $w p_i$. 
It remains to remove all nodes of $R$, and then iteratively remove degree-1 nodes and suppress degree-2 nodes in $R_1 \cup R_2 \cup R_3 \cup \{t\}$.

For the time complexity, these operations can be done in time linear in $|R|+ |R_1| + |R_2| + |R_3| + |N_1| + |N_2| + |N_3| = \OO(|R|)$ because $T$ is represented as an adjacency list and the maximum degree of $T$ is $3$.
\end{proof}

The outline of the refinement operation in our framework is that the dynamic programming outputs the edit set $R$ and its neighbor partition in $\OO(t(k) |R|)$ time, then the algorithm of \autoref{lem:refinealg} computes the refinement in $\OO(|R|)$ time, and then the dynamic programming tables of the $|R|$ new nodes are computed in $\OO(t(k) |R|)$ time.
To bound the sum of the sizes of the edit sets $R$ over the course of the algorithm, we introduce the following potential function.
%
%
\begin{definition}[$k$-potential]
Let $T$ be a branch decomposition of $f$.
The \emph{$k$-potential of $T$} is
\[
\phi_k(T) = \sum_{\substack{e \in E(T)\\ f(e)<k}} f(e)\cdot 3^{f(e)} +\sum_{\substack{e \in E(T)\\ f(e)\geq k}} 3f(e)\cdot 3^{f(e)}.
\]
\end{definition}
When working with $k$-potentials, we will use the following notation. 
For $x\geq 0$, let  
\[
\phi_k(x) = \left\{
	\begin{array}{ll}
		x 3^{x}, & \text{ if } x < k\\
		3 x \cdot 3^{x}, & \text{ otherwise,}\\
	\end{array}
\right.
\]
For $W \subseteq V$, we will use  $\phi_k(W)$ to denote  $\phi_k(f(W))$. With this notation, the $k$-potential of $T$ is 
\[
\phi_k(T) = \sum_{uv \in E(T)} \phi_k(T[uv]).
\]

Note that for any $k$, the $k$-potential of a branch decomposition $T$ is at most $\OO(3^{\bw(T)} \bw(T) |E(T)|)$, which is $2^{\OO(k)} n$ when $\bw(T) = \OO(k)$.

Next we show that performing a refinement operation with an edit set $R$ decreases the $k$-potential by at least $|R|$.

\begin{lemma}
\label{lem:amortize}
Let $T$ be a branch decomposition with an edge $r = uv \in E(T)$ so that $f(uv) = \bw(T) = k$.  Let $(r, C_1, C_2, C_3)$ be a \strongminsplit, $R$ be the edit set of $(r, C_1, C_2, C_3)$, and $T'$ be the refinement of $T$ with $(r, C_1, C_2, C_3)$.
Then it holds that $\phi_k(T') \le \phi_k(T) - |R|$.
\end{lemma}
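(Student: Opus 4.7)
The plan is to show $\phi_k(T) - \phi_k(T') \geq |R|$ by isolating the edges of $T$ and $T'$ whose cuts differ and then performing a local, per-node analysis using \Cref{the:glostro}. By \Cref{lemma_propertiesR}(3) there is an edge $r' \in E(T')$ such that every $w \in V(T) \setminus R$ has a counterpart $w' \in V(T')$ with $T_r[w] = T'_{r'}[w']$. Rooting $T$ at $r$ so that each edge corresponds to a unique ``child'' in $V(T) \setminus \{u\}$, and rooting $T'$ analogously at $r'$, the edges of $T$ whose child lies in $V(T) \setminus R$ and the edges of $T'$ whose child lies in $V(T) \setminus R$ are in bijection with identical cuts. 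Writing $R'' := V(T') \setminus (V(T) \setminus R)$ and using $|V(T')| = |V(T)|$ gives $|R''| = |R|$, so both $T[R]$ and $T'[R'']$ have exactly $|R| - 1$ edges, and these are the only edges where the $\phi_k$-contributions can differ between $T$ and $T'$. Hence
\begin{equation*}
\phi_k(T) - \phi_k(T') = \phi_k(W) + \sum_{c \in R \setminus \{u,v\}} \phi_k(T_r[c]) - \sum_{e \in E(T'[R''])} \phi_k(f(e)).
\end{equation*}

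Next I would upper-bound $\sum_{e \in E(T'[R''])} \phi_k(f(e))$ via \Cref{observ_refinement}: the cuts realized by edges of $T'[R'']$ form a subset of $\{(C_i, \overline{C_i}) : C_i \neq \emptyset\} \cup \{(T_r[w] \cap C_i, \overline{T_r[w] \cap C_i}) : w \in R,\ T_r[w] \cap C_i \neq \emptyset\}$, and summing over this enumeration (with possible duplicates) only enlarges the total since $\phi_k \geq 0$. Splitting the second family into $w = u$, $w = v$, and $w \in R \setminus \{u, v\}$, and using $T_r[u] = W$ and $T_r[v] = \overline{W}$, I reorganize the bound as
\begin{equation*}
\phi_k(T) - \phi_k(T') \geq \Delta_{uv} + \sum_{c \in R \setminus \{u,v\}} \Delta_c,
\end{equation*}
where $\Delta_{uv} := \phi_k(W) - \sum_i \phi_k(C_i) - \sum_i \phi_k(W \cap C_i) - \sum_i \phi_k(\overline{W} \cap C_i)$ (each sum restricted to non-empty sets) and $\Delta_c := \phi_k(T_r[c]) - \sum_i \phi_k(T_r[c] \cap C_i)$.

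It then remains to check that each summand is large. For $c \in R \setminus \{u, v\}$, since $T_r[c]$ is not contained in any single $C_i$, \Cref{the:glostro} gives $f(T_r[c] \cap C_i) \leq f(T_r[c]) - 1$ for every non-empty intersection, and hence forces $f(T_r[c]) \geq 1$; writing $x = f(T_r[c])$ and using that there are at most three non-empty intersections, the identities $\phi_k(x) - 3 \phi_k(x-1) = 3^x$ for $1 \leq x < k$ and $\phi_k(k) - 3 \phi_k(k-1) = (2k+1) 3^k$ give $\Delta_c \geq 3$. For $\Delta_{uv}$, the \splitW bounds $f(C_i) < k/2$ and $f(W \cap C_i), f(\overline{W} \cap C_i) < k$ together with $f(W) = k$ give $\Delta_{uv} \geq 3^k(k+2) - 3\phi_k(\lfloor (k-1)/2 \rfloor) \geq 9$ for every $k \geq 1$. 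Summing yields $\phi_k(T) - \phi_k(T') \geq 9 + 3(|R| - 2) = 3|R| + 3 \geq |R|$, as required. The delicate step is the cancellation $\phi_k(x) \geq 3\phi_k(x-1) + 3^x$: the definition of $\phi_k$ with its factor-$3^x$ growth and the additional factor-$3$ jump at $x = k$ is tuned precisely so that each node $c \in R$ (with its up-to-three child cuts $T_r[c] \cap C_i$) still releases a unit of potential, while the heavy edge $uv$ alone has enough credit to absorb the fixed cost incurred at the root.
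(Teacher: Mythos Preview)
Your proof is correct and follows essentially the same route as the paper's. The only cosmetic difference is that you explicitly invoke \Cref{lemma_propertiesR}(3) and the equality $|V(T')|=|V(T)|$ (both cubic trees on the same leaf set) to set up a bijection cancelling the edges outside the edit set, whereas the paper writes $\phi_k(T)$ and $\phi_k(T')$ as global sums over all nodes via \Cref{observ_refinement} and then observes that the terms for $w\notin R$ vanish because $T_r[w]\subseteq C_i$ forces $\sum_i \phi_k(C_i\cap T_r[w])=\phi_k(T_r[w])$. After that cancellation both arguments are literally the same per-node analysis using \Cref{the:glostro}; your bounds $\Delta_c\ge 3$ and $\Delta_{uv}\ge 9$ are sharper than the paper's $\ge 1$ and $\ge 2$, but the lemma only needs the weaker inequalities.
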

\begin{proof}
We use the notation that $W = T[uv]$. Note that 
\begin{eqnarray}
\label{lem:amortize:T}
\phi_k(T) = \phi_k(k) + \sum_{w \in (V(T) \setminus \{u,v\})} \phi_k(T_r[w])
\end{eqnarray}
and that by \autoref{observ_refinement} and the fact that $\phi_k(\emptyset) = 0$,
\begin{eqnarray}
\label{lem:amortize:Tp}
\phi_k(T')=\sum_{i \in \{1,2,3\}} \left( \phi_k(C_i) + \sum_{w \in V(T)} \phi_k(C_i \cap T_r[w]) \right).
\end{eqnarray}

Then
\begin{eqnarray*}
&&\phi_k(T) - \phi_k(T')
= \phi_k(T) - \sum_{i \in \{1,2,3\}} \left( \phi_k(C_i) + \sum_{w \in V(T)} \phi_k(C_i \cap T_r[w]) \right)
\end{eqnarray*}
and by taking $C_i \cap T_r[u] = C_i \cap W$ and $C_i \cap T_r[v] = C_i \cap \oW$ out of the sum we get
\begin{eqnarray*}
&\ge& \phi_k(T) - \sum_{i \in \{1,2,3\}} \left(\phi_k(C_i) + \phi_k(C_i \cap W) + \phi_k(C_i \cap \overline{W}) + \sum_{w \in (V(T) \setminus \{u,v\})} \phi_k(C_i \cap T_r[w]) \right)\\
\end{eqnarray*}
and by using (\ref{lem:amortize:T}), $\phi_k(C_i) \le \phi_k((k-1)/2)$, $\phi_k(C_i \cap W) \le \phi_k(k-1)$ by the definition of \splitW, and interleaving the sums (\ref{lem:amortize:T}) and (\ref{lem:amortize:Tp}) we get
\begin{eqnarray*}
&\ge& \phi_k(k) - 3 \phi_k((k-1)/2) - 6 \phi_k(k-1) + \sum_{w \in (V(T) \setminus \{u,v\})} \left(\phi_k(T_r[w]) - \sum_{i \in \{1,2,3\}} \phi_k(C_i \cap T_r[w]) \right)\\
\end{eqnarray*}
and by simplifying $\phi_k(k) - 3 \phi_k((k-1)/2) - 6 \phi_k(k-1) \ge 3 \cdot 3^k$
\begin{eqnarray*}
&\ge& 3 \cdot 3^k + \sum_{w \in (V(T) \setminus \{u,v\})} \left(\phi_k(T_r[w]) - \sum_{i \in \{1,2,3\}} \phi_k(C_i \cap T_r[w]) \right).
\end{eqnarray*}
The next inequality is by noticing that for $w \notin R$, $\phi_k(T_r[w]) = \sum_{i \in \{1,2,3\}} \phi_k(C_i \cap T_r[w])$ because $T_r[w]$ is a subset of some $C_i$ and $\phi_k(\emptyset) = 0$.
The final inequality is from \autoref{the:glostro}, which implies that for any node $w$ in the edit set and every $i$ it holds that $f(T_r[w] \cap C_i) < f(T_r[w])$.
\begin{eqnarray*}
&\ge& 2 + \sum_{w \in (R \setminus \{u,v\})} \left( \phi_k(T_r[w]) - \sum_{i \in \{1,2,3\}} \phi_k(C_i \cap T_r[w]) \right) \ge |R|.\label{lem:amortize:sl}
\end{eqnarray*}
\end{proof}

In particular, when performing a sequence of refinement operations with \strongminsplits on edges $r$ of width $f(r) = \bw(T) = k$, the sum of the sizes of edit sets is at most $\OO(3^k \cdot k \cdot |E(T)|)$.

\subsection{Refinement data structure\label{subsec:refds}}
We define a \emph{refinement data structure} to formally capture what is required from the underlying dynamic programming in our framework.

\begin{definition}[Refinement data structure]
Let $f$ be a connectivity function.
A refinement data structure of $f$ with time complexity $t(k)$ maintains a branch decomposition $T$ of $f$ with $\bw(T) \le k$ rooted on an edge $r = uv \in E(T)$ and supports the following operations:

\begin{enumerate}
\item Init($T$, $uv$): Given a branch decomposition $T$ of $f$ with $\bw(T) \le k$ and an edge $uv \in E(T)$, initialize the data structure in $\OO(t(k)|V(T)|)$ time.
\item Move($vw$): Move the root edge $r = uv$ to an incident edge $vw$, i.e., set $r \gets vw$.
Works in $\OO(t(k))$ time.
\item Width(): Return $f(uv)$ in $\OO(t(k))$ time.
\item CanRefine(): Returns true if there exists a \splitW where $W = T[uv]$ and false otherwise. Works in time $\OO(t(k))$. Once CanRefine() has returned true, the following can be invoked:
\begin{enumerate}
\item EditSet(): Let $(r, C_1, C_2, C_3)$ be a \strongminsplit, $R$ the edit set of $(r, C_1, C_2, C_3)$, and $(N_1, N_2, N_3)$ the neighbor partition of $R$. Returns $R$ and $(N_1, N_2, N_3)$. Works in $\OO(t(k)|R|)$ time.
\item Refine($R$, $(N_1, N_2, N_3)$)): Implements the refinement operation described in \autoref{lem:refinealg}, i.e., computes the refinement of $T$ with $(r, C_1, C_2, C_3)$ by removing the edit set $R$ and inserting a connected subtree of $|R|$ nodes in its place. Sets $r$ to an arbitrary edge between two newly inserted nodes (such an edge exists because $|R|\ge2$). Works in $\OO(t(k) |R|)$ time.
\end{enumerate}
\item Output(): Outputs $T$ in $\OO(t(k) |V(T)|)$ time.
\end{enumerate}
\end{definition}

We explain how our algorithm uses the refinement data structure in \Cref{section_general_algorithm}.
Let us here informally explain how the refinement data structure is typically implemented using dynamic programming.
The formal descriptions for rankwidth and graph branchwidth constitute \Cref{sec_rankwidth_approx,sec_gr_bw}.

For each node $w$ of $T$, the refinement data structure stores a dynamic programming table of size $\OO(t(k))$ that represents information of the $r$-subtree of $w$ in such a way that the dynamic programming tables of the nodes $u$ and $v$ combined together can be used to detect the existence of a \splitW on $W = T[uv]$.
Now, the Init($T$, $uv$) operation is to compute these dynamic programming tables in a bottom-up fashion for all nodes from the leafs towards the root $uv$, using $\OO(t(k))$ time per node.
The Move($vw$) operation changes the root edge $uv$ to an incident edge $vw$.
For implementing Move($vw$), we observe the following useful property.

\begin{observation}
\label{obs:reroot}
Let $T$ be a branch decomposition, $r = uv \in E(T)$ an edge of $T$, and $r' = vw \in E(T)$ another edge of $T$.
For all nodes $x \in V(T) \setminus \{v\}$, the $r$-subtree of $x$ is the same as the $r'$-subtree of $x$.
\end{observation}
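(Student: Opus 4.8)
The statement to prove is \Cref{obs:reroot}: for edges $r = uv$ and $r' = vw$ sharing the endpoint $v$, every node $x \neq v$ has the same $r$-subtree as $r'$-subtree.

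\medskip

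The plan is to argue directly from the definition of the $r$-subtree as the subtree induced by all nodes $y$ such that $x$ lies on the unique $y$--$r$ path (with $r$ understood as a subdivision point of the edge, or equivalently the midpoint of $uv$), together with the analogous definition for $r'$. First I would fix a node $x \in V(T) \setminus \{v\}$ and observe that it suffices to show that for every node $y \in V(T)$, the node $x$ lies on the unique $y$--$r$ path if and only if $x$ lies on the unique $y$--$r'$ path; this immediately gives equality of the vertex sets of the two subtrees, and since both are subtrees of the same tree $T$ induced by the same vertex set, they are equal as subtrees (and hence $T_r[x] = T_{r'}[x]$ as leaf sets).

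\medskip

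For the biconditional, the key observation is that the $y$--$r$ path and the $y$--$r'$ path differ only in a ``tail'' near $v$: since $r$ is the midpoint of $uv$ and $r'$ is the midpoint of $vw$, and $uv, vw$ are both incident to $v$, every path from any node $y$ toward $r$ first reaches $v$ (unless it already passes through it) and then continues through $u$ to $r$, and similarly for $r'$ it continues through $w$. More precisely: consider removing $v$ from $T$; the node $x$ lies in exactly one component $Q$ of $T - v$. If $y$ also lies in $Q$, then the $y$--$v$ path stays within $Q \cup \{v\}$ and does not depend on whether we continue toward $u$ or toward $w$ after reaching $v$, so $x$ is on the $y$--$r$ path iff $x$ is on the $y$--$v$ path iff $x$ is on the $y$--$r'$ path. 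If $y$ lies in a different component $Q'$ of $T - v$ (or $y = v$), then the $y$--$r$ path enters $v$ and leaves $v$ toward $u$; since $x \in Q$, $x$ is on this path only if $Q$ is the component containing $u$, and in that case $x$ is on the $y$--$r$ path; whether $x$ is on the $y$--$r'$ path depends on whether $Q$ is the component containing $w$ — but $u \neq w$ (they are distinct neighbors of $v$), so these are different components, and hence $x$ cannot simultaneously be on a path through $u$ and not be the reason — wait, this needs care.

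\medskip

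Let me restate the third paragraph's plan more carefully, since this is the main (and only) subtle point. Write $T - v$ for the forest obtained by deleting $v$; let $Q_u$ be the component containing $u$ and $Q_w$ the component containing $w$, which are distinct since $u \neq w$. Fix $x \neq v$ and let $Q$ be its component. The claim ``$x$ is on the $y$--$r$ path'' unwinds to: either ($y \in Q$ and $x$ is on the $y$--$v$ path within $T[Q\cup\{v\}]$) — this case is symmetric in $r$ and $r'$ since it does not mention $u$ or $w$ — or ($y \notin Q \cup \{v\}$, $Q = Q_u$, and $x$ is on the $u$--$v$ path, which holds for all $x \in Q_u$ trivially as the path $y \to v \to u \to r$ passes through all of $Q_u$'s relevant portion)... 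Actually the cleanest route: the $r$-subtree of $x$ equals the $r'$-subtree of $x$ precisely because \autoref{obs:reroot}'s content is that rerooting at an adjacent edge only changes the subtree rooted at the shared endpoint $v$. So I would phrase the proof as: the only node whose $r$-subtree can change when passing from root $r = uv$ to root $r' = vw$ is $v$ itself, because for $x \neq v$, the set $T_r[x]$ is determined by the component structure of $T - \{$the edge on the $x$-side of $x$ toward $r\}$, and the edge incident to $x$ pointing toward $r$ is the same edge as the one pointing toward $r'$ (namely, the first edge on the $x$--$v$ path, which is unchanged) unless $x = v$. I expect the main obstacle to be purely expository: packaging this ``the path to $r$ and the path to $r'$ agree until they reach $v$'' observation cleanly, handling the degenerate subcases ($x \in \{u,w\}$, $y = v$) without an explosion of cases. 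I would handle it by reducing everything to the single statement that for $x \neq v$, the $r$-parent edge of $x$ coincides with the $r'$-parent edge of $x$, from which equality of $r$-subtrees is immediate.

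\begin{proof}
We may regard the root edge $r = uv$ as a distinguished point in the middle of the edge $uv$; then for a node $x \in V(T)$, the $r$-subtree of $x$ is the subtree induced by all nodes $y$ such that $x$ lies on the unique $y$--$r$ path in the subdivided tree. Likewise for $r' = vw$.

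Fix $x \in V(T) \setminus \{v\}$. Since $x \neq v$, there is a unique first edge $e_x$ on the path from $x$ to $v$ in $T$; every path from $x$ that eventually reaches $v$ must use $e_x$, and every path from $x$ to $r$ passes through $v$ (because $r$ lies on the edge $uv$, whose endpoint $v \neq x$ forces the $x$--$r$ path to reach $v$ before possibly continuing into the subdivided portion of $uv$), and similarly every path from $x$ to $r'$ passes through $v$. Consequently $e_x$ is the first edge on both the $x$--$r$ path and the $x$--$r'$ path; i.e. the $r$-parent of $x$ and the $r'$-parent of $x$ are the same node.

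We claim this implies that a node $y$ lies in the $r$-subtree of $x$ if and only if it lies in the $r'$-subtree of $x$. Indeed, $y$ lies in the $r$-subtree of $x$ iff $x$ is on the $y$--$r$ path. If $y$ is in the component of $T - e_x$ not containing $v$, then the $y$--$v$ path uses $e_x$, hence passes through $x$; since both the $y$--$r$ and $y$--$r'$ paths are extensions of the $y$--$v$ path, $x$ lies on both, so $y$ is in both subtrees. If $y$ is in the component of $T - e_x$ containing $v$, then the $y$--$v$ path avoids $e_x$ and hence avoids $x$ (as $x$ is in the other component), and the $y$--$r$ and $y$--$r'$ paths are obtained from the $y$--$v$ path by appending a walk within the subdivided edges $uv$ or $vw$ respectively, which again avoids $x$; so $y$ is in neither subtree. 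In all cases $y$ lies in the $r$-subtree of $x$ iff it lies in the $r'$-subtree of $x$.

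Therefore the $r$-subtree and the $r'$-subtree of $x$ have the same vertex set, and being subtrees of $T$ induced by the same set of nodes, they coincide. In particular their leaf sets agree, i.e. $T_r[x] = T_{r'}[x]$.
\end{proof}
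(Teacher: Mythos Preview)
The paper states this as an observation without proof, treating it as immediate from the definitions. Your overall strategy is sound—reduce to showing that for $x \neq v$ the first edge on the $x$--$r$ path coincides with the first edge on the $x$--$r'$ path—but the justification you give for this step is incorrect. You assert that ``every path from $x$ to $r$ passes through $v$''; this fails whenever $x$ lies in the component of $T - uv$ containing $u$ (in particular when $x = u$): the $x$--$r$ path then goes from $x$ to $u$ and stops at the midpoint of $uv$ without ever touching $v$. The same error recurs in your case analysis, where you claim the $y$--$r$ path is obtained from the $y$--$v$ path ``by appending a walk within the subdivided edges $uv$ or $vw$''; again this is false for $y$ on the $u$-side of $T-uv$.

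The conclusions you derive from these faulty claims are nonetheless true, and the repair is short. The correct reason that $e_x$ (the first edge on the $x$--$v$ path) is also the first edge on the $x$--$r$ path is simply that the $x$--$v$ path and the $x$--$r$ path coincide until they reach whichever of $u,v$ is nearer to $x$; in either case the first edge is $e_x$. The same holds for $r'$ with $v,w$ in place of $u,v$. Once this is established, removing $e_x$ puts $r$ and $r'$ on the same side (the $v$-side), so the $r$-subtree and the $r'$-subtree of $x$ are both just the component of $T - e_x$ containing $x$, and you are done without the case analysis on $y$.
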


In particular, as the dynamic programming table of a node depends only on its $r$-subtree, it suffices to re-compute only the dynamic programming table of the node $v$ in $\OO(t(k))$ time when using Move($vw$).
The Width() operation is typically implemented without dynamic programming, using some other auxiliary data structure.
The CanRefine() operation is implemented by combining the information of the dynamic programming tables of $u$ and $v$ in an appropriate way.
Then, the EditSet() operation is implemented by tracing the dynamic programming backwards, working with a representation of the \strongminsplit $(r, C_1, C_2, C_3)$ that allows for efficiently determining whether $C_i$ and $T_r[w]$ intersect.
The Refine($R$, $(N_1, N_2, N_3)$) operation is a direct application of \autoref{lem:refinealg} followed by computing the dynamic programming tables of the $|R|$ new nodes inserted by \autoref{lem:refinealg} in $\OO(t(k) |R|)$ time, and possibly also updating other auxiliary data structures.
The implementation of Output() is typically straightforward, as it just amounts to outputting the branch decomposition that the data structure has been maintaining.

\subsection{General algorithm}\label{section_general_algorithm}
We present a general algorithm that uses the refinement data structure to in time $t(k) 2^{\OO(k)} n$ either improve the width of a given branch decomposition from $k$ to $k-1$ or to conclude that it is already a 2-approximation.

\begin{algorithm}[!t]
\caption{Iterative refinement.\label{alg:detailed}}
\DontPrintSemicolon
\SetKwInOut{Input}{Input}\SetKwInOut{Output}{Output}\SetKwInOut{Time}{Runtime}
\SetKw{Conclude}{conclude}
\Input{A function $f : 2^V \rightarrow \mathbb{Z}_{\ge 0}$ and a branch decomposition $T$ of $f$.}
\Output{A branch decomposition of $f$ of width at most $\bw(T)-1$ or the conclusion that $\bw(T) \le 2\bw(f)$.}
Let $k \gets \bw(T)$\;
Let \state\xspace be an array initialized with the value \unseen for all nodes of $T$, including new nodes that will be created by refinement.\;
Let $s$ be an arbitrary leaf node of $T$\;
$v \gets s$\;
$u \gets $ the neighbor of $v$\;
\state[v] $\gets$ \open\;
\state[u] $\gets$ \open\label{alg:detailed:init}\;
\While{\textnormal{\state[u]} = \open}{
	\uIf{Exists $w \in N_T(u)$ with \textnormal{\state[w]} = \unseen\label{alg:detailed:casenb}}{
		$v \gets u$\;
		$u \gets w$\;
		\state[u] $\gets$ \open\;
	}
	\uElseIf{$f(uv)< k$\label{alg:detailed:widthcheck}}{
		\uIf{$v = s$}{
			\Return $T$\label{alg:detailed:ok}\;
		}
		\uElse{\label{alg:detailed:caseback}
			\state[u] $\gets$ \closed\;
			$u \gets v$\;
			$v \gets $ the node $v \in N_T(u)$ with $\state[v] = \open$\;
			\tcc{Such a node $v$ is unique.}
		}
	}
	\uElse{
		\uIf{Exists a \strongminsplit $(uv, C_1, C_2, C_3)$\label{alg:detailed:caseimp}}{
			$T \gets $ Refine($T$, $(uv, C_1, C_2, C_3)$)\label{alg:detailed:ref}\;
			\tcc{Where the refinement operation works as in \autoref{lem:refinealg}, i.e., by removing the edit set $R$ and inserting a connected subtree of $|R|$ nodes in its place.}
			$v \gets $ the node $v \in N_T(R)$ with $\state[v] = \open$\label{alg:detailed:after}\;
			$u \gets $ the node $u \in N_T(v)$ that was inserted by the refinement\label{alg:detailed:ext21}\;
			\tcc{Such nodes $v$ and $u$ are unique.}
			\state[u] $\gets$ \open\label{alg:detailed:ext22}\;
		}
		\uElse{
			\Conclude $\bw(T) \le 2\bw(f)$\label{alg:detailed:concl}\;
		}
	}
}
\end{algorithm}

Our algorithm is described as a pseudocode \Cref{alg:detailed}.
The algorithm is a depth-first-search on the given branch decomposition $T$, where whenever we return from a subtree via an edge $uv$ of width $f(uv) = k$, we check if there exists a \strongminsplit $(uv, C_1, C_2, C_3)$.
If there does not exists a such a \strongminsplit, then we conclude that $T$ is already a 2-approximation.
If there exists a such a \strongminsplit, then we refine $T$ using it.
We need to be careful to proceed so that the refinement does not break invariants of depth-first-search, and the extra work caused by refining with an edit set $R$ can be bounded by $\OO(|R|)$.

Let us explain how \Cref{alg:detailed} is implemented with the refinement data structure.
We always maintain that the root edge $uv$ in the refinement data structure corresponds to the edge $uv$ in \Cref{alg:detailed}.
We start by calling Init($T$, $uv$) after \cref{alg:detailed:init}.
In the cases of \autoref{alg:detailed:casenb} and \autoref{alg:detailed:caseback} the edge $uv$ is changed to an adjacent edge $vw$, which is done by the Move($vw$) operation.
The edge $uv$ is changed also after the refinement operation.
There we can move to the appropriate edge with $|R|$ Move($vw$) operations.
Now that the edge $uv$ of \Cref{alg:detailed} corresponds to the edge $uv$ of the refinement data structure, all non-elementary operations of \Cref{alg:detailed} can be performed with the refinement data structure.
In particular, checking $f(uv)$ on \cref{alg:detailed:widthcheck} is done by Width(), \cref{alg:detailed:caseimp} corresponds to CanRefine(), and \cref{alg:detailed:ref} corresponds to EditSet() and Refine().
The returned edit set is also used to determine the node $v$ on \cref{alg:detailed:after}.

The rest of this section is devoted to proving the correctness and the time complexity of \Cref{alg:detailed}.
The next lemma shows that adding the refinement operation does not significantly change the properties of depth-first-search and provides the key argument for proving the correctness.

\begin{lemma}
\label{lem:path}
 \Cref{alg:detailed} maintains the invariant that the nodes with state \open form a path $w_1, \ldots, w_l$ in $T$, where $l \ge 2$, $w_1 = s$, $w_{l-1} = v$, and $w_l = u$.
\end{lemma}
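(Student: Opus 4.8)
## Proof plan for Lemma~\ref{lem:path}

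The plan is to prove the invariant by induction on the iterations of the while loop, tracking exactly how the set of \open nodes and the designated pointers $u,v$ change across each branch of the loop body. First I would establish the base case: after \cref{alg:detailed:init}, the only \open nodes are $v=s$ and its unique neighbor $u$; since $s$ is a leaf these two nodes are adjacent, so they form a path $w_1=s$, $w_2=u$ with $l=2$, $w_{l-1}=w_1=v$ (note $s=v$ here) and $w_l=u$, satisfying the invariant. For the inductive step I would assume the invariant holds at the top of an iteration, with the \open path being $w_1,\dots,w_l$, $w_1=s$, $w_{l-1}=v$, $w_l=u$, and then go through each of the three outer cases of the loop body.

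In the \textbf{forward case} (\cref{alg:detailed:casenb}), we have a neighbor $w$ of $u$ with \state $\unseen$; the code sets $v\gets u$, $u\gets w$, and marks $w$ as \open. Since $w$ was \unseen it was not on the old path, and $w$ is adjacent to the old $u=w_l$, so the new \open path is $w_1,\dots,w_l,w$ of length $l+1\ge 3$, with new $v$ being the old $u=w_l$ (the second-to-last node) and new $u=w$ (the last node); the invariant is preserved. In the \textbf{backtracking case} (\cref{alg:detailed:caseback}, i.e.\ $f(uv)<k$ and $v\ne s$), the code closes $u=w_l$, sets $u\gets v=w_{l-1}$, and sets $v$ to the unique \open neighbor of the new $u$; I would argue that since the old path has $l\ge 2$, removing $w_l$ leaves $w_1,\dots,w_{l-1}$, which still has length $\ge 2$ precisely because $v=w_{l-1}\ne s=w_1$ guarantees $l-1\ge 2$; the new $u=w_{l-1}$, new $v=w_{l-2}$, and $w_{l-2}$ is indeed the unique \open neighbor of $w_{l-1}$ because the \open nodes form a path and $w_l$ has just been closed. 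The case $v=s$ returns $T$ and does not re-enter the loop, so there is nothing to check.

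The \textbf{refinement case} (\cref{alg:detailed:caseimp}--\cref{alg:detailed:ext22}) is the delicate one and I expect it to be the main obstacle. Here $uv$ has width $k$, a \strongminsplit $(uv,C_1,C_2,C_3)$ exists, and $T$ is replaced by its refinement via the edit set $R$. The key facts I would invoke are: (i) by \autoref{lemma_propertiesR}(2), $R$ induces a connected subtree of $T$ with $\{u,v\}\subseteq R$; (ii) by \autoref{lem:refinealg}, the refinement removes exactly the nodes of $R$ and inserts a connected subtree of $|R|$ fresh nodes in their place, preserving all nodes outside $R$; and (iii) I need a claim that among the old \open path $w_1,\dots,w_l$, exactly one node lies in $N_T(R)$ and all \open nodes are partitioned into ``those inside $R$'' (which get deleted) and the rest. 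Since the \open nodes form a path starting at the leaf $s$ and ending at $u\in R$, and $R$ is connected containing $u$, the \open path enters $R$ at some node and stays in $R$ until $u$; hence there is a unique \open node $v^{\ast}\in N_T(R)$ (the last \open node before entering $R$), and this is precisely the node selected on \cref{alg:detailed:after}. All \open nodes strictly before $v^{\ast}$ survive and still form a path ending at $v^{\ast}$; then \cref{alg:detailed:ext21}--\cref{alg:detailed:ext22} append the freshly inserted neighbor $u^{\ast}$ of $v^{\ast}$ and mark it \open. I would check that $v^{\ast}$ has a unique newly-inserted neighbor (because in the refinement each surviving node in $N_T(R)$ is reattached to exactly one new node), and that $v^{\ast}\ne s$ unless $s$ itself was the unique $N_T(R)$ node, in which case the prefix path is just $w_1=s=v^{\ast}$; appending $u^{\ast}$ gives length $2$, so $l\ge 2$ still holds. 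The subtle points to get right are: that no \open node other than those in $R$ and $v^{\ast}$ is affected (uses that \state of nodes outside $R$ is untouched by Refine, per \autoref{lem:refinealg}), that the prefix of the old path up to $v^{\ast}$ is genuinely a path in the new tree (it lies entirely in $V(T)\setminus R$, which is preserved together with all edges among such nodes), and that $w_1=s$ remains a leaf and remains the first \open node (it is never in $R$ since leaves are not in edit sets by \autoref{lemma_propertiesR}(1), unless $l=2$ and $s$ is adjacent to $R$, handled as above). Assembling these observations closes the induction.
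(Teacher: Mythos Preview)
Your proposal is correct and follows essentially the same approach as the paper's proof: induction over iterations, with the forward case extending the path, the backtrack case removing its last node, and the refinement case using that $R$ is a connected subtree containing $u,v$ (so the open path meets $R$ in a suffix $w_j,\dots,w_l$) together with $s\notin R$ (since $s$ is a leaf, via \autoref{lemma_propertiesR}(1)) to identify $v^\ast=w_{j-1}$ and then append one new node. Your write-up is more detailed than the paper's, but the skeleton is identical; the only superfluous part is your ``unless $l=2$'' caveat in the refinement case—since $u,v\in R$ and leaves are never in $R$, we have $v\neq s$, hence $l\ge 3$ there, so that subcase cannot occur.
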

\begin{proof}
This invariant is satisfied at the beginning of the algorithm.
There are three cases in the if-else structure that do not terminate the algorithm and alter $u$, $v$, or the states, i.e., the cases of \autoref{alg:detailed:casenb}, \autoref{alg:detailed:caseback}, and \autoref{alg:detailed:caseimp}.
The case of \autoref{alg:detailed:casenb} maintains the invariant by extending the path by one node.
The case of \autoref{alg:detailed:caseback} maintains the invariant by removing the last node of the path.
In the case of \autoref{alg:detailed:caseimp}, recall that both $u$ and $v$ are in the edit set $R$ and the edit set is a connected subtree of $T$, so the refinement removes some suffix $w_j, \ldots, w_l$ of the path.
Together with the fact that $w_1$ is a leaf and thus $w_1 \notin R$ (see \autoref{lemma_propertiesR}), this implies that the node $v$ determined in \autoref{alg:detailed:after} must be the node $w_{j-1}$ of the path.
Finally, the path is extended by one node in \cref{alg:detailed:ext21,alg:detailed:ext22}.
\end{proof}

The next lemma will be used to prove the correctness of \Cref{alg:detailed} in the case when it returns an improved branch decomposition.

\begin{lemma}
\label{lem:closed}
If  \Cref{alg:detailed} reaches \autoref{alg:detailed:ok}, i.e., terminates by returning a branch decomposition, then all nodes except $v$ and $u$ have state \closed.
\end{lemma}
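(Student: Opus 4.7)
The plan is to prove the lemma by establishing the following invariant throughout the main loop of \Cref{alg:detailed}:

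\textbf{Invariant ($\star$):} For every node $y$ with $\state[y] = \unseen$, there exists a path $y = y_0, y_1, \ldots, y_k$ in the current tree with $k \geq 1$ such that $y_k$ is \open\ and $y_0, \ldots, y_{k-1}$ are all \unseen.

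Given ($\star$), the lemma follows quickly. By \Cref{lem:path}, when \Cref{alg:detailed} reaches \cref{alg:detailed:ok} the \open\ nodes form a path $s = w_1, \ldots, w_l = u$, and since $v = s = w_{l-1}$ forces $l = 2$, only $s$ and $u$ are \open. If some $y$ had $\state[y] = \unseen$, then ($\star$) would provide a certifying path; it cannot end at $s$ (whose unique neighbor $u$ is \open), so it must end at $u$, making $y_{k-1}$ an \unseen\ neighbor of $u$. But \cref{alg:detailed:ok} is only reached after the test of \cref{alg:detailed:casenb} fails, meaning $u$ has no \unseen\ neighbor---a contradiction.

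Initialization and the easy cases are routine. Initially every non-open node is \unseen\ and connects to $u$ through \unseen\ intermediates, so ($\star$) holds. When extending the path to $w$ (\cref{alg:detailed:casenb}), a certifying path that passes through $w$ can be truncated at $w$ (now \open), while one that does not is unaffected. When backtracking (\cref{alg:detailed:caseback}), the guard that $u$ has no \unseen\ neighbor implies no \unseen\ node's certifying path ends at $u$; moreover no such path uses $u$ as an intermediate, since $u$ was \open\ and intermediates must be \unseen. Hence setting $\state[u] \gets \closed$ breaks no certifying path.

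The main obstacle is the refinement case (\cref{alg:detailed:ref}), which reshapes the tree around the edit set $R$. For any $y$ outside $R$ with $\state[y] = \unseen$, let $y_m$ be the last node on its certifying path that is not in $R$; then $y_m$ is \unseen\ and lies in $N_T(R)$, so by \Cref{lem:refinealg} its edge toward $R$ is redirected to some node inside the newly inserted subtree. Since the inserted subtree has $|R|$ nodes, is connected, and all its nodes other than the new $u$ are \unseen, we can extend $y \to y_1 \to \cdots \to y_m$ by \unseen\ new-subtree nodes until reaching the new $u$, yielding a valid certifying path. Newly created \unseen\ nodes (those in the inserted subtree other than the new $u$) are certified directly by a path through the inserted subtree to the new $u$. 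Thus ($\star$) is preserved across refinements as well.
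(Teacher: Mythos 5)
Your proof is correct and follows essentially the same route as the paper's: both establish a depth-first-search invariant by induction over the three loop cases (extend, backtrack, refine) and then combine it at termination with \Cref{lem:path} and the failed test of \cref{alg:detailed:casenb}. The only real difference is that you work with the dual invariant --- every \unseen node reaches an \open node through \unseen nodes --- whereas the paper's invariant is that every \closed node has an all-\closed $s$-subtree; in the refinement case both arguments rest on the same structural facts (the edit set $R$ is a connected subtree containing $u$ and $v$, and it is replaced by a connected subtree attached to $N_T(R)$), though you spell out the rerouting of certifying paths in more detail than the paper does. One small wording fix: in the refinement case, ``the last node on the certifying path that is not in $R$'' should be ``the last node before the path first enters $R$'' (the endpoint of the maximal initial segment outside $R$). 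Since the intermediate nodes of a certifying path are \unseen and $R$ may contain \unseen nodes, the path can in principle pass through $R$ and exit again, in which case the literal ``last node not in $R$'' could be the \open endpoint itself rather than an \unseen neighbor of $R$. With the corrected reading, truncating there and extending through the newly inserted subtree to the new $u$ works exactly as you describe.
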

\begin{proof}
We show that  \Cref{alg:detailed} maintains the invariant that if a node $w$ is closed, then all nodes $w'$ in the $s$-subtree of $w$ are also closed.
This is trivially maintained by the case of \autoref{alg:detailed:casenb}.
In other cases all of the neighbors of $u$ except $v$ are closed due to \autoref{lem:path}.
This implies that the case of \autoref{alg:detailed:caseback} also maintains the invariant.
The case of \autoref{alg:detailed:caseimp}, i.e., refinement, maintains this because by \autoref{alg:detailed:after}, the union of the edit set $R$ and the path $w_1, \ldots, w_l$ defined in \autoref{lem:path} is a connected subtree that contains $u$, $v$, and $s$.

This invariant implies the conclusion of the lemma because at \autoref{alg:detailed:ok} all neighbors of $u$ except $v$ are closed.
\end{proof}



We are ready to prove the correctness and the time complexity of \Cref{alg:detailed}.
In particular, next we complete the proof of the main theorem of this section.

\themainalg* 
\begin{proof}
It suffices to prove that \Cref{alg:detailed} is correct and works in time $t(k) 2^{\OO(k)} n$ provided a refinement data structure with time complexity $t(k)$.

For the correctness. The algorithm terminates with the conclusion  $\bw(T) \le 2\bw(f)$ if and only if there is no \splitW of $W = T[uv]$, where $f(W) = \bw(T)$.
Therefore, by \autoref{the:find}, $\bw(T) = f(W) \le 2 \bw(f)$.
%
%
For the other case, let $w \neq s$ be a node of $T$ and $p$ be  the $s$-parent of $w$.
We note that the state of $w$ can be closed only if $f(wp) < k$.
Therefore, by \autoref{lem:closed}, when  \Cref{alg:detailed} reaches \autoref{alg:detailed:ok}, we have that $f(wp) < k$ for all $w \in V(T) \setminus \{u,v\}$, and by \cref{alg:detailed:widthcheck} we also have $f(uv) < k$.
Therefore we have that $f(e) < k$ for all edges $e$ of $T$, implying that \Cref{alg:detailed} is correct when it returns a branch decomposition.

For the running time.
By the definition of \splitW and \autoref{the:glostro}, the width of $T$ never increases.
By \autoref{lem:amortize}, with every refinement, the potential function drops by at least  $|R|$, the size of the edit set. While we cannot control the size of the edit set for each new refinement, the total sum of the sizes of the edit sets over all the sequence of refinements 
does not exceed  $\phi_k(T)=2^{\OO(k)} n$. 
Thus the total time complexity of the refinement operations is $t(k) 2^{\OO(k)} n$ and the total number of new nodes created over the course of the algorithm in refinement operations is $2^{\OO(k)} n$.
All cases of the algorithm advance the state of some node either from unseen to open or from open to closed, and therefore the total number of operations is $2^{\OO(k)} n$ and their total time $t(k) 2^{\OO(k)} n$.
\end{proof}

\section{Approximating rankwidth}\label{sec_rankwidth_approx}

In this section we prove \Cref{theorem_rw_algo}, that is, for integer $k$ and an $n$-vertex graph $G$, there is an algorithm that in time $2^{2^{\OO(k)}} n^2$ either computes a rank decomposition of width at most $2k$, or correctly concludes that the rankwidth of $G$ is more than $k$. To prove that theorem, we define ``augmented rank decompositions'', show how to implement the refinement data structure of \autoref{the:main_alg} for augmented rank decompositions with time complexity $t(k) = 2^{2^{\OO(k)}}$, and then apply the algorithm of \autoref{the:main_alg} with iterative compression.


\subsection{Preliminaries on rank decompositions\label{sec:rankdecomp}}
We start with preliminaries on rank decompositions and  representatives.

Most of the material is in this subsection is well-known and is commonly used  in dynamic programming over rank decompositions
\cite{DBLP:journals/dam/Bui-XuanTV10,GanianH10,GanianHO13}. 
For reader's convenience, we provide short proofs here. 

Let us remind that a rank decomposition of a graph $G$ is a branch decomposition  of the function $\cutrank_G : 2^{V(G)} \rightarrow \mathbb{Z}_{\ge 0}$ defined on vertex sets of a graph $G$.    
 For a graph
 $G$ and a subset $A$ of the vertex set $V(G)$, we define $\cutrank_G(A)$, 
as the rank of
the $|A| \times |\overline{A}|$  $0-1$ matrix $M_G[A, \overline{A}]$ over the binary field where the entry $m_{i,j}$ of $M_G[A, \overline{A}]$ on
 the $i$-th row and the $j$-th column is $1$ if and only if the $i$-th vertex in $A$ is adjacent to
 the $j$-th vertex in $\overline{A}$. That is,  $\cutrank_G(A)=\rank(M_G[A, \overline{A}])$. Then the rankwidth of a graph $G$, $\rw(G)$, is the minimum width of a rank decomposition.

%
%

\begin{proposition}[\cite{OumS06}]
The function $\cutrank_G$ is a connectivity function.
\end{proposition}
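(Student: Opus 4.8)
The goal is to verify the three axioms of a connectivity function for $\cutrank_G$: that $\cutrank_G(\emptyset) = 0$, that $\cutrank_G$ is symmetric, and that $\cutrank_G$ is submodular. The first two are essentially immediate from the definition, so the real content is submodularity.

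First I would dispose of the easy parts. For $\cutrank_G(\emptyset)$, the matrix $M_G[\emptyset, V(G)]$ has zero rows, so its rank is $0$. For symmetry, $M_G[\overline{A}, A]$ is the transpose of $M_G[A, \overline{A}]$, and a matrix and its transpose have the same rank over any field; hence $\cutrank_G(A) = \cutrank_G(\overline{A})$.

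For submodularity, I would argue that for all $A, B \subseteq V(G)$,
\[
\cutrank_G(A \cup B) + \cutrank_G(A \cap B) \le \cutrank_G(A) + \cutrank_G(B).
\]
The standard approach is linear-algebraic. For a set $X$, think of $\cutrank_G(X)$ as the dimension of the row space of $M_G[X, \overline{X}]$, equivalently the dimension of the space of vectors in $\mathrm{GF}(2)^{V(G)}$ obtained as $\mathrm{GF}(2)$-linear combinations of neighborhood-indicator vectors of vertices of $X$, then restricted (projected) to the coordinates in $\overline{X}$. Write $\rho(X, Y)$ for the $\mathrm{GF}(2)$-rank of the adjacency matrix between disjoint sets $X$ and $Y$; so $\cutrank_G(X) = \rho(X, \overline{X})$. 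The key auxiliary fact is a ``bimodularity''-type inequality: enlarging the column side can only increase rank ($\rho(X, Y) \le \rho(X, Y')$ when $Y \subseteq Y'$, since adding columns cannot decrease rank), and one needs the sharper statement controlling how rank behaves when the row side is split. Concretely, I would use that for disjoint $X_1, X_2$ and a set $Y$ disjoint from both,
\[
\rho(X_1 \cup X_2, Y) \le \rho(X_1, Y) + \rho(X_2, Y),
\]
since the row space of the combined matrix is spanned by the two row spaces. Applying this with the right choices of $X_1, X_2, Y$ and combining with monotonicity in the column argument gives the submodularity inequality. The cleanest route: let $A \cap B = C$, and decompose; one shows $\cutrank_G(A \cup B) \le \rho(A \setminus B, \overline{A \cup B}) + \rho(B, \overline{A \cup B}) \le \rho(A \setminus B, \overline{A}) + \rho(B, \overline{B})$ and similarly $\cutrank_G(A \cap B) \le \rho(A \cap B, \overline{A \cup B}) + \cdots$, then add and regroup so that the $A \setminus B$ and $A \cap B$ contributions recombine into $\rho(A, \overline{A}) = \cutrank_G(A)$. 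I expect the bookkeeping of which submatrix bounds which to be the main obstacle — it is easy to write down an inequality that is true but not tight enough, so care is needed to partition the rows of $M_G[A \cup B, \overline{A \cup B}]$ and $M_G[A \cap B, \overline{A \cap B}]$ consistently.

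Since the paper cites \cite{OumS06} for this proposition and says it provides short proofs for reader's convenience, I would most likely just present the compact linear-algebra argument above, perhaps phrased via the observation that $\cutrank_G$ is the connectivity function of the binary matroid / the ``cut-rank'' function, whose submodularity is a known instance of the fact that $X \mapsto \dim(\text{row space restricted to complement})$ is submodular. A fully self-contained version would spell out the two displayed rank inequalities and the regrouping; I would keep it to a few lines.
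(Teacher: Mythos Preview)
The paper does not actually prove this proposition; it merely cites \cite{OumS06} and moves on. The sentence about providing short proofs ``for reader's convenience'' applies to the later lemmas in that subsection, not to this statement. So there is nothing to compare your attempt against.

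That said, your sketch has a genuine gap in the submodularity argument. Row-subadditivity ($\rho(X_1 \cup X_2, Y) \le \rho(X_1,Y) + \rho(X_2,Y)$) together with column-monotonicity is \emph{not} enough to get the ``regroup'' you describe. Concretely, your plan requires
\[
\rho(A \cap B,\ \overline{A \cap B}) + \rho(A \setminus B,\ \overline{A}) \le \rho(A,\ \overline{A}),
\]
but this fails already on $K_3$ with $A=\{a,b\}$, $B=\{a,c\}$: the left side is $1+1=2$ while the right side is $1$. The difficulty is that $\overline{A\cap B}$ contains columns (those in $A\setminus B$) that $\overline{A}$ does not, so enlarging the row side from $A\cap B$ to $A$ does not compensate for shrinking the column side.

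The standard proof in \cite{OumS06} instead uses a \emph{two-dimensional} submodularity of matrix rank: for row sets $R_1,R_2$ and column sets $C_1,C_2$,
\[
\rank M[R_1,C_1] + \rank M[R_2,C_2] \ge \rank M[R_1\cap R_2,\ C_1\cup C_2] + \rank M[R_1\cup R_2,\ C_1\cap C_2],
\]
proved by a short dimension-counting argument. Setting $R_1=A$, $R_2=B$, $C_1=\overline{A}$, $C_2=\overline{B}$ immediately gives the desired inequality. Your awareness that ``it is easy to write down an inequality that is true but not tight enough'' was well-founded; the fix is to prove this one sharper lemma rather than trying to combine the weaker ones.
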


 In the rest of this section   we will always assume that we are computing the rankwidth of a graph $G$, and therefore we drop the subscript in $\cutrank_G$.
We also assume that a representation of $G$ that allows for checking the existence of an edge in $\OO(1)$ time (e.g. an adjacency matrix) is available.



\begin{definition}[Representative]
Let $A \subseteq V(G)$.
A set $R \subseteq A$ is a representative of $A$ if for every vertex $v \in A$ there is a vertex $u \in R$ with $N(v) \setminus A = N(u) \setminus A$.
The representative $R$ is minimal if for each such $v \in A$ there exists exactly one such $u \in R$.
\end{definition}

Note that there always exists a minimal representative.
The property of $\cutrank$ that we exploit in dynamic programming is that it bounds the size of any minimal representative.

\begin{proposition}[\cite{OumS06}]
\label{lem:cutrank_rep}
Let $A \subseteq V(G)$.
If $\cutrank(A) \le k$, then any minimal representative of $A$ has size at most $2^k$.
\end{proposition}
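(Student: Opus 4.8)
The plan is to translate the combinatorial statement about representatives into a statement about rows of the matrix $M_G[A,\overline{A}]$ over $\mathrm{GF}(2)$, and then read off the bound from elementary linear algebra. First I would observe that, by definition, two vertices $u,v\in A$ satisfy $N(v)\setminus A = N(u)\setminus A$ if and only if the rows of $M_G[A,\overline{A}]$ indexed by $u$ and $v$ are equal as vectors in $\mathrm{GF}(2)^{\overline{A}}$. Thus the relation ``$u$ represents $v$'' in the definition of a representative is exactly the equivalence relation ``same row''. A minimal representative $R$ of $A$ is therefore a set containing exactly one vertex from each equivalence class, i.e.\ $|R|$ equals the number of \emph{distinct} rows of $M_G[A,\overline{A}]$.

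Next I would bound the number of distinct rows. Let $r = \cutrank(A) = \rank(M_G[A,\overline{A}]) \le k$. The row space of $M_G[A,\overline{A}]$ is a subspace of $\mathrm{GF}(2)^{\overline{A}}$ of dimension $r$, hence has exactly $2^r$ elements. Every row of the matrix is a vector lying in this row space, so there are at most $2^r \le 2^k$ possible values for a row, and in particular at most $2^k$ distinct rows. Combining this with the previous paragraph gives $|R| \le 2^k$, which is the claim.

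I do not expect any real obstacle here: the statement is a direct consequence of the definition of $\cutrank$ as a $\mathrm{GF}(2)$-rank together with the fact that a dimension-$r$ subspace of a $\mathrm{GF}(2)$-vector space has $2^r$ elements. The only point requiring a line of care is the identification of the equivalence ``$N(v)\setminus A = N(u)\setminus A$'' with equality of matrix rows, which is immediate from the way $M_G[A,\overline{A}]$ encodes adjacencies between $A$ and $\overline{A}$; once that is spelled out, counting distinct rows by the size of the row space finishes the proof. (The existence of a minimal representative, already noted in the text, guarantees the bound is about a genuinely attainable object, but is not needed for the inequality itself.)
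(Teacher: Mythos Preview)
Your argument is correct and is exactly the standard proof: identify the equivalence classes with distinct rows of $M_G[A,\overline{A}]$, then bound the number of distinct rows by the size $2^r$ of the row space over $\mathrm{GF}(2)$. The paper itself does not supply a proof of this proposition but merely cites it from Oum and Seymour; your write-up is precisely the elementary linear-algebra argument underlying that citation.
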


To do computations on minimal representatives, we usually need to work with representatives of cuts $(A, \overline{A})$.

\begin{definition}[Representative of $(A, \overline{A})$]
Let $A \subseteq V(G)$.
A set $(R, Q)$ with $R \subseteq A$ and $Q \subseteq \overline{A}$ is a (minimal) representative of $(A, \overline{A})$ if $R$ is a (minimal) representative of $A$ and $Q$ is a (minimal) representative of $\overline{A}$.
\end{definition}

Given some representative of $(A, \overline{A})$, a minimal representative can be found in polynomial time by the following lemma.

\begin{lemma}
\label{lem:minrep_computation}
Let $A \subseteq V(G)$.
Given a representative $(R, Q)$ of $(A, \overline{A})$, a minimal representative of $(A, \overline{A})$ can be computed in $(|R|+|Q|)^{\OO(1)}$ time.
\end{lemma}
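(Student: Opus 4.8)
The plan is to process the two sides of the cut independently and symmetrically; since the statement for $\overline{A}$ is the same as for $A$ with the roles swapped, it suffices to explain how to extract a minimal representative $R' \subseteq R$ of $A$ from a given representative $R$ of $A$, in time polynomial in $|R|+|Q|$. Recall that, by definition, $R$ being a representative of $A$ means that every vertex $v \in A$ has some $u \in R$ with $N(v)\setminus A = N(u)\setminus A$; and $R$ being \emph{minimal} means this $u$ is unique, i.e.\ no two distinct vertices of $R'$ have the same neighborhood outside $A$. So the task is simply to pick, from $R$, one vertex per equivalence class of the relation ``same neighborhood into $\overline{A}$''.

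The key point that makes this efficient is that we do \emph{not} need to examine the full neighborhoods $N(u)\setminus A$ (which could be large): it is enough to compare each vertex $u \in R$ against the vertices of the \emph{given} representative $Q$ of $\overline{A}$. Concretely, first I would compute, for every $u \in R$, its ``signature'' — the 0-1 vector indexed by $Q$ whose entry for $q \in Q$ records whether $uq \in E(G)$. This takes $\OO(|R|\cdot|Q|)$ edge queries, each in $\OO(1)$ time. The claim is that two vertices $u_1,u_2 \in R \subseteq A$ satisfy $N(u_1)\setminus A = N(u_2)\setminus A$ if and only if they have the same signature. The ``only if'' direction is immediate since $Q \subseteq \overline{A}$. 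For the ``if'' direction: suppose $u_1,u_2$ have the same signature but, say, there is $w \in (N(u_1)\setminus A)\setminus N(u_2)$. Since $Q$ is a representative of $\overline{A}$, there is $q \in Q$ with $N(w)\setminus \overline{A} = N(q)\setminus\overline{A}$; as $u_1,u_2 \in A = \overline{\overline{A}}$, this gives $u_1 w \in E(G) \iff u_1 q \in E(G)$ and $u_2 w \in E(G)\iff u_2 q \in E(G)$, whence $u_1 q \in E(G)$ but $u_2 q \notin E(G)$, contradicting equality of signatures.

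Having established this, I would sort or hash the vertices of $R$ by their signatures (each signature is a length-$|Q|$ bitstring, so sorting costs $\OO(|R|\cdot|Q|\log|R|)$ or one can bucket them in $\OO(|R|\cdot|Q|)$ time) and keep exactly one representative $u$ from each bucket; call the result $R'$. Then $R'$ is still a representative of $A$ — every $v\in A$ had some $u\in R$ with $N(v)\setminus A = N(u)\setminus A$, and the $u$ we kept for that bucket works equally well — and by construction no two vertices of $R'$ share a signature, hence by the equivalence no two share a neighborhood into $\overline{A}$, so $R'$ is minimal. Running the symmetric procedure on $\overline{A}$ using $R'$ (or the original $R$) to produce $Q' \subseteq Q$ yields a minimal representative $(R',Q')$ of $(A,\overline{A})$, and the total running time is $(|R|+|Q|)^{\OO(1)}$ as claimed. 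The only mildly delicate point — and the one I would write out carefully — is the ``if'' direction of the signature equivalence above, i.e.\ that comparison against the given representative $Q$ faithfully detects equality of the full outside-neighborhoods; everything else is bookkeeping.
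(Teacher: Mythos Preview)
Your proof is correct and takes essentially the same approach as the paper: both reduce to computing twin classes in the bipartite graph $G[R,Q]$. The paper's one-line proof simply invokes ``the partition refinement algorithm on the graph $G[R,Q]$'' without further justification; your write-up is more detailed, in particular spelling out the key point the paper leaves implicit, namely that comparing neighborhoods restricted to $Q$ faithfully detects equality of the full outside-neighborhoods precisely because $Q$ is a representative of $\overline{A}$.
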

\begin{proof}
One can use the partition refinement algorithm on the graph $G[R,Q]$.
\end{proof}

We will also need the following lemma to work with representatives.

\begin{lemma}
\label{lem:unirep}
Let $R_A$ be a representative of $A$ and $R_B$ a representative of $B$.
Then $R_A \cup R_B$ is a representative of $A \cup B$.
\end{lemma}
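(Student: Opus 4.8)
Let $R_A$ be a representative of $A$ and $R_B$ a representative of $B$. Then $R_A \cup R_B$ is a representative of $A \cup B$.

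The plan is to unwind the definition of representative and handle the two cases according to which of $A$ or $B$ a given vertex of $A\cup B$ lies in. First I would fix an arbitrary vertex $v \in A \cup B$; the goal is to find some $u \in R_A \cup R_B$ with $N(v) \setminus (A\cup B) = N(u) \setminus (A\cup B)$. Without loss of generality assume $v \in A$. Since $R_A$ is a representative of $A$, there is a vertex $u \in R_A \subseteq A$ with $N(v) \setminus A = N(u) \setminus A$. The key observation is that from $N(v)\setminus A = N(u)\setminus A$ we may intersect both sides with $\overline{A\cup B} = \overline{A} \cap \overline{B}$, which is a subset of $\overline{A}$; since intersecting with a subset of $\overline{A}$ commutes with the equality $N(v)\cap\overline{A} = N(u)\cap\overline{A}$, we get $N(v) \cap \overline{A}\cap\overline{B} = N(u)\cap\overline{A}\cap\overline{B}$, i.e. $N(v)\setminus(A\cup B) = N(u)\setminus(A\cup B)$. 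Since $u \in R_A \subseteq R_A \cup R_B$, this is exactly what we need.

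I would then remark that the case $v \in B$ is symmetric, using that $R_B$ is a representative of $B$ and that $\overline{A\cup B} \subseteq \overline{B}$. Since every $v \in A\cup B$ lies in $A$ or in $B$, this covers all cases, and $R_A\cup R_B \subseteq A\cup B$ holds trivially because $R_A\subseteq A$ and $R_B\subseteq B$, so $R_A\cup R_B$ is a representative of $A\cup B$.

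There is essentially no obstacle here: the only point that requires a moment's care is the step where one restricts the equality of neighborhoods from the complement of a single set to the complement of a union, and this is purely set-theoretic, using $\overline{A\cup B} = \overline{A}\cap\overline{B}$ together with the fact that if $X\cap Z = Y\cap Z$ then $X\cap Z' = Y\cap Z'$ for any $Z'\subseteq Z$. I expect the write-up to be only a few lines.
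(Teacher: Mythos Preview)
Your proof is correct and follows the same approach as the paper: the paper's proof consists of the single line ``If $N(v) \setminus A = N(u) \setminus A$, then $N(v) \setminus (A \cup B) = N(u) \setminus (A \cup B)$,'' which is exactly the key set-theoretic observation you spell out in more detail.
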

\begin{proof}
If $N(v) \setminus A = N(u) \setminus A$, then $N(v) \setminus (A \cup B) = N(u) \setminus (A \cup B)$.
\end{proof}

Next define the $A_R$-representative of a vertex.

\begin{definition}[$A_R$-representative of a vertex]
Let $A \subseteq V(G)$, $R$ a minimal representative of $A$, and $v$ a vertex $v \in A$.
The \emph{$A_R$-representative of $v$}, denoted by $\rep_{A_R}(v)$,  is the vertex $u \in R$ with $N(u) \setminus A = N(v) \setminus A$.
\end{definition}

By the definition of a minimal representative, there exists exactly one $A_R$-representative of a vertex, so the function $\rep_{A_R}(v)$ is well-defined.
Using a minimal representative of $(A, \overline{A})$ we can compute $\rep_{A_R}(v)$ efficiently.

\begin{lemma}
\label{lem:vrep}
Let $A \subseteq V(G)$.
Given a vertex $v \in A$ and a minimal representative $(R, Q)$ of $(A, \overline{A})$, $\rep_{A_R}(v)$ can be computed in $(|R|+|Q|)^{\OO(1)}$ time.
\end{lemma}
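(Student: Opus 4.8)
The statement to prove is Lemma~\ref{lem:vrep}: given $v \in A$ and a minimal representative $(R,Q)$ of $(A,\overline{A})$, the vertex $\rep_{A_R}(v) \in R$ with $N(u)\setminus A = N(v)\setminus A$ can be computed in $(|R|+|Q|)^{\OO(1)}$ time.

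The plan is as follows. By definition $\rep_{A_R}(v)$ is the unique $u \in R$ with $N(u)\setminus A = N(v)\setminus A$. The obvious approach — directly comparing $N(v)\setminus A$ with $N(u)\setminus A$ for each $u \in R$ — is problematic because these neighborhoods can have size up to $|\overline{A}|$, which is not bounded in terms of $|R|+|Q|$. The key observation is that $Q$ is a representative of $\overline{A}$, so every vertex in $\overline{A}$ has the same neighborhood towards $A$ as some vertex in $Q$; hence two vertices $x, y \in A$ satisfy $N(x)\setminus A = N(y)\setminus A$ if and only if $N(x)\cap Q = N(y)\cap Q$. More precisely, if $N(x)\setminus A = N(y)\setminus A$ then clearly $N(x)\cap Q = N(y)\cap Q$; conversely, if $N(x)\cap Q = N(y)\cap Q$, then for any $w \in \overline{A}$ pick $q \in Q$ with $N(q)\setminus \overline{A} = N(w)\setminus \overline{A}$, i.e. $q$ and $w$ have the same neighbors in $A$, so $w \in N(x)$ iff $x \in N(w)$ iff $x \in N(q)$ iff $q \in N(x)\cap Q = N(y)\cap Q$ iff $y \in N(w)$ iff $w \in N(y)$; thus $N(x)\setminus A = N(y)\setminus A$.

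With this reduction in hand, the algorithm is: for the given $v$, compute $S_v = N(v)\cap Q$ by querying adjacency of $v$ to each vertex of $Q$ (at most $|Q|$ edge-existence queries, each $\OO(1)$ time by the assumed adjacency representation). Then for each $u \in R$, compute $S_u = N(u)\cap Q$ similarly in $\OO(|Q|)$ time, and test whether $S_u = S_v$. By the equivalence above and the fact that $R$ is a representative of $A$ containing, for each $v\in A$, exactly one vertex with $N(u)\setminus A = N(v)\setminus A$ (minimality of $R$), there is exactly one such $u$, and it is $\rep_{A_R}(v)$; return it. The total running time is $\OO(|R|\cdot|Q|)$ edge queries plus comparisons of sets of size at most $|Q|$, which is $(|R|+|Q|)^{\OO(1)}$.

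I expect the main (and only) obstacle to be justifying the equivalence $N(x)\setminus A = N(y)\setminus A \iff N(x)\cap Q = N(y)\cap Q$ for $x,y \in A$, which hinges crucially on $Q$ being a representative of $\overline{A}$; once that is established the rest is bookkeeping. Everything else — the existence and uniqueness of the answer in $R$, and the time bound — follows directly from the definition of a minimal representative and the $\OO(1)$ adjacency-query assumption.
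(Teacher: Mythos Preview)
Your proposal is correct and is essentially the same approach as the paper's one-line proof, which simply says to test for each $u \in R$ whether $N(u) \cap Q = N(v) \cap Q$. You have additionally spelled out the justification for why restricting to $Q$ suffices, which the paper leaves implicit.
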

\begin{proof}
Test for each $u \in R$ if $N(u) \cap Q = N(v) \cap Q$.
\end{proof}

We also define the $A_R$-representative of a set.

\begin{definition}[$A_R$-representative of a set]
Let $A \subseteq V(G)$, $R$ a minimal $A$-representative, and $X \subseteq A$.
The \emph{$A_R$-representative of $X$}, denoted by $\rep_{A_R}(X)$ is $\rep_{A_R}(X) = \bigcup_{v \in X} \rep_{A_R}(v)$.
\end{definition}

Because $\rep_{A_R}(v)$ is well-defined, $\rep_{A_R}(X)$ is also well-defined.
By repeatedly applying \autoref{lem:vrep}, the $A_R$-representative of a set $X$ can be computed in $|X|(|R|+|Q|)^{\OO(1)}$ time.
As any $A_R$-representative of a set is a subset of $R$, there are at most $2^{|R|}$ different $A_R$-representatives of sets.
Therefore if $\cutrank(A) \le k$, there are at most $2^{2^k}$ different $A_R$-representatives of sets.

Many computations on $A_R$-representatives of sets rely on the following observation.

\begin{observation}
\label{lem:setrepuni}
Let $A \subseteq V(G)$, $X \subseteq A$, $Y \subseteq A$, and let $R_A$ be a minimal representative of $A$, $R_X$ be a minimal representative of $X$, and $R_Y$ be a minimal representative of $Y$.
Let also $X' \subseteq X$ and $Y' \subseteq Y$. Then
it holds that $\rep_{A_R}(X' \cup Y') = \rep_{A_R}(\rep_{X_{R_X}}(X') \cup \rep_{Y_{R_Y}}(Y'))$.
\end{observation}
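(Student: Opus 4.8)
The statement is essentially a "functoriality" fact about the representative operation: computing $\rep_{A_R}$ of a union $X' \cup Y'$ can be factored through intermediate minimal representatives of $X$ and $Y$. The plan is to reduce the set-level claim to the vertex-level claim, which is then immediate from the definitions. First I would unfold the definition $\rep_{A_R}(X' \cup Y') = \bigcup_{v \in X' \cup Y'} \rep_{A_R}(v)$, split the union as $\bigcup_{v \in X'} \rep_{A_R}(v) \cup \bigcup_{v \in Y'} \rep_{A_R}(v)$, and similarly unfold the right-hand side using the definition of $\rep_{A_R}$ of a set applied to $\rep_{X_{R_X}}(X') \cup \rep_{Y_{R_Y}}(Y')$. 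After these unfoldings, it suffices to prove the following two symmetric claims: for every $v \in X'$, $\rep_{A_R}(v) = \rep_{A_R}(\rep_{X_{R_X}}(v))$, and for every $v \in Y'$, $\rep_{A_R}(v) = \rep_{A_R}(\rep_{Y_{R_Y}}(v))$ (where I slightly abuse notation, identifying the singleton output of $\rep$ on a vertex with its element).

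The key step is the vertex-level identity $\rep_{A_R}(v) = \rep_{A_R}(\rep_{X_{R_X}}(v))$ for $v \in X \subseteq A$. Write $u = \rep_{X_{R_X}}(v)$; by definition $u \in R_X \subseteq X$ and $N(u) \setminus X = N(v) \setminus X$. Since $X \subseteq A$, restricting both sides to $\overline{A}$ gives $N(u) \setminus A = (N(u) \setminus X) \setminus A = (N(v) \setminus X) \setminus A = N(v) \setminus A$ (using $\overline{A} \subseteq \overline{X}$). Hence $u$ and $v$ have the same neighborhood outside $A$, so the unique element of $R$ with that neighborhood outside $A$ is the same whether we start from $v$ or from $u$; that is, $\rep_{A_R}(v) = \rep_{A_R}(u)$. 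This uses only the defining property of a minimal representative (existence and uniqueness of $\rep_{A_R}(w)$ for any $w \in A$, which applies to $u$ since $u \in X \subseteq A$).

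Combining: $\rep_{A_R}(X') = \bigcup_{v \in X'} \rep_{A_R}(v) = \bigcup_{v \in X'} \rep_{A_R}(\rep_{X_{R_X}}(v)) = \rep_{A_R}(\rep_{X_{R_X}}(X'))$, where the last equality is again just the definition of $\rep_{A_R}$ of a set. The analogous chain holds for $Y'$. Taking the union of the two and using that $\rep_{A_R}$ of a set distributes over unions of sets (immediate from its definition as a union over elements) yields $\rep_{A_R}(X' \cup Y') = \rep_{A_R}(\rep_{X_{R_X}}(X')) \cup \rep_{A_R}(\rep_{Y_{R_Y}}(Y')) = \rep_{A_R}(\rep_{X_{R_X}}(X') \cup \rep_{Y_{R_Y}}(Y'))$, as desired. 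I do not anticipate a genuine obstacle here; the only thing to be careful about is bookkeeping with the nesting $X' \subseteq X \subseteq A$ so that the set-difference-against-$\overline A$ argument is applied with the correct inclusions, and keeping straight that $\rep$ on a vertex returns (what is treated as) a singleton while $\rep$ on a set returns a union.
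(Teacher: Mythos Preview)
Your proof is correct and is exactly the natural verification the paper has in mind; the paper states this as an observation without proof, and your reduction to the vertex-level identity $\rep_{A_R}(v) = \rep_{A_R}(\rep_{X_{R_X}}(v))$ via $\overline{A} \subseteq \overline{X}$ is the intended one-line check.
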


\subsection{Augmented rank decompositions\label{subsec:augrankdecomp}}
In order to do dynamic programming efficiently on a rank decomposition, we define the notion of an \emph{augmented rank decomposition}.

\begin{definition}[Augmented rank decomposition]
An \emph{augmented rank decomposition} is a pair $(T, \reps)$, where $T$ is a rank decomposition, and for each edge $uv\in E(T)$,
a minimal representative $(\reps[uv], \reps[vu])$ of $(T[uv], T[vu])$ 
 corresponding to a cut $(T[uv], T[vu])$  is stored.
\end{definition}

We will also use a notation $\reps_r[w]$ to denote the minimal representative of $T_r[w]$ stored in $\reps$.
Note that by \autoref{lem:cutrank_rep} an augmented rank decomposition can be represented in $\OO(2^k n)$ space, where $k$ is the width.

Next we show that we can maintain an augmented rank decomposition in our iterative compression.
\begin{lemma}
\label{lem:iterative_add}
Let $v \in V(G)$.
Given an augmented rank decomposition $(T, \reps)$ of $G \setminus \{v\}$ of width $k$, an augmented rank decomposition of $G$ of width at most $k+1$ can be computed in $2^{\OO(k)} n$ time.
\end{lemma}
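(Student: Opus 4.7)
The plan is to insert $v$ into $T$ by subdividing an edge, and then to rebuild the augmentation by a single bottom-up pass that applies \autoref{lem:minrep_computation} once per node. First, pick any leaf $u$ of $T$ with unique neighbor $p$, subdivide the edge $up$ with a new internal node $w$, and attach $v$ as a new leaf adjacent to $w$; let $T'$ be the resulting cubic tree. Every edge of $T'$ that comes from $T$ corresponds to a cut of $V(G)$ that differs from the old cut of $V(G)\setminus\{v\}$ only by which side contains $v$, and moving a single vertex between sides changes the GF(2) rank by at most one, so its width in $G$ is at most $k+1$. The three new edges $uw$, $wv$, $wp$ correspond to cuts where one side has size at most $2$, so their widths are trivially bounded by $2\le k+1$. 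Hence $T'$ is a rank decomposition of $G$ of width at most $k+1$, and the remaining task is to compute, for each edge of $T'$, a minimal representative of its bipartition in $G$ within the claimed time budget.

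To compute the new augmentation, I root $T'$ at the edge $wv$ and traverse the non-root nodes in post-order, maintaining for each visited node $x$ a minimal representative $R[x]$ of $T'_{wv}[x]$ in $G$. For every $x\notin\{u,p,w\}$ the subtree $T'_{wv}[x]$ coincides with $T_{up}[x]$ in the old tree, while the three special nodes are handled directly using $T'_{wv}[u]=\{L(u)\}$, $T'_{wv}[p]=V(G)\setminus\{L(u),v\}$, and $T'_{wv}[w]=V(G)\setminus\{v\}$. At each node $x$ I take the candidate representative of $T'_{wv}[x]$ in $G$ to be $R'[x]=\{L(x)\}$ if $x$ is a leaf and $R'[x]=R[x_1]\cup R[x_2]$ at an internal node with children $x_1,x_2$, which is a valid representative by \autoref{lem:unirep}. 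As a candidate representative of the complement $V(G)\setminus T'_{wv}[x]$ in $G$ I take $Q'[x]=\reps[yx]\cup\{v\}$, where $yx$ is the edge of $T$ carrying the old complement-side representative (with obvious substitutions in the three special cases, e.g.\ $Q'[w]=\{v\}$). The set $Q'[x]$ is a valid representative in $G$ because $v$ represents itself, and for every other vertex $b$ in the complement we have $N_G(b)\cap T'_{wv}[x]=N_{G\setminus\{v\}}(b)\cap T_{up}[x]$ since $v\notin T'_{wv}[x]$, so the old representative continues to distinguish the non-$v$ vertices correctly. Applying \autoref{lem:minrep_computation} to $(R'[x],Q'[x])$ then yields minimal representatives on both sides of the cut, which I store as the new augmentation on the edge $xy$.

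For the time analysis, the width bound $k+1$ combined with \autoref{lem:cutrank_rep} gives $|R[x]|\le 2^{k+1}$ after minimization, hence the candidates satisfy $|R'[x]|\le 2^{k+2}$ and $|Q'[x]|\le 2^k+1$; all sizes are $2^{\OO(k)}$, so each call to \autoref{lem:minrep_computation} runs in $2^{\OO(k)}$ time. Set unions and lookups of $\reps[yx]$ in the input augmentation fit within the same budget, giving $2^{\OO(k)}$ time per node, and with $\OO(n)$ nodes in $T'$ the total time is $2^{\OO(k)}n$. The main subtlety I anticipate is verifying that the candidate $Q'[x]$ is a valid representative in the new graph $G$ rather than only in $G\setminus\{v\}$; once this is nailed down, the remainder is routine dynamic programming together with a small case analysis at the three special nodes $u$, $p$, $w$.
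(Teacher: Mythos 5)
Your proposal is correct and follows essentially the same route as the paper: subdivide an edge to attach $v$, observe the width grows by at most one, reuse the old representatives augmented with $\{v\}$ for the $v$-containing sides, and run a bottom-up pass using \autoref{lem:unirep} and \autoref{lem:minrep_computation} to obtain minimal representatives on the other sides. The only differences (choosing the subdivided edge to be incident to a leaf, and the explicit verification that $\reps[yx]\cup\{v\}$ remains a representative in $G$) are cosmetic or make explicit what the paper leaves implicit.
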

\begin{proof}
We obtain a rank decomposition $T'$ of $G$ by subdividing an arbitrary edge of $T$ and inserting $v$ as a leaf connected to the node created by subdividing.
The width of $T'$ is at most $k+1$ because adding one vertex to $A$ increases $\cutrank(A)$ by at most one.

For the new edge incident with $v$ a minimal representative is easy to compute in $\OO(n)$ time: $\{v\}$ is a minimal representative of $\{v\}$ and a minimal representative of $V(G) \setminus \{v\}$ has one vertex from $N(v)$ and one from $V(G) \setminus (N(v) \cup \{v\})$.

All other edges of $T'$ correspond to edges $uv \in E(T)$ in the sense that they correspond to some cut $(A', \overline{A'}) = (T[uv] \cup \{v\}, T[vu])$.
We start by setting for each such $A'$ the representative as $\reps[uv] \cup \{v\}$.
This is not necessarily a minimal representative of $A'$, but we turn it into minimal later.
Now we have representatives of size at most $2^k+1$ for the sides of cuts containing $v$.

For the sides of cuts not containing $v$, we compute minimal representatives by dynamic programming.
We root the decomposition at $v$ and proceed from leafs to roots.
For leafs, the minimal representatives have exactly one vertex, the leaf.
For non-leafs, the situation is that we want to compute a minimal representative of a set $B$ with $v \notin B$ such that $B = B_1 \cup B_2$, where we have already computed minimal representatives $R_1$ and $R_2$ of $B_1$ and $B_2$, and a representative $R_A$ of $\overline{B}$ of size $|R_A| \le 2^k+1$.
By \autoref{lem:unirep}, $R_1 \cup R_2$ is a representative of $B$, so $(R_1 \cup R_2, R_A)$ is a representative of $(B, \overline{B})$ of size $2^{\OO(k)}$, so we use \autoref{lem:minrep_computation} to compute a minimal representative of $(B, \overline{B})$ in time $2^{\OO(k)}$.
\end{proof}

\subsection{The algorithm for rankwidth\label{sec:algrank}}

This subsection is devoted to proving the following lemma.

\begin{lemma}
\label{the:rankwidth_refinement_ds}
There is a refinement data structure for rank decompositions with time complexity $t(k) = 2^{2^{\OO(k)}}$, where the init operation requires an augmented rank decomposition and the output operation outputs an augmented rank decomposition.
\end{lemma}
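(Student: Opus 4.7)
The plan is to implement the refinement data structure by dynamic programming on the augmented rank decomposition, amortizing over refinements via the bound on edit sets. For each node $w$ of the currently rooted decomposition, I maintain a table $D_w$ indexed by triples $(\rho_1,\rho_2,\rho_3)$ of subsets of the stored minimal representative $\reps_r[w]$ of $T_r[w]$. The semantics is that $\rho_i$ ranges over the possible $T_r[w]_{\reps_r[w]}$-representatives of $C_i\cap T_r[w]$ for partial tripartitions $(C_1,C_2,C_3)$ extending to a putative \splitW. By \autoref{lem:cutrank_rep} each $\rho_i\subseteq \reps_r[w]$ has at most $2^k$ elements, so the number of distinct $\rho_i$ is at most $2^{2^k}$ and the number of triples is at most $2^{3\cdot 2^k}=2^{2^{\OO(k)}}$; this is the source of the claimed $t(k)=2^{2^{\OO(k)}}$. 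The value stored at $D_w[\rho_1,\rho_2,\rho_3]$ is the minimum possible number of $r$-intersected nodes inside the $r$-subtree at $w$, together with a back-pointer to the children's triples realizing that minimum.

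For Init I compute $D_w$ bottom-up. Leaves are trivial (three options for which part the leaf lies in). At an internal node $w$ with $r$-children $w_1,w_2$ I enumerate all pairs of triples from $D_{w_1}\times D_{w_2}$; each pair determines a candidate triple at $w$ via $\rep_{T_r[w]_{\reps_r[w]}}(C_i\cap T_r[w])=\rep_{T_r[w]_{\reps_r[w]}}(\rho_i^{w_1}\cup\rho_i^{w_2})$ by \autoref{lem:setrepuni}, computable from $\reps[w_1 p]$ and $\reps[w_2 p]$ via \autoref{lem:vrep} and \autoref{lem:minrep_computation} in $2^{\OO(k)}$ time; I then retain the lexicographic minimum per key. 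This is $2^{2^{\OO(k)}}$ work per node, so Init runs in $\OO(t(k)|V(T)|)$. Move$(vw)$ invokes \autoref{obs:reroot}: only $v$'s $r$-subtree changes, so I recompute only $D_v$ from its two new children in $t(k)$ time; the representative $\reps[vw]$ is already stored. Width$()$ returns $|\reps[uv]|$ in $\OO(k)$ time.

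CanRefine$()$ and the minimization behind EditSet$()$ are performed by joining $D_u$ and $D_v$. For each pair of triples $(\rho^u_1,\rho^u_2,\rho^u_3)\times(\rho^v_1,\rho^v_2,\rho^v_3)$ the partition $(C_1,C_2,C_3)$ has $C_i\cap W$ represented by $\rho^u_i$ and $C_i\cap\overline W$ by $\rho^v_i$; from these representatives together with $(\reps[uv],\reps[vu])$ the quantities $\cutrank(C_i\cap W)$, $\cutrank(C_i\cap\overline W)$, and $\cutrank(C_i)$ are computable in $2^{\OO(k)}$ time as GF$(2)$-ranks of $2^{\OO(k)}\times 2^{\OO(k)}$ bipartite submatrices of $M_G$ (which I precompute and maintain per cut). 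I keep only pairs satisfying the three \splitW constraints, and select the lexicographic minimum of $(m,a,s,\iota)=(\max_i\cutrank(C_i),\text{arity},\sum_i\cutrank(C_i),\iota_u+\iota_v)$. By \autoref{the:glostro} this minimum is a \strongminsplit. Note that arity depends only on whether $\rho^u_i$ and $\rho^v_i$ are both empty, $m$ and $s$ depend only on the root-level cutranks, and $\iota$ is additive over subtrees, so the per-triple DP correctly propagates all four criteria. EditSet$()$ then walks the back-pointers from $u$ and $v$ downward, stopping at each subtree that is absorbed entirely into one $C_i$ (whose representative equals the full $\reps_r[w]$ of its root); by part (3) of \autoref{lemma_propertiesR} this traversal visits exactly $|R|$ nodes and yields $R$ and $(N_1,N_2,N_3)$ in $\OO(t(k)|R|)$ time.

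Refine$(R,(N_1,N_2,N_3))$ first applies the tree surgery of \autoref{lem:refinealg} in $\OO(|R|)$ time, producing $|R|$ new nodes and new edges. For each new edge I compute a minimal representative of its cut using \autoref{lem:unirep} and \autoref{lem:minrep_computation} in $2^{\OO(k)}$ time, drawing on the representatives already stored on the boundary $N_T(R)$ together with those at the children just rebuilt; I also recompute the stored bipartite submatrices $M_G[\reps[\cdot],\reps[\cdot]]$ for the new edges. I then rebuild $D_w$ for each of the $|R|$ new nodes bottom-up in $t(k)$ time, giving total $\OO(t(k)|R|)$. The DP tables of nodes outside $R$ are untouched because by part (3) of \autoref{lemma_propertiesR} their $r'$-subtrees coincide with their old $r$-subtrees. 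Output$()$ just writes out the maintained augmented decomposition. The main obstacle is (a) proving that the four-criteria lexicographic optimum over $T$-improvements is faithfully recovered by the per-triple subtree DP, for which the additivity of $\iota$ and the fact that $m,a,s$ are functions of only the root-level cutranks is essential; and (b) keeping all the auxiliary bipartite matrices $M_G[\reps[uv],\reps[vu]]$ consistent under Move and Refine, which is handled locally in $2^{\OO(k)}$ time per affected edge and therefore fits within the amortized budget guaranteed by \autoref{lem:amortize}.
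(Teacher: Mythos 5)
Your overall architecture (bottom-up tables per node, re-rooting via \autoref{obs:reroot}, back-pointer traversal to extract the edit set, rebuilding minimal representatives of the new cuts via \autoref{lem:unirep} and \autoref{lem:minrep_computation}, and charging everything to the amortized edit-set bound) matches the paper. The fatal problem is your choice of DP state. You index $D_w$ by triples $(\rho_1,\rho_2,\rho_3)$ with $\rho_i=\rep_{T_r[w]_{\reps_r[w]}}(C_i\cap T_r[w])$, i.e., by which twin classes of the ambient cut $(T_r[w],\overline{T_r[w]})$ each part hits. This state does not determine $\cutrank(C_i)$, $\cutrank(C_i\cap W)$ or $\cutrank(C_i\cap\oW)$: those ranks depend on the entries of $M_G[C_i\cap T_r[w],\,T_r[w]\setminus C_i]$, which are \emph{internal} to $T_r[w]$, and the representative with respect to the ambient cut is blind to them. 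Concretely, if every vertex of $W$ has the same neighbourhood in $\oW$, then $\reps[uv]$ is a single vertex and your root table for $u$ has at most eight entries; yet different sets $C_1\cap W$ with that same representative can have very different $\cutrank(C_1\cap W)$ (take $G[W]$ a perfect matching: choosing one endpoint of each matching edge versus choosing all of $W$). Hence the step asserting that the three cut-ranks ``are computable in $2^{\OO(k)}$ time as GF$(2)$-ranks of $2^{\OO(k)}\times2^{\OO(k)}$ bipartite submatrices'' from the root-level representatives is false, and with it the claim that the lexicographic minimum over your table recovers a \strongminsplit. The coarseness also breaks the minimization itself: taking the minimum $\iota$ within one of your states can discard exactly those completions that satisfy the \splitW width bounds while retaining ones that violate them.

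The paper circumvents this by making the DP state the representative of an \emph{improvement embedding}: a $6$-tuple of ($A_R$-representatives of) embeddings of $G[A\cap C_i,A\cap\overline{C_i}]$ into the universal graphs $R_{k_i}$ and $R_l$, together with the shape $(k_1,k_2,k_3,l)$. The embeddings are certificates, via \autoref{lem:cutrankemb}, that the relevant cut-ranks stay below the guessed bounds, and they compose across sibling subtrees through the compatibility and composition machinery of \autoref{lem:compatible_onlyif} and \autoref{lem:dpcompmain}, with a separate root-compatibility check (\autoref{lem:rootcompcheck}) handling the $C_i\cap\oW$ side. Any repair of your argument essentially has to enrich the state to carry such certificates; triples of representative subsets alone cannot suffice.
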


Combined with \autoref{the:main_alg} we get the following corollary.

\begin{corollary}
\label{cor:rankwidthcomp}
There is an algorithm, that given an augmented rank decomposition of $G$ of width $k$ outputs an augmented rank decomposition of $G$ of width at most $k-1$ or correctly concludes that $k \le 2\rw(G)$ in time $2^{2^{\OO(k)}} n$.
\end{corollary}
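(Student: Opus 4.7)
The plan is to implement the refinement data structure by representative-based dynamic programming on the rank decomposition. At each node $w$ of the tree (rooted at the edge $r$), I would store a DP table indexed by equivalence classes of partial tripartitions $(X_1, X_2, X_3)$ of $T_r[w]$: two tripartitions are equivalent when $\rep_{T_r[w]_{R_w}}(X_i) = \rep_{T_r[w]_{R_w}}(X_i')$ for all $i$, where $R_w = \reps_r[w]$ has size at most $2^k$ by \autoref{lem:cutrank_rep}. Hence there are at most $(2^{|R_w|})^3 = 2^{2^{\OO(k)}}$ classes per node. For each class the table stores the lexicographically best $(\text{width},\text{arity},\text{sum-width},\text{intersected-nodes})$-tuple achievable by some partial tripartition in the class, together with enough auxiliary bipartite-rank data and backtracing pointers that, when the tables at the two endpoints of the root edge are combined, the values $\cutrank(C_i)$, $\cutrank(C_i \cap W)$, and $\cutrank(C_i \cap \overline{W})$ can be recovered.

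For Init, I would run the standard bottom-up DP: at each leaf the table is trivial, and at each internal node $w$ with $r$-children $w_1, w_2$ we enumerate all pairs of classes at $w_1,w_2$ and merge them into a class at $w$ via \autoref{lem:setrepuni}, which gives $\rep_{T_r[w]_{R_w}}(X_i^{w_1} \cup X_i^{w_2}) = \rep_{T_r[w]_{R_w}}(\rho_i^{w_1} \cup \rho_i^{w_2})$; each lookup is computable in $\mathrm{poly}(2^k)$ time by \autoref{lem:vrep}. Each node is then processed in $((2^{2^k})^3)^2 \cdot \mathrm{poly}(2^k) = 2^{2^{\OO(k)}}$ time, so the full pass costs $t(k)\cdot n$. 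By \autoref{obs:reroot}, Move($vw$) only changes the $r$-subtree of the single node $v$, so it suffices to recompute $v$'s table in $t(k)$ time, and Width() is a direct lookup of the stored rank.

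For CanRefine() and EditSet(), I would combine the tables at $u$ and $v$ to search for a \strongminsplit on the current root edge $r=uv$. A candidate tripartition $(C_1,C_2,C_3)$ of $V$ is encoded by a pair of classes $(\rho_1^u, \rho_2^u, \rho_3^u)$ and $(\rho_1^v, \rho_2^v, \rho_3^v)$; a representative of $C_i$ (resp.\ of $\overline{C_i}$) of size $2^{\OO(k)}$ is obtained by combining $\rho_i^u$ with $\rho_i^v$ (resp.\ the other two classes on both sides) and reducing via \autoref{lem:minrep_computation}, after which $\cutrank(C_i)$ is the rank of the corresponding small bipartite matrix and $\cutrank(C_i\cap W), \cutrank(C_i\cap\overline{W})$ are read off analogously using the stored intra-side data. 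We pick, in lexicographic order, the pair of classes minimizing $(\text{width},\text{arity},\text{sum-width},\text{intersected-nodes})$ among those satisfying the three rank inequalities of a \splitW, then backtrace through the DP tables to produce the explicit tripartition; by \autoref{the:glostro} the backtrace visits exactly the nodes of the edit set $R$ (nodes outside $R$ are already contained in a single $C_i$), so it runs in $\OO(t(k)|R|)$ time, and the neighbor partition $(N_1,N_2,N_3)$ falls out as a by-product.

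For Refine($R$, $(N_1,N_2,N_3)$), I apply the tree surgery of \autoref{lem:refinealg} in $\OO(|R|)$ time, then fill in the missing data for the $\OO(|R|)$ newly inserted nodes: first the minimal representatives on the new edges by a bottom-up pass analogous to the one in \autoref{lem:iterative_add} in $2^{\OO(k)}|R|$ time, and then the DP tables for the new internal nodes in $t(k)|R|$ time; the tables on the unchanged side of the tree are reused verbatim. Output() returns the maintained augmented rank decomposition directly in $\OO(2^k n)$ time. The main obstacle will be designing the auxiliary data in each DP entry richly enough that $\cutrank(C_i)$, $\cutrank(C_i\cap W)$, and $\cutrank(C_i\cap\overline{W})$ are all recoverable from the combined $u$- and $v$-side tables, yet compact enough that composing two child tables still fits in $2^{2^{\OO(k)}}$ time; I expect this to be handled by storing, for each equivalence class, a canonical reduced bipartite form that simultaneously records the interaction of each $X_i$ with the outside of $T_r[w]$ and with the other two parts $X_j\cup X_k$ inside $T_r[w]$, so that both the external and internal contributions to each $\cutrank(C_i)$ are preserved under \autoref{lem:setrepuni}-style composition.
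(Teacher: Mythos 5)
Your overall architecture matches the paper's: the corollary is obtained by exhibiting a refinement data structure with $t(k)=2^{2^{\OO(k)}}$ and plugging it into \autoref{the:main_alg}, with Init/Move/Width/CanRefine/EditSet/Refine implemented by bottom-up dynamic programming on the augmented decomposition, backtracing confined to the edit set, and table reuse outside it. The gap is in the DP state. You index tables by equivalence classes of tripartitions $(X_1,X_2,X_3)$ of $T_r[w]$ determined by the set-representatives $\rep_{A_R}(X_i)$. This relation is too coarse: $\rep_{A_R}(X_i)$ only records which external-neighborhood classes occur in $X_i$, and says nothing about the bipartite graph between $X_i$ and $(T_r[w])\setminus X_i$ \emph{inside} the subtree. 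Two tripartitions in the same class can therefore yield wildly different values of $\cutrank(C_i)$, so the quantities you need are not functions of the class, and ``storing auxiliary bipartite-rank data per class'' is not well defined. Nor can the internal contribution be carried as a separate summand, since GF(2)-rank of a cut does not decompose additively over a partition of the vertex set into subtrees. You flag this as ``the main obstacle'' and gesture at a ``canonical reduced bipartite form,'' but that is precisely the hard part, and no composable construction is given.

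The paper resolves this with a genuinely different state: it guesses the target ranks $(k_1,k_2,k_3,l)$ up front and builds, for each $C_i$, an \emph{embedding} of $G[C_i,\overline{C_i}]$ into the universal graph $R_{k_i}$ (and of $G[C_i\cap W,\overline{C_i\cap W}]$ into $R_l$), using \autoref{lem:cutrankemb}; the existence of such an embedding is equivalent to the rank bound. The DP state is the $A_R$-representative of this six-tuple of embeddings, and \autoref{lem:compatible_onlyif} and \autoref{lem:dpcompmain} show that compatibility of the child representatives is exactly the condition under which the partial embeddings merge into an embedding, with the composition giving the new representative. This is what makes the rank certificates compose across subtrees — the step your proposal leaves open. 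The remaining pieces of your plan (lexicographic minimization of width/arity/sum-width/intersection count, root-edge combination, edit-set backtrace, recomputation of tables and representatives for the $|R|$ new nodes) agree with the paper's \autoref{lem:rankrefinement} and \autoref{lem:reprefinemethod}, but without a composable certificate for the three cut-rank inequalities the construction does not go through.
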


And by applying iterative compression with \autoref{cor:rankwidthcomp} and \Cref{lem:iterative_add} we get the main result of this section.

\thmrankwidthalgorithm*
%

Our refinement data structure is based on characterizing \strongminsplits by dynamic programming on the augmented rank decomposition $(T, \reps)$ directed towards the edge $r \in E(T)$.
In \Cref{subsec:embs,subsubsec:repsembs,subsubsec:imprembs,subsec:dpiembs} we introduce the objects manipulated in this dynamic programming and prove properties of them and in \Cref{subsec:rankref} we apply this dynamic programming to provide the refinement data structure.

\subsubsection{Embeddings\label{subsec:embs}}
Bui-Xuan, Telle, and Vatshelle in \cite{DBLP:journals/dam/Bui-XuanTV10} characterized the cut-rank of a cut $(A,\overline{A})$ by the existence of an embedding of $G[A, \overline{A}]$ into a certain graph $R_k$.
Next we define this notion of embedding.
In our definition the function describing the embedding is in some sense inversed.
This inversion will make manipulating embeddings in dynamic programming easier.

\begin{definition}[Embedding]
Let $G$ be a bipartite graph   with   bipartition $(A_G, B_G)$ of $V(G)$ and $H$ be a bipartite graph    with   bipartition $(A_H, B_H)$ of $V(H)$.
A function $f : V(H) \rightarrow 2^{V(G)}$ is an \emph{embedding of $G$ into $H$} if
\begin{itemize}\item
$f(u) \cap f(v) = \emptyset$ for $u \neq v$, \item $A_G = \bigcup_{v \in A_H} f(v)$,  $B_G = \bigcup_{v \in B_H} f(v)$, and \item for every pair $(a_H, b_H) \in A_H \times B_H$ and every $(a,b) \in f(a_H) \times f(b_H)$,  it holds that $a b \in E(G)$ if and only if $a_H b_H \in E(H)$.
\end{itemize}
\end{definition}

When using the notation $G[X,Y]$ to construct a bipartite graph, we always assume that the bipartition of $G[X,Y]$ is $(X,Y)$.
Note that the embedding completely defines the edges of $G[X,Y]$ in terms of the edges of $H$, and in particular gives a representative of $(X,Y)$ of size $|V(H)|$, as we formalize as follows.

\begin{observation}
\label{observation:embrep}
Let $G$ be a graph and $(A,\overline{A})$ be a bipartition of $V(G)$.
Let $H$ be a bipartite graph   with  bipartition $(A_H, B_H)$.
Let $f : V(H) \rightarrow 2^{V(G)}$ be an embedding of $G[A,\overline{A}]$ into $H$.
Let $g$ be a function mapping each $v \in V(H)$ to a subset of $f(v)$ as follows:
\[g(v) =
\left\{
\begin{array}{ll}
\{u\} \text{ where } u \in f(v)&\text{if } f(v) \text{ is non-empty,}\\
\emptyset&\text{otherwise.}\\
\end{array}
\right.
\]
Then $(\bigcup_{v \in A_H} g(v), \bigcup_{v \in B_H} g(v))$ is a representative of $(A, \overline{A})$ of size $|V(H)|$.
\end{observation}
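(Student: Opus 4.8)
The plan is to verify directly that $R := \bigcup_{v \in A_H} g(v)$ is a representative of $A$ and $Q := \bigcup_{v \in B_H} g(v)$ is a representative of $\overline{A}$, and then to bound the total size. First I would record two bookkeeping facts that follow immediately from the definition of an embedding: since $A_G = A = \bigcup_{v \in A_H} f(v)$ and $B_G = \overline{A} = \bigcup_{v \in B_H} f(v)$ while the sets $f(v)$ are pairwise disjoint, every vertex $a \in A$ lies in $f(a_H)$ for exactly one node $a_H \in A_H$, and every vertex $b \in \overline{A}$ lies in $f(b_H)$ for exactly one node $b_H \in B_H$. In particular $f(a_H)$ is non-empty, so $g(a_H) = \{u\}$ for some $u \in f(a_H)$.

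For the representative property, I would fix $a \in A$, take $a_H$ and $u \in g(a_H)$ as above, and claim $N(a) \setminus A = N(u) \setminus A$. Given any $b \in \overline{A} = V(G) \setminus A$, let $b_H \in B_H$ be the unique node with $b \in f(b_H)$. Applying the third bullet of the definition of embedding to the pair $(a_H, b_H) \in A_H \times B_H$ and the pairs $(a,b),(u,b) \in f(a_H) \times f(b_H)$ shows that $ab$ is an edge of $G[A,\overline{A}]$ if and only if $a_H b_H \in E(H)$, and likewise for $ub$; hence $ab \in E(G) \iff ub \in E(G)$, using that for endpoints on opposite sides of the cut $(A,\overline{A})$ adjacency in $G$ and in $G[A,\overline{A}]$ coincide. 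Since this holds for every $b \notin A$, we get $N(a) \setminus A = N(u) \setminus A$, so $R$ satisfies the definition of a representative of $A$. The argument that $Q$ is a representative of $\overline{A}$ is the same after interchanging the roles of $(A_H,A)$ and $(B_H,\overline{A})$, so $(R,Q)$ is a representative of $(A,\overline{A})$.

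For the size bound, by construction $|g(v)| \le 1$ for every $v \in V(H)$, and the sets $g(v)$ are pairwise disjoint since they sit inside the pairwise-disjoint sets $f(v)$; hence $|R| + |Q| = \sum_{v \in V(H)} |g(v)| \le |V(H)|$. I do not expect a genuine obstacle here — the statement is essentially an unwinding of the definition of embedding — and the only points deserving care are that the images $f(v)$ with $v \in A_H$ stay inside $A$ (so the unique ``preimage node'' is well-defined and on the correct side of the bipartition), and the passage between $E(G)$ and $E(G[A,\overline{A}])$ for edges crossing the cut.
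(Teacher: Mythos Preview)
Your proof is correct and complete; the paper states this result as an observation without proof, treating it as a direct unwinding of the definition of embedding, which is exactly what you do. One cosmetic remark: you prove $|R|+|Q|\le |V(H)|$, which is the correct bound (equality need not hold when some $f(v)$ are empty); the paper's phrasing ``of size $|V(H)|$'' should be read as ``of size at most $|V(H)|$'', consistent with how it is later applied.
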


Next we  define the graph $R_k$ that will be used to characterize cut-rank. 

\begin{definition}[Graph $R_k$ \cite{DBLP:journals/dam/Bui-XuanTV10}]
For each $k \ge 0$, we denote by $R_k$ the bipartite graph with a bipartition $(A,B)$, having for each subset $X \subseteq \{1,\dots, k\}$ a vertex $a_X \in A$ and a vertex $b_X \in B$, (in particular, having $|A| = 2^k$ and $|B| = 2^k$), and having an edge between $a_X$ and $b_Y$ if and only if $|X \cap Y|$ is odd.
\end{definition}

\begin{proposition}[\cite{DBLP:journals/dam/Bui-XuanTV10}]
\label{lem:cutrankemb}
Let $A \subseteq V(G)$.
It holds that $\cutrank(A) \le k$ if and only if there is an embedding of $G[A, \overline{A}]$ into $R_k$.
\end{proposition}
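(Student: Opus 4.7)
The plan is to show both directions by identifying each embedding of $G[A,\overline{A}]$ into $R_k$ with a GF(2)-rank-$k$ factorization of the biadjacency matrix $M_G[A,\overline{A}]$. The key algebraic observation, which I would state once at the start, is that for $X,Y \subseteq \{1,\dots,k\}$, the parity $|X \cap Y| \bmod 2$ equals the GF(2) inner product $\chi_X^\top \chi_Y$ of the characteristic vectors $\chi_X,\chi_Y \in \mathrm{GF}(2)^k$. Thus the biadjacency matrix of $R_k$ (with rows indexed by $a_X$ and columns by $b_Y$) factors as $M^{R_k} = Q Q^\top$, where $Q$ is the $2^k \times k$ matrix whose row indexed by $X$ is $\chi_X$; and crucially $Q$ contains every vector in $\mathrm{GF}(2)^k$ as a row.

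For the forward direction, suppose $\cutrank(A) = r \le k$. Then there exist matrices $L \in \mathrm{GF}(2)^{|A| \times k}$ and $R \in \mathrm{GF}(2)^{|\overline{A}| \times k}$ with $M_G[A,\overline{A}] = L R^\top$ (for instance, first obtain a rank-$r$ factorization and then pad with zero columns). For each $u \in A$ let $X_u \subseteq \{1,\dots,k\}$ be the subset whose characteristic vector is the row of $L$ indexed by $u$, and for each $v \in \overline{A}$ let $Y_v$ be defined analogously from $R$. Define $f : V(R_k) \to 2^{V(G)}$ by $f(a_X) = \{u \in A : X_u = X\}$ and $f(b_Y) = \{v \in \overline{A} : Y_v = Y\}$. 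The sets $f(a_X)$ are pairwise disjoint and cover $A$; similarly for $B$. Finally, for $u \in f(a_X)$ and $v \in f(b_Y)$, the entry $M_G[A,\overline{A}]_{u,v}$ equals $\chi_X^\top \chi_Y = |X \cap Y| \bmod 2$, which is exactly the condition for $a_X b_Y \in E(R_k)$. Hence $f$ is an embedding.

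For the reverse direction, suppose $f$ is an embedding of $G[A,\overline{A}]$ into $R_k$. For each $u \in A$ there is a unique $X \subseteq \{1,\dots,k\}$ with $u \in f(a_X)$; let $X_u = X$, and analogously define $Y_v$ for $v \in \overline{A}$. Form the matrices $L$ and $R$ whose rows are $\chi_{X_u}$ and $\chi_{Y_v}$, respectively. By the embedding property, $M_G[A,\overline{A}]_{u,v} = 1$ iff $a_{X_u} b_{Y_v} \in E(R_k)$ iff $|X_u \cap Y_v|$ is odd iff $\chi_{X_u}^\top \chi_{Y_v} = 1$ in $\mathrm{GF}(2)$. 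Therefore $M_G[A,\overline{A}] = L R^\top$, which exhibits a GF(2) rank at most $k$ factorization, so $\cutrank(A) \le k$.

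The routine bookkeeping is the verification that $f$ is a well-defined function from $V(R_k)$ to $2^{V(G)}$ with the right partition structure; there is no real obstacle, since the factorization immediately partitions $A$ and $\overline{A}$ by their labels, and every subset $X$ needs to be assigned a (possibly empty) fibre because $R_k$ has a vertex $a_X$ for every $X$. The only point worth emphasising is that the embedding definition requires the biadjacency pattern to be constant on each pair $f(a_X) \times f(b_Y)$, which is automatic from the factorization since that pattern depends only on $X$ and $Y$.
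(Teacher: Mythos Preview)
Your proof is correct. Note, however, that the paper does not actually prove this proposition: it is stated with a citation to Bui-Xuan, Telle, and Vatshelle and used as a black box. Your argument is the standard one underlying that result---identifying an embedding into $R_k$ with a rank-$k$ factorization $M_G[A,\overline{A}] = LR^\top$ over $\mathrm{GF}(2)$ via the observation that the edge relation of $R_k$ is precisely the $\mathrm{GF}(2)$ inner product of characteristic vectors---so there is nothing to compare against in the present paper beyond noting that you have supplied what it omits.
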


We will find \strongminsplits by computing embeddings into $R_k$ by dynamic programming.
In order to manipulate embeddings and objects related to embeddings we introduce some notation that naturally extends the definitions of intersections and unions of sets.

\begin{definition}[Intersection $f \cap X$]
Let $f : V(H) \rightarrow 2^A$ be a function and $X \subseteq A$ be a set.
We denote by $f \cap X$ the function $f \cap X : V(H) \rightarrow 2^X$ with $(f \cap X)(v) = f(v) \cap X$.
\end{definition}

We note that an intersection of an embedding and a set is again an embedding.

\begin{observation}
\label{obs:embint}
Let $A$ be a set, $C \subseteq A$, and $X \subseteq A$.
If $f : V(H) \rightarrow 2^A$ is an embedding of $G[A \cap C, A \setminus C]$ into $H$, then $f \cap X$ is an embedding of $G[X \cap C, X \setminus C]$ into $H$.
\end{observation}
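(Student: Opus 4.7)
The plan is to verify the three defining properties of an embedding for $f \cap X$ directly from the corresponding properties of $f$, exploiting the fact that passing from $f$ to $f \cap X$ only shrinks each image set while leaving the bipartition $(A_H, B_H)$ of $H$ unchanged, and that $G[X \cap C, X \setminus C]$ is simply the sub-bipartite-graph of $G[A \cap C, A \setminus C]$ induced by $X$.

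First I would check pairwise disjointness: for distinct $u,v \in V(H)$ we have $(f \cap X)(u) \cap (f \cap X)(v) \subseteq f(u) \cap f(v) = \emptyset$. Second I would verify the covering condition. Since $f$ is an embedding of $G[A \cap C, A \setminus C]$, we have $A \cap C = \bigcup_{v \in A_H} f(v)$. Intersecting both sides with $X$, and using $X \subseteq A$ so that $X \cap (A \cap C) = X \cap C$, gives
\[
X \cap C \;=\; \bigcup_{v \in A_H} (f(v) \cap X) \;=\; \bigcup_{v \in A_H} (f \cap X)(v),
\]
and the symmetric computation yields $X \setminus C = \bigcup_{v \in B_H} (f \cap X)(v)$. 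Third, the edge condition transfers automatically: for any $(a_H, b_H) \in A_H \times B_H$ and any $(a,b) \in (f \cap X)(a_H) \times (f \cap X)(b_H)$, the pair $(a,b)$ also lies in $f(a_H) \times f(b_H)$, so the embedding property of $f$ already guarantees that $ab \in E(G)$ if and only if $a_H b_H \in E(H)$; and since $a \in X \cap C$ and $b \in X \setminus C$, the adjacency $ab$ in $G$ coincides with adjacency in both $G[A \cap C, A \setminus C]$ and $G[X \cap C, X \setminus C]$.

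There is essentially no obstacle here: the statement is a bookkeeping fact about how embeddings behave under restriction. The only point that deserves a line of care is the set-theoretic identity $X \cap (A \cap C) = X \cap C$ (and its counterpart for $A \setminus C$), which relies crucially on the hypothesis $X \subseteq A$ and is what makes the covering sides of the embedding match the advertised bipartition of $G[X \cap C, X \setminus C]$.
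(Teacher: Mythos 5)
Your verification is correct and is exactly the routine check the paper has in mind; the paper states this as an observation with no written proof precisely because the three embedding axioms transfer under intersection in the way you describe. Your one point of care — that $X \cap (A \cap C) = X \cap C$ and $X \cap (A \setminus C) = X \setminus C$ require $X \subseteq A$ — is the only non-trivial detail, and you handle it correctly.
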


Finally, we define the union of functions.

\begin{definition}[Union  $f \cup g$]
Let $f : V(H) \rightarrow 2^{X}$ and $g : V(H) \rightarrow 2^{Y}$ be functions.
We define function $f \cup g : V(H) \rightarrow 2^{X \cup Y}$ with $(f \cup g)(v) = f(v) \cup g(v)$.
\end{definition}

\subsubsection{Representatives of embeddings\label{subsubsec:repsembs}}
Next we define the $A_R$-representative of an embedding, extending the definition of the $A_R$-representative of a set.

\begin{definition}[$A_R$-representative of an embedding]
Let $G$ be a graph, $A \subseteq V(G)$, and $R$ be a minimal representative of $A$.
Let also $f : V(H) \rightarrow 2^A$ be an embedding.
The \emph{$A_R$-representative of $f$} is the function $g : V(H) \rightarrow 2^{R}$, where $g(v) = \rep_{A_R}(f(v))$.
\end{definition}

The $A_R$-representative of an embedding is well-defined because the $A_R$-representative of a set is well-defined.
If $\cutrank(A) \le k$, then the number of $A_R$-representatives of embeddings into $H$ is at most $(2^{2^k})^{|V(H)|}$. In particular, the number of $A_R$-representatives of embeddings into $R_k$ is at most $(2^{2^k})^{2 \cdot 2^k} = 2^{2^{\OO(k)}}$.

We will define the compatibility and the composition of two representatives of embeddings.
The intuition is that embeddings $f_X$ and $f_Y$ of disjoint subgraphs can be merged into an embedding $f_X \cup f_Y$ if and only if their representatives are compatible.
Moreover, the representative of $f_X \cup f_Y$ will be the composition of the representatives of $f_X$ and $f_Y$.

Let $X$ and $Y$ be disjoint subsets of $V(G)$ and $A = X \cup Y$.
Let also $R$ be a minimal representative of $A$, $R_X$ a minimal representative of $X$, and $R_Y$ a minimal representative of $Y$.
Let also $H$ be a   bipartite graph, $g_X : V(H) \rightarrow 2^{R_X}$ an $X_{R_X}$-representative an embedding, and $g_Y : V(H) \rightarrow 2^{R_Y}$ an $Y_{R_Y}$-representative of an embedding.

\begin{definition}[Compatibility]
Let $(A_H, B_H)$ be the bipartition of $H$.
The representatives $g_X$ and $g_Y$ are \emph{compatible} if for every pair $(a_H, b_H) \in A_H \times B_H$ it holds that
\begin{enumerate}
\item for every pair $(a,b) \in g_X(a_H) \times g_Y(b_H)$ it holds that $ab \in E(G) \Leftrightarrow a_H b_H \in E(H)$\label{def:comp:case1} and
\item for every pair $(a,b) \in g_Y(a_H) \times g_X(b_H)$ it holds that $ab \in E(G) \Leftrightarrow a_H b_H \in E(H)$\label{def:comp:case2}.
\end{enumerate}
\end{definition}

Note that compatibility can be tested in $(|V(H)|+|R_X|+|R_Y|)^{\OO(1)}$ time.
Next we show that two embeddings can be merged into an embedding only if their representatives are compatible.

\begin{lemma}
\label{lem:compatible_onlyif}
Let $C \subseteq A$ be a set and let $f_A$ be an embedding of $G[A \cap C, A \setminus C]$ into $H$.
If $g_X$ is the $X_{R_X}$-representative of $f_A \cap X$ and $g_Y$ is the $Y_{R_Y}$-representative of $f_A \cap Y$, then $g_X$ and $g_Y$ are compatible.
\end{lemma}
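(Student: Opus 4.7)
The plan is to unpack the definitions and then chain three equivalences. The goal is, given a pair $(a_H, b_H) \in A_H \times B_H$ and $(a,b) \in g_X(a_H) \times g_Y(b_H)$, to show $ab \in E(G) \Leftrightarrow a_H b_H \in E(H)$ (and symmetrically for the other case). I would build a ``witness pair'' $(a', b') \in f_A(a_H) \times f_A(b_H)$ inside $A$ such that $a'$ represents $a$ via $\rep_{X_{R_X}}$ and $b'$ represents $b$ via $\rep_{Y_{R_Y}}$, and then compare adjacencies using the representative property and the embedding property of $f_A$.

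First I would note that since $a \in g_X(a_H) = \rep_{X_{R_X}}((f_A \cap X)(a_H)) = \rep_{X_{R_X}}(f_A(a_H) \cap X)$, there is some $a' \in f_A(a_H) \cap X$ with $\rep_{X_{R_X}}(a') = a$; by definition of $\rep_{X_{R_X}}$ this gives $N(a) \setminus X = N(a') \setminus X$. Similarly there is a $b' \in f_A(b_H) \cap Y$ with $\rep_{Y_{R_Y}}(b') = b$, hence $N(b) \setminus Y = N(b') \setminus Y$. Now I would use that $X$ and $Y$ are disjoint: since $b \in R_Y \subseteq Y$, we have $b \notin X$, and so $ab \in E(G) \Leftrightarrow a'b \in E(G)$. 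Symmetrically, $a' \in X$ gives $a' \notin Y$, so $a'b \in E(G) \Leftrightarrow a'b' \in E(G)$. Finally, since $f_A$ is an embedding of $G[A \cap C, A \setminus C]$ into $H$ and $a' \in f_A(a_H)$, $b' \in f_A(b_H)$ are in opposite sides of the bipartition, the embedding axiom gives $a'b' \in E(G) \Leftrightarrow a_H b_H \in E(H)$. Chaining the three equivalences yields the first condition of compatibility.

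For the second condition I would apply the identical argument with the roles of $X$ and $Y$ swapped: now $a \in g_Y(a_H)$ gives a witness $a' \in f_A(a_H) \cap Y$, and $b \in g_X(b_H)$ gives $b' \in f_A(b_H) \cap X$; disjointness of $X$ and $Y$ again lets us replace $a$ with $a'$ (using $b \in X$, $b \notin Y$) and $b$ with $b'$ (using $a' \in Y$, $a' \notin X$), and the embedding property closes the chain.

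I do not expect any real obstacle here. The one point to be careful about is handling the possibility that $f_A(a_H) \cap X$ or $f_A(b_H) \cap Y$ is empty: in that case $g_X(a_H) = \emptyset$ or $g_Y(b_H) = \emptyset$, so the pair $(a,b)$ does not exist and the quantified implication is vacuously true. I would briefly note this at the start of the proof. Beyond that, the argument is purely a bookkeeping exercise using $X \cap Y = \emptyset$ and the definition of $\rep_{\cdot}$, so the proof should be short.
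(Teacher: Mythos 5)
Your proposal is correct and follows essentially the same route as the paper's proof: pick witnesses $a' \in f_A(a_H) \cap X$ and $b' \in f_A(b_H) \cap Y$ whose neighborhoods outside $X$ (resp.\ $Y$) agree with those of $a$ and $b$, transfer the adjacency question to $a'b'$, and close with the embedding property of $f_A$. Your explicit two-step chain via disjointness of $X$ and $Y$, and the remark about the vacuous empty case, are just slightly more detailed bookkeeping of the same argument.
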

\begin{proof}
Let $f_X = f_A \cap X$, $f_Y = f_A \cap Y$, and let $(A_H, B_H)$ be the bipartition of $H$.
For the case (\ref{def:comp:case1}) of compatibility it suffices to prove for every pair $(a_H,b_H) \in A_H \times B_H$ and every $(a,b) \in g_X(a_H) \times g_Y(b_H)$ that $a b \in E(G)$ if and only if $a_H b_H \in E(H)$.

As $g_X(a_H)$ is an $X_{R_X}$-representative of $f_X(a_H)$, there exists $a' \in f_X(a_H)$ with $N(a') \setminus X = N(a) \setminus X$.
Similarly there exists $b' \in f_Y(b_H)$ with $N(b') \setminus Y = N(b) \setminus Y$.
Therefore $a b \in E(G)$ if and only if $a' b' \in E(G)$.
Since $f_X(a_H) = f_A(a_H) \cap X$ and $f_Y(b_H) = f_A(b_H) \cap Y$, we have that $a' \in f_A(a_H)$ and that $b' \in f_A(b_H)$.
Because $f_A$ is an embedding it holds that $a' b' \in E(G)$ if and only if $a_H b_H \in E(H)$.
This concludes the proof that the case (\ref{def:comp:case1}) of compatibility holds.

For the case (\ref{def:comp:case2}) it suffices to prove the same but for every $(a,b) \in g_Y(a_H) \times g_X(b_H)$.
This proof is symmetric.
\end{proof}

The composition is defined as the representative of the union.

\begin{definition}[Composition]
The composition of $g_X$ and $g_Y$ is the function $g_A : V(H) \rightarrow 2^A$ defined by $g_A(v) = \rep_{A_R}(g_X(v) \cup g_Y(v))$ for all $v \in V(H)$.
\end{definition}

By using a minimal $(A,\overline{A})$-representative $(R,Q)$ the composition can be computed in time $(|V(H)|+|R|+|Q|+|R_X|+|R_Y|)^{\OO(1)}$.

Next we prove that two embeddings can be merged into an embedding if their representatives are compatible, and in this case the composition gives the resulting representative.

\begin{lemma}
\label{lem:dpcompmain}
Let $C \subseteq A$ be a set.
Let $g_X$ be the $X_{R_X}$-representative of an embedding $f_X$ of $G[X \cap C, X \setminus C]$ into $H$ and $g_Y$ the $Y_{R_Y}$-representative of an embedding $f_Y$ of $G[Y \cap C, Y \setminus C]$ into $H$.
If $g_X$ and $g_Y$ are compatible, then $f_X \cup f_Y$ is an embedding of $G[A \cap C, A \setminus C]$ into $H$ so that the composition of $g_X$ and $g_Y$ is the $A_R$-representative of $f_X \cup f_Y$.
\end{lemma}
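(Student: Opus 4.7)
The plan is to split the lemma into its two claims: first, that $f_X \cup f_Y$ is an embedding of $G[A \cap C, A \setminus C]$ into $H$, and second, that the composition of $g_X$ and $g_Y$ equals the $A_R$-representative of $f_X \cup f_Y$. The first claim is where the compatibility hypothesis is really used; the second will drop out of \autoref{lem:setrepuni}.

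For the first claim, I would verify the three conditions in the definition of an embedding. Disjointness $(f_X \cup f_Y)(u) \cap (f_X \cup f_Y)(v) = \emptyset$ for $u \neq v$ follows from the disjointness of $X$ and $Y$ together with the fact that $f_X$ and $f_Y$ are themselves embeddings. The coverage of the bipartition $(A \cap C, A \setminus C)$ follows from $A = X \cup Y$ and the corresponding coverings for $f_X$ and $f_Y$. The core step is the edge condition: for a pair $(a_H, b_H)$ in the bipartition of $H$ and $(a,b) \in (f_X \cup f_Y)(a_H) \times (f_X \cup f_Y)(b_H)$, I need to show $ab \in E(G) \Leftrightarrow a_H b_H \in E(H)$. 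There are four cases depending on whether $a$ and $b$ come from $f_X$ or $f_Y$. The two ``pure'' cases are immediate since $f_X$ and $f_Y$ are embeddings.

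The two cross cases are the main obstacle. Take the case $a \in f_X(a_H)$, $b \in f_Y(b_H)$; the other is symmetric. The idea is to pass from $(a,b)$ to a pair $(a',b')$ living in $g_X(a_H) \times g_Y(b_H)$ without changing the edge status. Concretely, set $a' = \rep_{X_{R_X}}(a) \in g_X(a_H)$ and $b' = \rep_{Y_{R_Y}}(b) \in g_Y(b_H)$. By definition of representatives, $N(a') \setminus X = N(a) \setminus X$, and since $b \in Y \subseteq \overline{X}$, this gives $ab \in E(G) \Leftrightarrow a'b \in E(G)$. Similarly, since $a' \in X \subseteq \overline{Y}$, we have $a'b \in E(G) \Leftrightarrow a'b' \in E(G)$. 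Compatibility (case 1) of $g_X$ and $g_Y$ then gives $a'b' \in E(G) \Leftrightarrow a_H b_H \in E(H)$, finishing this case. The case $a \in f_Y(a_H)$, $b \in f_X(b_H)$ uses compatibility case 2 identically.

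For the second claim, for each $v \in V(H)$, the $A_R$-representative of $f_X \cup f_Y$ at $v$ is $\rep_{A_R}((f_X \cup f_Y)(v)) = \rep_{A_R}(f_X(v) \cup f_Y(v))$. Applying \autoref{lem:setrepuni} with $X' = f_X(v) \subseteq X$ and $Y' = f_Y(v) \subseteq Y$, this equals $\rep_{A_R}(\rep_{X_{R_X}}(f_X(v)) \cup \rep_{Y_{R_Y}}(f_Y(v))) = \rep_{A_R}(g_X(v) \cup g_Y(v))$, which is precisely the value of the composition $g_A$ at $v$ by definition. Since this holds pointwise for every $v \in V(H)$, the composition of $g_X$ and $g_Y$ is the $A_R$-representative of $f_X \cup f_Y$, completing the proof.
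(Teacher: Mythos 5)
Your proposal is correct and follows essentially the same route as the paper's proof: the pure cases are handled by $f_X$ and $f_Y$ being embeddings, the cross cases by replacing $(a,b)$ with representatives $(a',b')\in g_X(a_H)\times g_Y(b_H)$ and invoking the two compatibility conditions, and the representative claim by the pointwise application of \autoref{lem:setrepuni}. No gaps.
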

\begin{proof}
Let $(A_H, B_H)$ be the bipartition of $H$.
We let $f_A = f_X \cup f_Y$ and observe that $f_A$ is a function $f_A : V(H) \rightarrow 2^A$ that satisfies $f_A(u) \cap f_A(v) = \emptyset$ for $u \neq v$, $(X \cap C) \cup (Y \cap C) = (A \cap C) = \bigcup_{v \in A_H} f_A(v)$, and $(X \setminus C) \cup (Y \setminus C) = (A \setminus C) = \bigcup_{v \in B_H} f_A(v)$.
Therefore $f_A$ is an embedding of $G[A \cap C, A \setminus C]$ into $H$ if for every pair $(a_H, b_H) \in A_H \times B_H$ and every $(a,b) \in f_A(a_H) \times f_A(b_H)$ it holds that $ab \in E(G)$ if and only if $a_H b_H \in E(H)$.
We say that $f_A$ is \emph{good} for a pair $(a,b) \in (A \cap C) \times (A \setminus C)$ if this holds for this pair.
Now $f_A$ is an embedding of $G[A \cap C, A \setminus C]$ into $H$ if it is good for every pair $(a,b) \in (A \cap C) \times (A \setminus C)$.

Because $f_X$ is an embedding of $G[X \cap C, X \setminus C]$ into $H$ we have that $f_X$ is good for all pairs in $(X \cap C) \times (X \setminus C)$ and therefore $f_A$ is good for all pairs in $(X \cap C) \times (X \setminus C)$.
Analogously because $f_Y$ is an embedding of $G[Y \cap C, Y \setminus C]$ into $H$ we have that $f_A$ is good for all pairs in $(Y \cap C) \times (Y \setminus C)$.
It remains to prove that $f_A$ is good for all pairs in $(X \cap C) \times (Y \setminus C)$ and in $(Y \cap C) \times (X \setminus C)$.

Let $(a,b) \in (X \cap C) \times (Y \setminus C)$.
The set $g_X(a_H)$ is a $X_{R_X}$-representative of $f_X(a_H) \supseteq \{a\}$, so there exists $a' \in g_X(a_H)$ so that $N(a') \setminus X = N(a) \setminus X$.
Similarly there exists $b' \in g_Y(b_H)$ so that $N(b') \setminus Y = N(b) \setminus Y$.
Therefore $a b \in E(G)$ if and only if $a' b' \in E(G)$.
Now, by compatibility (\ref{def:comp:case1}) it holds that $a' b' \in E(G)$ if and only if $a_H b_H \in E(G)$.
Therefore $f_A$ is good for $(a,b)$.

The proof for $(a,b) \in (Y \cap C) \times (X \setminus C)$ is symmetric, using compatibility (\ref{def:comp:case2}) instead.
Therefore $f_A$ is an embedding of $G[A \cap C, A \setminus C]$ into $H$.
Finally, by \autoref{lem:setrepuni} we have that the composition of $g_X$ and $g_Y$ is the $A_R$-representative of $f_A$.
\end{proof}

\subsubsection{Improvement embeddings\label{subsubsec:imprembs}}
In order to construct a \minsplitW $(C_1, C_2, C_3)$, we build six embeddings simultaneously in the dynamic programming, three to bound $\cutrank(C_i)$ and three to bound $\cutrank(C_i \cap W)$.
We of course also need to bound $\cutrank(C_i \cap \oW)$, but note that $G[(C_i \cap \oW) \cap W, (\overline{C_i \cap \oW}) \cap W] = G[\emptyset, W]$, so dynamic programming is not required for building the side of $W$ of the embedding $G[C_i \cap \oW, \overline{C_i \cap \oW}]$.

\begin{definition}[$A$-restricted improvement embedding]
Let $A \subseteq V(G)$.
A 10-tuple
\[E = (f^C_1, f^C_2, f^C_3, f^W_1, f^W_2, f^W_3, k_1, k_2, k_3, l)\]
is an $A$-restricted improvement embedding if
there exists a tripartition $(C_1, C_2, C_3)$ of $A$ so that 
\begin{enumerate}
\item for each $i \in \{1,2,3\}$, $f^C_i$ is an embedding of $G[A \cap C_i, A \cap \overline{C_i}]$ into $R_{k_i}$ and
\item for each $i \in \{1,2,3\}$, $f^W_i$ is an embedding of $G[A \cap C_i, A \cap \overline{C_i}]$ into $R_l$.
\end{enumerate}
\end{definition}

Note that an $A$-restricted improvement embedding $E$ uniquely defines such a tripartition $(C_1, C_2, C_3)$ of $A$.
We call such a tripartition \emph{the tripartition of $E$}.
We say that $E$ \emph{intersects a set} if its tripartition $(C_1, C_2, C_3)$ intersects it.
We call the quadruple $(k_1, k_2, k_3, l)$ the \emph{shape of $E$}.
An $A$-restricted improvement embedding is $k$-bounded if $k_1,k_2,k_3,l \le k$.

In the following lemma we introduce the notation $E \cap X$, where $E$ is an $A$-restricted improvement embedding and $X \subseteq A$.

\begin{lemma}
\label{lem:imprtupleint}
Let $A \subseteq V(G)$, $X \subseteq A$, and 
\[E = (f^C_1, f^C_2, f^C_3, f^W_1, f^W_2, f^W_3, k_1, k_2, k_3, l)\] be an $A$-restricted improvement embedding with tripartition $(C_1, C_2, C_3)$.
The tuple 
\[E \cap X = (f^C_1 \cap X, f^C_2 \cap X, f^C_3 \cap X, f^W_1 \cap X, f^W_2 \cap X, f^W_3 \cap X, k_1, k_2, k_3, l)\]
is an $X$-restricted improvement embedding with tripartition $(C_1 \cap X, C_2 \cap X, C_3 \cap X)$.
\end{lemma}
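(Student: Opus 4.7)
The plan is to show that the tuple $E \cap X$ satisfies the two defining conditions of an $X$-restricted improvement embedding by essentially applying \autoref{obs:embint} componentwise, once the candidate tripartition is identified.

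First, I would verify that $(C_1 \cap X, C_2 \cap X, C_3 \cap X)$ is indeed a tripartition of $X$. This is immediate: since $(C_1, C_2, C_3)$ partitions $A$ and $X \subseteq A$, the sets $C_i \cap X$ are pairwise disjoint and their union is $(C_1 \cup C_2 \cup C_3) \cap X = A \cap X = X$. Thus the claim about the tripartition of $E \cap X$ is purely set-theoretic.

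Next, for each $i \in \{1,2,3\}$ I would apply \autoref{obs:embint} to the embedding $f^C_i$. By hypothesis $f^C_i$ is an embedding of $G[A \cap C_i, A \setminus C_i]$ into $R_{k_i}$, and since $C_i \subseteq A$ and $X \subseteq A$, the observation applies directly with the role of $C$ played by $C_i$ and $H = R_{k_i}$, yielding that $f^C_i \cap X$ is an embedding of $G[X \cap C_i, X \setminus C_i]$ into $R_{k_i}$. Since $X \setminus C_i = X \cap \overline{C_i}$, this is exactly the requirement on the first triple of functions in the definition of an $X$-restricted improvement embedding with the tripartition $(C_1 \cap X, C_2 \cap X, C_3 \cap X)$. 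The same argument applied with $H = R_l$ shows that $f^W_i \cap X$ is an embedding of $G[X \cap C_i, X \setminus C_i]$ into $R_l$.

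Combining the two paragraphs, the tuple $E \cap X$ meets both bullets of the definition with shape $(k_1, k_2, k_3, l)$ and with tripartition $(C_1 \cap X, C_2 \cap X, C_3 \cap X)$, completing the proof. There is no real obstacle here: the lemma is a bookkeeping statement whose content is entirely carried by the distributivity of intersection over union (for the tripartition) and by the fact already recorded in \autoref{obs:embint} that intersecting an embedding with a subset produces an embedding of the induced bipartite subgraph.
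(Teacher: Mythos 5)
Your proof is correct and follows the same route as the paper's: the paper's own argument is a one-line componentwise application of \autoref{obs:embint} to each of the six embeddings, exactly as you do, with the tripartition claim left implicit. Your additional explicit check that $(C_1\cap X, C_2\cap X, C_3\cap X)$ partitions $X$ is a harmless (and correct) elaboration.
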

\begin{proof}
By \autoref{obs:embint}, for each $i$ $f^C_i \cap X$ is an embedding of $G[X \cap C_i, X \cap \overline{C_i}]$ to $R_{k_i}$ and $f^W_i$ is an embedding of $G[X \cap C_i, X \cap \overline{C_i}]$ to $R_{l}$.
\end{proof}

We also define the union of improvement embeddings, extending the definition of the union of embeddings.

\begin{definition}
Let $X$ and $Y$ be disjoint subsets,
\[E_1 = (f^C_1, f^C_2, f^C_3, f^W_1, f^W_2, f^W_3, k_1, k_2, k_3, l)\] an $X$-restricted improvement embedding, and
\[E_2 = (g^C_1, g^C_2, g^C_3, g^W_1, g^W_2, g^W_3, k_1, k_2, k_3, l)\] an $Y$-restricted improvement embedding with the same shape.
We denote by $E_1 \cup E_2$ the 10-tuple
\[E_1 \cup E_2 = (f^C_1 \cup g^C_1, f^C_2 \cup g^C_2, f^C_3 \cup g^C_3, f^W_1 \cup g^W_1, f^W_2 \cup g^W_2, f^W_3 \cup g^W_3, k_1, k_2, k_3, l).\]
\end{definition}

Note that if the tripartition of $X$ is $(C_1 \cap X, C_2 \cap X, C_3 \cap X)$, the tripartition of $Y$ is $(C_1 \cap Y, C_2 \cap Y, C_3 \cap Y)$, and $E_1 \cup E_2$ is indeed an $X \cup Y$-restricted improvement embedding, then the tripartition of $E_1 \cup E_2$ is $(C_1, C_2, C_3)$.

We will use dynamic programming on the rank decomposition $T$ to compute improvement embeddings, minimizing the number of nodes of $T$ intersected.

\subsubsection{Representatives of improvement embeddings\label{subsec:dpiembs}}
The definition of the $A_R$-representative of an improvement embedding extends the definition of the $A_R$-representative of an embedding.

\begin{definition}[$A_R$-representative of an improvement embedding]
Let $A \subseteq V(G)$, and $R$ a minimal representative of $A$.
Let also 
\[E = (f^C_1, f^C_2, f^C_3, f^W_1, f^W_2, f^W_3, k_1, k_2, k_3, l)\] be an $A$-restricted improvement embedding.
The $A_R$-representative of $E$ is the tuple
\[(g^C_1, g^C_2, g^C_3, g^W_1, g^W_2, g^W_3, k_1, k_2, k_3, l),\] where each such $g$ is the $A_R$-representative of the corresponding embedding $f$.
\end{definition}

The shape of the $A_R$-representative of $E$ is the same as the shape of $E$.
When $\cutrank(A) \le k$, the number of $A_R$-representatives of $k$-bounded improvement embeddings is $(2^{2^{\OO(k)}})^6 k^4 = 2^{2^{\OO(k)}}$.
We naturally extend the definitions of composition and compatibility to representatives of improvement embeddings.

Let $G$ be a graph, $X$ and $Y$ disjoint subsets of $V(G)$, and $A = X \cup Y$.
Let also $R$ be a minimal representative of $A$, $R_X$ a minimal representative of $X$, and $R_Y$ a minimal representative of $Y$.
Let $E_X = (f^C_1, f^C_2, f^C_3, f^W_1, f^W_2, f^W_3, k_1, k_2, k_3, l)$ be the $X_{R_X}$-representative of an $X$-restricted improvement embedding and $E_Y  = (g^C_1, g^C_2, g^C_3, g^W_1, g^W_2, g^W_3, k'_1, k'_2, k'_3, l')$ the $Y_{R_Y}$-representative of an $Y$-restricted improvement embedding.

\begin{definition}[$C$-Compatibility]
$E_X$ and $E_Y$ are $C$-compatible if they have the same shape and for each $i \in \{1,2,3\}$, $f^C_i$ and $g^C_i$ are compatible.
\end{definition}

\begin{definition}[Compatibility]
$E_X$ and $E_Y$ are compatible if they are $C$-compatible and for each $i \in \{1,2,3\}$, $f^W_i$ and $g^W_i$ are compatible.
\end{definition}

We derive the following lemma directly from \autoref{lem:compatible_onlyif}.

\begin{lemma}
Let $E$ be an $A$-restricted improvement embedding.
If $E_X$ is the $X_{R_X}$-representative of $E \cap X$ and $E_Y$ is the $Y_{R_Y}$-representative of $E \cap Y$, then $E_X$ and $E_Y$ are compatible.
\end{lemma}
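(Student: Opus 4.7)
My plan is to reduce the statement to six applications of \autoref{lem:compatible_onlyif}, one for each of the six embeddings bundled into an improvement embedding. First I would unpack the hypothesis. Write
\[E = (f^C_1, f^C_2, f^C_3, f^W_1, f^W_2, f^W_3, k_1, k_2, k_3, l)\]
with tripartition $(C_1, C_2, C_3)$ of $A$, so that by definition each $f^C_i$ is an embedding of $G[A \cap C_i, A \cap \overline{C_i}]$ into $R_{k_i}$ and each $f^W_i$ is an embedding of the same bipartite graph into $R_l$. By \autoref{lem:imprtupleint}, the tuples $E \cap X$ and $E \cap Y$ are improvement embeddings of the same shape $(k_1, k_2, k_3, l)$ as $E$, with tripartitions $(C_1 \cap X, C_2 \cap X, C_3 \cap X)$ and $(C_1 \cap Y, C_2 \cap Y, C_3 \cap Y)$ respectively. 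Hence $E_X$ and $E_Y$ agree on the shape $(k_1, k_2, k_3, l)$, which takes care of the shape-matching clause in the definition of (C-)compatibility.

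Next I would verify the six embedding-level compatibility conditions one at a time. Fix $i \in \{1,2,3\}$. Apply \autoref{lem:compatible_onlyif} with the set $A$, the subsets $X, Y$, and the choice $C = C_i$: the embedding $f^C_i$ of $G[A \cap C_i, A \cap \overline{C_i}]$ into $R_{k_i}$ plays the role of $f_A$, and the components of $E_X$ and $E_Y$ in the $C_i$-slot are by construction the $X_{R_X}$- and $Y_{R_Y}$-representatives of $f^C_i \cap X$ and $f^C_i \cap Y$. Thus \autoref{lem:compatible_onlyif} immediately yields that these two representatives are compatible. Repeating the same argument with $f^W_i$ in place of $f^C_i$ (using $H = R_l$ instead of $R_{k_i}$) gives compatibility of the $W$-components. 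Doing this for each $i \in \{1,2,3\}$ establishes all six required compatibility conditions.

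Combining the six conclusions with the matching shape gives exactly the definitions of $C$-compatibility and then of compatibility, so $E_X$ and $E_Y$ are compatible. There is no real obstacle here; the lemma is essentially a bookkeeping corollary that repackages \autoref{lem:compatible_onlyif} once the right instantiation of $f_A$ is identified for each slot of the 10-tuple. The only thing I would be careful about is that $E \cap X$ is well-defined as an $X$-restricted improvement embedding, which is precisely the content of \autoref{lem:imprtupleint} and thus requires no additional argument.
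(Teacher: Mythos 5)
Your proof is correct and takes exactly the route of the paper, whose entire argument is the one line ``apply \autoref{lem:compatible_onlyif} to each embedding in $E$''; you simply spell out the six instantiations (with $C = C_i$ and $H = R_{k_i}$ or $R_l$) and the shape-matching clause. No issues.
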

\begin{proof}
Apply \autoref{lem:compatible_onlyif} to each embedding in $E$.
\end{proof}

Next we define the composition of two representatives of improvement embeddings, extending the definition of the composition of representatives of embeddings.

\begin{definition}[Composition]
If $E_X$ and $E_Y$ are compatible, then the composition of $E_X$ and $E_Y$ is the pointwise composition of $E_X$ and $E_Y$, i.e., the composition is the 10-tuple
\[E_A = (h^C_1, h^C_2, h^C_3, h^W_1, h^W_2, h^W_3, k_1, k_2, k_3, l),\]
where for each $i \in \{1,2,3\}$ $h^C_i$ is the composition of $f^C_i$ and $g^C_i$, and $h^W_i$ is the composition of $f^W_i$ and $g^W_i$.
\end{definition}

Next we prove the main lemma for computing improvement embeddings by dynamic programming, which is an extension of \autoref{lem:dpcompmain}.

\begin{lemma}
\label{lem:dpmainif}
Let $X$ and $Y$ be disjoint sets, $A = X \cup Y$, $R$ a minimal representative of $A$, $R_X$ a minimal representative of $X$, and $R_Y$ a minimal representative of $Y$.
Let $E_1$ be an $X$-restricted improvement embedding and $E_2$ a $Y$-restricted improvement embedding.
Let $E_X$ be the $X_{R_X}$-representative of $E_1$ and $E_Y$ the $Y_{R_Y}$-representative of $E_2$.
If $E_X$ and $E_Y$ are compatible, then $E_1 \cup E_2$ is an $A$-restricted improvement embedding and the composition of $E_X$ and $E_Y$ is the $A_R$-representative of $E_1 \cup E_2$.
\end{lemma}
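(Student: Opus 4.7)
The plan is to derive this lemma as the componentwise application of \autoref{lem:dpcompmain} to each of the six embeddings packaged in $E_1$ and $E_2$, together with a straightforward bookkeeping of the tripartitions.

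First I would unpack the notation. Since $E_X$ and $E_Y$ are (in particular $C$-)compatible, they share the same shape, so I can write $E_1 = (f^C_1, f^C_2, f^C_3, f^W_1, f^W_2, f^W_3, k_1, k_2, k_3, l)$ with tripartition $(X_1, X_2, X_3)$ of $X$, and $E_2 = (g^C_1, g^C_2, g^C_3, g^W_1, g^W_2, g^W_3, k_1, k_2, k_3, l)$ with tripartition $(Y_1, Y_2, Y_3)$ of $Y$. Note that within $E_1$ the $C$-embeddings and the $W$-embeddings of index $i$ share the common part $X_i$ by definition of an improvement embedding (and symmetrically for $E_2$). Set $C_i = X_i \cup Y_i$; because $X$ and $Y$ are disjoint and each of $(X_1,X_2,X_3)$, $(Y_1,Y_2,Y_3)$ is a tripartition, $(C_1, C_2, C_3)$ is a tripartition of $A = X \cup Y$, with $X \cap C_i = X_i$ and $Y \cap C_i = Y_i$.

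Second, I would verify the embedding claims for each of the six pairs. For each $i \in \{1,2,3\}$, the map $f^C_i$ is an embedding of $G[X \cap C_i, X \setminus C_i]$ into $R_{k_i}$, and $g^C_i$ is an embedding of $G[Y \cap C_i, Y \setminus C_i]$ into $R_{k_i}$. By hypothesis $E_X$ and $E_Y$ are compatible, so the representatives of $f^C_i$ and $g^C_i$ (i.e., the $i$th $C$-components of $E_X$ and $E_Y$) are compatible. Applying \autoref{lem:dpcompmain} with the set $C_i$ in the role of $C$ and with $H = R_{k_i}$, I obtain that $f^C_i \cup g^C_i$ is an embedding of $G[A \cap C_i, A \setminus C_i]$ into $R_{k_i}$, and moreover its $A_R$-representative equals the composition of the two given representatives. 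The identical argument applied to $f^W_i$ and $g^W_i$ (using $H = R_l$) yields that $f^W_i \cup g^W_i$ is an embedding of $G[A \cap C_i, A \setminus C_i]$ into $R_l$, whose $A_R$-representative is the composition of the $W$-representatives.

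Combining these six statements, $E_1 \cup E_2$ is an $A$-restricted improvement embedding with tripartition $(C_1, C_2, C_3)$ and shape $(k_1, k_2, k_3, l)$, and by the definition of composition of representatives of improvement embeddings (pointwise composition in each coordinate), the composition of $E_X$ and $E_Y$ is exactly the $A_R$-representative of $E_1 \cup E_2$. This completes the proof.

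There is essentially no genuine obstacle: the lemma is just the tuple-level reformulation of \autoref{lem:dpcompmain}. The only piece of care is to observe that compatibility forces the shapes to agree, so that the unions $f^C_i \cup g^C_i$ and $f^W_i \cup g^W_i$ are well-defined and live in the correct $R_{k_i}$ and $R_l$, and that the $C$- and $W$-components of each $E_j$ share the same tripartition of $X$ (resp.\ $Y$), which makes $C_i = X_i \cup Y_i$ a single well-defined set that serves both $f^C_i \cup g^C_i$ and $f^W_i \cup g^W_i$.
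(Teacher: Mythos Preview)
Your proposal is correct and follows essentially the same approach as the paper: both proofs unpack the two improvement embeddings, note the shared shape, and then apply \autoref{lem:dpcompmain} componentwise to each of the six embeddings to conclude that $E_1 \cup E_2$ is an $A$-restricted improvement embedding whose $A_R$-representative is the pointwise composition. Your additional remarks about the shapes agreeing and the shared tripartition are harmless extra bookkeeping.
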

\begin{proof}
Denote \[E_1 = (f^C_1, f^C_2, f^C_3, f^W_1, f^W_2, f^W_3, k_1, k_2, k_3, l)\] and
\[E_2 = (g^C_1, g^C_2, g^C_3, g^W_1, g^W_2, g^W_3, k_1, k_2, k_3, l).\]
Let $(C^X_1, C^X_2, C^X_3)$ be the tripartition of $E_1$ and $(C^Y_1, C^Y_2, C^Y_3)$ the tripartition of $E_2$.
By \autoref{lem:dpcompmain} it holds for every $i$ that $f^C_i \cup g^C_i$ is an embedding of $G[C^X_i \cup C^Y_i, A \setminus (C^X_i \cup C^Y_i)]$ to $R_{k_i}$ and $f^W_i \cup g^W_i$ is an embedding of $G[C^X_i \cup C^Y_i, A \setminus (C^X_i \cup C^Y_i)]$ to $R_l$.
Therefore
\[E = (f^C_1 \cup g^C_1, f^C_2 \cup g^C_2, f^C_3 \cup g^C_3, f^W_1 \cup g^W_1, f^W_2 \cup g^W_2, f^W_3 \cup g^W_3, k_1, k_2, k_3, l)\]
is an improvement embedding with tripartition $(C^X_1 \cup C^Y_1, C^X_2 \cup C^Y_2, C^X_3 \cup C^Y_3)$.
By \autoref{lem:dpcompmain} the composition of $E_X$ and $E_Y$ is the $A_R$-representative of $E$.
\end{proof}

\paragraph{Finding the \splitW.}
In the dynamic programming we will determine if there exists a \splitW based on the representatives of $W$-restricted improvement embeddings and the representatives of $\oW$-restricted improvement embeddings.
For this purpose we will define the root-compatibility of two representatives of improvement embeddings, characterizing whether they can be merged to yield a \splitW.

Let $W \subseteq V(G)$, $R_W$ a minimal representative of $W$, and $R_{\oW}$ a minimal representative of $\oW$.
Let also $E_W = (f^C_1, f^C_2, f^C_3, f^W_1, f^W_2, f^W_3, k_1, k_2, k_3, l)$ be a $W_{R_W}$-representative of a $W$-restricted improvement embedding and $E_\oW = (g^C_1, g^C_2, g^C_3, g^W_1, g^W_2, g^W_3, k_1, k_2, k_3, l)$ a $\oW_{R_\oW}$-representative of an $\oW$-restricted improvement embedding so that $E_W$ and $E_\oW$ have the same shape.

\begin{definition}[Root-compatibility]
$E_W$ and $E_\oW$ are root-compatible if they are $C$-compatible and
\begin{enumerate}
\item for each $i$, there exists an embedding $f^\oW_i$ of $G[\emptyset, \oW]$ into $R_l$ so that $f^W_i$ is compatible with the $\oW_{R_\oW}$-representative of $f^\oW_i$\label{def:rootcomp:c1} and
\item for each $i$, there exists an embedding $g^\oW_i$ of $G[\emptyset, W]$ into $R_l$ so that $g^W_i$ is compatible with the $W_{R_W}$-representative of $g^\oW_i$.
\end{enumerate}
\end{definition}

The definition does not directly provide an efficient algorithm for checking root-compatibility, but a couple of observations and brute force yields a $2^{2^{\OO(k)}}$ time algorithm as follows.

\begin{lemma}
\label{lem:rootcompcheck}
Let $\cutrank(W) \le k$ and let the improvement embeddings be $k$-bounded.
Given $E_W$, $E_\oW$, $R_W$, and $R_\oW$, the root-compatibility of $E_W$ and $E_\oW$ can be checked in time $2^{2^{\OO(k)}}$.
\end{lemma}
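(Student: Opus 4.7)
My plan for proving \autoref{lem:rootcompcheck} is as follows. The $C$-compatibility of $E_W$ and $E_\oW$ can be verified directly from its definition in $2^{\OO(k)}$ time, so the real task is to check the two existence conditions of root-compatibility; by the symmetry between $W$ and $\oW$, I focus on condition~(1), as condition~(2) is handled analogously. The key observation I would exploit is that $G[\emptyset,\oW]$ has no edges, so an embedding $f^\oW_i$ of $G[\emptyset,\oW]$ into $R_l$ (with bipartition $(A_H,B_H)$) is nothing more than an assignment of $\emptyset$ to every $v\in A_H$ together with a partition of $\oW$ indexed by $B_H$---the edge-condition in the definition of an embedding is vacuous here. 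Consequently its $\oW_{R_\oW}$-representative is an arbitrary function $g:V(R_l)\to 2^{R_\oW}$ vanishing on $A_H$ whose $B_H$-values union to $R_\oW$. Since $|V(R_l)|\le 2^{l+1}$ and $|R_\oW|\le 2^k$ with $l\le k$, the number of such candidates $g$ is at most $(2^{|R_\oW|})^{|V(R_l)|}=2^{2^{\OO(k)}}$.

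The algorithm I propose enumerates these candidates $g$, and for each checks (i) whether $g$ is \emph{achievable}---i.e., is the $\oW_{R_\oW}$-representative of some embedding of $G[\emptyset,\oW]$ into $R_l$---and (ii) whether $g$ is compatible with $f^W_i$, which is a direct polynomial-time check from the definition of compatibility. The existence of a $g$ passing both tests is equivalent to condition~(1) for the given index $i$. For the achievability characterization, I would set $\mathrm{pre}_\oW(u)=\{w\in\oW:\rep_{\oW_{R_\oW}}(w)=u\}$ for each $u\in R_\oW$ and $B_u=\{v\in B_H:u\in g(v)\}$. Then $g$ is achievable iff one can distribute each $\mathrm{pre}_\oW(u)$ across $B_u$ so that every element of $B_u$ receives at least one vertex; this is feasible precisely when $\bigcup_{v\in B_H}g(v)=R_\oW$ (already built into our enumeration) and $|\mathrm{pre}_\oW(u)|\ge|B_u|$ for every $u\in R_\oW$.

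The main obstacle I anticipate is not the combinatorics but the bookkeeping, namely arranging for the required preimage quantities to be available in $2^{\OO(k)}$ time per query. Since $|B_u|\le 2^l\le 2^k$, it suffices for the refinement data structure to maintain, for each stored minimal representative, the saturated counts $\min(|\mathrm{pre}_\oW(u)|,2^{k+1})$ for each representative vertex $u$; these can be updated alongside the representatives themselves during the dynamic programming. Once these counts are available, the overall running time of the root-compatibility check is $2^{2^{\OO(k)}}$ candidates times $2^{\OO(k)}$ per-candidate work, yielding the claimed $2^{2^{\OO(k)}}$ bound; condition~(2) is handled symmetrically, using the preimage counts of $R_W$-representatives within $W$.
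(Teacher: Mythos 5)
Your proposal is correct in substance and enumerates essentially the same $2^{2^{\OO(k)}}$-sized candidate space as the paper (functions from $B_H$ to subsets of $R_{\oW}$), but it justifies completeness by a genuinely different mechanism. The paper uses an exchange argument: if two vertices of $\oW$ have the same neighborhood in $W$, one can be moved into the other's part of the embedding without affecting compatibility with any $W_{R_W}$-representative, so one may assume each twin class of $\oW$ lands in a single vertex of $B_H$; the embedding is then determined by its intersection with $R_\oW$, achievability is automatic, and no multiplicity information is ever needed. You instead enumerate candidate representatives $g$ directly and characterize exactly which are achievable, via the condition $|\mathrm{pre}_\oW(u)|\ge|B_u|$ for all $u\in R_\oW$ together with coverage; this characterization is correct (the classes $\mathrm{pre}_\oW(u)$ partition $\oW$, each must be distributed onto exactly the set $B_u$ with every element of $B_u$ hit, which is feasible iff the stated inequality holds). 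The price of your route is the dependency you yourself flag: the algorithm needs the saturated preimage counts $\min(|\mathrm{pre}_\oW(u)|,2^{k+1})$, which the refinement data structure as defined in the paper does not store, so you would have to thread this extra bookkeeping through the initialization, the iterative-compression step, and the recomputation of representatives after each refinement. That maintenance is plausible (twin classes of a union are unions of twin classes of the parts, so counts add), but it is additional machinery that the paper's exchange argument renders unnecessary; on the other hand, your version yields an exact decision procedure over all representatives rather than a normalized subfamily, which some may find conceptually cleaner.
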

\begin{proof}
We prove the case (\ref{def:rootcomp:c1}), the other case is symmetric.

Suppose there exists such an embedding $f^\oW_i$.
Let $(A_H, B_H)$ be the bipartition of $R_l$ and let $u_H, v_H \in B_H$ be a pair of distinct vertices in $B_H$.
Suppose that there exists $u,v \in \oW$ such that $N(u) \cap W = N(v) \cap W$, $u \in f^\oW_i(u_H)$, and $v \in f^\oW_i(v_H)$.
Note that now, if we remove $u$ from $f^\oW_i(u_H)$ and insert it into $f^\oW_i(v_H)$, we obtain an embedding whose $\oW_{R_\oW}$-representative is compatible with the same $W_{R_W}$-representatives as $f^\oW_i$.
Therefore, we can assume for any $u,v \in \oW$ that if $N(u) \cap W = N(v) \cap W$, then there exists $v_H \in B_H$ so that $\{u,v\} \subseteq f^\oW_i(v_H)$.

As for any $v \in \oW$ there is a vertex $v_R \in R_\oW$ with $N(v) \cap W = N(v_R) \cap W$, it is sufficient to enumerate all intersections $f^\oW_i \cap R_\oW$ and assign each $v$ to the same vertex of $B_H$ as $v_R$.
Note that in this case, the $\oW_{R_\oW}$-representative of $f^\oW_i$ depends only on the intersection $f^\oW_i \cap R_\oW$.
The number of the intersections is $\le \left(2^{|R_\oW|}\right)^{|B_H|} \le (2^{2^k})^{2^k} = 2^{2^{\OO(k)}}$ and each can be checked in time $2^{\OO(k)}$.
\end{proof}

We also introduce the definition of $C_i$-emptiness to denote whether none of the vertices have been assigned to $C_i$ in the tripartition.

\begin{definition}[$C_i$-empty]
Let $E_R = (f^C_1, f^C_2, f^C_3, \ldots)$ be the $A_R$-representative of an $A$-restricted improvement embedding.
Let $i \in \{1,2,3\}$, and let $(A_H, B_H)$ be the bipartition of $R_{k_i}$.
$E_R$ is $C_i$-empty if for every $v \in A_H$ it holds that $f^C_i(v) = \emptyset$.
\end{definition}

The definition of $C_i$-emptiness is for determining which of the parts of a corresponding tripartition intersect $A$.

\begin{observation}
\label{obs:cempty}
Let $E_R$ be the $A_R$-representative of an $A$-restricted improvement embedding $E$ and $(C_1, C_2, C_3)$ the tripartition of $E$.
It holds that $C_i = \emptyset$ if and only if $E_R$ is $C_i$-empty.
\end{observation}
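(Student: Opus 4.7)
The proof is a direct unfolding of definitions, so I expect it to be short and essentially without obstacle. The plan is to reduce both sides of the biconditional to the common statement ``$f^C_i(v) = \emptyset$ for every $v \in A_H$,'' where $(A_H, B_H)$ is the bipartition of $R_{k_i}$, and then chain the two equivalences.

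First I would handle the $C_i$-side. By the definition of an $A$-restricted improvement embedding, $f^C_i$ is an embedding of $G[A\cap C_i, A\cap \overline{C_i}]$ into $R_{k_i}$. The defining property of an embedding then gives $A\cap C_i = \bigcup_{v\in A_H} f^C_i(v)$. Because $(C_1,C_2,C_3)$ is a tripartition of $A$, we have $C_i \subseteq A$, so $C_i = A\cap C_i = \bigcup_{v\in A_H} f^C_i(v)$. Hence $C_i = \emptyset$ if and only if $f^C_i(v) = \emptyset$ for every $v \in A_H$.

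Next I would handle the $E_R$-side. The entries of $E_R$ are, by definition, $g^C_i(v) = \rep_{A_R}(f^C_i(v)) = \bigcup_{u\in f^C_i(v)} \rep_{A_R}(u)$. For every $u\in A$, the set $\rep_{A_R}(u)$ is a (non-empty) singleton in $R$, so $g^C_i(v) = \emptyset$ if and only if $f^C_i(v) = \emptyset$. Consequently $E_R$ is $C_i$-empty, i.e., $g^C_i(v) = \emptyset$ for all $v\in A_H$, if and only if $f^C_i(v) = \emptyset$ for all $v\in A_H$. Combining the two equivalences yields the observation. The only sanity check worth performing is that ``$A_H$-side'' in both the definition of embedding and the definition of $C_i$-emptiness really refers to the same part of $R_{k_i}$ (the part mapped to $A\cap C_i$), which follows from our convention that $G[X,Y]$ is given the bipartition $(X,Y)$.
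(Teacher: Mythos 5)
Your proof is correct and is exactly the intended argument: the paper states this as an observation without proof, and the direct unfolding you give — $C_i = A\cap C_i = \bigcup_{v\in A_H} f^C_i(v)$ from the embedding definition, plus the fact that $\rep_{A_R}(X)$ is empty iff $X$ is — is precisely why it is immediate. No issues.
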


Finally, we describe how to find \splitWs based on representatives of improvement embeddings.

\begin{lemma}
\label{lem:dpmainfinal}
Let $T$ be a rank decomposition, $uv = r \in E(T)$, and $W = T[uv]$.
Denote $X = W = T_r[u]$, $Y = \oW = T_r[v]$, and let $R_X$ be a minimal representative of $X$ and $R_Y$ a minimal representative of $Y$.
There exists a \splitW $(C_1, C_2, C_3)$ with arity $\alpha$ and $\cutrank(C_i) \le k_i$ for each $i \in \{1,2,3\}$  
if and only if there exists $E_u$ and $E_v$ such that  
\begin{enumerate}
\item $E_u$ is the $X_{R_X}$-representative of an $X$-restricted improvement embedding with shape $(k_1, k_2, k_3, l)$ whose tripartition $(C_1 \cap X, C_2 \cap X, C_3 \cap X)$ is\label{lem:dpmainfinal:c1},
\item $E_v$ is the $Y_{R_Y}$-representative of an $Y$-restricted improvement embedding with shape $(k_1, k_2, k_3, l)$ whose tripartition $(C_1 \cap Y, C_2 \cap Y, C_3 \cap Y)$ is\label{lem:dpmainfinal:c2},
\item $E_u$ and $E_v$ are root-compatible,
\item $k_i < \cutrank(W)/2$ for all $i$, $l < \cutrank(W)$, and
\item $\alpha = 2$ if there exists $i$ such that both $E_u$ and $E_v$ are $C_i$-empty and otherwise $\alpha = 3$.
\end{enumerate}
\end{lemma}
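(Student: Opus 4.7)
The plan is to prove both implications by translating between $W$-improvements and pairs of representatives of improvement embeddings via Proposition \ref{lem:cutrankemb}, with the key technical tools being Lemma \ref{lem:compatible_onlyif} (restrictions of a shared embedding give compatible representatives) and Lemma \ref{lem:dpcompmain} (compatible representatives lift to a genuine union embedding).

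For the forward direction, starting from a $W$-improvement $(C_1, C_2, C_3)$ of arity $\alpha$ with $\cutrank(C_i) \le k_i$, I first invoke Proposition \ref{lem:cutrankemb} to fix an embedding $F^C_i$ of $G[C_i, \overline{C_i}]$ into $R_{k_i}$, and embeddings $F^W_i$ of $G[C_i \cap W, \overline{C_i \cap W}]$ and $F^{\overline{W}}_i$ of $G[C_i \cap \overline{W}, \overline{C_i \cap \overline{W}}]$ into a common $R_l$, where $l = \max_i \max(\cutrank(C_i \cap W), \cutrank(C_i \cap \overline{W})) < \cutrank(W)$ by the definition of a $W$-improvement. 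Define $E_u$ as the $X_{R_X}$-representative of the improvement embedding whose components are the $X$-restrictions $F^C_i \cap X$ and $F^W_i \cap X$ (which are embeddings into $R_{k_i}$ and $R_l$ by Observation \ref{obs:embint}), and $E_v$ analogously using $Y$-restrictions of $F^C_i$ and $F^{\overline{W}}_i$. Conditions (\ref{lem:dpmainfinal:c1}) and (\ref{lem:dpmainfinal:c2}) then hold by construction, and the $C$-compatibility of $E_u$ and $E_v$ follows from Lemma \ref{lem:compatible_onlyif} applied to the common embedding $F^C_i$. For part (1) of root-compatibility, the required witness is $f^{\overline{W}}_i := F^W_i \cap \overline{W}$: since $C_i \cap W \subseteq W$, its $A$-side is empty and it is an embedding of $G[\emptyset, \overline{W}]$ into $R_l$, and Lemma \ref{lem:compatible_onlyif} applied to $F^W_i$ gives the required compatibility; part (2) is symmetric. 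The bounds $k_i < \cutrank(W)/2$ and $l < \cutrank(W)$ are immediate, and the arity condition follows from Observation \ref{obs:cempty}.

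For the backward direction, $E_u$ and $E_v$ are by definition the $X_{R_X}$- and $Y_{R_Y}$-representatives of some improvement embeddings $E_1$ and $E_2$, whose tripartitions together define $(C_1, C_2, C_3)$. Applying Lemma \ref{lem:dpcompmain} to the compatible $C$-components of $E_u, E_v$ produces an embedding of $G[C_i, \overline{C_i}]$ into $R_{k_i}$, so Proposition \ref{lem:cutrankemb} gives $\cutrank(C_i) \le k_i < \cutrank(W)/2$. For $\cutrank(C_i \cap W)$, part (1) of root-compatibility provides an embedding of $G[\emptyset, \overline{W}]$ into $R_l$ whose representative is compatible with the $W$-component of $E_u$; Lemma \ref{lem:dpcompmain} applied with $C := C_i \cap W$, $X := W$, $Y := \overline{W}$ then combines the underlying embeddings into an embedding of $G[C_i \cap W, \overline{C_i \cap W}]$ into $R_l$, yielding $\cutrank(C_i \cap W) \le l < \cutrank(W)$. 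The bound $\cutrank(C_i \cap \overline{W}) \le l$ follows symmetrically from part (2), so $(C_1, C_2, C_3)$ is a $W$-improvement, and its arity is read off via Observation \ref{obs:cempty}.

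The main conceptual hurdle is unpacking the asymmetric role of root-compatibility, which encodes the extension of a cut lying inside $W$ (respectively $\overline{W}$) to the corresponding cut of the full graph $G$ by placing the opposite side of $V(G)$ entirely on the $B$-side of $R_l$. Once this is recognised as an instance of Lemma \ref{lem:dpcompmain} in which one of the two merging embeddings is the trivial one-sided embedding of $G[\emptyset, \overline{W}]$ or $G[\emptyset, W]$, the remainder of the argument is routine bookkeeping of restrictions and unions of embeddings.
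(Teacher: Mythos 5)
Your proposal is correct and follows essentially the same route as the paper's proof: Proposition~\ref{lem:cutrankemb} plus restriction (Observation~\ref{obs:embint}) and Lemma~\ref{lem:compatible_onlyif} for one direction, and Lemma~\ref{lem:dpcompmain} (including the one-sided $G[\emptyset,\oW]$ instance hidden in root-compatibility) for the other. The only cosmetic difference is your choice of $l$ as the maximum of the relevant cut-ranks rather than the paper's $l=\cutrank(W)-1$; both are valid since the definition of a \splitW bounds all of these by $\cutrank(W)-1$.
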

\begin{proof}
\textbf{If direction:}

Let \[E_X = (f^C_1, f^C_2, f^C_3, f^W_1, f^W_2, f^W_3, k_1, k_2, k_3, l)\]
 be a $X$-restricted improvement embedding whose $X_{R_X}$-representative $E_u$ is and whose tripartition $(C_1 \cap X, C_2 \cap X, C_3 \cap X)$ is.
Let also \[E_Y = (g^C_1, g^C_2, g^C_3, g^\oW_1, g^\oW_2, g^\oW_3, k_1, k_2, k_3, l)\]
 be a $Y$-restricted improvement embedding whose $Y_{R_Y}$-representative $E_u$ is and whose tripartition $(C_1 \cap Y, C_2 \cap Y, C_3 \cap Y)$ is.

As $E_u$ and $E_v$ are $C$-compatible, \autoref{lem:dpcompmain} implies that for each $i$ $f^C_i \cup g^C_i$ is an embedding of $G[C_i, \overline{C_i}]$ to $k_i$.
Therefore by \autoref{lem:cutrankemb} it holds that $\cutrank(C_i) \le k_i$.

As $E_u$ and $E_v$ are root-compatible, by the definition of root-compatibility and \autoref{lem:dpcompmain} for each $i$ there exists an embedding $f^\oW_i$ of $G[\emptyset, \oW]$ to $R_l$ so that $f^W_i \cup f^\oW_i$ is an embedding of $G[C_i \cap W, (\overline{C_i} \cap W) \cup \oW] = G[C_i \cap W, \overline{C_i \cap W}]$ to $R_l$.
Therefore by \autoref{lem:cutrankemb} it holds that $\cutrank(C_i \cap W) \le l$.

Symmetrically by root-compatibility for each $i$ there exists an embedding $g^W_i$ of $G[\emptyset, W]$ to $R_l$ so that $g^\oW_i \cup g^W_i$ is an embedding of $G[C_i \cap \oW, (\overline{C_i} \cap \oW) \cup W] = G[C_i \cap \oW, \overline{C_i \cap \oW}]$ to $R_l$.
Therefore by \autoref{lem:cutrankemb} it holds that $\cutrank(C_i \cap \oW) \le l$.

Finally note that $C_i = \emptyset$ if and only if both $E_u$ and $E_v$ are $C_i$-empty.

\textbf{Only if direction:}

Let $(C_1, C_2, C_3)$ be a \splitW of arity $\alpha$ where $\cutrank(C_1) = k_1$, $\cutrank(C_2) = k_2$, and $\cutrank(C_3) = k_3$.
Let also $l = \cutrank(W)-1$.
By \autoref{lem:cutrankemb}, for each $i$ there exists an embedding $f^C_i$ of $G[C_i, \overline{C_i}]$ to $R_{k_i}$, an embedding $f^W_i$ of $G[C_i \cap W, \overline{C_i \cap W}]$ to $R_l$, and an embedding $f^\oW_i$ of $G[C_i \cap \oW, \overline{C_i \cap \oW}]$ to $R_l$.

Let \[E_X = (f^C_1 \cap X, f^C_2 \cap X, f^C_3 \cap X, f^W_1 \cap X, f^W_2 \cap X, f^W_3 \cap X, k_1, k_2, k_3, l)\]
and note that the tripartition of $E_X$ is $(C_1 \cap X, C_2 \cap X, C_3 \cap X)$.
Let also \[E_Y = (f^C_1 \cap Y, f^C_2 \cap Y, f^C_3 \cap Y, f^\oW_1 \cap Y, f^\oW_2 \cap Y, f^\oW_3 \cap Y, k_1, k_2, k_3, l)\] and note that the tripartition of $E_Y$ is $(C_1 \cap Y, C_2 \cap Y, C_3 \cap Y)$.
Let $E_u$ be the $X_{R_X}$-representative of $E_X$ and $E_v$ the $Y_{R_Y}$-representative of $E_Y$.
Note that now, $E_u$ and $E_v$ satisfy (\ref{lem:dpmainfinal:c1}) and (\ref{lem:dpmainfinal:c2}).

By \autoref{lem:compatible_onlyif}, $E_u$ and $E_v$ are $C$-compatible.
Note that also by \autoref{lem:compatible_onlyif}, for each $i$, $f^W_i \cap Y$ is an embedding of $G[\emptyset, \oW]$ to $R_l$ whose $Y_{R_Y}$-representative is compatible with the $X_{R_X}$-representative of $f^W_i \cap X$, and $f^\oW_i \cap X$ is an embedding of $G[\emptyset, W]$ whose $X_{R_X}$-representative is compatible with the $Y_{R_Y}$-representative of $f^\oW_i \cap Y$.
Therefore $E_u$ and $E_v$ are root-compatible by the definition of root-compatibility.

If $(C_1, C_2, C_3)$ has arity 2, then there is $i$ such that $C_i = \emptyset$, which implies that both $E_u$ and $E_v$ are $C_i$-empty.
Otherwise each $C_i$ intersects with at least one of $X$ or $Y$, and therefore for each $i$ at least one of $E_u$ or $E_v$ is not $C_i$-empty.
\end{proof}

\subsubsection{Refinement data structure for augmented rank decompositions\label{subsec:rankref}}
In this subsection we provide the refinement data structure for augmented rank decompositions using dynamic programming building on the previous subsections.
Troughout this subsection, we assume that we are maintaining an augmented rank decomposition $(T, \reps)$ of width $\rw(T) \le k$ and that therefore we are only interested in $k$-bounded improvement embeddings.

\paragraph{Concrete representatives.}
In order to maintain an augmented rank decomposition in the refinement operation, we need to construct a minimal representative of each set $C_i$ of the \splitW $(C_1, C_2, C_3)$.
To do this, we maintain ``concrete representatives'' in the dynamic programming.

\begin{definition}[Concrete representative]
Let $H$ be a bipartite graph, $A \subseteq V(G)$, and $f : V(H) \rightarrow 2^{A}$ a function.
A concrete representative of $f$ is a function $g$ defined in \autoref{observation:embrep}, i.e., a function $g$ mapping each $v \in V(H)$ to a subset of $f(v)$ as follows:
\[g(v) =
\left\{
\begin{array}{ll}
\{u\} \text{ where } u \in f(v)&\text{if } f(v) \text{ is non-empty}\\
\emptyset&\text{otherwise.}\\
\end{array}
\right.
\]

Let $E = (f^C_1, f^C_2, f^C_3, \ldots)$ be an $A$-restricted improvement embedding.
A concrete representative of $E$ is a triple $(g^C_1, g^C_2, g^C_3)$, where each $g^C_i$ is a concrete representative of $f^C_i$.
\end{definition}

Note that a concrete representative can be represented in $\OO(|V(H)|)$ space.
By \autoref{observation:embrep}, a concrete representative of an embedding of $G[C_i, \overline{C_i}]$ into $H$ can be turned into a representative of $(C_i, \overline{C_i})$ of size $|V(H)|$.
In particular, using a concrete representative of a $V(G)$-restricted improvement embedding with tripartition $(C_1, C_2, C_3)$ we can compute a minimal representatives of each $C_i$ in time $2^{\OO(k)}$.

We define the union of two concrete representatives naturally.

\begin{definition}[Union of concrete representatives]
Let $f_R$ be a concrete representative of a function $f : V(H) \rightarrow 2^X$ and $g_R$ a concrete representative of a function $g : V(H) \rightarrow 2^Y$.
We denote by $f_R \cup g_R$ the function mapping each $v \in V(H)$ to a subset of $f(v) \cup g(v)$ as follows:
\[
(f_R \cup g_R)(v) = \left\{
\begin{array}{ll}
f_R(v), & \text{ if } f_R(v) \neq \emptyset\\
g_R(v), & \text{ otherwise.}\\
\end{array}
\right.
\]

The union of concrete representatives of improvement embeddings is the pointwise union of such triples of concrete representatives.
\end{definition}

Observe that such $f_R \cup g_R$ is a concrete representative of $f \cup g$.

\paragraph{Dynamic programming tables.}
In the refinement data structure we maintain a dynamic programming table for each node $w \in V(T)$.
We call this table the \emph{$r$-table} of the node $w$ to signify that this table contains information about the $r$-subtree of $w$.
Next we formally define an $r$-table.

\begin{definition}[$r$-table]
Let $(T, \reps)$ be an augmented rank decomposition, $r \in E(T)$, $w \in V(T)$, $A = T_r[w]$, and $R = \reps_r[w]$.
An $r$-table of $w$ is a triple $(\embs, \intcount, \crep)$, 
where 
\begin{enumerate}
\item $\embs$ is the set of all $A_R$-representatives of $k$-bounded $A$-restricted improvement embeddings, 
\item $\intcount$ is a function mapping each $E_R \in \embs$ to the least integer $i$ such that there exists an $A$-restricted improvement embedding $E$ whose $A_R$-representative $E_R$ is and which $r$-intersects $i$ nodes of the $r$-subtree of $w$, and 
\item $\crep$ is a function mapping each $E_R \in \embs$ to a concrete representative of an $A$-restricted improvement embedding $E$ such that $E_R$ is the $A_R$-representative of $E$ and $E$ $r$-intersects $\intcount(E_R)$ nodes of the $r$-subtree of $w$.
\end{enumerate}
\end{definition}

Note that an $r$-table can be represented by making use of  $2^{2^{\OO(k)}}$ space.

To correctly construct the refinement that matches the concrete representation obtained, we need to spell out some additional properties of $r$-tables, which will be naturally satisfied by the way the $r$-tables will be constructed.

\begin{definition}[Local and global representation]
Let $(T, \reps)$ be an augmented rank decomposition, $r \in E(T)$, $w \in V(T)$, $A = T_r[w]$, and $R = \reps_r[w]$.
An $r$-table $(\embs, \intcount, \crep)$ of $w$ locally represents an $A$-restricted improvement embedding $E$ if there exists $E_R \in \embs$ so that $E_R$ is the $A_R$-representative of $E$, $\intcount(E_R)$ is the number of nodes in the $r$-subtree of $w$ $r$-intersected by $E$, and $\crep(E_R)$ is a concrete representative of $E$.
The $r$-table of $w$ globally represents $E$ if the $r$-tables of all nodes $w'$ in the $r$-subtree of $w$ (including $w$ itself) locally represent $E \cap T_r[w']$.
\end{definition}

\begin{definition}[Linked $r$-table]
A linked $r$-table of $w$ is a 4-tuple $(\embs, \intcount, \crep, \dplink)$ so that $(\embs, \intcount, \crep)$ is an $r$-table of $w$ and for each $E_R \in \embs$ there exists an $A$-restricted improvement embedding $E$ so that 
\begin{enumerate}
\item $E_R$ is the $A_R$-representative of $E$,
\item $(\embs, \intcount, \crep)$ globally represents $E$, and
\item if $w$ is non-leaf and has $r$-children $w_1$ and $w_2$ with $r$-tables $(\embs_1, \intcount_1, \crep_1)$ and $(\embs_2, \intcount_2, \crep_2)$, then $\dplink$ is a function mapping $E_R$ to a pair $\dplink(E_R) = (E_1, E_2)$ such that $E_1 \in \embs_1$, $E_2 \in \embs_2$, $E_1$ is the $T_r[w_1]_{\reps_r[w_1]}$-representative of $E \cap T_r[w_1]$, and $E_2$ is the $T_r[w_2]_{\reps_r[w_2]}$-representative of $E \cap T_r[w_2]$.
\end{enumerate}
\end{definition}

The existence of a linked $r$-table will be formally proved in \autoref{lem:nonleafrtable} when also its construction is given.
Note that a linked $r$-table of a node $w$ can be represented in $2^{2^{\OO(k)}}$ space.
We start implementing the dynamic programming from the leaves.

\begin{lemma}
Let $(T, \reps)$ be an augmented rank decomposition, $r \in E(T)$, and $w$ a leaf node of $T$.
A linked $r$-table of $w$ can be constructed in time $2^{\OO(k)}$.
\end{lemma}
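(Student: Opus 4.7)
The plan is to exploit that for a leaf $w$ we have $A = T_r[w] = \{v\}$ for a single vertex $v$, and hence $R = \reps_r[w] = \{v\}$ as well. This collapses almost every definition: any tripartition $(C_1, C_2, C_3)$ of $A$ is determined by the single index $j \in \{1,2,3\}$ with $v \in C_j$ (the other two parts being empty), and the $A_R$-representative of an embedding $f : V(R_{k_i}) \to 2^A$ is literally $f$ itself, because $R = A$. So the entries of the table are in bijection with the valid improvement embeddings themselves, and there is no representative collapsing to worry about.

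My first step is to enumerate all $\OO(k^4)$ shapes $(k_1,k_2,k_3,l)$ with $k_1,k_2,k_3,l \le k$ and all three choices of the index $j \in \{1,2,3\}$. For each fixed shape and $j$, a valid embedding of $G[A\cap C_j, A\cap \overline{C_j}] = G[\{v\},\emptyset]$ into $R_{k_j}$ is exactly the choice of a single $A_H$-side vertex of $R_{k_j}$ to which $v$ is mapped (all other images empty), giving $2^{k_j}$ options; likewise $f^C_i$ with $i\ne j$ is determined by picking one $B_H$-side vertex of $R_{k_i}$, giving $2^{k_i}$ options. The embeddings $f^W_1,f^W_2,f^W_3$ into $R_l$ are chosen analogously with respect to $j$, each contributing $2^l$ options. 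All edge-compatibility conditions of being an embedding are vacuously satisfied because one of the two sides of each $f^C_i$ or $f^W_i$ is empty. Hence for each shape and each $j$ the number of valid $A$-restricted improvement embeddings is $2^{\OO(k)}$.

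For every enumerated embedding $E$, I set $E_R := E$ (legal since $R = A$), insert $E_R$ into $\embs$, set $\intcount(E_R) := 0$ (the only node in the $r$-subtree of $w$ is $w$ itself, and since $|T_r[w]| = 1$ the tripartition of $E$ cannot intersect it), and set $\crep(E_R)$ to the triple $(g^C_1,g^C_2,g^C_3)$ where $g^C_i := f^C_i$ (each image is already a subset of $\{v\}$, so this satisfies the definition of a concrete representative trivially). Because $w$ has no $r$-children, the function $\dplink$ is defined on the empty collection of child pairs, so the linked-table condition on $\dplink$ is vacuous. Global representation reduces to local representation at $w$ alone, which holds by construction.

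The only thing to double-check is bookkeeping: distinct embeddings can yield the same $A_R$-representative only if they agree as functions, but here $E_R = E$ identically, so each enumerated $E$ produces a distinct table entry and no overwriting issues arise. Total work is $\OO(k^4) \cdot 2^{\OO(k)} = 2^{\OO(k)}$ for the enumeration plus $\OO(k)$-size elementary operations to store each entry, which fits in the claimed $2^{\OO(k)}$ bound. I expect no real obstacle; the content of the lemma is essentially the observation that the leaf case of the dynamic program is a finite case analysis parameterized by shapes.
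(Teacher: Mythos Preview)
Your proposal is correct and follows essentially the same approach as the paper. The paper's proof is simply a terser version of yours: it observes that since $|A|=1$ there are only $2^{\OO(k)}$ $A$-restricted improvement embeddings, iterates over all of them to build the table directly, and notes that the table is automatically linked because $A$ is its own minimal representative and hence the $A_A$-representatives are in bijection with the embeddings themselves.
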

\begin{proof}
Let $A = T_r[w]$.
As $|A| = 1$, the number of $A$-restricted improvement embeddings is $2^{\OO(k)}$ and we can iterate over all of them and construct the $r$-table directly by definition.
Moreover, the $r$-table is by definition linked because $A$ is the minimal representative of itself and so there is bijection between $A$-restricted improvement embeddings and their $A_A$-representatives.
\end{proof}

The next lemma is the main dynamic programming lemma.
It specifies the computation of linked $r$-tables for non-leaf nodes.

\begin{lemma}
\label{lem:nonleafrtable}
Let $(T, \reps)$ be an augmented rank decomposition, $r \in E(T)$, $w$ a non-leaf node of $T$, and $w_1$ and $w_2$ the $r$-children of $w$.
Given linked $r$-tables of the nodes $w_1$ and $w_2$, a linked $r$-table $(\embs, \intcount, \crep, \dplink)$ of $w$ can be constructed in time $2^{2^{\OO(k)}}$.
\end{lemma}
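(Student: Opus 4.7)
The plan is to enumerate all pairs $(E_1, E_2) \in \embs_1 \times \embs_2$ and, for each compatible pair, combine them via the composition from \Cref{subsec:dpiembs} to produce an $A_R$-representative at $w$, keeping for every resulting representative the pair that minimizes the intersection count.

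More concretely, for every such pair I first test compatibility in $2^{\OO(k)}$ time. If they are compatible, I compute the composition $E_R$ using the minimal representative $\reps_r[w]$ stored by the augmented rank decomposition (this is precisely where the augmentation is essential) and simultaneously form the concrete representative $\crep_1(E_1) \cup \crep_2(E_2)$, both in $2^{\OO(k)}$ time. By \autoref{lem:dpmainif}, if $E_1, E_2$ are the representatives of $A$-restricted improvement embeddings $E^{(1)}, E^{(2)}$ on the two children, then $E^{(1)} \cup E^{(2)}$ is an $A$-restricted improvement embedding whose $A_R$-representative is $E_R$. I then set a tentative value $\intcount_1(E_1) + \intcount_2(E_2) + \iota$, where the indicator $\iota \in \{0,1\}$ is $1$ iff the tripartition of $E^{(1)} \cup E^{(2)}$ intersects $T_r[w]$; this is readable from the $C_i$-emptiness of $E_1$ and $E_2$ via \autoref{obs:cempty}, since at least two parts of the tripartition are non-empty on $T_r[w]$ exactly when for at least two indices $i$ at least one of $E_1, E_2$ is not $C_i$-empty. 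For each new $E_R$ I retain only the entry achieving the minimum count and set $\dplink(E_R) = (E_1, E_2)$ and $\crep(E_R)$ to the corresponding concrete representative.

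For correctness, any $k$-bounded $A$-restricted improvement embedding $E$ decomposes as $(E \cap T_r[w_1]) \cup (E \cap T_r[w_2])$ via \autoref{lem:imprtupleint}; by the global-representation property of the children's linked tables, each restriction has its representative in the corresponding $\embs_i$; these representatives are compatible componentwise by \autoref{lem:compatible_onlyif}; and their composition equals the $A_R$-representative of $E$ by \autoref{lem:dpmainif}. Hence minimizing over compatible pairs realizes the defining minimum of $\intcount$, and the number of $r$-subtree nodes of $w$ intersected by $E$ equals the number intersected in the $r$-subtrees of $w_1$ and $w_2$ plus the contribution of $w$ itself, which is exactly $\iota$.

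The main subtlety I foresee is ensuring that the stored $\crep(E_R)$ genuinely witnesses an improvement embedding that simultaneously attains $\intcount(E_R)$ and globally represents $E$ (i.e., is consistent with the concrete representatives already stored at every descendant); this is handled by always pulling $\crep(E_R)$ from the same compatible pair that witnesses the minimum, since $\crep_1(E_1)$ and $\crep_2(E_2)$ were themselves produced globally by the children's tables, and $\crep_1(E_1) \cup \crep_2(E_2)$ is a concrete representative of the union embedding by construction. The runtime bound follows from $|\embs_1|, |\embs_2| \le 2^{2^{\OO(k)}}$, giving $2^{2^{\OO(k)}}$ pairs, each processed in $2^{\OO(k)}$ time, for a total of $2^{2^{\OO(k)}}$.
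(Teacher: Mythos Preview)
Your proposal is correct and follows essentially the same approach as the paper: enumerate all pairs $(E_1,E_2)\in\embs_1\times\embs_2$, keep compatible pairs, compose them using the stored minimal representative $\reps_r[w]$, determine the contribution $\iota$ of $w$ via $C_i$-emptiness, and for each resulting $E_R$ retain the pair minimizing $\intcount_1(E_1)+\intcount_2(E_2)+\iota$ while setting $\crep(E_R)=\crep_1(E_1)\cup\crep_2(E_2)$ and $\dplink(E_R)=(E_1,E_2)$. Your completeness argument via \autoref{lem:imprtupleint} and \autoref{lem:compatible_onlyif} and your handling of global representation match the paper's reasoning; the only cosmetic difference is that the paper reads $\iota$ from the $C_i$-emptiness of the composed $E_R$ rather than of $E_1,E_2$ separately, which is equivalent.
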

\begin{proof}
Let $A = T_r[w]$, $X = T_r[w_1]$, $Y = T_r[w_2]$, $R = \reps_r[w]$, $R_X = \reps_r[w_1]$, and $R_Y = \reps_r[w_2]$.
Let $(\embs_1, \intcount_1, \crep_1)$ be the given $r$-table of $w_1$ and $(\embs_2, \intcount_2, \crep_2)$ the given $r$-table of $w_2$.

Let $E$ be a $k$-bounded $A$-restricted improvement embedding.
Note that the number of nodes in the $r$-subtree of $w$ that $E$ $r$-intersects is $i_1 + i_2 + i_w$, where $i_1$ is the number of nodes in the $r$-subtree of $w_1$ that $E \cap X$ $r$-intersects, $i_2$ is the number of nodes in the $r$-subtree of $w_2$ that $E \cap Y$ $r$-intersects, and $i_w = 1$ if $E$ $r$-intersects $w$ and $0$ otherwise.
Moreover, the fact whether $E$ $r$-intersects $w$ can be determined only by considering the $A_R$-representative of $E$, in particular by whether it is $C_i$-empty for at most one $i$.

We enumerate all pairs $(E_1, E_2) \in \embs_1 \times \embs_2$.
If $E_1$ is compatible with $E_2$, then by \autoref{lem:dpmainif}, for any $E_X$ and $E_Y$ such that $E_1$ is a $X_{R_X}$-representative of $E_X$ and $E_2$ is a $Y_{R_Y}$-representative of $E_Y$ it holds that $E = E_X \cup E_Y$ is an $A$-restricted improvement embedding such that $E \cap X = E_X$, $E \cap Y = E_Y$, and the composition $E_R$ of $E_1$ and $E_2$ is the $A_R$-representative of $E$.
Let us fix such $E_X$ and $E_Y$ so that they are globally represented by $(\embs_1, \intcount_1, \crep_1)$ and $(\embs_2, \intcount_2, \crep_2)$, respectively.
If $E_R \notin \embs$, or $E_R \in \embs$ but $\intcount(E_R) > \intcount_1(E_1) + \intcount_2(E_2) + i_w$, we insert $E_R$ to $\embs$ and set $\intcount(E_R) \gets \intcount_1(E_1) + \intcount_2(E_2) + i_w$, $\crep(E_R) \gets \crep_1(E_1) \cup \crep_2(E_2)$ and $\dplink(E_R) \gets (E_1, E_2)$.
Now $(\embs, \intcount, \crep)$ globally represents $E$.

The fact that for all $A$-restricted improvement embeddings $E$ we actually considered a pair $(E_1, E_2) \in \embs_1 \times \embs_2$ so that $E_1$ is a $X_{R_X}$-representative of $E \cap X$ and $E_2$ is a $Y_{R_Y}$-representative of $E \cap Y$ follows from the definition of $r$-state and \autoref{lem:imprtupleint}, i.e., the fact that $E \cap X$ is an $X$-restricted improvement embedding and $E \cap Y$ is an $Y$-restricted improvement embedding.
\end{proof}

Now the Init($T$, $r$) operation can be implemented in $|V(T)| 2^{2^{\OO(k)}}$ time by constructing a linked $r$-table of each node in the order from leafs to root with $|V(T)|$ applications of \autoref{lem:nonleafrtable}.
For the Move($vw$) operation, observe that the linked $r$-table of a node $x \in V(T)$ depends only on the $r$-subtree of $x$, and therefore by \autoref{obs:reroot} it suffices to recompute only the linked $r$-table of $v$ when using Move($vw$).
Therefore Move($vw$) can be implemented in $2^{2^{\OO(k)}}$ time by a single application of \autoref{lem:nonleafrtable}.
The operation Width() is implemented without $r$-tables.
It amounts to finding an embedding of $G[\reps[uv], \reps[vu]]$ to $R_k$ by brute-force in time $2^{2^{\OO(k)}}$.
The Output() operation also is straightforward as we just output the augmented rank decomposition we are maintaining.

It remains to implement the operations CanRefine(), EditSet(), and Refine($R$, $(N_1, N_2, N_3)$).
The following is the main lemma for them, providing the implementations of CanRefine() and EditSet() and setting the stage for Refine($R$, $(N_1, N_2, N_3)$).

\begin{lemma}
\label{lem:rankrefinement}
Let $(T, \reps)$ be an augmented rank decomposition, $uv = r \in E(T)$, and $W = T[uv]$.
For each node $w \in V(T)$, let $(\embs_w, \intcount_w, \reps_w, \dplink_w)$ be the linked $r$-table of $w$.
There is an algorithm that returns $\bot$ if there is no \splitW, and otherwise returns a tuple $(R, N_1, N_2, N_3, C^R_1, C^R_2, C^R_3)$, where $(r, C_1, C_2, C_3)$ is a \strongminsplit, $R$ is the edit set of $(r, C_1, C_2, C_3)$, $(N_1, N_2, N_3)$ is the neighbor partition of $R$, and for each $i$ $C^R_i$ is a minimal representative of $C_i$.
The algorithm works in time $2^{2^{\OO(k)}} (|R|+1)$.
\end{lemma}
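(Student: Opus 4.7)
The plan is to read the existence and structure of a \strongminsplit directly out of the linked $r$-tables at $u$ and $v$, and then descend through the $\dplink$ pointers to reconstruct the edit set, its neighbor partition, and minimal representatives of the parts. \autoref{lem:dpmainfinal} is the bridge: a \splitW exists iff there is a pair $(E_u, E_v) \in \embs_u \times \embs_v$ of the same shape $(k_1,k_2,k_3,l)$ satisfying $k_i < \cutrank(W)/2$, $l < \cutrank(W)$, and root-compatibility, and each such pair determines a \splitW $(C_1,C_2,C_3)$. I enumerate all pairs; there are $|\embs_u|\cdot|\embs_v| = 2^{2^{\OO(k)}}$ of them, root-compatibility is tested in $2^{2^{\OO(k)}}$ time by \autoref{lem:rootcompcheck}, and the shape, the arity (via $C_i$-emptiness, \autoref{obs:cempty}), and the sum $\intcount_u(E_u)+\intcount_v(E_v)$ are read off in $2^{\OO(k)}$ time. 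Since the $r$-subtrees of $u$ and $v$ partition $V(T)$ and the set of $r$-intersected nodes depends only on the tripartition, this sum equals the total number of nodes of $T$ that the induced tripartition $r$-intersects. I select the pair minimizing $(\max_i k_i,\ \text{arity},\ \sum_i k_i,\ \intcount_u+\intcount_v)$ lexicographically, and return $\bot$ if no valid pair exists. This step costs $2^{2^{\OO(k)}}$ in total, independent of $|R|$.

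For the reconstruction, I first observe that $u$ and $v$ both belong to $R$: the \splitW condition $\cutrank(C_i\cap W)<\cutrank(W)$ forbids $W\subseteq C_i$, so $W=T_r[u]$ meets at least two of the $C_i$'s, and symmetrically for $\oW$. Starting from $u$, I use $\dplink_u(E_u)=(E_{u_1},E_{u_2})$ to descend to its two $r$-children, and at each child $w$ I apply the $C_i$-emptiness test of \autoref{obs:cempty} to $E_w$: if at least two of $C_1,C_2,C_3$ meet $T_r[w]$ then $w\in R$ and I recurse via $\dplink_w(E_w)$, otherwise $T_r[w]\subseteq C_j$ for the unique index $j$ that is not $C_j$-empty, and $w$ is added to $N_j$. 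The traversal from $v$ is symmetric. Because $T$ is cubic and $R$ is connected, the number of visited nodes is $\OO(|R|)$ and each is processed in $2^{\OO(k)}$ time.

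For the minimal representatives, I compute the pointwise union $\crep_u(E_u)\cup\crep_v(E_v)$, which by \autoref{lem:dpmainif} is a concrete representative of the combined $V(G)$-restricted improvement embedding realizing $(C_1,C_2,C_3)$. By \autoref{observation:embrep}, for each $i$ its $f^C_i$-component yields a representative $(S_i,Q_i)$ of $(C_i,\overline{C_i})$ of size $2\cdot 2^{k_i} = 2^{\OO(k)}$; feeding $(S_i,Q_i)$ into \autoref{lem:minrep_computation} produces a minimal representative of $(C_i,\overline{C_i})$ in $2^{\OO(k)}$ time, whose $C_i$-side is the desired $C^R_i$. Summing the three steps gives the claimed bound $2^{2^{\OO(k)}} + 2^{\OO(k)}|R| = 2^{2^{\OO(k)}}(|R|+1)$.

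The delicate point, and where I expect the main effort to lie, is verifying that lexicographic minimization of $(\max_i k_i,\text{arity},\sum_i k_i,\intcount_u+\intcount_v)$ over valid pairs really returns a \strongminsplit. Both directions of \autoref{lem:dpmainfinal} are needed: the ``only if'' direction guarantees that for any \splitW $(C_1,C_2,C_3)$ some valid pair achieves exactly $k_i=\cutrank(C_i)$, so that shape-based minimization recovers the true width and sum-width of each candidate \splitW; the ``if'' direction certifies that the pair we pick actually corresponds to a \splitW with the claimed parameters. Combined with the fact that $\intcount_u(E_u)+\intcount_v(E_v)$ depends only on the induced tripartition, this argument shows that the winning pair describes a \strongminsplit, and the back-tracing above correctly recovers its edit set, neighbor partition, and minimal representatives.
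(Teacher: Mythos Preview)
Your proposal is correct and follows essentially the same approach as the paper: enumerate root-compatible pairs $(E_u,E_v)$ via \autoref{lem:dpmainfinal}, select one that is lexicographically optimal for $(\max_i k_i,\text{arity},\sum_i k_i,\intcount_u+\intcount_v)$, extract minimal representatives from $\crep_u(E_u)\cup\crep_v(E_v)$ via \autoref{observation:embrep} and \autoref{lem:minrep_computation}, and trace the edit set by descending along the $\dplink$ pointers with the $C_i$-emptiness test. One small note: the union of concrete representatives giving a concrete representative of the combined embedding relies on $C$-compatibility together with \autoref{lem:dpcompmain} (applied to each $f^C_i$), rather than \autoref{lem:dpmainif}, since root-compatibility only guarantees $C$-compatibility.
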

\begin{proof}
Denote $X = W = T_r[u]$, $Y = \oW = T_r[v]$, let $R_X = \reps_r[u]$ and $R_Y = \reps_r[v]$.

We iterate over all pairs $(E_u, E_v) \in \embs_u \times \embs_v$ satisfying the conditions in \autoref{lem:dpmainfinal} and determine the width, the arity, and the sum-width of a corresponding \splitW as in \autoref{lem:dpmainfinal}.
Also, the minimum number of nodes of $T$ $r$-intersected by a \splitW corresponding to $(E_u, E_v)$ can be determined as $\intcount_u(E_u) + \intcount_v(E_v)$.
If no such pair is found we return $\bot$.
Otherwise we find a pair $(E_u, E_v)$ so that $E_u$ is the $X_{R_X}$-representative of an $X$-restricted improvement embedding $E_X$ that is globally represented by $(\embs_u, \intcount_u, \crep_u)$ and whose tripartition $(C_1 \cap X, C_2 \cap X, C_3 \cap X)$ is, $E_v$ is the $Y_{R_Y}$-representative of an $Y$-restricted improvement embedding $E_Y$ that is globally represented by $(\embs_v, \intcount_v, \crep_v)$ and whose tripartition $(C_1 \cap Y, C_2 \cap Y, C_3 \cap Y)$ is, and $(r, C_1, C_2, C_3)$ is a \strongminsplit of $T$.
As $|\embs_u||\embs_v| = 2^{2^{\OO(k)}}$ and each pair can be checked in time $2^{2^{\OO(k)}}$ (root-compatibility by \autoref{lem:rootcompcheck}), this phase has time complexity $2^{2^{\OO(k)}}$.

Let $(f^C_1, f^C_2, f^C_3) = \crep_u(E_u) \cup \crep_v(E_v)$.
Now $f^C_i$ is a concrete representative of an embedding of $G[(X \cap C_i) \cup (Y \cap C_i), (X \setminus C_i) \cup (Y \setminus C_i)] = G[C_i, \overline{C_i}]$ to $R_{k_i}$, and therefore by \autoref{observation:embrep} we obtain a representative of $(C_i, \overline{C_i})$ of size at most $2 \cdot 2^{k_i}$ from $f^C_i$.
We compute a minimal representative $C^R_i$ of $C_i$ in time $2^{\OO(k)}$ by \autoref{lem:minrep_computation}

We compute the edit set and the neighbor partition with a BFS-type algorithm that maintains a queue $Q$ containing pairs $(w, E_w)$, where $w \in V(T)$, with the invariant that if $w$ is in the $r$-subtree of $u$, then $E_w$ is the $T_r[w]_{\reps_r[w]}$-representative of $E_X \cap T_r[w]$ and if $w$ is in the $r$-subtree of $v$, then $E_w$ is the $T_r[w]_{\reps_r[w]}$-representative of $E_Y \cap T_r[w]$.
We start by inserting the pairs $(u, E_u)$ and $(v, E_v)$ to $Q$.
Then, we iteratively pop a pair $(w, E_w)$ from the queue.
If there is $i$ such that $E_w$ is $C_j$-empty for all $j \neq i$, then it holds that $T_r[w] \subseteq C_i$, and therefore we insert $w$ to $N_i$.
Otherwise, we insert $w$ to $R$, let $w_1$ and $w_2$ be the $r$-children of $w$, and let $(E_1, E_2) = \dplink_w(E_w)$.
We insert $(E_1, w_1)$ and $(E_2, w_2)$ into $Q$.
This maintains the invariant by the definition of a linked $r$-table.

For each node $w \in R \cup N_1 \cup N_2 \cup N_3$ we just access tables indexed by $E_w$ and determine $C_i$-emptiness, so the amount of work is bounded by $|R \cup N_1 \cup N_2 \cup N_3| 2^{2^{\OO(k)}} = |R| 2^{2^{\OO(k)}}$.
\end{proof}

What is left is the Refine($R$, $(N_1, N_2, N_3)$) operation.
Computing the refinement $T'$ of $T$ itself is a direct application of \autoref{lem:refinealg} and computing the linked $r$-tables of the new nodes is done by $|R|$ applications of \autoref{lem:nonleafrtable}, but in order to compute the linked $r$-tables we must first make $T'$ augmented, that is, for each new edge $uv$ compute a minimal representative of $(T'[uv], T'[vu])$.
We use the minimal representatives of $C_1$, $C_2$, and $C_3$ returned by the algorithm of \autoref{lem:rankrefinement} for this.


\begin{lemma}
\label{lem:reprefinemethod}
Let $T$ be a rank decomposition, $r \in E(T)$, and $(r, C_1, C_2, C_3)$ a \strongminsplit.
Let $w$ be a non-leaf node of $T$ and let $w_1$ and $w_2$ be the $r$-children of $w$.
Let $i$,$j$, and $l$ be such that $\{i,j,l\} = \{1,2,3\}$.
Given minimal representatives of $T_r[w_1] \cap C_i$, $T_r[w_2] \cap C_i$, $\overline{T_r[w]}$, $C_j$, and $C_l$, a minimal representative of $(T_r[w] \cap C_i, \overline{T_r[w] \cap C_i})$ can be computed in $2^{\OO(k)}$ time.
\end{lemma}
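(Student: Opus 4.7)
The plan is to first construct a (not necessarily minimal) representative of the pair $(T_r[w] \cap C_i, \overline{T_r[w] \cap C_i})$ of total size $2^{\OO(k)}$, and then apply \autoref{lem:minrep_computation} to obtain the desired minimal representative in $2^{\OO(k)}$ time. Throughout, I will use that by \autoref{the:glostro}, $\cutrank(T_r[w] \cap C_i) \le \cutrank(T_r[w]) \le k$, so each minimal representative referenced has size at most $2^k$ by \autoref{lem:cutrank_rep}.

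For the $C_i$-side, I will observe that $T_r[w] \cap C_i = (T_r[w_1] \cap C_i) \cup (T_r[w_2] \cap C_i)$ because $T_r[w] = T_r[w_1] \cup T_r[w_2]$. Hence by \autoref{lem:unirep}, the union of the given minimal representatives of $T_r[w_1] \cap C_i$ and $T_r[w_2] \cap C_i$ is a representative of $T_r[w] \cap C_i$ of size at most $2 \cdot 2^k$.

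For the $\overline{C_i}$-side, I will start from the set-theoretic identity $\overline{T_r[w] \cap C_i} = \overline{T_r[w]} \cup (T_r[w] \cap C_j) \cup (T_r[w] \cap C_l)$, which follows from $\overline{C_i} = C_j \cup C_l$. The main step is to show that the union $R_w \cup R_j \cup R_l$ of the given minimal representatives of $\overline{T_r[w]}$, $C_j$, and $C_l$ is itself a representative of $\overline{T_r[w] \cap C_i}$. For any $v \in \overline{T_r[w]}$, $R_w$ supplies a matching vertex $u \in R_w \subseteq \overline{T_r[w]}$ with $N(v) \cap T_r[w] = N(u) \cap T_r[w]$, and a fortiori $N(v) \cap (T_r[w] \cap C_i) = N(u) \cap (T_r[w] \cap C_i)$. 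For $v \in T_r[w] \cap C_j \subseteq C_j$, the minimality of $R_j$ yields $u \in R_j$ with $N(v) \setminus C_j = N(u) \setminus C_j$, which suffices because $T_r[w] \cap C_i \subseteq C_i \subseteq \overline{C_j}$; moreover $u \in C_j \subseteq \overline{C_i}$, so $u \in \overline{T_r[w] \cap C_i}$ as required. The case $v \in T_r[w] \cap C_l$ is handled symmetrically using $R_l$. Thus $R_w \cup R_j \cup R_l$ is a valid representative of $\overline{T_r[w] \cap C_i}$ of size at most $3 \cdot 2^k$.

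Combining the two sides gives a representative of $(T_r[w] \cap C_i, \overline{T_r[w] \cap C_i})$ of total size at most $5 \cdot 2^k = 2^{\OO(k)}$, and \autoref{lem:minrep_computation} then returns a minimal representative in $2^{\OO(k)}$ time. The only slightly subtle point in the plan is the verification that $R_w \cup R_j \cup R_l$ really is a representative of the complement side, since the input representatives are minimal with respect to different reference sets ($\overline{T_r[w]}$ versus $C_j$ and $C_l$); but the relevant twin patterns transfer because the set $T_r[w] \cap C_i$ is contained in each of $T_r[w]$, $\overline{C_j}$, and $\overline{C_l}$, as used above.
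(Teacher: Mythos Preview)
Your proof is correct and follows essentially the same approach as the paper. The paper's version is slightly cleaner on the complement side: it uses the identity $\overline{T_r[w] \cap C_i} = \overline{T_r[w]} \cup C_j \cup C_l$ directly (rather than your $\overline{T_r[w]} \cup (T_r[w] \cap C_j) \cup (T_r[w] \cap C_l)$), so that \autoref{lem:unirep} applies immediately to the given representatives of $\overline{T_r[w]}$, $C_j$, $C_l$ without any case analysis --- the ``slightly subtle point'' you flag then disappears.
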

\begin{proof}
As $(r, C_1, C_2, C_3)$ is a \strongminsplit of $T$, by \autoref{the:glostro} and \autoref{lem:cutrank_rep} each of the given minimal representatives has size at most $2^k$.

It holds that \[T_r[w] \cap C_i = (T_r[w_1] \cap C_i) \cup (T_r[w_2] \cap C_i),\] so by \autoref{lem:unirep} we obtain a representative of $T_r[w] \cap C_i$ as the union of the given minimal representatives of $T_r[w_1] \cap C_i$ and $T_r[w_2] \cap C_i$.
It also holds that 
\[\overline{T_r[w] \cap C_i} = \overline{T_r[w]} \cup \overline{C_i} = \overline{T_r[w]} \cup C_j \cup C_l,\]
so again by \autoref{lem:unirep} we obtain a representative of $\overline{T_r[w] \cap C_i}$ as the union of the given minimal representatives of $\overline{T_r[w]}$, $C_j$, and $C_l$.
Now we have a representative of $(T_r[w] \cap C_i, \overline{T_r[w] \cap C_i})$ of size $2^{\OO(k)}$, so we can use \autoref{lem:minrep_computation} to compute a minimal representative in time $2^{\OO(k)}$.
\end{proof}

In particular, a minimal representative of $\overline{T_r[w]}$ is available in $(T, \reps)$ as $\reps[pw]$, where $p$ is the $r$-parent of $w$, and minimal representatives of $T_r[w_1] \cap C_i$ and $T_r[w_2] \cap C_i$ are available by doing the construction in an order towards the root $r$.

This completes the description of the refinement data structure for augmented rank decompositions and thus also the proof of \autoref{the:rankwidth_refinement_ds}.

\section{Approximating graph branchwidth}\label{sec_gr_bw}
In this section we prove the following theorem.

\thmbranchwidthalgorithm*

We start with the definition of branch decomposition of a graph. 
\begin{definition}[Border]
Let $G$ be a graph and $X \subseteq E(G)$.
The border $\bd_G(X)$ of $X$ is the set of vertices of $G$ that are incident to an edge in $X$ and to an edge in $E(G) \setminus X$.
\end{definition}

The following result is well-known. 
\begin{proposition}
The function $|\bd_G| : 2^{E(G)} \rightarrow \mathbb{Z}_{\ge 0}$ assigning for each $X \subseteq E(G)$ the cardinality  $|\bd_G(X)|$ of the border of $X$,  is a connectivity function.
\end{proposition}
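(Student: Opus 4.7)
The plan is to verify directly the three defining properties of a connectivity function: the vanishing at the empty set, symmetry, and submodularity. The first two are essentially tautological from the definition of border; the only real content lies in submodularity, which I would establish by a vertex-by-vertex (pointwise) counting argument.

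First I would handle $f(\emptyset)=0$: no vertex can be incident to an edge of $\emptyset$, so $\bd_G(\emptyset)=\emptyset$ and hence $|\bd_G(\emptyset)|=0$. Symmetry is immediate as well: the condition defining $\bd_G(X)$, namely that $v$ is incident both to some edge of $X$ and to some edge of $E(G)\setminus X$, is literally symmetric in $X$ and $E(G)\setminus X$, so $\bd_G(X)=\bd_G(E(G)\setminus X)$, which is stronger than what we need.

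For submodularity I would show the pointwise inequality
\[
\mathbf{1}_{\bd_G(X\cup Y)}(v)+\mathbf{1}_{\bd_G(X\cap Y)}(v)\ \le\ \mathbf{1}_{\bd_G(X)}(v)+\mathbf{1}_{\bd_G(Y)}(v)
\]
for every $v\in V(G)$, and then sum over $v$. Fix $v$ and let $E_v$ denote the edges incident to $v$; partition $E_v$ into the four (possibly empty) classes $A=E_v\cap X\cap Y$, $B=E_v\cap(X\setminus Y)$, $C=E_v\cap(Y\setminus X)$, $D=E_v\setminus(X\cup Y)$. Each of the four membership statements $v\in\bd_G(\cdot)$ translates into a simple non-emptiness condition on unions of these classes, e.g.\ $v\in\bd_G(X\cap Y)$ iff $A\neq\emptyset$ and $B\cup C\cup D\neq\emptyset$, while $v\in\bd_G(X\cup Y)$ iff $A\cup B\cup C\neq\emptyset$ and $D\neq\emptyset$. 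A short case analysis on the LHS (values $0,1,2$) then forces the RHS to match: when the LHS equals $2$, one gets $A\neq\emptyset$ and $D\neq\emptyset$ simultaneously, which immediately places $v$ in both $\bd_G(X)$ and $\bd_G(Y)$; when the LHS equals $1$, at least one of $\bd_G(X),\bd_G(Y)$ must contain $v$ for similar reasons.

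The main obstacle, such as it is, is just keeping the case analysis in the submodularity step tidy; everything else is mechanical. Summing the pointwise inequality over $v\in V(G)$ yields $|\bd_G(X\cup Y)|+|\bd_G(X\cap Y)|\le |\bd_G(X)|+|\bd_G(Y)|$, completing the verification that $|\bd_G|$ is a connectivity function.
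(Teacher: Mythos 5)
Your proof is correct: the verification of $f(\emptyset)=0$ and symmetry is immediate, and the pointwise counting argument for submodularity (partitioning the edges at $v$ into the four classes $A,B,C,D$ and checking the indicator inequality by cases) goes through exactly as you sketch. The paper itself states this proposition as well-known and gives no proof, so your direct verification is the natural argument and there is nothing to compare it against.
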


A branch decomposition of a graph  $G$ is a branch decomposition of function $|\bd_G|$, and the branchwidth of $G$, denoted by $\bw(G)$ is the branchwidth of $|\bd_G|$.
In the rest of this section we assume that we are computing the branchwidth of a graph $G$ and drop the subscript.

For branchwidth we do not need iterative compression as we can use the algorithm of Korhonen~\cite{Korhonen21} and the connection between branchwidth and treewidth~\cite{SeymourTh94} to obtain a branch decomposition of $G$ of width at most $3k$ in $2^{\OO(k)} n$ time.

The following is the key lemma, and the whole section is devoted to its proof. 

\begin{lemma}
\label{the:branchwidth_refinement_ds}
There is a refinement data structure for branch decompositions of graphs with time complexity $t(k) = 2^{\OO(k)}$.
\end{lemma}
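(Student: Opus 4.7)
The plan is to implement the refinement data structure by dynamic programming over the branch decomposition, with state tuples of size $2^{\OO(k)}$ per node. The setup parallels \Cref{subsec:rankref} in every respect, except that the role played by $A_R$-representatives of embeddings into $R_k$ (and the associated $2^{2^{\OO(k)}}$ state space) is replaced by a much simpler combinatorial state on the at-most-$k$ boundary vertices of each subtree.

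First I would define the dynamic programming state at a node $w$ with $A = T_r[w]$, relative to a hypothesized tripartition $(C_1, C_2, C_3)$ of $E(G)$. The state is a tuple $(\sigma, B, a_1, a_2, a_3)$, where $\sigma : \bd(A) \to 2^{\{1,2,3\}} \setminus \{\emptyset\}$ is the \emph{signature} with $\sigma(v) = \{i : v \text{ is incident to some edge of } A \cap C_i\}$; $B = \{i : A \cap C_i \neq \emptyset\}$; and each $a_i \in \{0, 1, \ldots, \lceil k/2 \rceil\}$ is the number of vertices having all their edges in $A$ that belong to $\bd(C_i)$, capped at $\lceil k/2 \rceil$ with any overflow killing the state (since $a_i \le |\bd(C_i)| < k/2$ is necessary in any \splitW). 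Because $|\bd(A)| \le k$ throughout the algorithm, there are at most $7^k \cdot 2^3 \cdot (\lceil k/2 \rceil + 1)^3 = 2^{\OO(k)}$ such tuples. For each reachable tuple we store the minimum number of nodes of the $r$-subtree of $w$ $r$-intersected by a witnessing tripartition (the analogue of $\intcount$) and a link to the pair of child states achieving that minimum (the analogue of $\dplink$). The state suffices for every needed quantity: $|\bd(C_i \cap A)| = |\{v \in \bd(A) : i \in \sigma(v)\}| + a_i$ at any node, and at the root edge $r = uv$ also $|\bd(C_i)| = a_{u,i} + a_{v,i} + |\{v \in \bd(W) : i \in \tau(v),\ \tau(v) \neq \{i\}\}|$ with $\tau(v) := \sigma_u(v) \cup \sigma_v(v)$, so every \splitW inequality is decidable from the two root tables alone.

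Next I would spell out the base case---a leaf edge $e = xy$ admits three states, one per choice of the $C_i$ containing $e$---and the merge at a non-leaf $w$ with children $w_1, w_2$, where $A_1 = T_r[w_1]$ and $A_2 = T_r[w_2]$. For each pair of child tuples we combine, using precomputed boundary lists $\bd(A_1), \bd(A_2), \bd(A)$: vertices in $\bd(A_1) \cap \bd(A_2) \cap \bd(A)$ take the union of child signatures; vertices lying in exactly one child boundary carry over their signature unchanged; vertices in $\bd(A_1) \cap \bd(A_2) \setminus \bd(A)$ become internal to $A$ and contribute to each $a_i$ with $i \in \tau(v)$ and $\tau(v) \neq \{i\}$; internal-to-child vertices carry over via the existing counts; and $B_w = B_{w_1} \cup B_{w_2}$. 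Each merge costs $|\embs_{w_1}| \cdot |\embs_{w_2}| \cdot \OO(k) = 2^{\OO(k)}$, so by \autoref{obs:reroot} Init takes $2^{\OO(k)} n$ time and Move just one merge in $2^{\OO(k)}$ time. Width() reads $|\bd(T_r[u])|$ off the maintained boundary list at $u$, and Output() returns the maintained tree.

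The remaining operations are direct analogues of those in \Cref{subsec:rankref}. CanRefine() enumerates pairs of states at $u, v$, tests each \splitW inequality together with the lexicographic \minsplitW minimization (width, arity $= |B_u \cup B_v|$, sum-width, total intersection count $\intcount_u + \intcount_v$) and records the winning pair. EditSet() back-tracks from the root pair through the stored links, placing a visited node in $R$ exactly when its selected state has $|B| \ge 2$---the DP-level characterization of $T_r[w]$ being $r$-intersected by $(C_1, C_2, C_3)$---and otherwise into $N_i$ where $\{i\} = B$; the total work is $2^{\OO(k)}(|R| + 1)$. Refine$(R, (N_1, N_2, N_3))$ restructures the tree via \autoref{lem:refinealg}, computes the new boundary lists of the $|R|$ inserted nodes, and recomputes their DP tables by $|R|$ merges, all in $2^{\OO(k)} |R|$ time.

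The main conceptual obstacle is picking the DP state correctly, in particular recognising that $B$ must be tracked in addition to $\sigma$ and the $a_i$'s: an internal vertex of $A$ whose edges all lie in a single $C_i$ contributes to neither the signature nor any $a_j$, so without $B$ we could not detect that $A \cap C_i$ is non-empty (needed for the arity and for checking the first \splitW inequality at the root) nor distinguish $T_r[w] \subseteq C_i$ from $T_r[w]$ genuinely crossing the partition (needed for edit-set membership during backtracking). Once the state is pinned down, the construction is bookkeeping directly parallel to \Cref{subsec:rankref}, with the $2^{2^{\OO(k)}}$ representatives of improvement embeddings simply replaced by the $2^{\OO(k)}$ signature tuples---which is exactly what drops $t(k)$ from $2^{2^{\OO(k)}}$ to $2^{\OO(k)}$.
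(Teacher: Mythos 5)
Your proposal is correct and follows essentially the same route as the paper: your signature $\sigma$, flag set $B$, and internal counters $a_i$ are an exact re-encoding of the paper's ``border of a tripartition'' $(R_1,R_2,R_3,r_1,r_2,r_3,k_1,k_2,k_3)$ (for $v\in\bd(A)$ one has $v\in\bd(C_i)$ iff $v$ meets an edge of $A\cap C_i$, so $\sigma(v)=\{i: v\in R_i\}$), and your merge, root test, $\intcount$-based backtracking for the edit set, and maintenance of border lists mirror the paper's composition lemma, root lemma, and BFS-type EditSet routine.
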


With \autoref{the:main_alg} and the aforementioned connections to treewidth,  this will prove \Cref{theorem_vw_algo}.
%

The remaining part of this section is organized as follows. 
In  \Cref{subsec:bwaug} we define augmented branch decompositions, in \Cref{subsec:bwdp} we introduce the objects manipulated in our dynamic programming and prove some properties of them. Then \Cref{subsec:bwrefimpl} we give the refinement data structure for branch decompositions using this dynamic programming.

\subsection{Augmented branch decompositions\label{subsec:bwaug}}
In our refinement data structure we maintain an augmented branch decomposition.
An augmented branch decompositions stores the \emph{border description} of each cut.

\begin{definition}[Border description]
Let $X \subseteq E(G)$.
The border description of $X$ is the pair $(\bd(X), f)$, where $f : \bd(X) \rightarrow \mathbb{Z}_{\ge 0}$ is the function so that $f(v)$ is the number of edges in $X$ incident to $v$.
\end{definition}

An augmented branch decomposition is a branch decomposition $T$ where for each edge $uv \in E(T)$ the border descriptions of $T[uv]$ and $T[vu]$ are stored.
Note that an augmented branch decomposition can be represented in $\OO(|V(T)|\bw(T))$ space.

The following lemma leads to an algorithm for computing the border descriptions.

\begin{lemma}
\label{lem:bord_desc_comp}
Let $X$ and $Y$ be disjoint subsets of $E(G)$.
Given the border descriptions of $X$ and $Y$, the border description of $X \cup Y$ can be computed in $\OO(|\bd(X)| + |\bd(Y)|)$ time.
\end{lemma}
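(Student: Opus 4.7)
\textbf{Proof plan for \autoref{lem:bord_desc_comp}.}
The plan is to characterize, purely from the border descriptions of $X$ and of $Y$, which vertices lie in $\bd(X\cup Y)$ and with which multiplicity. Write $d_X(v)$, $d_Y(v)$, $d_{X\cup Y}(v)$, and $d(v)$ for the number of edges incident to $v$ in $X$, $Y$, $X\cup Y$, and $E(G)$ respectively. Since $X$ and $Y$ are disjoint, $d_{X\cup Y}(v)=d_X(v)+d_Y(v)$. By definition, $v\in\bd(X\cup Y)$ iff $0<d_X(v)+d_Y(v)<d(v)$. I will assume that $d(v)$ is available in $\OO(1)$ time (it can be precomputed once per $G$) and that the border descriptions are stored as hash tables supporting $\OO(1)$ lookup and insertion.

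First I would do a case analysis on the membership of $v$ in $\bd(X)$ and $\bd(Y)$. If $v\notin\bd(X)\cup\bd(Y)$, then $d_X(v)\in\{0,d(v)\}$ and $d_Y(v)\in\{0,d(v)\}$, and by disjointness they cannot both equal $d(v)$; hence $d_X(v)+d_Y(v)\in\{0,d(v)\}$ and $v\notin\bd(X\cup Y)$. If $v\in\bd(X)\setminus\bd(Y)$, then $0<d_X(v)<d(v)$ forces $d_Y(v)<d(v)$ (again by disjointness), hence $d_Y(v)=0$; therefore $v\in\bd(X\cup Y)$ with multiplicity $d_X(v)$. The symmetric case $v\in\bd(Y)\setminus\bd(X)$ yields multiplicity $d_Y(v)$. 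Finally, if $v\in\bd(X)\cap\bd(Y)$, then $d_{X\cup Y}(v)=d_X(v)+d_Y(v)>0$, and $v\in\bd(X\cup Y)$ iff this sum is strictly less than $d(v)$, in which case the multiplicity is exactly this sum.

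The algorithm then reads off the border description of $X\cup Y$ directly from the above case analysis: iterate through $\bd(X)$, look up each vertex in the hash table of $\bd(Y)$ to decide which case applies, and emit the corresponding entry (or omit it); then iterate through $\bd(Y)$ and emit the entries for $v\in\bd(Y)\setminus\bd(X)$. Each vertex is handled with a constant number of $\OO(1)$ operations (one lookup, one degree query, one comparison, one insertion), giving total running time $\OO(|\bd(X)|+|\bd(Y)|)$, as required. No step here is a real obstacle; the only subtlety is the observation that a vertex in $\bd(X)\setminus\bd(Y)$ cannot have any $Y$-edges incident to it, which is where the disjointness hypothesis on $X$ and $Y$ is actually used.
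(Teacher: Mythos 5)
Your proposal is correct and follows essentially the same route as the paper: both rest on the observation that $\bd(X\cup Y)\subseteq\bd(X)\cup\bd(Y)$, that the multiplicity of a surviving vertex is $d_X(v)+d_Y(v)$, and that a vertex of $\bd(X)\cup\bd(Y)$ drops out exactly when this sum equals its degree in $G$. Your case analysis simply spells out details (including where disjointness is used) that the paper's three-line proof leaves implicit.
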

\begin{proof}
Let $(\bd(X), f)$ be the border description of $X$ and $(\bd(Y), g)$ the border description of $Y$.
It holds that $\bd(X \cup Y) \subseteq \bd(X) \cup \bd(Y)$, where $(\bd(X) \cup \bd(Y)) \setminus \bd(X \cup Y)$ can be identified as the vertices $v$ for which $f(v) + g(v)$ is the degree of $v$.
Now $(\bd(X \cup Y), f+g)$ is the border description of $X \cup Y$.
\end{proof}

It follows that given a branch decomposition $T$, a corresponding augmented branch decomposition can be computed in $\OO(|V(T)|\bw(T))$ time by applying \autoref{lem:bord_desc_comp} $2|V(T)|$ times.

\subsection{Borders of tripartitions\label{subsec:bwdp}}
We define the partial solutions stored in dynamic programming tables.

\begin{definition}[Border of a tripartition]
Let $A \subseteq E(G)$ and $(C_1, C_2, C_3)$ a tripartition of $A$.
The border of $(C_1, C_2, C_3)$ is the 9-tuple $(R_1, R_2, R_3, r_1, r_2, r_3, k_1, k_2, k_3)$,
where for each $i \in \{1,2,3\}$ it holds that $R_i = \bd(C_i) \cap \bd(A)$, $r_i = 0$ if $C_i = \emptyset$ and otherwise $r_i = 1$, and $k_i$ is the number of vertices $v \in V(G) \setminus \bd(A)$ such that there exists an edge $e_1 \in C_i$ incident to $v$ and an edge $e_2 \in A \setminus C_i$ incident to $v$, i.e., $k_i = |\bd(C_i) \cap \bd(A \setminus C_i) \setminus \bd(A)|$.
\end{definition}

We call a border of tripartition $k$-bounded if for each $i$ it holds that $k_i \le k$.
Note that if $|\bd(A)| \le k$, then the number of $k$-bounded borders of tripartitions of $A$ is $\le (2^k)^3 2^3 k^3 = 2^{\OO(k)}$, and each of them can be represented in $\OO(k)$ space.

We define the \emph{composition} of borders of tripartitions to combine partial solutions.

\begin{definition}[Composition]
Let $X$ and $Y$ be disjoint subsets of $E(G)$ and $A = X \cup Y$.
Let $R_X = (R^X_1, R^X_2, R^X_3, r^X_1, r^X_2, r^X_3, k^X_1, k^X_2, k^X_3)$ be the border of a tripartition of $X$ and $R_Y = (R^Y_1, R^Y_2, R^Y_3, r^Y_1, r^Y_2, r^Y_3, k^Y_1, k^Y_2, k^Y_3)$ the border of a tripartition of $Y$.
Denote $F = (\bd(X) \cup \bd(Y)) \setminus \bd(A)$.
The composition of $R_Y$ and $R_Y$ is the 9-tuple $(R_1, R_2, R_3, r_1, r_2, r_3, k_1, k_2, k_3)$, where for each $i \in \{1,2,3\}$,
\begin{enumerate}
\item $R_i = \bd(A) \cap (R^X_i \cup R^Y_i)$,
\item $r_i = \max(r^X_i, r^Y_i)$, and
\item $k_i = k^X_i + k^Y_i + |F \cap (R^X_i \cup R^Y_i) \cap (R^X_j \cup R^X_l \cup R^Y_j \cup R^Y_l)|$, where $\{i,j,l\} = \{1,2,3\}$.
\end{enumerate}
\end{definition}

Note that when the sets $\bd(X)$, $\bd(Y)$, and $\bd(A)$ are given and have size $\le k$, the composition can be computed in $\OO(k)$ time.

Next we prove that the composition operation really combines partial solutions as expected.

\begin{lemma}
\label{lem:bwdpmain}
Let $X$ and $Y$ be disjoint subsets of $E(G)$ and $A = X \cup Y$.
If $R_X$ is the border of a tripartition $(C^X_1, C^X_2, C^X_3)$ and $R_Y$ is the border of a tripartition $(C^Y_1, C^Y_2, C^Y_3)$, then the composition of $R_X$ and $R_Y$ is the border of the tripartition $(C^X_1 \cup C^Y_1, C^X_2 \cup C^Y_2, C^X_3 \cup C^Y_3)$.
\end{lemma}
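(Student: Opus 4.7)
The plan is to verify each of the nine coordinates of the composed 9-tuple separately, exploiting the fact that $X$, $Y$ partition $A$ into disjoint edge sets, so the parts $C^X_i \cup C^Y_i$ also partition $A$ component-wise.

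First I would verify the straightforward coordinates. For the emptiness flags, $C^X_i \cup C^Y_i = \emptyset$ iff both $C^X_i = \emptyset$ and $C^Y_i = \emptyset$, which gives $r_i = \max(r^X_i, r^Y_i)$ immediately. For the boundary sets, the key observation is that if $v \in \bd(A)$ then $v$ has at least one incident edge outside $A$, hence outside both $C^X_i$ and $C^Y_i$. From this one gets a two-line verification that for $v \in \bd(A)$, membership $v \in \bd(C^X_i \cup C^Y_i)$ is equivalent to $v$ being incident to some edge in $C^X_i \cup C^Y_i$, which in turn is equivalent to $v \in R^X_i \cup R^Y_i$ (since the analogous edge-outside-$X$ witness required for $R^X_i$, or outside-$Y$ for $R^Y_i$, is furnished by the outside-$A$ edge).

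The main work will be verifying the coordinate $k_i$, which counts vertices outside $\bd(A)$ that are incident both to $C_i := C^X_i \cup C^Y_i$ and to $A \setminus C_i$. The plan is to first re-express $F = (\bd(X) \cup \bd(Y)) \setminus \bd(A)$ as $F = (\bd(X) \cap \bd(Y)) \setminus \bd(A)$: a vertex $v \in \bd(X) \setminus \bd(A)$ has an edge in $X \subseteq A$ but none outside $A$, hence its ``outside $X$'' witness lies in $Y$, placing it in $\bd(Y)$ as well. So $F$ consists of vertices all of whose edges lie in $A$ and which touch both $X$ and $Y$. Then I would partition the vertices counted by $k_i$ into those outside $F$ and those inside $F$:
\begin{itemize}
\item A vertex $v \notin \bd(A)$ and $v \notin F$ contributing to $k_i$ has all its edges inside either $X$ or $Y$ (but not split between them). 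Assuming without loss of generality all edges lie in $X$, both the $C_i$-incidence and the $(A \setminus C_i)$-incidence witness edges lie in $X$, so $v$ contributes to $k^X_i$ and to no ``cross'' term, and analogously for $Y$. These contributions account exactly for $k^X_i + k^Y_i$.
\item A vertex $v \in F$ contributing to $k_i$ has all its edges in $A$ and incidences in both $X$ and $Y$. Since $v \in \bd(X)$, membership $v \in R^X_i$ reduces to $v$ being incident to $C^X_i$ (the outside-$X$ witness comes from the $Y$-edges), and symmetrically for $R^Y_i$, $R^X_j$, etc. Hence the contribution of $F$ to $k_i$ is exactly $|F \cap (R^X_i \cup R^Y_i) \cap (R^X_j \cup R^X_l \cup R^Y_j \cup R^Y_l)|$ where $\{i,j,l\} = \{1,2,3\}$.
\end{itemize}
Summing these two contributions gives exactly the $k_i$ formula of the composition.

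The hardest part will be bookkeeping the $F$ case cleanly, in particular justifying that on $F$ the boundary-with-$X$ (resp.\ $Y$) condition hidden inside $R^X_i$ (resp.\ $R^Y_i$) is automatic and can be dropped, so that ``incident to $C^X_i$'' and ``$v \in R^X_i$'' coincide for $v \in F$. Once this reduction is in place, the count is a direct set-theoretic identity using the fact that $A \setminus C_i = C^X_j \cup C^X_l \cup C^Y_j \cup C^Y_l$. The proof is then complete by concatenating the three coordinate-wise verifications.
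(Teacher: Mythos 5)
Your proposal is correct and follows essentially the same route as the paper's proof: both verify the nine coordinates separately and, for the $k_i$ coordinate, use the same three-way partition of the relevant border vertices into $\bd(A)$, $F$, and the rest, with the same key facts (vertices of $\bd(A)$ or $F$ incident to $X$ automatically lie in $\bd(X)$, and the remaining contributions split disjointly between $X$ and $Y$). Your element-wise phrasing and the observation $F=(\bd(X)\cap\bd(Y))\setminus\bd(A)$ are cosmetic variants of the paper's set-identity computation, not a different argument.
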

\begin{proof}
Let $V^X$ be the set of vertices incident to $X$ and $V^Y$ the set of vertices incident to $Y$.
Let also $F = (\bd(X) \cup \bd(Y)) \setminus \bd(A)$.
Let $i \in \{1,2,3\}$ and let $j$ and $l$ be such that $\{i,j,l\} = \{1,2,3\}$.
It holds that
\[\bd(C^X_i \cup C^Y_i) = (\bd(C^X_i) \cup \bd(C^Y_i)) \cap (\bd(C^X_j) \cup \bd(C^X_l) \cup \bd(C^Y_j) \cup \bd(C^Y_l) \cup \bd(\overline{A})).\]
Therefore by observing that $\bd(\overline{A}) = \bd(A)$, $\bd(A) \cap V^X \subseteq \bd(X)$, and $\bd(A) \cap V^Y \subseteq \bd(Y)$ we get
\[\bd(C^X_i \cup C^Y_i) \cap \bd(A) = (\bd(C^X_i) \cup \bd(C^Y_i)) \cap \bd(A) = (R^X_i \cup R^Y_i) \cap \bd(A),\]
so the sets $R_1$, $R_2$, and $R_3$ in the composition are correct.

By observing that $V^X \cap F \subseteq \bd(X)$ and $V^Y \cap F \subseteq \bd(Y)$,  we have that

\begin{equation*}
\begin{split}
\bd(C^X_i \cup C^Y_i) \cap F =& F \cap (\bd(C^X_i) \cup \bd(C^Y_i)) \cap (\bd(C^X_j) \cup \bd(C^X_l) \cup \bd(C^Y_j) \cup \bd(C^Y_l))\\
=& F \cap (R^X_i \cup R^Y_i) \cap (R^X_j \cup R^X_l \cup R^Y_j \cup R^Y_l).
\end{split}
\end{equation*}
Since $V^X \setminus (F \cup \bd(A)) = V^X \setminus \bd(X)$ and $V^Y \setminus (F \cup \bd(A)) = V^Y \setminus \bd(Y)$ are disjoint,  we get that
\begin{equation*}
\begin{split}
 |\bd(C^X_i \cup &C^Y_i) \setminus (F \cup \bd(A))|\\
=& |(\bd(C^X_i) \cap (\bd(C^X_j) \cup \bd(C^X_l)) \setminus \bd(X)| + |(\bd(C^Y_i) \cap (\bd(C^Y_j) \cup \bd(C^Y_l)) \setminus \bd(Y)|\\
=& |\bd(C^X_i) \cap \bd(X \setminus C^X_i) \setminus \bd(X)| + |\bd(C^Y_i) \cap \bd(Y \setminus C^Y_i) \setminus \bd(Y)|.
\end{split}
\end{equation*}
Hence the numbers $k_1$, $k_2$, and $k_3$ in the composition are correct.
The numbers $r_1$, $r_2$, and $r_3$ are correct by observing that $C^X_i \cup C^Y_i$ is empty if and only if both $C^X_i$ and $C^Y_i$ are empty.
\end{proof}

The next lemma gives the method for determining if there exists a \splitW based on borders of tripartitions.

\begin{lemma}
\label{lem:dpbwroot}
Let $T$ be a rank decomposition, $uv = r \in E(T)$, and $W = T[uv]$.
Denote $X = W$ and $Y = \oW$.
There exists a \splitW $(C_1, C_2, C_3)$ with arity $\alpha$ and $|\bd(C_i)| = k_i$ for each $i \in \{1,2,3\}$ if and only if there exists $R_X$ and $R_Y$ such that
\begin{enumerate}
\item $R_X = (R^X_1, R^X_2, R^X_3, r^X_1, r^X_2, r^X_3, k^X_1, k^X_2, k^X_3)$ is the border of $(C_1 \cap X, C_2 \cap X, C_3 \cap X)$,
\item $R_Y = (R^Y_1, R^Y_2, R^Y_3, r^Y_1, r^Y_2, r^Y_3, k^Y_1, k^Y_2, k^Y_3)$ is the border of $(C_1 \cap Y, C_2 \cap Y, C_3 \cap Y)$,
\item the composition of $R_X$ and $R_Y$ is $(\emptyset, \emptyset, \emptyset, r_1, r_2, r_3, k_1, k_2, k_3)$, where $r_1 + r_2 + r_3 = \alpha$ and $k_i < |\bd(W)|/2$ for each $i$, and
\item for each $i$ it holds that $k^X_i + |R^X_i| < |\bd(W)|$ and $k^Y_i + |R^Y_i| < |\bd(W)|$.
\end{enumerate}
\end{lemma}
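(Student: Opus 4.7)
The plan is to prove both directions essentially by unpacking the definition of the composition operation and invoking \autoref{lem:bwdpmain}. The key linking identity, which I would establish as a preliminary observation, is that for any $X \subseteq E(G)$ and any subset $C \subseteq X$ with $\bd(C) \cap \bd(X) = R$ and $k = |\bd(C) \cap \bd(X \setminus C) \setminus \bd(X)|$ one has $|\bd(C)| = |R| + k$: a vertex $v \in \bd(C)$ is incident to some $e \in C \subseteq X$, so either $v \in \bd(X)$ (contributing to $R$) or all edges at $v$ lie in $X$, in which case the edge witnessing $v \in \bd(C)$ sits in $X \setminus C$ (contributing to $k$). This already gives $k^X_i + |R^X_i| = |\bd(C_i \cap X)|$, and symmetrically for $Y$, translating Condition~4 exactly into the two \splitW inequalities $f(C_i \cap W) < f(W)$ and $f(C_i \cap \oW) < f(W)$.

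For the forward direction, given a \splitW $(C_1, C_2, C_3)$ of arity $\alpha$ with $|\bd(C_i)| = k_i$, I would take $R_X$ to be the border of the tripartition $(C_1 \cap X, C_2 \cap X, C_3 \cap X)$ of $X$ and $R_Y$ to be the border of $(C_1 \cap Y, C_2 \cap Y, C_3 \cap Y)$ of $Y$. Conditions~1 and~2 are then immediate. By \autoref{lem:bwdpmain}, the composition of $R_X$ and $R_Y$ is the border of $(C_1, C_2, C_3)$ viewed as a tripartition of $A = E(G)$. Since $\bd(E(G)) = \emptyset$, the first three coordinates of the composition are empty, and the coordinate $k_i$ of the composition reduces to $|\bd(C_i) \cap \bd(\overline{C_i})| = |\bd(C_i)|$, so the condition $k_i < |\bd(W)|/2$ is exactly $f(C_i) < f(W)/2$. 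The arity coordinate $r_i$ is $1$ iff $C_i \cap X$ or $C_i \cap Y$ is non-empty, iff $C_i$ is non-empty, so $r_1 + r_2 + r_3$ equals the arity $\alpha$. Condition~4 then follows from the preliminary identity and the two remaining \splitW inequalities.

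For the converse, given $R_X$ and $R_Y$ satisfying Conditions~1--4 with corresponding tripartitions $(C^X_1, C^X_2, C^X_3)$ of $X$ and $(C^Y_1, C^Y_2, C^Y_3)$ of $Y$, I would set $C_i = C^X_i \cup C^Y_i$, which is a tripartition of $E(G)$. Applying \autoref{lem:bwdpmain} again, the composition of $R_X$ and $R_Y$ is the border of $(C_1, C_2, C_3)$; the fact that this composition takes the prescribed form $(\emptyset,\emptyset,\emptyset, r_1, r_2, r_3, k_1, k_2, k_3)$ with $k_i < |\bd(W)|/2$ gives $f(C_i) < f(W)/2$, Condition~4 together with the preliminary identity gives $f(C_i \cap W), f(C_i \cap \oW) < f(W)$, and $r_1 + r_2 + r_3 = \alpha$ records the arity. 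Thus $(C_1, C_2, C_3)$ is a \splitW of the desired shape.

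There is no significant obstacle; the proof is essentially bookkeeping. The only thing that needs care is the correct interpretation of the composition coordinates at the root: noting that $\bd(E(G)) = \emptyset$ collapses the first three coordinates and turns the third coordinate $k_i$ into $|\bd(C_i)|$. Everything else is a direct application of \autoref{lem:bwdpmain} and the preliminary identity $|\bd(C)| = |R| + k$.
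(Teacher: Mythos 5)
Your proposal is correct and follows essentially the same route as the paper's proof: apply \autoref{lem:bwdpmain} at the root, use $\bd(E(G)) = \emptyset$ to read off $k_i = |\bd(C_i)|$ and the arity, and establish the identity $|\bd(C_i \cap X)| = |R^X_i| + k^X_i$ (your per-vertex case analysis is just a rephrasing of the paper's disjoint set decomposition of $\bd(W \cap C_i)$). No gaps.
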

\begin{proof}
Suppose that such $R_X$ and $R_Y$ exists.
By \autoref{lem:bwdpmain} and the fact that $\bd(E(G)) = \emptyset$, $(\emptyset, \emptyset, \emptyset, r_1, r_2, r_3, k_1, k_2, k_3)$ is the border of $(C_1, C_2, C_3)$.
By the definition of border we have that $k_i = |\bd(C_i)|$ and $r_1 + r_2 + r_3$ is the number of non-empty sets in $(C_1, C_2, C_3)$.
It remains to prove that $|\bd(C_i \cap W)| = k^X_i + |R^X_i|$ and that $|\bd(C_i \cap \oW)| = k^Y_i + |R^Y_i|$ for each $i$.
We have that 
 \begin{equation*}
\begin{split}
\bd(W \cap C_i)=& (\bd(W \cap C_i) \cap \bd(\oW)) \cup (\bd(W \cap C_i) \cap \bd(W \setminus C_i))\\
=& (\bd(X \cap C_i) \cap \bd(X)) \cup (\bd(X \cap C_i) \cap \bd(X \setminus C_i))\\
=& (\bd(X \cap C_i) \cap \bd(X)) \cup (\bd(X \cap C_i) \cap \bd(X \setminus C_i) \cap \bd(X)) \\
&~~~~\cup (\bd(X \cap C_i) \cap \bd(X \setminus C_i) \setminus \bd(X))\\
=& (\bd(X \cap C_i) \cap \bd(X)) \cup (\bd(X \cap C_i) \cap \bd(X \setminus C_i) \setminus \bd(X)).
\end{split}
\end{equation*}

%
Therefore, by the definition of border, 
\begin{multline*}
|(\bd(X \cap C_i) \cap \bd(X)) \cup (\bd(X \cap C_i) \cap \bd(X \setminus C_i) \setminus \bd(X))|\\
= |\bd(X \cap C_i) \cap \bd(X)| + |\bd(X \cap C_i) \cap \bd(X \setminus C_i) \setminus \bd(X)| 
= |R^X_i| + k^X_i.
\end{multline*}
The other case is symmetric.

The above is the proof of the if direction.
The proof for the only if direction is the same but starting from supposing that such \splitW $(C_1, C_2, C_3)$ exists and letting $R_X$ be the border of $(C_1 \cap X, C_2 \cap X, C_3 \cap X)$ and $R_Y$ the border of $(C_1 \cap Y, C_2 \cap Y, C_3 \cap Y)$.
\end{proof}

\subsection{Refinement data structure for graph branch decompositions\label{subsec:bwrefimpl}}
In the refinement data structure for branch decompositions we maintain an augmented branch decomposition $T$ rooted at edge $r \in E(T)$ and a dynamic programming table that stores for each node $w \in V(T)$ all $k$-bounded borders of tripartitions of $T_r[w]$ and information about how many nodes in the $r$-subtree of $w$ they intersect.
We call this dynamic programming table an $r$-table, to signify that it is directed towards $r$.

In this subsection we always assume that $k$ is an integer such that $\bw(T) \le k$, and therefore we only care about $k$-bounded borders of tripartitions.
Next we formally define the contents of an $r$-table.

\begin{definition}[$r$-table]
Let $T$ be a branch decomposition, $r \in E(T)$, $w \in V(T)$, and $A = T_r[w]$.
The $r$-table of $w$ is the pair $(\bds, \intcount)$, where $\bds$ is the set of all $k$-bounded borders of tripartitions of $A$, and $\intcount$ is a function mapping each $R \in \bds$ to the least integer $i$ such that there exists a tripartition of $A$ whose border $R$ is and that $r$-intersects $i$ nodes of the $r$-subtree of $w$.
\end{definition}

As there are $2^{\OO(k)}$ $k$-bounded tripartitions of $T_r[w]$, the $r$-table of $w$ can be represented in $2^{\OO(k)}$ space.
  
\begin{lemma}
\label{lem:bwdpnonleafcomp}
Let $T$ be an augmented branch decomposition, $r \in E(T)$, and $w$ a non-leaf node of $T$ with $r$-children $w_1$ and $w_2$.
Given the $r$-tables of $w_1$ and $w_2$, the $r$-table of $w$ can be constructed in $2^{\OO(k)}$ time.
\end{lemma}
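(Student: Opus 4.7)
The plan is to mimic the dynamic-programming step already used in the rankwidth case (\autoref{lem:nonleafrtable}), only with the much simpler objects provided by \autoref{subsec:bwdp}. Let $A = T_r[w]$, $X = T_r[w_1]$, $Y = T_r[w_2]$, so that $A = X \cup Y$ is a disjoint union. Let $(\bds_1, \intcount_1)$ and $(\bds_2, \intcount_2)$ be the given $r$-tables of $w_1$ and $w_2$.

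The algorithm iterates over all pairs $(R_X, R_Y) \in \bds_1 \times \bds_2$. For each such pair it computes the composition $R$ of $R_X$ and $R_Y$. By \autoref{lem:bwdpmain}, if $R_X$ is the border of a tripartition $(C^X_1, C^X_2, C^X_3)$ of $X$ and $R_Y$ is the border of a tripartition $(C^Y_1, C^Y_2, C^Y_3)$ of $Y$, then $R$ is the border of the tripartition $(C^X_1 \cup C^Y_1, C^X_2 \cup C^Y_2, C^X_3 \cup C^Y_3)$ of $A$; conversely, every tripartition of $A$ arises this way from a unique pair. If $R$ is not $k$-bounded we discard it. Otherwise, we determine whether the corresponding tripartition $(C_1, C_2, C_3)$ of $A$ $r$-intersects the node $w$ itself: reading the $r_i$ components of $R$, this happens precisely when at least two of $r_1, r_2, r_3$ are $1$. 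Letting $i_w \in \{0,1\}$ be this indicator, the number of nodes in the $r$-subtree of $w$ that $(C_1, C_2, C_3)$ $r$-intersects equals $\intcount_1(R_X) + \intcount_2(R_Y) + i_w$, since every $r$-subtree node is either $w$ or lies in exactly one of the $r$-subtrees of $w_1$ and $w_2$. We insert $R$ into $\bds$ and update $\intcount(R)$ with this value, keeping the minimum over all pairs producing $R$.

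Correctness is immediate from the bijection above together with \autoref{lem:bwdpmain}: every $k$-bounded tripartition of $A$ is produced as a composition of some $(R_X, R_Y)$, and the count $\intcount_1(R_X) + \intcount_2(R_Y) + i_w$ minimizes the number of $r$-intersected nodes by the optimality built into $\intcount_1$ and $\intcount_2$ applied to the two $r$-subtrees independently. For the running time, since $|\bd(X)|, |\bd(Y)|, |\bd(A)| \le k$, there are at most $2^{\OO(k)}$ $k$-bounded borders on each of $X$, $Y$, and $A$, so $|\bds_1|\cdot|\bds_2| = 2^{\OO(k)}$. Computing one composition takes $\OO(k)$ time by \autoref{lem:bwdpmain} (using the border descriptions of $X$, $Y$, and $A$ stored in the augmented branch decomposition to identify the sets $\bd(X)$, $\bd(Y)$, $\bd(A)$, $F$ in $\OO(k)$ time), and checking $k$-boundedness and updating $\bds$ and $\intcount$ likewise takes $\OO(k)$ time. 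The total is $2^{\OO(k)}$.

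The only mildly delicate point is the bookkeeping in the composition formula (keeping track of which border vertices merged across the cut $(X,Y)$ contribute to which $k_i$), but this is entirely local and bounded by $\OO(k)$ per pair; all other ingredients are direct applications of the lemmas already proved.
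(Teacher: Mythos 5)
Your proposal is correct and follows essentially the same approach as the paper: iterate over all pairs of borders from the two children's $r$-tables, form compositions via \autoref{lem:bwdpmain}, and set $\intcount$ to the minimum of $\intcount_{w_1}(R_X)+\intcount_{w_2}(R_Y)+i_w$ with $i_w$ the indicator that at least two of the $r_i$ are nonzero. The only cosmetic difference is that you explicitly discard non-$k$-bounded compositions, which the paper leaves implicit in the definition of the $r$-table.
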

\begin{proof}
Denote $A = T_r[w]$, $X = T_r[w_1]$, and $Y = T_r[w_2]$.
Let $(\bds_{w_1}, \intcount_{w_1})$ and $(\bds_{w_2}, \intcount_{w_2})$ be the $r$-tables of $w_1$ and $w_2$.

We construct the $r$-table $(\bds, \intcount)$ of $w$ as follows. 
We iterate over all pairs $(R_X, R_Y) \in \bds_{w_1} \times \bds_{w_2}$ and let $\bds$ be the set of compositions of those pairs.
This correctly constructs $\bds$ by \autoref{lem:bwdpmain} and the observation that if $(C_1, C_2, C_3)$ is a tripartition of $A$ whose border is $k$-bounded, then $(C_1 \cap X, C_2 \cap X, C_3 \cap X)$ is a tripartition of $X$ whose border is $k$-bounded and $(C_1 \cap Y, C_2 \cap Y, C_3 \cap Y)$ is a tripartition of $Y$ whose border is $k$-bounded.
For each $R \in \bds$ we set $\intcount(R)$ as the minimum value of $\intcount_{w_1}(R_X) + \intcount_{w_2}(R_Y) + i_w$ over such pairs $R_X$, $R_Y$ whose composition $R$ is, where $i_w = 1$ if $R$ is of form $(\ldots, r_1, r_2, r_3, \ldots)$ where $r_1 + r_2 + r_3 \ge 2$, and $i_w = 0$ otherwise.
This correctly constructs $\intcount$ by the observations that $(C_1, C_2, C_3)$ $r$-intersects a node $w'$ in the $r$-subtree of $w_1$ if and only if $(C_1 \cap X, C_2 \cap X, C_3 \cap X)$ intersects $w'$, $(C_1, C_2, C_3)$ $r$-intersects a node $w'$ in the $r$-subtree of $w_2$ if and only if $(C_1 \cap Y, C_2 \cap Y, C_3 \cap Y)$ intersects $w'$, and that $(C_1, C_2, C_3)$ intersects $w$ if and only if $r_1 + r_2 + r_3 \ge 2$.

As $|\bds_{w_1}| |\bds_{w_2}| = 2^{\OO(k)}$, the time complexity is $2^{\OO(k)}$.
\end{proof}

Now the Init($T$, $r$) operation can be implemented in $|V(T)| 2^{\OO(k)}$ time by first making $T$ augmented by $2|V(T)|$ applications of \autoref{lem:bord_desc_comp} and then constructing the $k$-bounded $r$-tables of all nodes in the order from leafs to root by $|V(T)|$ applications of \autoref{lem:bwdpnonleafcomp}.
For the Move($vw$) operation, we note the $r$-table of a node $x \in V(T)$ depends only on the $r$-subtree of $x$, and therefore by \autoref{obs:reroot} it suffices to recompute only the $r$-table of the node $v$ when using Move($vw$).
Therefore Move($vw$) can be implemented in $2^{\OO(k)}$ time by a single application of \autoref{lem:bwdpnonleafcomp}.
The Width() operation returns $|\bd(T[uv])|$, which is available because $T$ is augmented.
The Output() operation is also straighforward as it just returns the branch decomposition we are maintaining.

The following lemma implements the operations CanRefine() and EditSet() based on \autoref{lem:dpbwroot}.

\begin{lemma}
Let $T$ be a branch decomposition, $uv = r \in E(T)$, and $W = T[uv]$.
For each node $w \in V(T)$ let $(\bds_w, \intcount_w)$ be the $r$-table of $w$.
There is an algorithm that returns $\bot$ if there is no \splitW, and otherwise returns a tuple $(R, N_1, N_2, N_3)$, where $(r, C_1, C_2, C_3)$ is a \strongminsplit, $R$ is the edit set of $(r, C_1, C_2, C_3)$ and $(N_1, N_2, N_3)$ is the neighbor partition of $R$.
The algorithm works in time $2^{\OO(k)} (|R|+1)$.
\end{lemma}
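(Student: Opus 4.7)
The plan is to mirror the strategy of \autoref{lem:rankrefinement}, but simplified since we do not need to maintain concrete representatives for rebuilding the augmentation afterwards (the augmentation will be rebuilt directly via \autoref{lem:bord_desc_comp} during the \textsc{Refine} operation).

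First I would apply \autoref{lem:dpbwroot} at the root edge $uv$: iterate over all pairs $(R_X, R_Y) \in \bds_u \times \bds_v$ and check the four conditions of \autoref{lem:dpbwroot}. Among all valid pairs (those whose composition corresponds to a \splitW), select the pair that first minimizes the width $\max_i k_i$, then the arity $r_1 + r_2 + r_3$, then the sum-width $k_1 + k_2 + k_3$, and finally $\intcount_u(R_X) + \intcount_v(R_Y)$. By \autoref{lem:dpbwroot}, any such pair is the border-image of a \minsplitW, and by the last tie-breaking criterion together with the definition of the $r$-tables and the bijection between nodes $r$-intersected by the tripartition and the sum of intersection counts across the two subtrees, the selected pair corresponds to a \strongminsplit $(r, C_1, C_2, C_3)$. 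If no valid pair is found, return $\bot$. Since $|\bds_u|\cdot|\bds_v| = 2^{\OO(k)}$ and each check uses $\OO(k)$ time, this phase runs in $2^{\OO(k)}$ time.

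Second, I would do a BFS starting from the pairs $(u, R_X)$ and $(v, R_Y)$, maintaining at each popped pair $(w, R_w)$ the invariant that $R_w$ is the border of the tripartition $(C_1 \cap T_r[w], C_2 \cap T_r[w], C_3 \cap T_r[w])$ for the \strongminsplit identified above. At each popped $(w, R_w)$, inspect the arity component of $R_w$: if exactly one $r_i^w$ equals $1$ (i.e.\ $T_r[w]$ lies entirely inside one $C_i$), then by definition of the edit set $w \in N_i$, and we do not recurse. Otherwise $w$ intersects at least two parts, so $w \in R$; let $w_1, w_2$ be its $r$-children and reconstruct the children's borders by iterating over all pairs $(R_1, R_2) \in \bds_{w_1} \times \bds_{w_2}$ and selecting any pair whose composition equals $R_w$ and whose intersection counts satisfy $\intcount_{w_1}(R_1) + \intcount_{w_2}(R_2) + i_w = \intcount_w(R_w)$, where $i_w \in \{0,1\}$ is determined by the arity of $R_w$. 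Existence of such a pair follows from \autoref{lem:bwdpmain} and the definition of the $r$-tables. Push $(w_1, R_1)$ and $(w_2, R_2)$ onto the queue.

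The main obstacle is verifying that the pairs chosen locally at each BFS step can be glued into a single global tripartition $(C_1, C_2, C_3)$ consistent with the \strongminsplit from the root, because different choices of $(R_1, R_2)$ could in principle correspond to different tripartitions sharing the same border. However, since borders of tripartitions are all the information that the DP ever uses (and in particular all that \autoref{lem:dpbwroot} needs), any consistent choice of $(R_1, R_2)$ at each internal node yields \emph{some} \strongminsplit with the same edit set and neighbor partition as any other consistent choice; thus the tuple $(R, N_1, N_2, N_3)$ we output is correct regardless of which minimizing pair is selected at each recursion step. For the running time, the BFS visits exactly the nodes of $R \cup N_1 \cup N_2 \cup N_3$; since $T$ is cubic and $T[R]$ is a connected subtree of $T$ (\autoref{lemma_propertiesR}), we have $|N_1| + |N_2| + |N_3| \le |R| + 2$. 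Each visited node costs $2^{\OO(k)}$ time for the local enumeration, giving total time $2^{\OO(k)}(|R|+1)$ including the root step.
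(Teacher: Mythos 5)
Your proposal follows essentially the same route as the paper's proof: a root phase enumerating $\bds_u \times \bds_v$ and applying \autoref{lem:dpbwroot} to select a pair realizing a \strongminsplit, followed by a top-down BFS that re-expands each border via \autoref{lem:bwdpmain} and classifies nodes into $R$ or the $N_i$, with the same $2^{\OO(k)}(|R|+1)$ accounting. Your extra elaborations (the explicit width/arity/sum-width/intersection-count selection order, and the gluing argument showing any locally consistent choice of children borders still witnesses \emph{some} \strongminsplit) are correct fillings-in of details the paper leaves implicit in its queue invariant, not a different argument.
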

\begin{proof}
Denote $X = W = T_r[u]$, $Y = \oW = T_r[v]$.
We iterate over all pairs $(R_X, R_Y) \in \bds_u \times \bds_v$, using \autoref{lem:dpbwroot} to either determine that there is no \splitW or to find a pair $(R_X, R_Y)$ such that there is a \strongminsplit $(r, C_1, C_2, C_3)$ so that $R_X$ is the border of $(C_1 \cap X, C_2 \cap X, C_3 \cap X)$, $R_Y$ is the border of $(C_1 \cap Y, C_2 \cap Y, C_3 \cap Y)$, and the number of nodes of $T$ $r$-intersected by $(C_1, C_2, C_3)$ is $\intcount_u(R_X) + \intcount_v(R_Y)$.
In time $2^{\OO(k)}$ we either find such a pair or conclude that there is no \splitW.

We compute the edit set and the neighbor partition with a BFS-type algorithm that maintains a queue $Q$ containing pairs $(w, R_w)$, where $w \in V(T)$, with the invariant that there exists a \strongminsplit $(r, C_1, C_2, C_3)$ so that for all pairs $(w, R_w)$ that appear in the queue, $R_w$ is the border of $(C_1 \cap T_r[w], C_2 \cap T_r[w], C_3 \cap T_r[w])$.
We start by inserting the pairs $(u, R_X)$ and $(v, R_Y)$ to $Q$.
We iteratively pop a pair $(w, R_w)$ from the queue.
Denote $R_w = (\ldots, r_1, r_2, r_3, \ldots)$.
If there is $i$ such that $r_j = 0$ for both $j \neq i$, then it holds that $T_r[w] \subseteq C_i$, and therefore we insert $w$ to $N_i$.
Otherwise, we insert $w$ to $R$, let $w_1$ and $w_2$ be the $r$-children of $w$, and find a pair $(R_{w_1}, R_{w_2}) \in \bds_{w_1} \times \bds_{w_2}$ so that the composition of $R_{w_1}$ and $R_{w_2}$ is $R_w$ and $\intcount_{w_1}(R_{w_1}) + \intcount_{w_2}(R_{w_2}) + 1 = \intcount_w(R_w)$.
This kind of pair exists and maintains the invariant by the definition of $r$-table and \autoref{lem:bwdpmain}.

For each node $w \in R \cup N_1 \cup N_2 \cup N_3$ we iterate over $\bds_{w_1} \times \bds_{w_2}$ and access some tables, so the total amount of work is bounded by $|R \cup N_1 \cup N_2 \cup N_3| 2^{\OO(k)} = |R| 2^{\OO(k)}$.
\end{proof}

What is left is the Refine($R$, $(N_1, N_2, N_3)$) operation.
Computing the refinement of $T$ is done in $\OO(|R|)$ time by applying \autoref{lem:refinealg}.
Computing the border descriptions of the newly inserted edges can be done in a bottom-up fashion starting from $N_1 \cup N_2 \cup N_3$ by $2 |R|$ applications of \autoref{lem:bord_desc_comp}.
Then, computing the $r$-tables of the newly inserted nodes can be done also in a similar fashion in $|R| 2^{\OO(k)}$ time by $|R|$ applications of \autoref{lem:bwdpnonleafcomp}.

This completes the description of the refinement data structure for branch decompositions and thus also the proof of \autoref{the:branchwidth_refinement_ds}.

\section*{Acknowledgements}
We thank Eunjung Kim and Sang-il Oum for answering our questions
about their paper.
We also thank Michał Pilipczuk for suggesting the problem of approximating rankwidth.

\bibliographystyle{siam}
\bibliography{book_kernels_fvf}
\newpage
\section*{Appendix. Definitions of treewidth and cliquewidth}
\paragraph{Treewidth.}
A {\em tree decomposition} of a  graph $G$ is a pair
$(X,T)$ where $T$ is a tree whose vertices we will call {\em
nodes} and $X=(\{X_{i} \mid i\in V(T)\})$ is a collection of
subsets of $V(G)$ such that
\begin{enumerate}
\item $\bigcup_{i \in V(T)} X_{i} = V(G)$,

\item for each edge $(v,w) \in E(G)$, there is an $i\in V(T)$
such that $v,w\in X_{i}$, and

\item for each $v\in V(G)$ the set of nodes $\{ i \mid v \in X_{i}
\}$ forms a subtree of $T$.
\end{enumerate}
The {\em width} of a tree decomposition $(\{ X_{i} \mid i \in V(T)
\}, T)$ equals $\max_{i \in V(T)} \{|X_{i}| - 1\}$. The {\em
treewidth} of a graph $G$ is the minimum width over all tree
decompositions of $G$.

\paragraph{Cliquewidth.} Let $G$ be a graph, and $k$ be a positive integer.
A \emph{$k$-graph} is a graph whose vertices are labeled by integers from
$\{1,2,\dots,k\}$. We call the $k$-graph consisting of exactly one vertex
labeled by some integer from $\{1,2,\dots,k\}$ an initial $k$-graph.
The \emph{cliquewidth}  is the smallest integer $k$ such that $G$
can be constructed by means of repeated application of the following four
operations on $k$-graphs: 
\begin{itemize}\item 
{\em   introduce}: construction of an initial $k$-graph
labeled by $i$ and denoted by $i(v)$ 
(that is, $i(v)$ is a $k$-graph with a single vertex),
\item    {\em  disjoint union}
(denoted by $\oplus$),
\item {\em  relabel}: changing all labels $i$ to $j$
(denoted by $\rho_{i\to j}$), and
\item {\em   join}: connecting all vertices
labeled by $i$ with all vertices labeled by $j$ by edges
(denoted by $\eta_{i,j}$). 
\end{itemize}

Using the symbols of these operation, we can construct well-formed expressions.  An expression is called \emph{$k$-expression} for $G$ if the graph produced by performing 
these operations, in the order defined by the expression, is isomorphic to $G$ when labels are removed, and the \emph{cliquewidth of $G$}  is the minimum $k$ such that there is a $k$-expression for $G$.

\end{document}